\documentclass[11pt]{article}
\usepackage[T1]{fontenc}
\usepackage{lmodern}
\usepackage[margin=1in]{geometry}
\usepackage{booktabs,tabularx,array,enumitem,microtype}
\usepackage{xcolor}

\usepackage{amsmath,amssymb,amsthm,mathtools}
\usepackage{thmtools}

\usepackage[colorlinks=true,linkcolor=blue!55!black,
            citecolor=blue!55!black,urlcolor=blue!60!black]{hyperref}

\newtheorem{theorem}{Theorem}[section]

\newtheorem{proposition}[theorem]{Proposition}
\newtheorem{lemma}[theorem]{Lemma}
\newtheorem{corollary}[theorem]{Corollary}
\theoremstyle{definition}
\newtheorem{definition}[theorem]{Definition}
\theoremstyle{remark}
\newtheorem{remark}[theorem]{Remark}

\allowdisplaybreaks
\newcommand{\E}{\mathbb E}
\newcommand{\Prb}{\mathbb P}
\newcommand{\R}{\mathbb R}
\newcommand{\vol}{\operatorname{vol}}
\newcommand{\Lip}{\operatorname{Lip}}
\newcommand{\cP}{\mathcal P}
\newcommand{\cQ}{\mathcal Q}
\newcommand{\one}{\mathbf 1}
\newcommand{\Unif}{\operatorname{Unif}}
\newcommand{\dtv}{d_{\mathrm{TV}}}
\newcommand{\mix}{\mathrm{mix}}
\newcommand{\Id}{\mathsf I}
\newcommand{\HRzero}{\mathsf P_{\mathrm{HR}}}
\newcommand{\HR}{\mathsf P_{\mathrm{HR}}}
\newcommand{\rK}{r_K}
\DeclarePairedDelimiter{\abs}{\lvert}{\rvert}

\title{Hit-and-Run Mixes as Fast as the Ball Walk}
\author{Ruizhe Zhang\thanks{\texttt{rzzhang@purdue.edu}}\\Purdue University}
\date{}

\begin{document}
\maketitle

\begin{abstract}
Let $K\subset\R^n$ be an isotropic convex body.  We prove that the  hit-and-run walk, started from any $M$-warm distribution, reaches total-variation distance $\varepsilon$ from the uniform distribution on $K$ in $O\!\left(n^2\psi_n^{-2}\log^3(M/\varepsilon)\right)$ steps, where  $\psi_n^{-1}$ is the Kannan--Lov\'asz--Simonovits (KLS) constant.  Up to logarithmic factors, this matches the best-known warm-start mixing time for the ball walk.  Chen and Eldan [Discrete Comput. Geom. 2026] obtained the same $n^2\psi_n^{-2}$ dependence for hit-and-run, but with polynomial dependence on $M/\varepsilon$.  Our result improves that polynomial dependence to a polylogarithmic one, fully resolving their open question about warm-start mixing of hit-and-run in isotropic convex bodies.
\end{abstract}

\section{Introduction}\label{sec:introduction}
Sampling nearly uniformly from a convex body is a fundamental algorithmic
primitive, with applications to volume computation, Bayesian inference, convex optimization,
and high-dimensional statistics
\cite{DFK91,albert1993bayesian,KLS97,bertsimas2004solving,LV,kalai2006simulated,Holmes2006BayesianAV,CV18,JLLV26,KookVempalaDiffusion}.  The \emph{hit-and-run walk}, introduced by
Smith~\cite{Smith}, is one of the canonical Markov chains for this task. From an interior point $x\in K$, it chooses a uniformly random direction and samples the next point uniformly from the chord of $K$ through $x$ in that direction (Fig.~\ref{fig:hit-and-run}).  This update defines a reversible Markov chain whose stationary distribution is uniform on $K$.

Hit-and-run has several features that distinguish it from the ball walk.
The ball walk proposes a point from a Euclidean ball of a prescribed radius
centered at the current state and rejects the proposal if it lies outside
\(K\).  Its analysis and implementation therefore require a suitable choice
of step size.  Hit-and-run has no such proposal-radius parameter: it samples
from the entire chord, so it can make long moves even when the current
point lies near a narrow corner.  These properties make the hit-and-run walk both algorithmically attractive and technically subtle.

The mixing of hit-and-run has been studied for several decades. Suppose that \(K\subseteq\R^n\) contains a Euclidean ball of radius \(r\) and is contained in a Euclidean ball of radius \(R\).  Lov\'asz showed that, from a \(2\)-warm start, hit-and-run reaches total-variation distance \(\varepsilon\) from the target uniform distribution in $O(n^2(R/r)^2\varepsilon^{-2}\log(2/\varepsilon))$ steps~\cite{LovaszHR}. We say an initial distribution is \emph{$M$-warm} if its density is at most $M$ times the target density. Lov\'asz and Vempala improved the bound to $O(n^2(R/r)^2\log(M/\varepsilon))$ from an \(M\)-warm start~\cite{LV}. They also proved polynomial mixing time from any single interior point, without first constructing a warm initial distribution.  Separately, their second-moment refinement measures the typical distance of a uniform point from the centroid.  For an isotropic convex body $K$, whose uniform distribution has mean zero and covariance $I_n$, this gives $O(n^3\log^3(M/\varepsilon))$ warm-start mixing time.

For comparison, the Kannan--Lov\'asz--Simonovits analysis of the ball walk gives $\widetilde{O}_{M/\varepsilon}(n^2\psi_n^{-2})$~\footnote{We use $\widetilde{O}_a(f)$ to hide logarithmic factors in $f$ and $a$.} mixing time from an $M$-warm start on an isotropic convex body~\cite{KLS97,JLLV26}.  Here, \(\psi_n\) is the smallest Cheeger constant among isotropic log-concave probability measures on \(\R^n\), and \(\psi_n^{-1}\) is commonly called the KLS constant~\cite{KLS95}.  The KLS
conjecture predicts that \(\psi_n=\Theta(1)\).  Although the conjecture
remains open, a sequence of recent breakthroughs has established nearly constant lower bounds on $\psi_n$~\cite{ChenKLS,KlartagLehecKLS,KlartagKLS,LetwinKLS}.  With the current estimates, the ball walk therefore already mixes in \(\widetilde{O}_{M/\varepsilon}(n^2)\) steps.

Whether hit-and-run could attain the same dimension dependence remained open for many years.  Chen and Eldan answered this question using localization schemes~\cite {CE}. Provided $n=\Omega(\log(M/\varepsilon))$, they proved that the hit-and-run walk in an isotropic convex body mixes in
\begin{align}\label{eq:mixing_CE}
 O\!\left(
 n^2\psi_n^{-2}
 \left(\frac{M}{\varepsilon}\right)^{11}
 \log^5\!\left(\frac{M}{\varepsilon}\right)
 \right)
\end{align}
steps from an \(M\)-warm start.  This was the first hit-and-run mixing time bound with leading dependence $n^2/\psi_n^2$, but its cost grows polynomially with both warmness and inverse accuracy.  Chen and Eldan therefore asked whether the hit-and-run walk on an isotropic convex body admits an $O_{\varepsilon}\!\left(n^2\operatorname{polylog} M\right)$ warm-start mixing bound.

\begin{figure}
    \centering
    \includegraphics[width=0.95\linewidth]{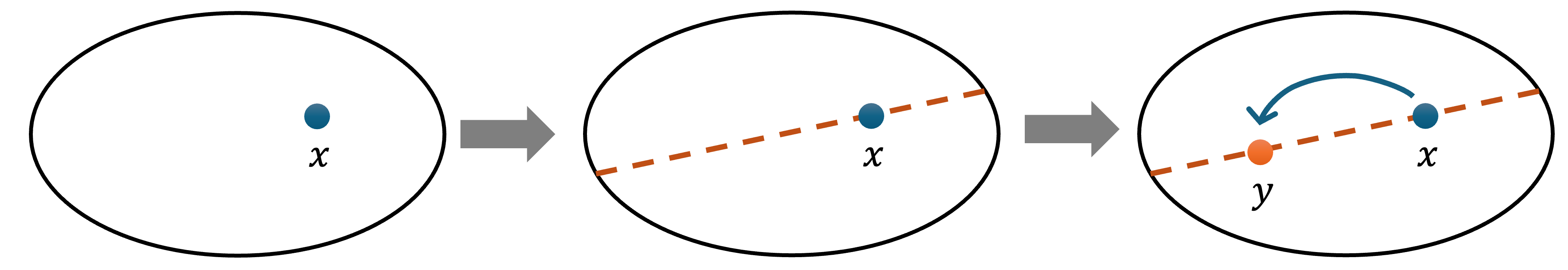}
    \caption{One-step update of the hit-and-run walk}
    \label{fig:hit-and-run}
\end{figure}

\subsection{Main result}
In this paper, we answer this open question in~\cite{CE} affirmatively.

\begin{theorem}[Main theorem, informal version of Theorem~\ref{thm:main_formal}]\label{thm:main}
Let $n\ge2$, $K\subset\R^n$ be an isotropic convex body, and $\mu=\Unif(K)$ be the uniform distribution over $K$. For every $M$-warm initial distribution and every $0<\varepsilon<1/2$, the hit-and-run walk has mixing time
\begin{align}\label{eq:main_informal_bound}
    \tau_{\rm mix}=O\!\left(n^2\psi_n^{-2}\log^3\left(\frac{M}{\varepsilon}\right)\right).
\end{align}
\end{theorem}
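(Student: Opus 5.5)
The plan is to bound the $s$-conductance of hit-and-run by a blocking conductance/isoperimetric argument that is driven by the KLS constant. Standard machinery (Lov\'asz--Simonovits) then turns this into a warm-start mixing bound: if $\Phi_s$ denotes the $s$-conductance with $s=\varepsilon/(2M)$, then from an $M$-warm start
\[
\dtv(\mu_t,\mu)\le Ms+M\Bigl(1-\tfrac{\Phi_s^2}{2}\Bigr)^t .
\]
So it suffices to prove $\Phi_s\gtrsim \psi_n/(n\log^{3/2}(M/\varepsilon))$. This is weaker than a conductance bound only in that sets of tiny measure, which concentrate near the boundary or near "bad" points, may be ignored.

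The first step is a one-step overlap lemma, in the style of Lov\'asz--Vempala. For $x,y$ in a good set $K_\delta\subseteq K$ with $|x-y|\le \delta/\sqrt n$, the hit-and-run transition kernels satisfy $\dtv(P_x,P_y)\le 1-c$. Here "good" means that the local conductance-type quantity $F(x)$ — the typical chord half-length, measured via a $\lambda$-quantile of $\ell(x,u)$ over random directions $u$ — is at least $\delta$. The proof compares the densities $P_x(z)\propto 1/(\ell(x,z)|x-z|^{n-1})$ and uses the fact that the chord lengths through nearby points differ by a bounded factor on most directions.

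The second step controls the bad set. In an isotropic body, I would show that $\mu(\{x:F(x)<\delta\})\le O(\delta\sqrt n\,\log(1/\delta))$, or a similar linear-in-$\delta$ bound. This is the crucial point that lets us take $\delta\asymp 1/\log(M/\varepsilon)$ rather than polynomially small in $M/\varepsilon$, and it is the source of the improvement over Chen--Eldan. The intended route is to apply the fact that the set of points at distance $<t$ from $\partial K$ has measure $O(t\sqrt n)$ (via Brunn--Minkowski and $K\supseteq B(0,1)$-type bounds, since isotropic bodies contain a ball of radius $\Omega(1)$). One then argues that a point with $F(x)<\delta$ has a constant fraction of directions whose chord is short, which forces proximity to the boundary at scale roughly $\delta\log$ factors.

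The third step combines these ingredients via the KLS isoperimetry. For a partition $K=S_1\cup S_2$, let $S_i'=\{x\in S_i\cap K_\delta: P_x(S_{3-i})<c/2\}$. By the overlap lemma, $d(S_1',S_2')\ge\delta/\sqrt n$. The log-concave isoperimetric inequality with constant $\psi_n$ then gives
\[
\mu(K\setminus S_1'\setminus S_2')\gtrsim \psi_n\,\delta\, n^{-1/2}\cdot n^{-1/2}\min(\mu(S_1'),\mu(S_2')),
\]
where one factor of $n^{-1/2}$ is absorbed by working at scale $\delta/\sqrt n$. After subtracting the bad set, whose measure is $\le s$ when $\delta\asymp 1/\mathrm{polylog}$, this yields $\Phi_s\gtrsim\psi_n\delta/n$.

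I expect the main obstacle to be step two: proving the bad set has measure polylogarithmically small in $\delta$ rather than polynomially. If the direct geometric argument fails, the fallback is to use a localization or needle decomposition to reduce the problem to one-dimensional log-concave needles, on which chord-length tail bounds are explicit.
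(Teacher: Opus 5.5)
\textbf{Gap: Steps 2--3 cannot work, and the problem is structural.} Your bad-set bound $\mu\{F<\delta\}\lesssim\delta\sqrt n\log(1/\delta)$ is linear in $\delta$. With $\delta\asymp 1/\log(M/\varepsilon)$ it gives mass of order $\sqrt n/\log(M/\varepsilon)$, not $\le s=\varepsilon/(2M)$. Any bound polynomial in $\delta$ falls below $s$ only if $\delta\le\operatorname{poly}(s)$. That is exactly how a fixed-radius core (Chen--Eldan) or the ball-walk comparison ($\Phi_s\gtrsim s\psi_n/n$) loses polynomial factors in $M/\varepsilon$.

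No sharper estimate rescues a fixed threshold. In the isotropic cube a typical hit-and-run chord has length $\Theta(n^{-1/2})$. The points with two coordinates within $\eta/n$ of the facets form a set of mass $\asymp\eta^2$, and they have $F\lesssim\eta/\sqrt n$. Euclidean KLS needs core separation $\asymp 1/(n\operatorname{polylog})$ to give $\Phi_s\gtrsim\psi_n/(n\operatorname{polylog})$. Lov\'asz's overlap at that distance needs $F\gtrsim 1/(\sqrt n\operatorname{polylog})$. So the discarded set has mass at least inverse-polylogarithmic, far above $s$.

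Your Step 3 bookkeeping, with an extra $n^{-1/2}$ ``absorbed'', hides this. Also, $\Phi_s\gtrsim\psi_n/(n\log^{3/2}(M/\varepsilon))$ would give $\log^4$, so you actually need $\psi_n/(n\log(M/\varepsilon))$. Step 1 is unjustified as stated too: $F(x)\ge\delta$ does not yield Lov\'asz's cross-ratio hypothesis, because near a flat facet $F$ stays large while $x$ is arbitrarily close to $\partial K$.

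\textbf{Missing idea: discard nothing and let the scale vary with position.} The paper uses the local conductance radius $r_K$. It is concave and $O(\sqrt n)$-Lipschitz, satisfies $r_K\lesssim\sqrt n\operatorname{dist}(\cdot,\partial K)$, and satisfies $F_{\mathrm{HR}}\ge r_K/32$. These properties verify both of Lov\'asz's hypotheses and give overlap for all interior $x,y$ with $d_{r_K}(x,y)<c_0/\sqrt n$.

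The isoperimetric input is then not plain KLS but the weighted perimeter bound $\mathcal P_{r_K}(E)\gtrsim\frac{\psi_n}{\sqrt n}\frac{p}{1+\log(1/p)}$. It is turned into flow by coarea along $d_{r_K}(S_-,\cdot)$. Proving this bound is the heart of the paper:
\begin{itemize}
\item Klartag's guided localization gives needles with $\mu_I(A)=\mu(A)$ and a common $1$-Lipschitz guide $u$.
\item A one-dimensional concave-weight inequality gives $\frac{m_I}{\sigma_I}\frac{p}{1+\log(1/p)}$ on each needle.
\item Cauchy--Schwarz controls the average, using $\int m_I\,d\pi=\int r_K\,d\mu\gtrsim n^{-1/2}$, Berwald, and $\int\sigma_I^2\,d\pi\le\operatorname{Var}_\mu(u)\le C_{\sf PI}(\mu)\lesssim\psi_n^{-2}$.
\end{itemize}
The $\log(1/p)$ comes from quantile location on log-concave needles, not from a bad set.

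Your fallback of needles ``with explicit chord-length tails'' lacks exactly the balancing and the common-guide variance control. A generic localization reduces to a single worst-case needle and reproduces the Lov\'asz--Vempala dependence on chord length.
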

Using the best current bound on the KLS constant $\psi_n^{-1}=O(\log^{1/4}(n))$ due to Letwin~\cite{LetwinKLS}, \eqref{eq:main_informal_bound} becomes
\begin{align*}
    \tau_{\rm mix} = O\!\left(n^2\log^{\frac{1}{2}}(n) \log^3\left(\frac{M}{\varepsilon}\right)\right).
\end{align*}
We remark that Theorem~\ref{thm:main} applies only to warm starts.  A point mass is singular with respect to \(\mu\) and therefore is not \(M\)-warm for any finite \(M\); our result does not replace the Lov\'asz--Vempala guarantee from an arbitrary interior point~\cite{LV}.  Within the warm-start regime, however, it places hit-and-run on the same asymptotic scaling as the ball walk: up to logarithmic factors, the two bounds have the same dependence on dimension, the KLS constant, warmness, and accuracy~\cite{KLS97}. Compared with Chen--Eldan~\cite{CE}, it replaces the polynomial dependence on \(M/\varepsilon\) by a polylogarithmic one and removes the requirement \(n\gtrsim\log(M/\varepsilon)\) \footnote{We use $\gtrsim$, $\lesssim$, $\asymp$ to hide the universal constants.}.

We also extend our approach to hit-and-run walk for a \emph{log-affine distribution} over a convex body $d\nu_b(x)\propto e^{-\langle b,x\rangle}\one_K(x)\,dx$.  Suppose $\nu_b$ is $C_0$-near-isotropic, i.e., $C_0^{-1}I_n \preceq \operatorname{Cov}_{\nu_b}(X)\preceq C_0I_n$, then the warm-start mixing time of the hit-and-run walk is 
\begin{align}
    \tau_{\mix}=O\left(C_0^{2}n^2\psi_n^{-2}\log^3\left(\frac{M}{\varepsilon}\right) \right), 
\end{align}
see Theorem~\ref{thm:log-affine-main}. This extension is particularly relevant to annealing-based volume estimation, whose intermediate distributions are log-affine~\cite{LVVolume}.

\subsection{Technical overview}\label{sec:technical-overview}
We prove Theorem~\ref{thm:main} using the conductance method. In Section~\ref{sec:tech_ov_standard}, we review the standard route of using local overlap and an isoperimetric inequality to obtain an $s$-conductance bound and, in turn, a warm-start mixing-time bound.  In Section~\ref{sec:tech_ov_previous}, we explain where the previous analyses~\cite{LV,CE} lose factors. Finally, in Section~\ref{sec:tech_ov_ours}, we introduce the key ideas of our proof.
\subsubsection{The standard conductance analysis for hit-and-run}\label{sec:tech_ov_standard}
Let $X\sim\mu$ be at stationarity and let $Y$ be obtained from $X$ by one hit-and-run step.  For a measurable set $S\subseteq K$, the probability
\[
 \cQ_{\HR}(S,S^c):=\Prb\{X\in S,\ Y\notin S\}
\]
is the stationary, or ergodic, flow across the cut $(S,S^c)$.  A set whose
flow is small compared with its stationary mass is a bottleneck.

For an $M$-warm start, it is enough to rule out bottlenecks above a small mass threshold.  The $s$-conductance $\Phi_s$, defined in~\eqref{eq:s-conductance}, tests only sets of stationary mass greater than $s$.  This is natural because an $M$-warm initial distribution assigns at most $Ms$ probability to any set of $\mu$-mass at most $s$.  Applying the standard Lov\'asz--Simonovits $s$-conductance bound~\cite{LS} with $s=\varepsilon/(2M)$ gives
\[
 \tau_{\mix}
 \lesssim
 \Phi_s^{-2}\log(2M/\varepsilon).
\]
Thus Theorem~\ref{thm:main} reduces to proving a conductance lower bound
\begin{equation}\label{eq:intro-conductance}
 \Phi_s
 \gtrsim
 \frac{\psi_n}
 {n\bigl(1+\log(2/s)\bigr)}.
\end{equation}

The geometric conductance argument combines local overlap with an isoperimetric separator estimate~\cite{LS,LV,VempalaSurvey}.  More concretely, consider a cut $(S,S^c)$ with mass $p:=\mu(S)\in(s,1/2]$ and flow $Q:=\cQ_{\HR}(S,S^c)$. If $Q$ is already a fixed positive fraction of $p$, then the desired conductance lower bound is immediate.  The only nontrivial case is therefore when $Q\ll p$.  Let $\alpha_0>0$ be a small constant, and define the low-escape cores (Fig.~\ref{fig:cores}):
\begin{align}\label{eq:intro_def_S}
 S_-:=
 \left\{
 x\in S:
 \HR(x,S^c)<\frac{\alpha_0}{4}
 \right\},
 \qquad
 S_+:=
 \left\{
 y\in S^c:
 \HR(y,S)<\frac{\alpha_0}{4}
 \right\}.
\end{align}
Points in $S_-$ are unlikely to leave $S$ in one step, while points in
$S_+$ are unlikely to enter $S$.  Markov's inequality and reversibility give
\[
 \mu(S\setminus S_-)
 \le\frac{4Q}{\alpha_0},
 \qquad
 \mu(S^c\setminus S_+)
 \le\frac{4Q}{\alpha_0}.
\]
Thus, when $Q$ is small compared with $p$, both cores have mass comparable with $p$, whereas the remaining region has mass
\begin{equation}\label{eq:intro-separator-upper}
 \mu(K\backslash (S_-\cup S_+))\le\frac{8Q}{\alpha_0}.
\end{equation}

We next show that $K\backslash (S_-\cup S_+)$ must separate $S_-$ and $S_+$ by a nontrivial width in the geometry relevant to hit-and-run. Ordinary Euclidean distance is not the appropriate measure of proximity near the physical boundary of $K$.  A small Euclidean displacement deep inside the body changes the surrounding chord geometry very little, while the same displacement near $\partial K$ may substantially change the available directions and chord endpoints.  Following~\cite{LovaszHR,LV}, define the local conductance radius
\begin{equation}\label{eq:intro-local-radius}
 r_K(x)
 :=
 \sup\left\{
 r>0:
 \frac{\vol(K\cap B(x,r))}
 {\vol(B(0,r))}
 \ge\frac{63}{64}
 \right\}.
\end{equation}
Thus, $r_K(x)$ is the largest scale at which a Euclidean ball centered at $x$ lies almost entirely inside $K$.  It is large in regions with ample local room and tends to zero as $x$ approaches $\partial K$. The corresponding intrinsic metric measures displacement in units of this
local scale:
\[
 d_{r_K}(x,y)
 :=
 \inf_\gamma
 \int_0^1
 \frac{\|\gamma'(t)\|_2}{r_K(\gamma(t))}\,dt.
\]
Infinitesimally, a Euclidean displacement of length $ds$ at $x$ has intrinsic cost $ds/r_K(x)$.  Hence a fixed intrinsic distance corresponds to a shorter Euclidean displacement near the boundary and a longer one in the interior. The local-overlap analysis of
\cite{LovaszHR,LV} gives universal constants $c_0,\alpha_0>0$ such that
\begin{equation}\label{eq:intro-local-overlap}
 d_{r_K}(x,y)<\frac{c_0}{\sqrt n}
 \quad\Longrightarrow\quad
 \dtv\bigl(\HR(x,\cdot),\HR(y,\cdot)\bigr)
 \le1-\alpha_0.
\end{equation}
On the other hand, if $x\in S_-$ and $y\in S_+$, then by~\eqref{eq:intro_def_S} and using $S$ as the test set for the total variation distance, we obtain that
\[
 \dtv\bigl(\HR(x,\cdot),\HR(y,\cdot)\bigr)  >  1-\frac{\alpha_0}{2}  >
 1-\alpha_0.
\]
The contrapositive of \eqref{eq:intro-local-overlap} therefore implies
\begin{equation}\label{eq:intro-core-separation}
 d_{r_K}(S_-,S_+)
 \ge\frac{c_0}{\sqrt n}.
\end{equation}
Thus local overlap supplies the ``width'' of the separator: every path from $S_-$ to $S_+$ has intrinsic length at least $c_0/\sqrt n$.
\begin{figure}
    \centering
    \includegraphics[width=0.45\linewidth]{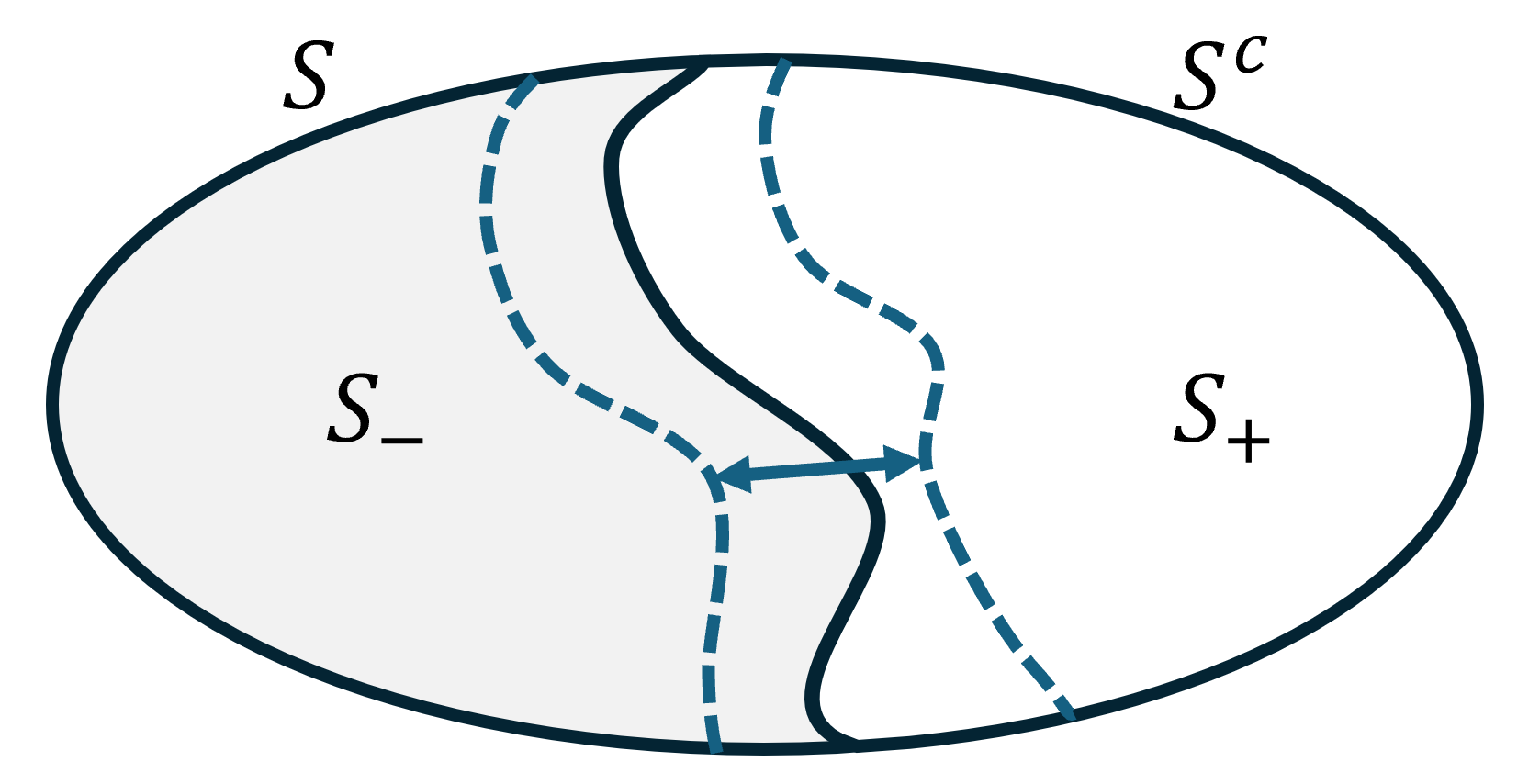}
    \caption{Low-escape cores $S_-$ and $S_+$ can be identified on both sides of the cut $(S,S^c)$. They have relatively large mass and are well-separated in the intrinsic metric.}
    \label{fig:cores}
\end{figure}

It remains to lower-bound the stationary mass of the region separating the two cores.  The appropriate boundary area is weighted by the local scale $r_K$.  To see this, suppose first that $E$ has a smooth boundary.  An intrinsic boundary layer of small width $\tau$ has Euclidean thickness approximately $\tau r_K(x)$ near a boundary point $x$ (Fig.~\ref{fig:intrinisic_boundary}).  By expressing its normalized volume to the first order in $\tau$, we can see that the quantity
\begin{equation}\label{eq:intro-smooth-weighted-perimeter}
 \frac1{\vol(K)}
 \int_{\partial E\cap\operatorname{int}(K)}
 r_K(x)\,d\mathcal H^{n-1}(x)
\end{equation}
is the stationary mass created per unit intrinsic thickening of the boundary. Conductance involves arbitrary measurable cuts, for which the surface
integral in \eqref{eq:intro-smooth-weighted-perimeter} need not be defined.
We therefore use its lower-semicontinuous BV relaxation
$\cP_{r_K}(E)$~\cite{AFP}.  For regular sets it agrees with
\eqref{eq:intro-smooth-weighted-perimeter}; for general measurable sets it
retains the coarea and slicing properties needed in the proof.
\begin{figure}
    \centering
    \includegraphics[width=0.45\linewidth]{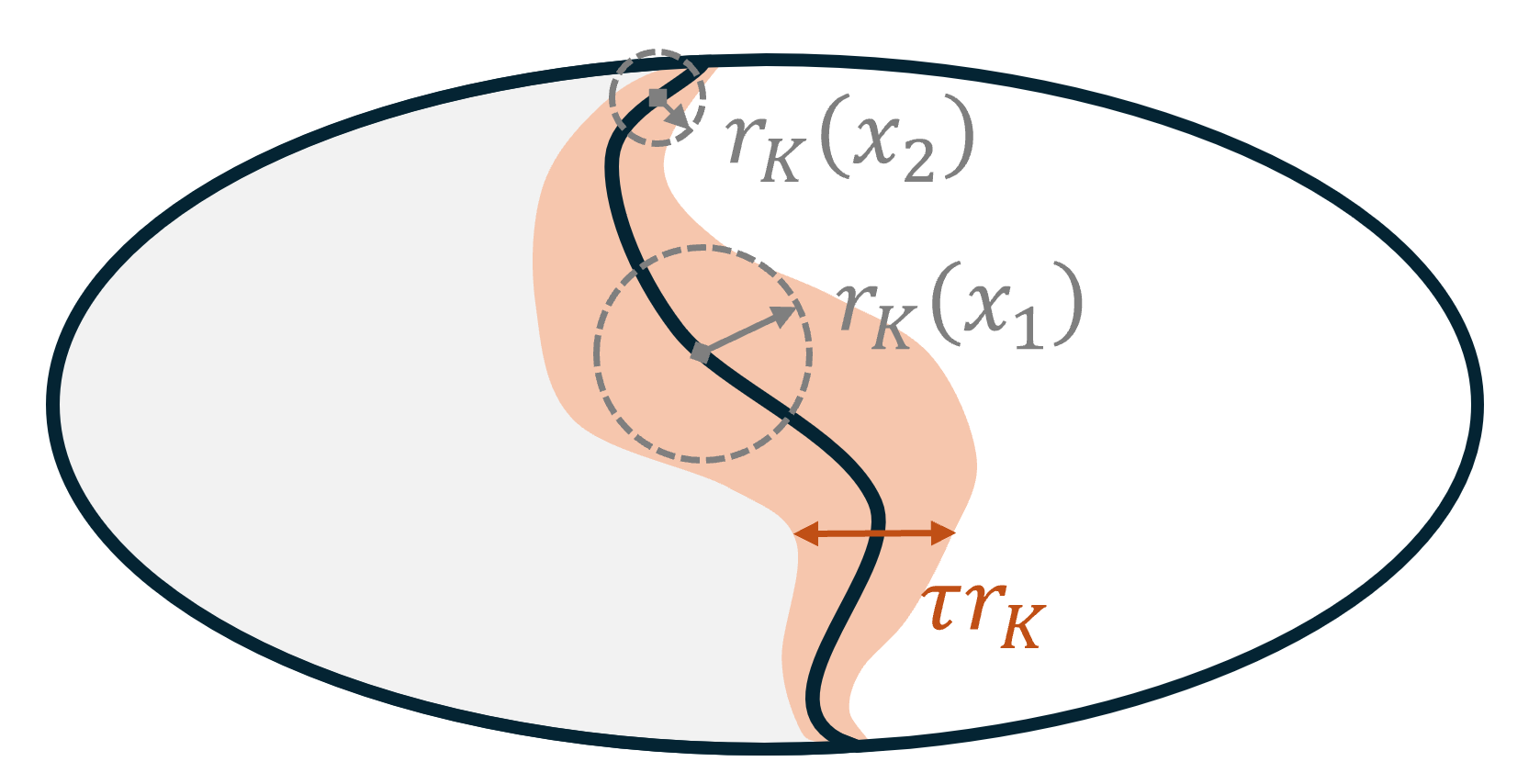}
    \caption{An intrinsic boundary layer of width $\tau$ (the orange-shaded area). Its Euclidean width $\tau \rK(x)$ is large for $x$ inside the convex body and small for $x$ close to the boundary.}
    \label{fig:intrinisic_boundary}
\end{figure}

Suppose that every measurable set $E$ whose smaller side has mass at least $s/2$ satisfies the following isoperimetric inequality:
\begin{equation}\label{eq:intro-weighted-expansion-abstract}
 \cP_{r_K}(E)
 \ge
 \gamma_s\min\{\mu(E),1-\mu(E)\},
 \qquad
 \gamma_s
 \asymp
 \frac{\psi_n}
 {\sqrt n\bigl(1+\log(2/s)\bigr)}.
\end{equation}
By~\eqref{eq:intro-core-separation}, as one expands $S_-$ in the intrinsic metric until reaching $S_+$, every intermediate set
\begin{align*}
    E_t:=\{x\in K:d_{\rK}(S_-,x)<t\},\qquad 0<t<\frac{c_0}{\sqrt{n}}
\end{align*}
has both sides of mass comparable with $p$.  Therefore, \eqref{eq:intro-weighted-expansion-abstract} gives weighted boundary at least $\Omega(\gamma_s p)$ throughout an intrinsic interval of length $\Omega(1/\sqrt n)$.  The weighted coarea inequality then yields
\[
 \mu\bigl(K\setminus(S_-\cup S_+)\bigr)
 \gtrsim
 \frac{\gamma_s p}{\sqrt n}.
\]
In words, the separator mass is bounded below by its intrinsic width times its weighted boundary area. Combining with the upper bound~\eqref{eq:intro-separator-upper} gives
\[
 \frac{Q}{p}
 \gtrsim
 \frac{\psi_n}
 {n\bigl(1+\log(2/s)\bigr)},
\]
which implies the desired $s$-conductance estimate~\eqref{eq:intro-conductance}.  

Thus the standard overlap-and-isoperimetry argument reduces the main theorem to establishing the weighted isoperimetric inequality~\eqref{eq:intro-weighted-expansion-abstract}.

\subsubsection{Where previous approaches lose factors}\label{sec:tech_ov_previous}
The local-overlap argument above already identifies the correct position-dependent scale. The loss in the Lov\'asz--Vempala analysis occurs in the separator estimate \cite{LV}.  Their localization argument reduces the problem to a single interval, and the resulting one-dimensional estimate must hold for any interval selected.  They therefore normalize the local radius by a \emph{worst-case} chord-length bound $D$, using a weight of the form
\[
 h(x)\asymp\frac{r_K(x)}{D\sqrt n}.
\]
This yields a conductance estimate of order $1/(nD)$.  A second-moment argument makes the effective value of $D$ about $\sqrt n$ in isotropic position, leading to the classical $n^{-3/2}$ dependence.  

Chen--Eldan obtain the sharper $\psi_n/n$ dependence on dimension through stochastic localization \cite{CE}.  Their argument first restricts to a core where $r_K(x)$ is bounded below by a common radius.  To make the discarded boundary layer negligible for an $s$-conductance estimate, this common radius must shrink with $s$; the resulting fixed-radius flow estimate therefore loses a polynomial factor in $s$.  A further loss comes from requiring the localization process to retain substantial mass on both sides of the cut. Tracking these effects in~\cite{CE} gives the factor $s^{11/2}$ in their conductance estimate and hence polynomial dependence on $M/\varepsilon$ in the warm-start mixing time.

These analyses point to the missing ingredient: an isoperimetric inequality that uses the typical conditional spread of the localized mass, rather than the worst-case length of one selected chord, while remaining effective for small sets. The key new idea is to retain and average over a balanced family of needles instead of reducing the proof to one witness needle.

\subsubsection{Our approach: average over balanced needles}\label{sec:tech_ov_ours}
We now make this idea precise.  A needle is a line interval $I\subseteq K$ equipped with a conditional log-concave probability measure $\mu_I$.  Let $t_I:I\to\R$ be a unit-speed coordinate along the needle, and define
\[
 m_I:=\int_I r_K\,d\mu_I,
 \qquad
 \sigma_I^2:=\operatorname{Var}_{\mu_I}(t_I),
\]
where $m_I$ is the mean local step scale on the needle, and $\sigma_I$ measures the typical spread of its conditional mass along the needle. Note that $\sigma_I$ can be much smaller than the full length of $I$ when most of the mass is concentrated in a short portion of a long chord. Our one-dimensional concave-weight inequality (Theorem~\ref{thm:one-dimensional}) shows that, for any cut $A$ with conditional mass $\mu_I(A)=p$,
\begin{equation}\label{eq:intro-needle-profile}
 \cP_{r_K,\mu_I}(A\cap I)
 \gtrsim
 \frac{m_I}{\sigma_I}
 \frac{p}{1+\log(1/p)}.
\end{equation}
After slicing and averaging \eqref{eq:intro-needle-profile}, the high-dimensional problem reduces to lower-bounding
\[
 \int\frac{m_I}{\sigma_I}\,d\pi(I),
\]
where $\pi$ is the probability measure over the needles.  

To control this average using global information, the decomposition must have two additional properties:
\begin{enumerate}
    \item Every needle must see the same cut mass,
    \[
     \mu_I(A)=\mu(A)=p,
    \]
    so that the factor $p/(1+\log(1/p))$ in~\eqref{eq:intro-needle-profile} is uniform across the decomposition.
    \item The coordinates $t_I$ must arise from one global Lipschitz function $u$; then
    \[
     \sigma_I^2=\operatorname{Var}_{\mu_I}(u)
    \]
    and the law of total variance allows the conditional spreads to be bounded by the global Poincar\'e inequality.
\end{enumerate}
Klartag's guided localization provides exactly such a decomposition~\cite{KlartagNeedles}: it produces
\[
 \mu=\int\mu_I\,d\pi(I),
 \qquad
 \mu_I(A)=p
\]
for almost every needle, together with a common $1$-Lipschitz guiding function $u$ that is a unit-speed coordinate on every needle. Intuitively, this decomposition comes from optimal transport. Consider a function $f:=\one_A-p$ with $p=\mu(A)$. The positive and negative parts of $f$ have equal mass.  We may therefore regard them as an excess of mass inside $A$ and an equal deficit inside $A^c$, and transport the former to the latter using Euclidean distance as the cost. Then, the dual potential for this transport problem is a $1$-Lipschitz function $u$.  Optimal transport takes place along line segments on which the Lipschitz inequality for $u$ is saturated, and these transport rays form the needles. Moreover, each ray transports its own excess mass to its own deficit mass; formally,
\[
 \int_I(\one_A-p)\,d\mu_I=0, 
\]
and therefore, $\mu_I(A)=p$ for almost every needle.

Then, it is straightforward to control the averages of $m_I$ and $\sigma_I$ under this decomposition and complete the proof.  Disintegration and the mean local-radius estimate in~\cite{LV} give
\[
 \int m_I\,d\pi(I)  =  \int_K r_K\,d\mu  \gtrsim  \frac1{\sqrt n}.
\]
And the common guide function and the Poincar\'e inequality imply
\[
 \int\sigma_I^2\,d\pi(I)  \le  \operatorname{Var}_\mu(u)  \le  C_{\sf PI}(\mu)\lesssim \psi_n^{-2},
\]
where we use the relation between Poincar\'e and Cheeger constants~\cite{Cheeger,buser1982note,ledoux2004spectral,KLS95} in the last step.
Finally, conditional Jensen and Berwald's inequality
\cite{Berwald,LangharstPutterman} give
\[
 \int m_I^2\,d\pi(I)
 \le
 \int_K r_K^2\,d\mu
 \lesssim
 \left(\int_K r_K\,d\mu\right)^2.
\]
Consequently, Cauchy--Schwarz yields
\[
 \int\frac{m_I}{\sigma_I}\,d\pi(I)\ge
 \frac{\left(\int m_I\,d\pi\right)^2}
      {\int m_I\sigma_I\,d\pi}\ge
 \frac{\left(\int m_I\,d\pi\right)^2}
      {\left(\int m_I^2\,d\pi\right)^{1/2}
       \left(\int\sigma_I^2\,d\pi\right)^{1/2}}
 \gtrsim
 \frac{(\int \rK\,d\mu)^2}
      {\psi_n^{-1}\int \rK\,d\mu}
 \gtrsim
 \frac{\psi_n}{\sqrt n}.
\]
Averaging the one-dimensional inequalities~\eqref{eq:intro-needle-profile} gives the desired weighted isoperimetric inequality~\eqref{eq:intro-weighted-expansion-abstract}:
\[
 \cP_{r_K}(A)
 \gtrsim\frac{p}{1+\log(1/p)}
 \int\frac{m_I}{\sigma_I}\,d\pi(I) \gtrsim 
 \frac{\psi_n}{\sqrt n}
 \frac{p}{1+\log(1/p)}.
\]

\paragraph{Extension to log-affine targets.}
The balanced-needle argument is not specific to the uniform distribution. For the log-affine target
\[
 d\nu_b(x)\propto e^{-\langle b,x\rangle}\one_K(x)\,dx,
\]
we replace the local radius $r_K(x)$ by the density-sensitive radius $\rho_b(x)$ as in~\cite{LV}.  Roughly speaking, $\rho_b(x)$ is the largest scale on which most nearby points both remain inside $K$ and have density comparable with the density at $x$.  It therefore captures the two reasons that a local move may become unstable: proximity to the boundary and rapid variation of the target density.  Using the concavity and mean-radius estimates for $\rho_b$, the same guided-localization and Poincar\'e-averaging argument gives the corresponding weighted isoperimetric inequality.

The only substantially new step is local overlap.  For the uniform target, hit-and-run samples uniformly from each chord.  For a log-affine target, the chord conditional is a truncated exponential whose normalizing constant depends on the chord, so the overlap lemma does not apply directly. We resolve this issue by constructing an auxiliary multiscale Metropolis kernel that makes local moves at the scale $\rho_b(x)$ (Fig.~\ref{fig:multiscale}).  Nearby points have proposal balls with a large common part.  A constant fraction of this common region is valid at both points and is accepted by both Metropolis updates.  Thus, their transition laws have a constant overlap with respect to total variation distance.  On every chord, hit-and-run completely resamples the conditional distribution, whereas the auxiliary kernel makes only a lazy local update.  A Dirichlet-form comparison shows that complete refresh is at least as effective, so the conductance bound for the auxiliary kernel transfers to hit-and-run~\cite{RudolfUllrichComparison}.  Appendix~\ref{app:log-affine} gives the details.

\begin{figure}
    \centering
    \includegraphics[width=0.3\linewidth]{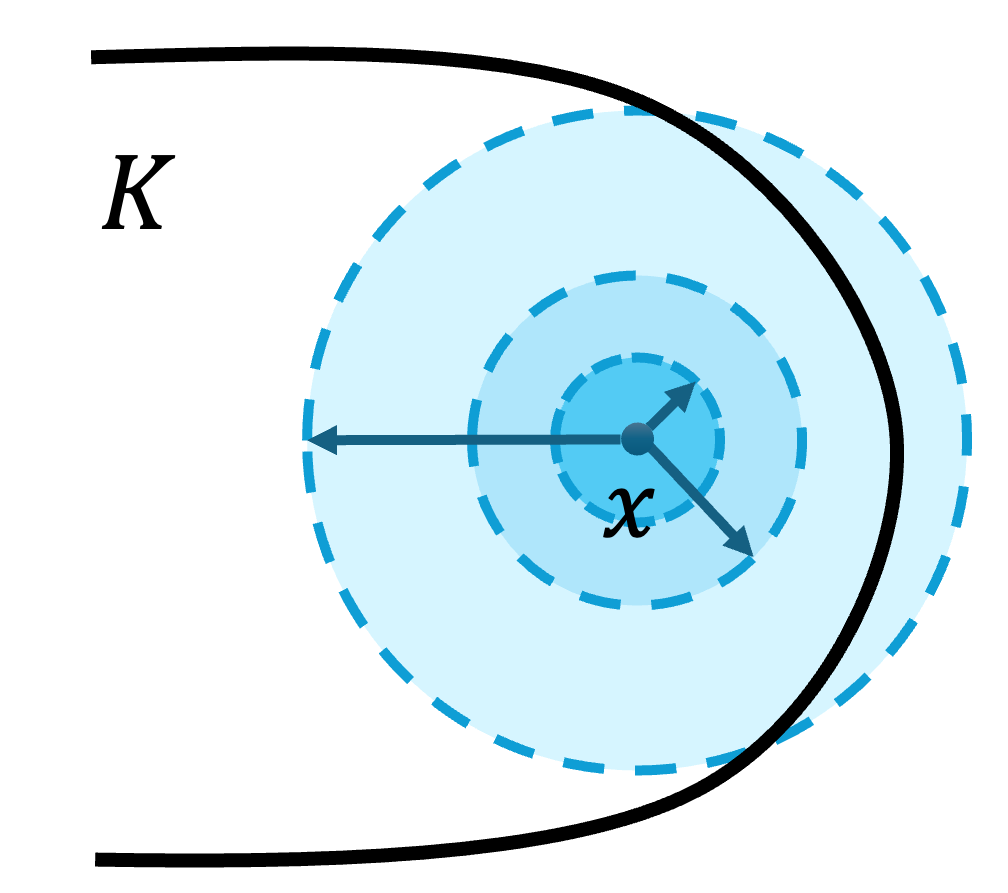}
    \caption{The auxiliary multiscale Metropolis kernel. In each step, it randomly chooses a step size of scale comparable with its local radius $\rho_b(x)$ from a dyadic set, and performs a ball-walk-like update.}
    \label{fig:multiscale}
\end{figure}


\subsection{Related work}

\paragraph{Geometric sampling.}
Classical sampling methods for the uniform distribution over a convex body given by its membership oracle (or equivalently, the zeroth-order oracle for its indicator function) include the ball walk~\cite{LS,KLS97} and hit-and-run~\cite{Smith,LovaszHR,LV}; both extend to general log-concave targets~\cite{LVlogconcave}.  See Vempala's survey~\cite{VempalaSurvey} for a broader account. A newer line begins with the restricted-Gaussian-oracle formulation of the proximal sampler~\cite{LeeShenTian21} and its heat-flow analysis~\cite{ChenChewiSalimWibisono22}.  Kook--Vempala--Zhang implemented this framework for convex bodies using membership queries, obtaining the In-and-Out walk and R\'enyi-divergence guarantees \cite{KookVempalaZhangInOut}.  Subsequent work strengthened the output guarantee to R\'enyi-infinity divergence~\cite{KookZhangRenyi} and allowed both the initial warmness and the final error to be measured in the same finite-order R\'enyi divergence~\cite{KookVempalaZerothOrder}.  Related developments produced subcubic cold-start algorithms~\cite{KookVempalaCold} and extended algorithmic diffusion to arbitrary log-concave targets through exponential lifting~\cite{KookVempalaDiffusion,KookVempalaUnified}.
For explicitly represented polytopes, another line of work uses the self-concordant barriers and related local geometry, which includes the Dikin walk~\cite{KannanNarayananDikin,kook2026d25mixingbounddikinwalks}, the Vaidya and John walks~\cite{ChenDwivediWainwrightYu}, Riemannian Hamiltonian Monte Carlo~\cite{LeeVempalaRHMC,RHMCLewis}, the strong-self-concordance framework~\cite{LaddhaLeeVempala}, and the interior-point method for sampling~\cite{kook2024gaussian}.  

\paragraph{Volume estimation.}
Volume estimation is one of the most important applications of geometric sampling.  Simulated-annealing algorithms connect a distribution with a known normalizing constant to the uniform distribution on the target body and use samples from consecutive, mutually warm phases to estimate successive ratios of normalizing constants. Dyer--Frieze--Kannan gave the first randomized polynomial-time volume algorithm in the membership-oracle model~\cite{DFK91}. Later milestones include the $O^*(n^5)$~\footnote{We use $O^*$ to hide logarithmic factors and the dependence on accuracy.} algorithm of Kannan--Lov\'asz--Simonovits~\cite{KLS97}, the $O^*(n^4)$ simulated-annealing algorithm of Lov\'asz--Vempala~\cite{LVVolume}, and the $O^*(n^3)$ Gaussian-cooling algorithm of Cousins--Vempala for well-rounded convex bodies~\cite{CV18}. For general convex bodies, Jia--Laddha--Lee--Vempala showed that the combined cost of rounding and volume estimation is $\widetilde O(n^{3.5}/\psi_n^2+n^3/\varepsilon^2)$ membership queries~\cite{JLLV26}.  More recently, Kook--Vempala developed an algorithmic-diffusion framework for sampling, rounding, and integrating general log-concave functions; for uniform distributions on convex bodies, it matches the best-known complexity~\cite{KookVempalaDiffusion}.

\paragraph{KLS, thin shell, and slicing.}
The \emph{KLS conjecture} asserts that, after isotropic normalization, the least expanding cut of a log-concave measure is comparable up to a universal factor with a half-space cut~\cite{KLS95}.  In our notation, it predicts $\psi_n\gtrsim1$. The \emph{thin-shell conjecture} asks whether every isotropic log-concave random vector $X\in\R^n$ satisfies $\operatorname{Var}(\|X\|_2^2)\lesssim n$~\cite{anttila2003central,bobkov2004central}; in particular, this says that $\|X\|_2$ fluctuates by at most a constant around $\sqrt n$. Bourgain's \emph{slicing conjecture} asks whether every unit-volume convex body has a hyperplane section of $(n-1)$-dimensional volume bounded below by a universal constant~\cite{bourgain1986high}. KLS implies thin shell, and thin shell implies slicing~\cite{EldanKlartagThinSlice}.  Conversely, Eldan showed that thin shell implies KLS up to polylogarithmic factors~\cite{EldanSL}, while no implication from slicing back to thin shell is known.  In particular, the recent resolutions of thin shell and slicing do not settle KLS.  

Building on Guan's estimate~\cite{GuanSlicing}, Klartag--Lehec resolved the slicing conjecture~\cite{KlartagLehecSlicing}; Bizeul subsequently gave an alternative proof through small-ball estimates~\cite{BizeulSlicing}. Klartag--Lehec later resolved the thin-shell conjecture using parallel coupling~\cite{KlartagLehecThinShell}.  A recent preprint of Chen--Klartag sharpens this result to the optimal inequality $\operatorname{Var}(\|X\|_2^2)\le8n$~\cite{ChenKlartagThinShell}. A concurrent preprint of Letwin proves the same sharp bound as part of an inequality for all quadratic forms and uses it to improve the KLS estimate discussed below~\cite{LetwinKLS}. 

For the KLS conjecture, the original localization argument gave $\psi_n\gtrsim n^{-1/2}$~\cite{KLS95}.  Eldan's stochastic-localization method yielded $\psi_n\gtrsim n^{-1/3}(\log n)^{-1/2}$~\cite{EldanSL}, and Lee--Vempala
improved this to $\psi_n\gtrsim n^{-1/4}$~\cite{LeeVempalaKLS}.  Chen later
obtained $\psi_n\geq n^{-o(1)}$~\cite{ChenKLS}, followed by the first
polylogarithmic bound $\psi_n\gtrsim (\log n)^{-5}$ of Klartag--Lehec~\cite{KlartagLehecKLS}.  The best published estimate is Klartag's $\psi_n\gtrsim(\log n)^{-1/2}$~\cite{KlartagKLS}; a recent preprint of Letwin improves this to $\psi_n\gtrsim(\log n)^{-1/4}$~\cite{LetwinKLS}.  

\paragraph{Localization and needle decompositions.}
Classical localization reduces high-dimensional integrals and isoperimetric questions to one-dimensional inequalities on weighted line segments, or needles~\cite{LS90,LS,KLS95}.  Klartag's optimal-transport formulation strengthens classical localization to a measurable decomposition generated by a common \(1\)-Lipschitz guiding function~\cite{KlartagNeedles}.  This additional structure has previously enabled Poincar\'e-based arguments in rigidity and stability theory~\cite{MaiRigidity,MaiOhta,BertrandFathi}.  We combine it instead with concave-weight isoperimetry to study hit-and-run.  This static decomposition differs from stochastic localization~\cite{EldanSL}, which evolves a random probability measure; see the recent survey~\cite{ShiTianZhangSL} and the references therein.

\paragraph{Roadmap.} In Section~\ref{sec:preliminaries}, we introduce the notations and the preliminaries in probability theory, convex geometry, and Markov chains.  In Section~\ref{sec:depth}, we define the local conductance radius and show its basic geometric properties. In Section~\ref{sec:one-dimensional}, we prove the one-dimensional concave-weight isoperimetric inequality, and in Section~\ref{sec:balanced-needles}, we lift it to high dimensions using guided localization and Poincar\'e averaging. In Section~\ref{sec:intrinsic_metric}, we prove the local overlap for hit-and-run in the intrinsic geometry. In Section~\ref{sec:conductance}, we combine the overlap with weighted isoperimetry to prove the conductance and mixing-time bounds. 

\section{Preliminaries}\label{sec:preliminaries}
\paragraph{Notation.} A \emph{convex body} is a compact convex subset of \(\R^n\) with nonempty interior.  Throughout the paper, $K\subseteq\R^n$ is a convex body in \emph{isotropic position}: the uniform distribution $\mu=\operatorname{Unif}(K)$ has zero mean and identity covariance, i.e.,
\begin{equation}\label{eq:isotropic}
 \E_{X\sim\mu}[X]=0,
 \qquad
 \E_{X\sim\mu}[X\otimes X]=I_n.
\end{equation}
We write \(\vol(\cdot)\) and \(dx\) for Lebesgue measure, $B(x,r)$ for a ball in $\R^n$ with center $x$ with radius $r$, $v_n=\vol(B(0,1))$ for the volume of the unit ball in $\R^n$, and \(\sigma_{n-1}\) for the rotation-invariant probability measure on \(\mathbb{S}^{n-1}\). For measures $\lambda,\nu$ on the same measurable space, $\lambda\ll\nu$ means that $\lambda$ is absolutely continuous with respect to \(\nu\); equivalently, \(\nu(A)=0\) implies \(\lambda(A)=0\) for every measurable \(A\).  When \(\lambda\ll\nu\), we denote its Radon--Nikodym derivative by \(d\lambda/d\nu\).
Vector norms are Euclidean and are written as \(\|x\|_2\).  We write $a\lesssim b$ when \(a\le Cb\) for a universal constant \(C\), and define \(a\gtrsim b\) and \(a\asymp b\) analogously.  Universal constants may change from one occurrence to the next.

\subsection{Markov chains and mixing time}
For probability measures \(\nu_1,\nu_2\), their total-variation distance is
\[
 \dtv(\nu_1,\nu_2)
 :=
 \sup_{A\text{ Borel}}
 \bigl|\nu_1(A)-\nu_2(A)\bigr|.
\]
\begin{definition}[$M$-warmness]
An initial distribution \(\mu_{\mathrm{init}}\) is
\emph{\(M\)-warm} with respect to \(\mu\) if
\[
 \mu_{\mathrm{init}}(A)\le M\mu(A)
\]
for every Borel set \(A\); or equivalently,
\(\frac{d\mu_{\mathrm{init}}}{d\mu}\le M\) almost everywhere.    
\end{definition}

Suppose that a Markov kernel \(\mathsf P\) is reversible with stationary measure \(\mu\).
For measurable sets \(A,B\subseteq K\), define the probability flow
\begin{equation}\label{eq:stationary-flow}
 \cQ_{\mathsf P}(A,B)
 :=
 \int_A\mathsf P(x,B)\,d\mu(x).
\end{equation}
Reversibility implies
\[
 \cQ_{\mathsf P}(A,B)
 =
 \cQ_{\mathsf P}(B,A).
\]
In particular, when $B=A^c$, $\cQ_{\mathsf P}(A,A^c)$ is called an \emph{ergodic flow}.

The mixing time of a Markov kernel \(\mathsf P\) from an initial distribution \(\mu_{\mathrm{init}}\) is
\begin{equation}\label{eq:mixing-time}
 \tau_{\mix}
 \bigl(
   \varepsilon,\mu_{\mathrm{init}},\mu;\mathsf P
 \bigr)
 :=
 \inf\left\{
 t\in\mathbb N:
 \dtv(\mu_{\mathrm{init}}\mathsf P^t,\mu)
 \le\varepsilon
 \right\},
\end{equation}
where \(\mathsf P^t\) denotes the \(t\)-step transition kernel:
\[
 (\mu_{\mathrm{init}}\mathsf P^t)(A)
 :=
 \int_K \mathsf P^t(x,A)\,d\mu_{\mathrm{init}}(x).
\]

Bounding the mixing time directly is difficult.  Instead, we bound the
expansion of the chain, measured by conductance, and let the
Lov\'asz--Simonovits machinery convert expansion into mixing.

\begin{definition}[Truncated conductance]
For \(0<s<1/2\), the \emph{\(s\)-conductance} of \(\mathsf P\) is
\begin{equation}\label{eq:s-conductance}
 \Phi_s(\mathsf P)
 :=
 \inf_{\substack{A\subseteq K\text{ measurable}\\s<\mu(A)\le1/2}}
 \frac{
   \cQ_{\mathsf P}(A,A^c)
 }{
   \mu(A)-s
 }.
\end{equation}
\end{definition}


The following lemma converts a lower bound on the $s$-conductance $\Phi_s({\sf P})$ into a warm-start mixing time bound. Lov\'asz--Simonovits proved it under the assumption that \(\mathsf P\) is lazy~\cite[Corollary~1.6]{LS}. As observed by Rudolf--Ullrich~\cite[Section~1]{RudolfUllrichPositivity}, the same conclusion holds for any reversible kernel whose associated Markov operator is \emph{positive}, meaning that
\begin{align*}
    \langle f,P f\rangle_{L^2(\mu)}:=\int f(x)(Pf)(x)\,d\mu(x)\ge0\qquad \text{for every}~f\in L^2(\mu).
\end{align*}

\begin{lemma}[Warm-start mixing time bound~\cite{LS,RudolfUllrichPositivity}]\label{lem:LS}
Let \(\mathsf P\) be a reversible positive Markov kernel with a
nonatomic stationary probability measure \(\mu\).  If
\(\mu_{\mathrm{init}}\) is \(M\)-warm, then for every
\(0<s<1/4\) and every \(t\in\mathbb N\),
\begin{equation}\label{eq:LS-explicit}
 \dtv(\mu_{\mathrm{init}}\mathsf P^t,\mu)
 \le
 Ms+
 M\left(
   1-\frac{\Phi_s(\mathsf P)^2}{2}
 \right)^t.
\end{equation}
In particular, if \(0<\varepsilon<1/2\), then for $s=\varepsilon/(2M)$,
one has
\begin{equation}\label{eq:LS}
 \tau_{\mix}
 \bigl(
   \varepsilon,\mu_{\mathrm{init}},\mu;\mathsf P
 \bigr)
 \lesssim
 \Phi_s(\mathsf P)^{-2}
 \log\frac{2M}{\varepsilon}.
\end{equation}
\end{lemma}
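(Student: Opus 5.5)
We will follow the Lov\'asz--Simonovits curve argument and replace laziness by positivity in the single place where laziness is used. For a probability measure $\nu\ll\mu$ with density $g=d\nu/d\mu\le M$, define
\[
 h_\nu(x):=\sup\Bigl\{\textstyle\int g\,\phi\,d\mu:\ 0\le\phi\le1,\ \int\phi\,d\mu=x\Bigr\},\qquad x\in[0,1].
\]
The supremum is attained by $\phi$ of the form $\one_{\{g>\lambda\}}$ plus a fractional part on the level set, which is possible because $\mu$ is nonatomic. The function $h_\nu$ is concave, and $\sup_A(\nu(A)-\mu(A))=\sup_x(h_\nu(x)-x)$. Warmness gives $h_{\nu_0}(x)\le\min\{Mx,1\}$.

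The first step is the one-step recursion: for $\nu_{t}=\mu_{\mathrm{init}}\mathsf P^t$ and $s\le x\le 1-s$,
\[
 h_{\nu_{t+1}}(x)\le\tfrac12\bigl[h_{\nu_t}(x-2\Phi_s\hat x)+h_{\nu_t}(x+2\Phi_s\hat x)\bigr],\qquad \hat x=\min\{x-s,1-x-s\}.
\]
In LS's lazy proof, one writes $\int g_{t+1}\phi\,d\mu=\int g_t(\mathsf P\phi)\,d\mu$ and splits $\mathsf P\phi$ into two test functions $\phi_1,\phi_2$ with masses $x\mp\cQ$-type shifts. Laziness is what makes $\phi_1:=2\mathsf P\phi-\phi$ (or its analogue) range in $[0,1]$. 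For a positive kernel we cannot do this pointwise. Instead we exploit reversibility and positivity, namely $\langle f,Pf\rangle\ge0$, through the identity $P=(P^{1/2})^2$ with $P^{1/2}$ a self-adjoint positive Markov-type contraction. We would first try to follow Rudolf--Ullrich: compare $\mathsf P$ with the lazy kernel $\tfrac12(\Id+\mathsf P)$. Positivity gives $\|P^t f\|\le\|((\Id+P)/2)^t f\|$ type spectral domination on $L^2_0(\mu)$ restricted to the relevant spectral projections. Alternatively we could directly establish the curve inequality using $\phi\mapsto P\phi$ decomposed via the conductance of level sets of $\phi$.

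This step is the main obstacle. Since the statement is attributed to \cite{LS,RudolfUllrichPositivity}, our actual plan is to import Rudolf--Ullrich's observation: their argument shows that the LS conductance inequality for $h$ holds for positive reversible kernels, with the conductance of $\mathsf P$ replacing that of its lazy version. Because $\cQ_{\mathsf P}=2\cQ_{(\Id+\mathsf P)/2}$, constants only improve.

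The second step is the standard induction. Show $h_{\nu_t}(x)\le x+Ms+M\min\{\sqrt{x-s},\sqrt{1-x-s}\}\,(1-\Phi_s^2/2)^t$ for $x\in[s,1-s]$. The base case follows from $h_{\nu_0}\le\min\{Mx,1\}$. The inductive step follows from the recursion, concavity, and the elementary inequality $\tfrac12(\sqrt{1-2\Phi}+\sqrt{1+2\Phi})\le1-\Phi^2/2$. On $[0,s]$ and $[1-s,1]$ the bound follows from $h\le Mx$ and $h\le1$ respectively. Taking the supremum of $h_{\nu_t}(x)-x$ gives \eqref{eq:LS-explicit}, after bounding the square-root factor by $1$.

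For \eqref{eq:LS}, set $s=\varepsilon/(2M)$, so that $Ms=\varepsilon/2$. Then require $M(1-\Phi_s^2/2)^t\le M e^{-t\Phi_s^2/2}\le\varepsilon/2$, which holds for $t\ge 2\Phi_s^{-2}\log(2M/\varepsilon)$.
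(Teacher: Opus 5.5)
Your proposal follows the same route as the paper, which gives no independent argument: it combines Lov\'asz--Simonovits' lazy-chain bound \cite[Corollary~1.6]{LS} with Rudolf--Ullrich's observation that positivity can replace laziness, so, as in the paper, everything hinges on that citation at the one step where laziness is used. Your exploratory alternatives would not close that step on their own, since $\mathsf P^{1/2}$ need not preserve nonnegativity, and $L^2$ domination of $\mathsf P^t$ by $((\Id+\mathsf P)/2)^t$ does not transfer the total-variation curve recursion.

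One detail in your induction: for a non-lazy positive kernel, $\Phi_s(\mathsf P)$ can exceed $1/2$. Positivity gives only $\cQ_{\mathsf P}(A,A^c)\le\mu(A)(1-\mu(A))$, so $\Phi_s(\mathsf P)\le 1/(2-4s)$, with equality for $\mathsf P(x,\cdot)=\mu$. You should therefore replace $\Phi_s$ by $\min\{\Phi_s,1/2\}$ in the recursion before using $\tfrac12(\sqrt{1-2\Phi}+\sqrt{1+2\Phi})\le 1-\Phi^2/2$. Concavity of $h_{\nu_t}$ allows this, and it costs only a constant factor in the exponent, leaving \eqref{eq:LS} unaffected.
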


\subsection{Hit-and-run walk}\label{sec:hit-and-run}

The hit-and-run walk makes the following update.

\begin{center}
\begin{minipage}{0.86\textwidth}
\hrule
\smallskip
\textbf{One hit-and-run update from \(X_t=x\in K \).}
\begin{enumerate}[leftmargin=2em,itemsep=2pt,topsep=4pt]
\item Sample \(\theta\sim\sigma_{n-1}\).
\item Let
\[
 [a,b]=\{\lambda\in\R:x+\lambda\theta\in K\}.
\]
Sample \(\lambda\) uniformly from \([a,b]\) and set
\(X_{t+1}=x+\lambda\theta\).
\end{enumerate}
\smallskip
\hrule
\end{minipage}
\end{center}


We denote this kernel by \(\HR\).  It is reversible with respect to the uniform measure \(\mu\); see, e.g.~\cite[Section~2]{VempalaSurvey}. Moreover, $\HR$ is positive~\cite{RudolfUllrichPositivity}.  Thus \(\HR\) satisfies the conditions of Lemma~\ref{lem:LS}.

\subsection{Log-concavity}
A function $f:\R^n \rightarrow [0, \infty)$ is \emph{log-concave} if $-\log f$ is convex, with the convention $\log 0=-\infty$. A probability measure $\nu$ is log-concave if it has a log-concave density with respect to Lebesgue measure. In particular, for a convex body $K$, the uniform measure $\mu=\Unif(K)$ is log-concave.

The following lemma collects some standard estimates for one-dimensional
log-concave distributions, and the proof is deferred
to Appendix~\ref{app:one-dimensional-facts}.

\begin{restatable}[\cite{LVlogconcave,CE}]{lemma}{onedimest}\label{lem:one-dimensional-facts}
Let $X$ follow a nondegenerate log-concave distribution $\lambda$ with density $\rho$ on an interval in $\R$, distribution function $F(x)=\Prb\{X\le x\}$, and variance $\sigma^2$. Let $q(x)=\min\{F(x),1-F(x)\}$.  For every interior point $x$ of the
support,
\begin{align}
 \rho(x)&\gtrsim\frac{q(x)}{\sigma},
 \label{eq:quantile-density}\\
 \|\rho\|_\infty&\lesssim\frac1\sigma,
 \label{eq:max-density}\\
 \abs{x-\E X}&\lesssim
 \sigma\left(1+\log\frac1{q(x)}\right).
 \label{eq:quantile-location}
\end{align}
\end{restatable}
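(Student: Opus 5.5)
Since all three estimates are invariant under affine changes of variable (each side scales the same way under $X\mapsto aX+b$), I would normalize $\E X=0$ and $\sigma=1$. The basic tool is log-concavity of $\rho$ together with the standard fact that the tails of a one-dimensional log-concave law decay exponentially on the scale $\sigma$. Concretely, I will use three standard facts about isotropic one-dimensional log-concave densities: $\rho(\E X)\asymp 1$; $\|\rho\|_\infty\lesssim 1$; and $\Prb\{|X|>t\}\le Ce^{-ct}$. The first two can be obtained by comparing $\rho$ with the two-sided exponential tangent to $\log\rho$ at its maximum, or by citing the Lov\'asz--Vempala lemmas on log-concave functions \cite{LVlogconcave}. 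In this normalization, \eqref{eq:max-density} is exactly the second fact.

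For \eqref{eq:quantile-location}, I would argue by symmetry and assume $x>0$, so that $q(x)\le 1-F(x)=\Prb\{X\ge x\}\le Ce^{-cx}$. Rearranging gives $x\le c^{-1}\log(C/q(x))\lesssim 1+\log(1/q(x))$. The case $x<0$ is the same argument applied to the left tail.

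For \eqref{eq:quantile-density}, I would use the fact that $F$ and $1-F$ are themselves log-concave, since integrals of log-concave densities are log-concave. Hence $F'/F=\rho/F$ is nonincreasing and $\rho/(1-F)$ is nondecreasing. Assume first that $F(x)\le 1/2$, so $x$ lies at or left of the median $m$. Monotonicity of $\rho/F$ then gives $\rho(x)/F(x)\ge \rho(m)/F(m)=2\rho(m)$, which reduces the claim to showing $\rho(m)\gtrsim 1$. For that, I would use $|m|\lesssim 1$, which follows from \eqref{eq:quantile-location} with $q=1/2$. Then log-concavity between the mode region and the mean, combined with $\rho(0)\asymp 1$ and the exponential tail bound, shows that $\rho$ is bounded below on a constant-size window around the mean. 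One clean way to see this: if $\rho(m)$ were tiny, then by log-concavity $\rho$ would be tiny on one entire side of $m$, and that side carries mass $1/2$ with exponential tails, which is a contradiction. The case $F(x)\ge1/2$ is symmetric, using $\rho/(1-F)$.

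The main obstacle is the lower bound $\rho(m)\gtrsim 1/\sigma$ at the median. Everything else follows from the monotone hazard-rate property and the exponential tails. For this step I would simply quote the known bound that the density at the median of a one-dimensional log-concave law is at least $1/(\sqrt{12}\,\sigma)$ or a similar universal multiple, available in \cite{LVlogconcave}.
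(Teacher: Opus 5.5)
Your proposal is correct. For \eqref{eq:max-density} and \eqref{eq:quantile-location} you follow essentially the paper's argument. You normalize to the isotropic case, quote $\|\rho_Y\|_\infty\le1$ from \cite{LVlogconcave}, and invert the exponential tail bound $\Prb\{|Y|>R\}<e^{-R+1}$. Your observation that $q(x)\le1-F(x)$ always holds also lets you skip the paper's case split at $|x-\E X|\ge2\sigma$.

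The genuine difference is in \eqref{eq:quantile-density}. The paper simply cites \cite[Lemma~28]{CE}, which gives density at least $\delta/(8e)$ wherever $q_Y\ge\delta$ for $\delta\le1/e$, and treats $q>1/e$ separately. You instead derive the bound from structure. By Pr\'ekopa, $F$ and $1-F$ are log-concave, so $\rho/F$ is nonincreasing and $\rho/(1-F)$ is nondecreasing. This gives $\rho(x)\ge2\rho(m)\,q(x)$ on both sides of the median $m$, and reduces everything to the single bound $\rho(m)\gtrsim1/\sigma$.

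Your unimodality argument proves that bound outright:
\begin{itemize}
\item $\rho\le\rho(m)$ on the side of $m$ away from the mode, and that side has mass $1/2$;
\item since $|m-\E X|\lesssim\sigma$ by \eqref{eq:quantile-location} at $q=1/2$, the tail bound puts at least mass $1/4$ of that side within $O(\sigma)$ of $m$;
\item hence $\rho(m)\cdot O(\sigma)\ge1/4$.
\end{itemize}
So your proof is self-contained apart from the max-density and tail facts. The sharp constant $1/\sqrt{12}$ you mention is usually attributed to Bobkov, and I would not count on finding it in \cite{LVlogconcave}; but nothing in your argument depends on it.

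Your route makes the mechanism transparent: the linear-in-$q$ lower bound is exactly the monotone hazard-rate property plus a median-density bound, and it avoids the case split at $q=1/e$. The paper's route is shorter but hides that mechanism inside a citation.
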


\subsection{Asymptotic convex geometry}
Our proof uses some standard tools in high-dimensional convex geometry.
\begin{definition}[Cheeger constant]
For a probability distribution \(\nu\) over $\R^n$, the \emph{Cheeger isoperimetric constant} of $\nu$ is 
\begin{equation}\label{eq:Cheeger}
 \psi_{\sf ch}(\nu):=
 \inf_{\substack{A\subseteq\R^n\text{ measurable}\\0<\nu(A)<1}}
 \frac{
   \nu^+(A)
 }{
   \min\{\nu(A),1-\nu(A)\}
 },
\end{equation}
where 
\begin{align*}
    \nu^+(A)=\liminf_{r\downarrow0}\frac{\nu\left(\{x:
 \operatorname{dist}(x,A)<r\}\right)-\nu(A)}{r}.
\end{align*}
\end{definition}
At dimension \(n\), write
\begin{equation}\label{eq:KLS-constant}
 \psi_n
 :=
 \inf\left\{
 \psi_{\sf ch}(\nu):
 \nu
 \text{ is isotropic and log-concave on }\R^n
 \right\}.
\end{equation}
It is easy to show that
\begin{equation}\label{eq:KLS-upper}
 \psi_n\leq \psi_{\sf ch}(\mathcal{N}(0,I_n))\lesssim 1.
\end{equation}
The reciprocal quantity \(\psi_{\sf KLS}(n):=\psi_n^{-1}\) is called the
KLS constant.  The KLS conjecture~\cite{KLS95} asserts that
\(\psi_{\sf KLS}(n)=O(1)\), or equivalently that \(\psi_n\gtrsim1\). The state-of-the-art result is \(\psi_{\sf KLS}(n)\lesssim(\log n)^{1/4}\),  or equivalently \(\psi_n\gtrsim(\log n)^{-1/4}\) by Letwin~\cite{LetwinKLS}. 

\begin{definition}[Poincar\'e inequality]
A probability measure $\nu$ satisfies a Poincar\'e inequality with constant \(C_{\sf PI}(\nu)>0\) if for every locally Lipschitz \(f:\R^n\to\R\),
\begin{equation}\label{eq:Poincare}
 \operatorname{Var}_\nu(f)
 \le
 C_{\sf PI}(\nu)
 \int\|\nabla f(x)\|_2^2\,d\nu(x)
\end{equation}
\end{definition}

The following proposition shows that the Poincar\'e constant and the Cheeger constant are equivalent:
\begin{proposition}[\cite{Cheeger,buser1982note,ledoux2004spectral}]
For any log-concave distribution $\nu$,
\begin{equation}\label{eq:Cheeger-Poincare}
 C_{\sf PI}(\nu)\asymp \psi_{\sf ch}(\nu)^{-2}.
\end{equation}
\end{proposition}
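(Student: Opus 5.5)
The plan is to prove the two inequalities $C_{\sf PI}(\nu)\lesssim\psi_{\sf ch}(\nu)^{-2}$ and $\psi_{\sf ch}(\nu)^{-2}\lesssim C_{\sf PI}(\nu)$ separately. The first direction (Cheeger's inequality) holds for every probability measure. Log-concavity is needed only for the reverse direction, the Buser--Ledoux inequality.

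\emph{Cheeger direction.} Let $f$ be locally Lipschitz, write $\psi=\psi_{\sf ch}(\nu)$, and let $m$ be a median of $f$ under $\nu$. Put $g=(f-m)_+^2$.
\begin{enumerate}
\item Every superlevel set $\{g>t\}$ with $t>0$ has $\nu$-mass at most $1/2$, so $\min\{\nu(\{g>t\}),1-\nu(\{g>t\})\}=\nu(\{g>t\})$.
\item I would use the co-area inequality in Minkowski-content form, $\int\|\nabla g\|_2\,d\nu\ge\int_0^\infty\nu^+(\{g>t\})\,dt$, valid for locally Lipschitz $g$ (Bobkov--Houdr\'e). Combined with step 1 it gives
\[
\psi\int g\,d\nu\le\int\|\nabla g\|_2\,d\nu=2\int(f-m)_+\|\nabla f\|_2\,d\nu .
\]
\item Cauchy--Schwarz on the right-hand side yields $\int(f-m)_+^2\,d\nu\le 4\psi^{-2}\int_{\{f>m\}}\|\nabla f\|_2^2\,d\nu$.
\item The same argument applies to $(f-m)_-$. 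Adding the two bounds and using $\operatorname{Var}_\nu(f)\le\int(f-m)^2\,d\nu$ gives $C_{\sf PI}(\nu)\le 4\psi^{-2}$.
\end{enumerate}

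\emph{Buser--Ledoux direction.} I would follow Ledoux's semigroup argument. First reduce to a smooth, strictly log-concave density $e^{-V}$ with $\nabla^2V\succ0$. For example, replace $\nu$ by its convolution with $\mathcal N(0,\delta I_n)$ and multiply by $e^{-\delta\|x\|^2}$. This keeps log-concavity, and both constants behave continuously as $\delta\to0$: the Poincar\'e constant because variances and Dirichlet energies converge for Lipschitz test functions, and the Cheeger constant because $\nu^+$ is lower semicontinuous along such approximations.

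For the smooth measure, let $P_t$ be the Langevin semigroup with generator $\Delta-\nabla V\cdot\nabla$. It is symmetric in $L^2(\nu)$ and satisfies the curvature condition $\mathrm{CD}(0,\infty)$. Two standard consequences are needed.
\begin{enumerate}
\item The reverse Poincar\'e inequality gives the gradient bound $\|\nabla P_tg\|_\infty\le\|g\|_\infty/\sqrt{2t}$. By duality this implies $\int|h-P_th|\,d\nu\le\sqrt{2t}\int\|\nabla h\|_2\,d\nu$.
\item The spectral gap gives $\|P_th-\nu(h)\|_{L^2(\nu)}\le e^{-t/C_{\sf PI}}\|h-\nu(h)\|_{L^2(\nu)}$.
\end{enumerate}
Apply these to Lipschitz approximations $h$ of $\one_A$, so that $\int\|\nabla h\|_2\,d\nu\to\nu^+(A)$. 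On one side, $\int h(h-P_th)\,d\nu\le\int|h-P_th|\,d\nu\lesssim\sqrt t\,\nu^+(A)$. On the other side, for the indicator,
\[
\int\one_A(\one_A-P_t\one_A)\,d\nu=\nu(A)-\|P_{t/2}\one_A\|_2^2\ge\nu(A)(1-\nu(A))\bigl(1-e^{-t/C_{\sf PI}}\bigr).
\]
Choosing $t=C_{\sf PI}(\nu)$ gives $\nu^+(A)\gtrsim\min\{\nu(A),1-\nu(A)\}/\sqrt{C_{\sf PI}(\nu)}$, which is the claimed bound $\psi_{\sf ch}(\nu)^{-2}\lesssim C_{\sf PI}(\nu)$.

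The main obstacle is the approximation step in the Buser direction. For the measures relevant here, such as $\Unif(K)$, the density is discontinuous at $\partial K$, so the semigroup calculus cannot be applied directly. One must check that the smoothing preserves log-concavity and that $\psi_{\sf ch}$ and $C_{\sf PI}$ pass to the limit; the Cheeger constant requires the more careful argument.
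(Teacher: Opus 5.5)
The paper gives no proof of this proposition and only cites Cheeger, Buser and Ledoux. Your two arguments are the standard ones behind those citations. Your Cheeger direction is correct, up to a routine truncation of $f$ so that you may divide by $(\int(f-m)_+^2\,d\nu)^{1/2}$. It is also the only direction the paper ever uses: \eqref{eq:isotropic-Poincare}, Corollary~\ref{cor:full-depth} and Corollary~\ref{cor:log-affine-radius-profile} need only $C_{\sf PI}\lesssim\psi_{\sf ch}^{-2}$. Your semigroup computation for a smooth log-concave density is also correct.

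The gap is the reduction to that smooth case, where both justifications you offer point the wrong way. To transfer $\psi_{\sf ch}(\nu_\delta)\gtrsim C_{\sf PI}(\nu_\delta)^{-1/2}$ to $\nu$, you need two things:
\begin{itemize}
\item an upper bound $\limsup_\delta C_{\sf PI}(\nu_\delta)\lesssim C_{\sf PI}(\nu)$, uniform over test functions;
\item a lower bound on $\nu^+$ inherited from the $\nu_\delta^+$.
\end{itemize}
Convergence of variances and Dirichlet energies for each fixed test function gives only $C_{\sf PI}(\nu)\le\liminf_\delta C_{\sf PI}(\nu_\delta)$. Lower semicontinuity of $\nu^+$ would only bound $\nu^+(A)$ from above. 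The fixed-set statement you actually need is false. Take $\nu=\Unif(K)$ and $A=K\cap H$ with $H$ a half-space. Then $\nu_\delta^+(A)$ also charges $\partial K\cap H$, where the smoothed density is about $1/(2\vol(K))$, so in general $\lim_\delta\nu_\delta^+(A)>\nu^+(A)$.

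Both points can be repaired.

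\emph{Poincar\'e side.} Drop the factor $e^{-\delta\|x\|_2^2}$. The measure $\nu_\delta:=\nu*\mathcal N(0,\delta I_n)$ already has a smooth, positive, log-concave density, and $\mathrm{CD}(0,\infty)$ requires only convexity of the potential. The law of total variance, conditioning on the Gaussian summand, then gives $C_{\sf PI}(\nu_\delta)\le C_{\sf PI}(\nu)+\delta$, which is the needed direction. With the extra factor there is no equally cheap bound. Multiplying a log-concave measure by a log-concave weight (for instance, restricting it to a convex subset) can increase its Poincar\'e constant.

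\emph{Cheeger side.} Pass to the limit in functional form rather than set by set. Your smooth-case bound plus the co-area inequality give, for bounded Lipschitz $f$,
\[
\int\|\nabla f\|_2\,d\nu_\delta\ge c\,(C_{\sf PI}(\nu)+\delta)^{-1/2}\inf_{a\in\R}\int|f-a|\,d\nu_\delta .
\]
For fixed $f$ both sides converge as $\delta\to0$. On the left, write $\int\|\nabla f\|_2\,d\nu_\delta=\int(\|\nabla f\|_2*\gamma_\delta)\,d\nu$, where $\gamma_\delta$ is the Gaussian density, and use Lebesgue differentiation together with $\nu\ll dx$.

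Finally, apply the limiting inequality to $h_r:=(1-\operatorname{dist}(\cdot,A)/r)_+$, and write $A^r:=\{x:\operatorname{dist}(x,A)<r\}$. Then $\int\|\nabla h_r\|_2\,d\nu\le(\nu(A^r)-\nu(A))/r$. Along a sequence of radii realizing a finite $\nu^+(A)$, one has $\nu(\bar A\setminus A)=0$, hence $\inf_a\int|h_r-a|\,d\nu\to\min\{\nu(A),1-\nu(A)\}$. This yields $\nu^+(A)\gtrsim C_{\sf PI}(\nu)^{-1/2}\min\{\nu(A),1-\nu(A)\}$.
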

Applying to the isotropic log-concave measure \(\mu\), we have
\begin{equation}\label{eq:isotropic-Poincare}
 C_{\sf PI}(\mu)\asymp \psi_{\sf ch}(\mu)^{-2} \leq
 \psi_n^{-2}.
\end{equation}

Berwald's inequality provides the reverse H\"older estimate needed in our proof.  We use its log-concave-measure form; the classical uniform-measure theorem has a slightly sharper constant.

\begin{lemma}[Berwald's inequality~\cite{Berwald,LangharstPutterman}]\label{lem:berwald}
Let $\lambda$ be a log-concave probability measure, and let $w\ge0$ be a finite concave function on the convex support of $\lambda$.  Then
\begin{equation}\label{eq:Berwald-R}
 \int w^2\,d\lambda
 \le 2\left(\int w\,d\lambda\right)^2.
\end{equation}
If $\lambda=\Unif(K)$ on an $n$-dimensional convex body, the classical Berwald inequality gives the sharper factor $2(n+1)/(n+2)$ in place of $2$.
\end{lemma}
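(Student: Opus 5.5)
The plan is to reduce the inequality to a one-dimensional statement about the tail function of $w$ under $\lambda$, and then compare that tail with an exponential. Define
\[
 \varphi(s):=\lambda\bigl(\{w\ge s\}\bigr),\qquad s\ge0.
\]
Because $w\ge0$ on the support of $\lambda$, we have $\varphi(0)=1$. The layer-cake formula gives
\[
 \int w\,d\lambda=\int_0^\infty\varphi(s)\,ds,
 \qquad
 \int w^2\,d\lambda=2\int_0^\infty s\,\varphi(s)\,ds .
\]
So it suffices to show
\[
 \int_0^\infty s\,\varphi(s)\,ds\le\Bigl(\int_0^\infty\varphi(s)\,ds\Bigr)^2 .
\]
If $\int w\,d\lambda=\infty$ there is nothing to prove. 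If it is $0$, then $w=0$ $\lambda$-a.e. and both sides vanish. So we may assume $a:=\int\varphi\in(0,\infty)$.

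\emph{Step 1: $\varphi$ is log-concave on $[0,\infty)$.} Concavity of $w$ on the convex support gives, for $s_0,s_1\ge0$ and $\theta\in[0,1]$,
\[
 \{w\ge(1-\theta)s_0+\theta s_1\}\supseteq(1-\theta)\{w\ge s_0\}+\theta\{w\ge s_1\}.
\]
Each superlevel set is convex, and the inclusion is a Minkowski combination. Since $\lambda$ is log-concave, the Prékopa--Leindler inequality (in its Borell form $\lambda((1-\theta)A+\theta B)\ge\lambda(A)^{1-\theta}\lambda(B)^\theta$) then gives
\[
 \varphi\bigl((1-\theta)s_0+\theta s_1\bigr)\ge\varphi(s_0)^{1-\theta}\varphi(s_1)^{\theta}.
\]
This holds with the convention $\log0=-\infty$, which covers the range where $\varphi$ vanishes beyond $\sup w$.

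\emph{Step 2: single-crossing comparison with the extremal exponential.} Let $e(s):=e^{-s/a}$, so that $\int_0^\infty e=a=\int_0^\infty\varphi$ and $\int_0^\infty s\,e(s)\,ds=a^2$. The function
\[
 h(s):=\log\varphi(s)+s/a
\]
is concave on $[0,\infty)$ (possibly $-\infty$ on a terminal interval) and satisfies $h(0)=0$. A concave function vanishing at $0$ is nonnegative on some interval $[0,s_0]$ and negative on $(s_0,\infty)$. Hence $\varphi-e\ge0$ on $[0,s_0]$ and $\varphi-e\le0$ afterwards. Since $\int(\varphi-e)=0$, we get
\[
 \int_0^\infty s\,(\varphi-e)\,ds\le s_0\int_0^\infty(\varphi-e)\,ds=0 .
\]
It follows that $\int s\varphi\le\int s\,e=a^2$, which is the claim.

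The main obstacle is Step 1. One must check carefully that the superlevel-set inclusion holds only on the convex support (which is where $w$ is defined and concave) and that Borell's inequality applies to these possibly non-closed convex sets. Measurability is harmless since the sets are convex. The degenerate cases $s_0=0$ (where $h<0$ immediately, which would force $\int\varphi<a$, a contradiction) and $s_0=\infty$ (where $\varphi\ge e$ everywhere with equal integrals, so $\varphi=e$ a.e.) are consistent with the argument. For the uniform-measure refinement one would instead compare with $(1-s/c)^n$ using $1/n$-concavity; we do not need that here.
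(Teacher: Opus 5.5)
Your proof is correct. It differs from the paper in that the paper gives no proof of this lemma and simply cites the classical Berwald inequality and the weighted version of Langharst--Putterman.

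You instead give a self-contained argument for exactly the case the paper uses. Borell's form of Pr\'ekopa--Leindler makes $\varphi(s)=\lambda\{w\ge s\}$ log-concave with $\varphi(0)=1$, and the single-crossing comparison with $e^{-s/a}$ then gives $\int_0^\infty s\varphi\,ds\le a^2$. Nothing is lost in sharpness: the constant $2$ is attained when $\lambda$ is the standard exponential law on $[0,\infty)$ and $w(x)=x$. Your route needs only Pr\'ekopa--Leindler, whereas the citations also cover general moment pairs and the sharp dimension-dependent constants.

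Two details you left implicit are harmless. First, finiteness of $\int s\varphi\,ds$ follows from the crossing itself, since $\varphi\le e^{-s/a}$ beyond $s_0$. Second, the set $\{h\ge0\}$ may be half-open, which does not affect the argument.

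The second sentence of the lemma, the factor $2(n+1)/(n+2)$ for $\Unif(K)$, you only sketch. It does follow from the same scheme. Brunn--Minkowski makes $\varphi^{1/n}$ concave on $\{\varphi>0\}$, and you compare with $(1-s/c)_+^n$ where $c=(n+1)a$. The one extra check is the truncation at $c$. If $\varphi^{1/n}-(1-s/c)$ becomes negative at some point, concavity keeps it negative afterwards. That is impossible where $s\ge c$ and $\varphi>0$, so $\varphi$ vanishes beyond $c$ and the single crossing persists.

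Since the paper only uses the factor $2$, nothing downstream depends on this refinement.
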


\subsection{Relaxed weighted perimeter}
\label{sec:relaxed-perimeter}

The conductance argument needs a notion of boundary size for arbitrary measurable cuts.  Let $\lambda$ be a probability measure supported on a convex body $K\subseteq\R^n$, with $\lambda\ll dx$, and let $w:K\to[0,\infty)$ be positive on $\operatorname{int}(K)$ with a Lipschitz zero extension to $\R^n$.  One should think of $w(x)$ as the local distance that the walk can move from $x$.  Because this scale vanishes at $\partial K$, the distance functions used later need not be globally Lipschitz in Euclidean distance.  We therefore use a direct relaxation that places no regularity assumption on the cut itself.

We write \(\Lip(K)\) for the real-valued Lipschitz functions on \(K\).
The following definition is adapted from the lower-semicontinuous relaxation of the weighted variation $f\mapsto\int w\|\nabla f\|_2\,d\lambda$ in BV functions theory; see, for example, \cite{AFP}.
\begin{definition}[Relaxed weighted perimeter]\label{def:relax_weight_per}
For a measurable set \(E\subseteq K\), define its \emph{relaxed weighted
perimeter with weight \(w\)} by
\begin{equation}\label{eq:relaxed-perimeter}
 \cP_{w,\lambda}(E)
 :=
 \inf
 \left\{
 \liminf_{k\to\infty}
 \int_K
 w(x)\|\nabla f_k(x)\|_2\,d\lambda (x):
 \begin{array}{l}
 f_k\in\Lip(K),\quad 0\le f_k\le1,\\
 f_k\to\one_E\text{ in }L^1(\lambda)
 \end{array}
 \right\}.
\end{equation}
When $\lambda=\mu=\Unif(K)$, we abbreviate $\cP_{w,\mu}$ to $\cP_w$.
\end{definition}

To see why $\mathcal{P}_{w}(E)$ is a notion of perimeter, note that each \(f_k\) is a bounded Lipschitz approximation to the indicator \(\one_E\): \(f_k\approx 1\) on most of \(E\), \(f_k\approx 0\) on most of \(K\setminus E\), and changes continuously across a thin transition region.  The gradient norm \(\|\nabla f_k(x)\|_2\) measures the local rate of this change.  A $0$--$1$ transition across a layer of thickness \(\delta\) has a gradient of order \(1/\delta\), while the layer has volume of order \(\delta\) times its interfacial area.  Integrating the gradient norm therefore produces a quantity of the order of that area, independent of the chosen transition width.  The factor \(w(x)\) assigns a local weight to placing the interface near \(x\). This interpretation is exact for regular sets.  Suppose that \(E\subseteq K\) has a sufficiently smooth boundary, and we take $w\equiv 1$. Then
\[
 \cP_1(E)= \mu^+(E)=\liminf_{r\downarrow0}\frac{\mu\left(\{x\in K:\operatorname{dist}(x,E)<r\}\right)-\mu(E)}{r}.
\]
The relaxation extends this boundary-area interpretation to arbitrary measurable sets.  

\begin{restatable}[Basic properties of the relaxed weighted perimeter]{proposition}{propperimeter}
\label{prop:relaxed-perimeter-toolkit}
Let $\lambda$ be a probability measure supported on a convex body $K\subseteq\R^n$, with $\lambda\ll dx$.
Let \(w:K\to[0,\infty)\) be positive on \({\operatorname{int}(K)}\), with a Lipschitz zero
extension to \(\R^n\).

\begin{enumerate}[label=\textup{(\roman*)}]

\item \emph{Complement symmetry.}  For every measurable \(E\subseteq K\),
\[
 \cP_{w,\lambda}(E)=\cP_{w,\lambda}(K\setminus E).
\]

\item \emph{One-sided coarea.}
Let $u$ be locally Lipschitz on $\operatorname{int}(K)$ and suppose that $w\|\nabla u\|_2$ is bounded Lebesgue-almost everywhere.  Then, for every $a<b$,
\begin{equation}\label{eq:relaxed-coarea}
 \int_a^b
 \cP_{w,\lambda}(\{u<t\})\,dt
 \le
 \int_{\{x:a<u(x)<b\}}
 w(x)\|\nabla u(x)\|_2\,d\lambda(x).
\end{equation}
Here and below, extend \(u\) by zero on \(\partial K\).  Any measurable
choice of boundary values gives the same relaxed weighted perimeter because
\(\lambda(\partial K)=0\).

\item \emph{Slicing.}
Suppose that
\[
 \lambda=\int\lambda_I\,d\pi(I)
\]
is a measurable disintegration over nondegenerate line intervals \(I\), and that each \(\lambda_I\) is absolutely continuous with respect to arclength.  If \(H(I)\) is a nonnegative
measurable function such that
\[
 \cP_{w,\lambda_I}(E\cap I)\ge H(I)
 \qquad\text{for $\pi$-almost every }I,
\]
then
\begin{equation}\label{eq:slicing}
 \cP_{w,\lambda}(E)
 \ge
 \int H(I)\,d\pi(I).
\end{equation}
\end{enumerate}
\end{restatable}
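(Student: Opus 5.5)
The three parts are handled separately, each straight from Definition~\ref{def:relax_weight_per}, using only that $\lambda\ll dx$, that $w$ is bounded (its Lipschitz zero extension is continuous on compact $K$), and standard facts about Lipschitz functions.

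\emph{Complement symmetry.} Take any admissible sequence $f_k\to\one_E$ in $L^1(\lambda)$ with $0\le f_k\le1$, $f_k\in\Lip(K)$. Then $g_k:=1-f_k$ is admissible for $K\setminus E$, since $g_k\to\one_{K\setminus E}$, and $\|\nabla g_k\|_2=\|\nabla f_k\|_2$ almost everywhere. Hence the energies agree term by term, so $\cP_{w,\lambda}(K\setminus E)\le\cP_{w,\lambda}(E)$. Applying the same argument to $K\setminus E$ gives the reverse inequality.

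\emph{One-sided coarea.} For $t\in(a,b)$ and $\delta>0$, define the piecewise-linear cutoff
\[
 \phi_\delta(s)=\min\{1,\max\{0,(t-s)/\delta\}\}
\]
and set $f_\delta:=\phi_\delta\circ u$. This is not yet globally Lipschitz on $K$, because $u$ is only locally Lipschitz in the interior. I would fix this by first truncating $u$ to $[a,b]$, which does not change the level set $\{u<t\}$, and then restricting to $K_\eta=\{x:\operatorname{dist}(x,\partial K)\ge\eta\}$ with a further cutoff, letting $\eta\to0$. This step uses $\lambda(\partial K)=0$ and dominated convergence, and I expect it to be the main technical obstacle: one has to produce genuinely Lipschitz approximants on $K$ while keeping the energy controlled by $\int w\|\nabla u\|\,d\lambda$. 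For the limit itself, $f_\delta\to\one_{\{u<t\}}$ in $L^1(\lambda)$ as $\delta\to0$ at every $t$, and the energy equals $\delta^{-1}\int_{\{t-\delta<u<t\}}w\|\nabla u\|_2\,d\lambda$. Hence
\[
 \cP_{w,\lambda}(\{u<t\})\le\liminf_{\delta\to0}\frac1\delta\int_{\{t-\delta<u<t\}}w\|\nabla u\|_2\,d\lambda .
\]
Integrating over $t\in(a,b)$ and using Fatou's lemma gives
\[
 \int_a^b\frac1\delta\int_{\{t-\delta<u<t\}}F\,d\lambda\,dt=\int F(x)\,\frac{|(a,b)\cap(u(x),u(x)+\delta)|}{\delta}\,d\lambda(x),
\]
where $F:=w\|\nabla u\|_2$. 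By dominated convergence (using that $F$ is bounded) the right-hand side tends to $\int_{\{a\le u<b\}}F\,d\lambda$. The set $\{u=a\}$ contributes nothing, because $\nabla u=0$ almost everywhere there. This yields \eqref{eq:relaxed-coarea}.

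\emph{Slicing.} Take an admissible sequence $f_k$ for $E$ whose $\liminf$ energy is within $\epsilon$ of $\cP_{w,\lambda}(E)$, and pass to a subsequence that realizes the $\liminf$. Disintegrating gives $\int\|f_k-\one_E\|_{L^1(\lambda_I)}\,d\pi(I)\to0$, so after passing to a further subsequence, $f_k|_I\to\one_{E\cap I}$ in $L^1(\lambda_I)$ for $\pi$-almost every $I$ by Fubini/Borel--Cantelli. Each restriction $f_k|_I$ is Lipschitz on $I$, and its derivative along $I$ is bounded by $\|\nabla f_k\|_2$ at arclength-almost every point. Here I would use Rademacher's theorem together with the fact that Lebesgue-null sets meet $\pi$-almost every needle in an arclength-null set, which follows from the disintegration and the assumption $\lambda_I\ll$ arclength. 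The one-dimensional energy on $I$ (the gradient taken intrinsically on $I$) is then at most $\int_I w\|\nabla f_k\|_2\,d\lambda_I$. Fatou's lemma gives
\[
 \lim_k\int w\|\nabla f_k\|_2\,d\lambda\ge\int\liminf_k\int_I w\|\nabla f_k\|_2\,d\lambda_I\,d\pi\ge\int\cP_{w,\lambda_I}(E\cap I)\,d\pi\ge\int H\,d\pi .
\]
Letting $\epsilon\to0$ completes the proof of \eqref{eq:slicing}.
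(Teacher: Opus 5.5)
\textbf{Parts (i) and (iii).} These follow the paper's argument. In (iii) there is one slip. From $\lambda\ll dx$ and the disintegration you only get $\lambda_I(N)=0$ for $\pi$-a.e.\ $I$ when $N$ is Lebesgue-null. You do not get that $N\cap I$ is arclength-null, because the hypothesis is $\lambda_I\ll$ arclength, not the converse. The weaker statement is all you need, since the one-dimensional energy is integrated against $\lambda_I$. Also, compare $H(I)$ directly with $\liminf_k\int_I w\|\nabla f_k\|_2\,d\lambda_I$ before applying Fatou. That way you never need measurability of $I\mapsto\cP_{w,\lambda_I}(E\cap I)$.

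\textbf{The gap in (ii).} It sits exactly at the step you flag. Only $w\|\nabla u\|_2\le C$ is assumed, and $w=0$ on $\partial K$, so $\|\nabla u\|_2$ may blow up at the boundary. Hence $\phi_\delta\circ u$ need not lie in $\Lip(K)$, and it is not an admissible competitor. Truncating $u$ to $[a,b]$ changes nothing here.

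A cutoff $\chi$ must do two things: make $\chi\,(\phi_\delta\circ u)$ globally Lipschitz, and make the extra term $\int(\phi_\delta\circ u)\,w\|\nabla\chi\|_2\,d\lambda$ vanish in the limit. The tools you name, $\lambda(\partial K)=0$ and dominated convergence, give only the $L^1$ convergence and control of the main term. For a cutoff at Euclidean scale $\eta$, the extra integrand is of size $w/\eta$ on a layer whose $\lambda$-mass is typically of order $\eta$. It is small only because of a hypothesis your sketch never uses: the Lipschitz zero extension of $w$, which gives $w\le L_w\operatorname{dist}(\cdot,\partial K)$.

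Concretely, take $\chi_\eta:=\min\{1,\max\{0,2\operatorname{dist}(\cdot,\partial K)/\eta-1\}\}$.
\begin{itemize}
\item It vanishes on $\{\operatorname{dist}(\cdot,\partial K)<\eta/2\}$, so the product is Lipschitz, because $u$ is Lipschitz on the compact convex set $\{\operatorname{dist}(\cdot,\partial K)\ge\eta/2\}$.
\item The extra term is at most $2L_w\,\lambda(\{\operatorname{dist}(\cdot,\partial K)<\eta\})\to0$.
\item A diagonal choice $\eta=\eta(\delta)$ then gives admissible sequences with energy $\delta^{-1}\int_{\{t-\delta<u<t\}}w\|\nabla u\|_2\,d\lambda+o(1)$.
\end{itemize}
The paper does this once, in Lemma~\ref{lem:boundary-cutoff}, with the $w$-adapted cutoff $H_\rho=\min\{1,w/\rho\}$. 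There $H_\rho\|\nabla v\|_2\le C/\rho$ gives global Lipschitzness directly, and the extra energy is at most $L_w\,\lambda(\{0<w<\rho\})$.

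\textbf{Comparison with the paper's route.} Once that step is supplied, your argument works: a ramp at each level applied to $u$ itself, then Tonelli and Fatou along a fixed sequence $\delta_j\downarrow0$. It is a valid and slightly more direct alternative to the paper's route. The paper instead cuts off the single truncation $\vartheta_{a,b}\circ u$ and applies the coarea bound for globally Lipschitz functions (Lemma~\ref{lem:relaxed-lsc-coarea}(b)). It then passes to the level sets of $u$ by lower semicontinuity for a.e.\ $t$. You still need that lower semicontinuity (Lemma~\ref{lem:relaxed-lsc-coarea}(a)), a short diagonal argument, to know that $t\mapsto\cP_{w,\lambda}(\{u<t\})$ is measurable before you integrate it.
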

Appendix~\ref{app:relaxed-perimeter} gives elementary and self-contained
proofs of these properties tailored to the present weighted setting.

\section{The local conductance radius}\label{sec:depth}

The local geometry around a starting point controls whether nearby hit-and-run transition laws overlap. We quantify the room available at \(x\) by the largest radius for which \(B(x,r)\) lies almost entirely inside \(K\). The resulting local conductance radius plays two roles: it weights the boundary in our isoperimetric inequality and provides the local unit of distance in the overlap argument.
\begin{definition}[Local conductance~{\cite{LovaszHR}}]
For \(x\in K\) and \(r>0\), define the \emph{local conductance at scale
\(r\)} by
\begin{equation}\label{eq:occupancy-ratio}
 \lambda(x,r)
 :=
 \frac{\vol(K\cap B(x,r))}{\vol(B(0,r))};
\end{equation}
The associated \emph{local conductance radius} is
\begin{equation}\label{eq:local-radius}
 \rK(x)
 :=
 \sup\left(
 \{0\}\cup
 \left\{r>0:\lambda(x,r)\ge\frac{63}{64}\right\}
 \right).
\end{equation}
\end{definition}
The threshold $63/64$ ensures that a hit-and-run step has a constant probability of moving a distance comparable with $\rK(x)$; see Lemma~\ref{lem:step-quantile}.  Replacing it by another fixed value sufficiently close to one changes only universal constants.

The following lemma summarizes the properties of $\rK$ we will use later, and the proof is deferred to Appendix~\ref{app:proof_local_cond}. 

\begin{restatable}[Properties of the local conductance radius~\cite{LV}]{lemma}{lemlocalconductance}\label{lem:local-radius}
Let \(K\subseteq\R^n\) be a convex body.  Then:
\begin{enumerate}[label=\textup{(\roman*)}]
\item for every $x\in \operatorname{int}(K)$,
\begin{equation}\label{eq:radius-positive-finite}
 0<\operatorname{dist}(x,\partial K)\le \rK(x)<\infty;
\end{equation}
 \item \(\rK\) is concave on \(K\).

 \item for every \(x\in K\),
 \begin{equation}\label{eq:radius-boundary-distance}
  \rK(x)
  \lesssim
  \sqrt n\,\operatorname{dist}(x,\partial K).
 \end{equation}
 In particular, \(\rK=0\) on \(\partial K\), and its zero extension to \(\R^n\) is \(O(\sqrt n)\)-Lipschitz.
\end{enumerate}
If, in addition, $\mu=\Unif(K)$ is isotropic, then
 \begin{equation}\label{eq:mean-local-radius}
  \int_K\rK(x)\,d\mu(x)\gtrsim\frac1{\sqrt n}.
 \end{equation}
\end{restatable}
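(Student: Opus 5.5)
We prove the four items in order, using only elementary convex geometry and the isotropic facts recalled above.

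\emph{Item (i).} If $d=\operatorname{dist}(x,\partial K)>0$, then $B(x,d)\subseteq K$, so $\lambda(x,d)=1\ge 63/64$ and hence $\rK(x)\ge d$. For finiteness, $\vol(K\cap B(x,r))\le\vol(K)$ while $\vol(B(0,r))\to\infty$. Thus $\lambda(x,r)<63/64$ for all large $r$, and the supremum is finite.

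\emph{Item (iii).} Let $x\in K$ and $d=\operatorname{dist}(x,\partial K)$. Choose a supporting halfspace $H=\{y:\langle y-z,\theta\rangle\le0\}\supseteq K$, where $z\in\partial K$ is a nearest boundary point, so that $\langle x-z,\theta\rangle=-d$. Then $K\cap B(x,r)$ misses the cap $\{y\in B(x,r):\langle y-x,\theta\rangle>d\}$. The uniform measure on the unit ball has one-dimensional marginal of standard deviation $\asymp 1/\sqrt n$, so the cap beyond height $t$ has normalized volume at least $1/64$ whenever $t\le c/\sqrt n$. Hence for $r>Cd\sqrt n$ we get $\lambda(x,r)<63/64$, so $\rK(x)\lesssim\sqrt n\,d$. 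In particular $\rK=0$ on $\partial K$. For the Lipschitz claim, combine this bound with concavity: a nonnegative concave function vanishing on $\partial K$ and bounded by $C\sqrt n\,\operatorname{dist}(\cdot,\partial K)$ has slope at most $C\sqrt n$ along any segment. To see this, compare $\rK(y)$ with $\rK(x)$ along the ray from $x$ through $y$ to the boundary. Concavity bounds the decrease from $x$ to $y$, while the bound at the far side $x$ (or $y$) bounds the increase. The zero extension then inherits the same constant.

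\emph{Item (ii), concavity.} This is the main obstacle. We follow Lov\'asz--Vempala. Fix $x,y\in K$, $\lambda\in[0,1]$, and radii $r_x<\rK(x)$, $r_y<\rK(y)$ (when a radius is zero the claim is trivial). Set $z=\lambda x+(1-\lambda)y$ and $r=\lambda r_x+(1-\lambda)r_y$. Work with the complements $B(x,r_x)\setminus K$, of volume at most $\tfrac1{64}\vol B(0,r_x)$, and similarly for $y$. Parametrize $B(z,r)$ by $z+ru$ with $u\in B(0,1)$, and pair it with $x+r_xu$ and $y+r_yu$. Since $z+ru=\lambda(x+r_xu)+(1-\lambda)(y+r_yu)$, convexity gives $z+ru\in K$ whenever both paired points lie in $K$. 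Therefore the normalized missing fraction at $z$ is at most the sum of the missing fractions at $x$ and $y$, namely $2/64$. This falls short of the required $1/64$. To repair it, we use the sharper one-sided structure: the set of $u$ with $x+r_xu\notin K$ lies in the complement of a convex set containing the origin ball. I would try to show that the ``bad direction'' sets are nested outer regions of a common star-shaped convex complement, so that the union bound can be replaced by a maximum. If this fails, I would instead prove concavity of the alternative radius $\sup\{r:\lambda(x,r)\ge 63/64\}$ via this pairing and adjust the threshold, as the remark after the definition permits.

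\emph{Mean-radius estimate.} For isotropic $K$, integrate item (i): $\int\rK\,d\mu\ge\int\operatorname{dist}(x,\partial K)\,d\mu$. By the Kannan--Lov\'asz--Simonovits inradius bound, an isotropic body contains $B(0,1)$. Since $K\supseteq\operatorname{conv}(B(0,1)\cup\{x\})$, scaling toward the origin gives $\operatorname{dist}(tx,\partial K)\ge 1-t$. The uniform measure puts mass $\ge1/2$ on $\{x\in(1-1/n)K\}$, where the distance is $\gtrsim1/n$. This gives only $\int\operatorname{dist}(x,\partial K)\,d\mu\gtrsim1/n$, so distance alone is too weak. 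We therefore use $\rK$ directly, following the Lov\'asz--Vempala argument. For $x\in(1-t)K$ with $t\asymp1/\sqrt n$, the ball $B(x,r)$ is mostly inside $K$ at $r\asymp t\cdot\operatorname{dist}$ scaled by $\sqrt n$: the near halfspace cuts off only a $1/64$ fraction when the distance is $\gtrsim r/\sqrt n$. Combining this with the fact that $(1-c/\sqrt n)K$ carries constant mass yields $\rK\gtrsim1/\sqrt n$ on a set of constant measure, which gives the bound.
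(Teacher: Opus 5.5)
Your items (i) and (iii) are fine and follow the paper: a supporting half-space at a nearest boundary point, the $O(\sqrt n)$ density bound for a one-dimensional marginal of the ball, then chordwise secant slopes. But the Lipschitz half of (iii) needs concavity, and your concavity argument for (ii) has a genuine gap.

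\textbf{Concavity (ii).} Pairing $z+ru=\lambda(x+r_xu)+(1-\lambda)(y+r_yu)$ with the \emph{same} $u$ only certifies the intersection of the two good sets. So the best it can give is the union bound $2/64$. Neither repair works. The bad sets $B(0,1)\setminus\frac{K-x}{r_x}$ and $B(0,1)\setminus\frac{K-y}{r_y}$ are not nested in general: for $x,y$ near the centers of opposite facets of a cube they are disjoint. Your fallback ``alternative radius'' is literally the definition of $\rK$. Adjusting the threshold cannot help either. Writing $r_\epsilon$ for the radius defined with threshold $1-\epsilon$, the pairing only yields $r_{2\epsilon}(z)\ge\lambda r_\epsilon(x)+(1-\lambda)r_\epsilon(y)$, which relates two different thresholds and is not concavity of any single function. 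The missing idea is to use all pairs of points, i.e.\ the Minkowski combination $\lambda(K\cap B(x,r_x))+(1-\lambda)(K\cap B(y,r_y))\subseteq K\cap B(z,r)$, together with Brunn--Minkowski:
\[
 \vol(K\cap B(z,r))^{1/n}\ge\lambda\vol(K\cap B(x,r_x))^{1/n}+(1-\lambda)\vol(K\cap B(y,r_y))^{1/n}\ge\Bigl(\tfrac{63}{64}v_n\Bigr)^{1/n}r .
\]
This preserves the threshold $63/64$ exactly, because the $1/n$-th root of the admissible volume is linear in the radius.

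\textbf{Mean radius.} Your mean-radius argument also fails. The set $(1-c/\sqrt n)K$ does not carry constant mass: $\mu((1-t)K)=(1-t)^n\approx e^{-c\sqrt n}$ for $t=c/\sqrt n$, and only $(1-c/n)K$ does. On that larger set you only know $\operatorname{dist}(x,\partial K)\gtrsim 1/n$. Upgrading this to $\rK\gtrsim 1/\sqrt n$ would need $\rK(x)\gtrsim\sqrt n\operatorname{dist}(x,\partial K)$, which is false because many facets can cut $B(x,r)$ at once. At the center of a Euclidean ball $\rK\approx\operatorname{dist}$, and near a vertex of the isotropic cube, points of $(1-1/n)K$ have $\rK\lesssim(n\log n)^{-1/2}$. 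So you need an averaged rather than pointwise bound. The paper gets it by citing \cite[Lemma~3.4]{LV} after the inradius bound $c_0B(0,1)\subseteq K$ from \cite{KLS95}. A self-contained version goes as follows. By Fubini, $\E_{x\sim\mu}[1-\lambda(x,r)]=\E_U\vol(K\setminus(K-rU))/\vol(K)$ with $U\sim\Unif(B(0,1))$. Since $\vol(K\setminus(K+v))\le\|v\|_2\vol_{n-1}(P_{v^\perp}K)$, Cauchy's projection formula and $\vol_{n-1}(\partial K)\le n\vol(K)$ (valid when $B(0,1)\subseteq K$) give $\E_{x\sim\mu}[1-\lambda(x,r)]\lesssim r\sqrt n$. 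Markov's inequality with $r\asymp 1/\sqrt n$ then shows $\rK\ge r$ on half of $K$.
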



\section{One-dimensional concave-weight isoperimetry}
\label{sec:one-dimensional}
We now prove the one-dimensional isoperimetric inequality needed after localization. Let \(\lambda\) be a log-concave probability measure on an interval, with standard deviation \(\sigma\), and let \(w\ge0\) be a concave weight. For a set \(B\) of mass \(p\le1/2\), the expected weighted perimeter scale is $p\E_\lambda[w]/\sigma$. Theorem~\ref{thm:one-dimensional} shows that the possible decay of \(w\) near the endpoints costs at most the factor \(1+\log(1/p)\).

More formally, let $I:=(a,b)$ be an open interval. For a nonnegative finite-valued concave weight $w:I\to[0,\infty)$, and a probability measure $\lambda$ with density function $\rho(x)$ supported on $I$, let the one-dimensional relaxed weighted perimeter be
\begin{equation}\label{eq:def_P_w_lambda}
 \cP_{w,\lambda}(B)
 :=\inf\left\{
   \liminf_{k\to\infty}\int_a^b w(x)|f_k'(x)|\,d\lambda(x):
   \begin{array}{l}
    f_k\in\Lip(I),\quad 0\le f_k\le1,\\
    f_k\to\one_B\text{ in }L^1(\lambda)
   \end{array}
 \right\}.
\end{equation}
Each $f_k$ is differentiable almost everywhere, and its weighted total variation can be written as
\begin{align}\label{eq:one_d_perimeter_integral}
    \int_a^b w(x)\rho(x)\cdot |f_k'(x)|\,dx.
\end{align}
Suppose that we have a uniform lower bound $w(x)\rho(x)\ge c$ for all $x$ in an interval $J\subseteq I$. Then, for any $s<t$ in $J$, the fundamental theorem of calculus gives
\[
 \int_a^bw\rho|f'_k|\,dx\geq \int_s^t w\rho|f_k'|\,dx
 \ge c\int_s^t|f_k'|\,dx \geq c\left|\int_s^tf'_k\,dx\right|
 = c|f_k(t)-f_k(s)|.
\]
Thus, a lower bound on $w\rho$ turns any definite transition of $f_k$ into
a lower bound on~\eqref{eq:one_d_perimeter_integral}.  To bound $\cP_{w,\lambda}(B)$, we must make this argument for the liminf of the total variations along every admissible sequence.
The proof has two ingredients.  First, we show that $w(x)\rho(x)$ cannot be
too small in a prescribed quantile range.  We then show that every sequence
converging to $\one_B$ makes a definite transition within a suitable central
quantile interval.  

The following lemma provides the first ingredient.
\begin{lemma}[Weighted density at a quantile]
\label{lem:weighted-density-quantile}
Let $\lambda$ be a nondegenerate log-concave probability measure on $\R$. Let $I=(a,b)$ be the interior of its support, $\rho$ be the continuous positive representative of its density on $I$, and  $\sigma^2\in(0,\infty)$ be its variance. Let $F(x)=\lambda((a,x])$ be its cumulative distribution function, and $q(x)=\min\{F(x),1-F(x)\}$.

Let $w:I\to[0,\infty)$ be finite-valued, concave, and not identically zero. Let $m_I=\int_I w\,d\lambda$.  
Then, $0<m_I<\infty$, and for every $x\in I$,
\begin{equation}\label{eq:pointwise-needle-boundary}
 w(x)\rho(x)
 \gtrsim
 \frac{{m_I}}{\sigma}
 \frac{q(x)}{1+\log(1/q(x))}.
\end{equation}
\end{lemma}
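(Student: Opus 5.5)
Since $\rho(x)\gtrsim q(x)/\sigma$ by \eqref{eq:quantile-density} of Lemma~\ref{lem:one-dimensional-facts}, it suffices to prove the weight bound
\[
 w(x)\gtrsim \frac{m_I}{1+\log(1/q(x))}.
\]
The finiteness and positivity of $m_I$ are routine. A concave nonnegative function on an interval that is not identically zero is positive on the open interval; indeed, concavity and $w(y_0)>0$ at one point force $w>0$ on all of $I$, since $w\ge0$ at the endpoints' limits. So $m_I>0$. For finiteness, $w$ is bounded above on compact subintervals, and it grows at most linearly toward infinite ends. Log-concave measures have exponential tails, so $m_I<\infty$.

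To bound $w(x)$ from below, I use concavity toward a point where $w$ is large. Fix $x$ with, say, $F(x)\le1/2$, so that $q=F(x)$. The main idea is to compare $w(x)$ with $w$ at points to the right of $x$. For $y>x$ and $z>y$, concavity plus nonnegativity give $w(y)\le w(x)\cdot\frac{z'-x}{\,\cdot\,}$. More usefully, on the right ray the chord from $x$ bounds $w$: for $y\ge x$, concavity on $[x,y]$ with $w\ge 0$ beyond yields the linear growth bound
\[
 w(y)\le w(x)\,\frac{y-a'}{x-a'}\quad(a'\le x \text{ the left end}),
\]
but this degenerates. Instead I will use the clean estimate for $t\ge x$:
\[
 w(t)\le w(x)+\frac{(t-x)}{(x-s)}\bigl(w(x)-w(s)\bigr)\le w(x)\Bigl(1+\frac{t-x}{x-s}\Bigr),
\]
valid for any $s<x$ in $I$. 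I choose $s$ to be a point at distance comparable to the quantile length scale. Symmetrically, for $t\le x$ we have $w(t)\le w(x)\bigl(1+(x-t)/(u-x)\bigr)$ for any $u>x$. This second direction is the relevant one when $x$ is in the left tail.

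The key step is choosing the right reference point. Take $u$ to be the median, or the mean $\E X$. Since $q(x)\le1/2$, one expects $u-x\gtrsim\sigma$ whenever $q(x)$ is small, though near the median this needs care. Then for points $t\le x$,
\[
 w(t)\le w(x)\Bigl(1+\frac{x-t}{u-x}\Bigr),
\]
and for points $t\ge x$ I use a bound that makes $w$ grow at most linearly from $w(x)$ with slope $w(x)/(x-a)$. This is the delicate part, because $x-a$ may be tiny when $a$ is finite. The fix is the following argument. Concavity on $[x,\infty)\cap I$ and the value $w(x)$ give, for $t\ge x$,
\[
 w(t)\le w(x)\,\frac{t-a}{x-a}.
\]
Since $\lambda$-mass left of $x$ is $q$ and the density is at most $C/\sigma$ by \eqref{eq:max-density}, we get $x-a\ge q\sigma/C$. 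Hence
\[
 m_I\le w(x)\,\E\Bigl[\frac{X-a}{x-a}\Bigr]\lesssim w(x)\,\frac{|x-\E X|+\E|X-a|}{q\sigma},
\]
which loses a factor $1/q$ rather than $\log(1/q)$. This is too weak, and it is the main obstacle.

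To obtain the logarithm, I will split the right side at the quantile point $x_\ast$ with $F(x_\ast)=1/2$, and iterate over a dyadic family of quantile points $x_k$ with $F(x_k)=2^kq$, for $k=0,\dots,\log_2(1/(2q))$. On $[x_k,x_{k+1}]$ the mass is $2^kq$. Using concavity anchored at $x$ together with a left reference point, $w(x_{k+1})\le w(x)(x_{k+1}-a)/(x-a)$. The gap bounds come from density control: $x_{k+1}-a\lesssim\sigma(1+\log(1/q))$ by \eqref{eq:quantile-location}. For the lower bound I want $x-a\gtrsim\sigma\,q/\rho$-scale. For a log-concave left tail, $F(x)\approx\rho(x)\cdot(\text{scale})$, which gives $x-a\gtrsim q\sigma$ only. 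I therefore expect the sharp route to be the reverse ordering: bound $w(t)$ for $t$ in the bulk by $w(x)$ times $(t-a)/(x-a)$ only when $t$ lies within the log-concave decay length of $x$, and handle the bulk by the other anchor. Concretely, I will anchor instead at a bulk point $u$ where $w(u)\gtrsim m_I$; Berwald's inequality (Lemma~\ref{lem:berwald}) or Markov's inequality ensures that such a $u$ exists with $q(u)\ge c$. I then use concavity between the boundary-side behaviour and $u$: $w(x)\ge w(u)\,\frac{x-a}{u-a}$. Since $u-a\lesssim\sigma(1+\log(1/q))$ is not needed, the remaining requirement is a lower bound $(x-a)/(u-a)\gtrsim q/(1+\log(1/q))$ against $\rho$. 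Combined with $\rho(x)\gtrsim q/\sigma$, this requires the product estimate $\rho(x)(x-a)\gtrsim q$, which holds for log-concave densities because $F(x)\le\rho(x)\cdot(x-a)$ fails in general but $F(x)\lesssim\rho(x)\min\{x-a,\sigma\log(1/q)\}$ holds. Verifying this last product inequality is the step I expect to require the most care.
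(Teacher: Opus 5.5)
\textbf{There is a genuine gap: your opening reduction is to a false statement, and your final route does not close as written.}

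The bound $w(x)\gtrsim m_I/(1+\log(1/q(x)))$ already fails for $\lambda=\Unif(0,1)$ and $w(t)=t$. At $x=q\to0$ one has $w(x)=q$ while $m_I=1/2$. The lemma survives in this example only because $\rho(x)=1$ is far larger than $q/\sigma=\sqrt{12}\,q$. Replacing $\rho(x)$ by its lower bound $q/\sigma$ throws away exactly what compensates for a small weight near a finite endpoint: a short distance $x-a$ forces a large density at $x$. So no proof of the weight bound alone can exist, and $w(x)$ and $\rho(x)$ must be estimated jointly.

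Your last paragraph moves toward a workable structure but stops short. For an anchor $u>x$ with $w(u)\gtrsim m_I$, your concavity inequality reads $w(u)\le w(x)\bigl(1+\frac{u-x}{x-a}\bigr)$. It works only if you keep $\rho(x)$ attached to each of the two terms:
\[
\frac{m_I}{w(x)\rho(x)}\lesssim\frac{1}{\rho(x)}+\frac{u-x}{\rho(x)(x-a)}\lesssim\frac{\sigma}{q}+\frac{\sigma\bigl(1+\log(1/q)\bigr)}{q}.
\]
The first term uses $\rho\gtrsim q/\sigma$. The second uses $u-x\lesssim\sigma(1+\log(1/q))$ (quantile location, since $q(u)\gtrsim1$) together with $q\le 2\rho(x)(x-a)$.

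As you phrase it, the argument breaks in one of two ways:
\begin{itemize}
\item If you bound $(x-a)/(u-a)$ and $\rho(x)$ separately, you lose a factor $q$.
\item If you use $\rho(x)(x-a)\gtrsim q$, you then do need $u-a\lesssim\sigma(1+\log(1/q))$, which is false. Take $\rho\propto e^{Mt}$ on $(0,1)$, $w\equiv1$ and $F(x)=1/4$: then $\sigma\asymp1/M$ but $u-a\asymp1$, and the chain loses a factor of order $M$.
\end{itemize}

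More importantly, you defer the estimate $q(x)\le2\rho(x)(x-a)$ (for $F(x)\le1/2$) as the delicate step and never prove it. It is the real log-concavity input of the lemma. The paper obtains it by showing $\sup_{(a,x)}\rho\le2\rho(x)$. For $z<x$ with $\rho(z)>\rho(x)$, write $\rho(z)=\rho(x)e^{\alpha(x-z)}$ with $\alpha>0$. Secant slopes of $\log\rho$ give $\tfrac12\le1-F(x)\le\rho(x)/\alpha$, so $\alpha\le2\rho(x)$. The bound $\int_z^x\rho\le\tfrac12$ then forces $e^{\alpha(x-z)}\le2$.

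Two further pieces are missing:
\begin{itemize}
\item You need the companion estimate $b-x\gtrsim\sigma$, from $\|\rho\|_\infty\lesssim1/\sigma$, for the case where the anchor lies to the left of $x$.
\item Producing the anchor requires Paley--Zygmund combined with Berwald, which gives $\lambda\{w\ge m_I/2\}\ge1/8$. Markov's inequality points the wrong way.
\end{itemize}

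The paper avoids the anchor altogether. It integrates the two-sided bound $w(y)\le w(x)\bigl(1+\frac{(y-x)_+}{x-a}+\frac{(x-y)_+}{b-x}\bigr)$ against $\lambda$ and uses $\E|X-x|\lesssim\sigma(1+\log(1/q))$.
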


\begin{proof}
A nondegenerate one-dimensional log-concave law is atomless, and its canonical density
$\rho$ is continuous and strictly positive on $I=(a,b)$.  Thus, $F$ is continuous and strictly increasing from $0$ to $1$ on this interval. Since $w$ is nonnegative, concave, and not identically zero, it is strictly positive throughout $I$. 

Fix $x\in I$. Without loss of generality, we may assume $F(x)=q(x)\le1/2$. We first control the distances from $x$ to the support endpoints $a$ and $b$.  

For the left-endpoint, we claim that
\begin{equation}\label{eq:left-endpoint}
 \frac1{x-a}\le2\frac{\rho(x)}{q(x)}.
\end{equation}
If $a=-\infty$, we define $1/(x-a):=0$, and~\eqref{eq:left-endpoint} follows immediately.  Suppose that $a> -\infty$.  
If $\rho(t)\le\rho(x)$ for every $t\in (a,x)$, then
\[
 q(x)=\int_a^x\rho(t)\,dt\le(x-a)\rho(x)\qquad \Longrightarrow \qquad \frac{1}{x-a}\leq \frac{\rho(x)}{q(x)},
\]
which is stronger than~\eqref{eq:left-endpoint}.  Otherwise, fix an arbitrary $z\in (a,x)$ with
$\rho(z)>\rho(x)$ and set
\[
 \alpha:=\frac{\log\rho(z)-\log\rho(x)}{x-z}>0.
\]
For $t\in[z,x]$, concavity of $\log\rho$ implies that
\[
 \log\rho(t)
 \ge
 \frac{x-t}{x-z}\log\rho(z)
 +\frac{t-z}{x-z}\log\rho(x)
 =\log\rho(x)+\alpha(x-t).
\]
Consequently,
\begin{equation}\label{eq:rho-left}
 \int_z^x\rho(t)\,dt
 \ge
 \frac{\rho(x)}\alpha\bigl(e^{\alpha(x-z)}-1\bigr).
\end{equation}
For $t>x$, monotonicity of secant slopes of the concave function
$\log\rho$ gives
\[
 \frac{\log\rho(t)-\log\rho(x)}{t-x}
 \le
 \frac{\log\rho(x)-\log\rho(z)}{x-z}
 =-\alpha.
\]
Hence, $\rho(t)\le\rho(x)e^{-\alpha(t-x)}$, and therefore
\begin{equation}\label{eq:rho-right}
 1-q(x)=\int_x^b\rho(t)\,dt
 \le
 \frac{\rho(x)}\alpha.
\end{equation}
Since $1-q(x)\ge1/2$, \eqref{eq:rho-right} implies
$\alpha\le2\rho(x)$.  On the other hand,
\begin{align*}
    \int_z^x\rho(t)\,dt\leq \int_a^x\rho(t)\,dt=q(x)\leq \frac{1}{2},
\end{align*}
so \eqref{eq:rho-left} yields
\[
 e^{\alpha(x-z)}-1
 \le
 \frac{\alpha}{2\rho(x)}
 \le1.
\]
Thus, $\rho(z)=\rho(x)e^{\alpha(x-z)}\le2\rho(x)$ holds for any $z\in (a,x)$ with $\rho(z)>\rho(x)$. Hence, we have
\[
 \sup_{a<z<x}\rho(z)\le2\rho(x).
\]
It follows that
\[
 q(x)=\int_a^x\rho(t)\,dt\le2(x-a)\rho(x)\qquad \Longrightarrow \qquad \frac{1}{x-a}\leq 2\frac{\rho(x)}{q(x)},
\]
which proves \eqref{eq:left-endpoint}.

For the right-endpoint, we claim that 
\begin{equation}\label{eq:right-endpoint}
 \frac1{b-x}\lesssim\frac1\sigma.
\end{equation}
If $b=\infty$, define $1/(b-x):=0$.  If $b<\infty$, then the
maximum-density estimate \eqref{eq:max-density} gives
\[
 \frac12\le1-q(x)=\int_x^b\rho(t)\,dt
 \le(b-x)\|\rho\|_\infty
 \lesssim\frac{b-x}{\sigma}.
\]

Concavity now compares the mean of $w$ with its value at $x$.  If $y\ge x$ and
$a> -\infty$, choose $z\in(a,x)$. 
Concavity and nonnegativity of $w(z)$ imply
\[
 w(x)
 \ge
 \frac{y-x}{y-z}w(z)+\frac{x-z}{y-z}w(y)
 \ge
 \frac{x-z}{y-z}w(y)\qquad \Longrightarrow\qquad w(y)\leq w(x)\left(1+\frac{y-x}{x-z}\right).
\]
Letting $z\downarrow a$ gives
\[
 w(y)\le w(x)\left(1+\frac{y-x}{x-a}\right).
\]
If $a=-\infty$, the same inequality follows by letting $z\to-\infty$, which gives simply $w(y)\le w(x)$.
The reflected argument gives, for $y\le x$,
\[
 w(y)\le w(x)\left(1+\frac{x-y}{b-x}\right),
\]
where the fractional term is zero when $b=\infty$.  Combining the two
cases,
\begin{equation}\label{eq:wy-wx-unified}
 w(y)
 \le
 w(x)\left(
 1+\frac{(y-x)_+}{x-a}+\frac{(x-y)_+}{b-x}
 \right).
\end{equation}

The bound \eqref{eq:wy-wx-unified} also proves that
$m_I<\infty$: the right-hand side grows at most
linearly in $|y|$, while finite variance of $\lambda$ implies that, $\E_{X\sim \lambda}[|X|]<\infty$.
Moreover, $m_I>0$ because $w>0$ on $I$ and $\lambda(I)=1$.

Integrating \eqref{eq:wy-wx-unified} with respect to $X\sim\lambda$ and
dividing by $w(x)>0$ yields
\begin{equation}\label{eq:mean-vs-point}
 \frac{m_I}{w(x)}
 \le
 1+\frac{\E[(X-x)_+]}{x-a}
   +\frac{\E[(x-X)_+]}{b-x}.
\end{equation}
The triangle inequality, Cauchy--Schwarz, and
\eqref{eq:quantile-location} give
\begin{equation}\label{eq:first-moment-at-quantile}
\begin{aligned}
 \E[|X-x|]\le &~ \E[|X-\E [X]|]+|x-\E [X]|\\
 \leq &~ \sigma + |x-\E [X]|\\
 \lesssim &~  \sigma\left(1+\log\frac{1}{q(x)}\right).
\end{aligned}
\end{equation}
Thus, we obtain
\begin{align*}
 \frac{m_I}{w(x)} \leq &~ 1+\E\left[|X-x|\right]\left(\frac{1}{x-a}+\frac{1}{b-x}\right)\\
 \lesssim &~ 1 + \sigma\left(1+\log\frac{1}{q(x)}\right) \left(\frac{1}{x-a}+\frac{1}{b-x}\right)\\
 \lesssim &~ 1 + \sigma\left(1+\log\frac{1}{q(x)}\right)\left(\frac{2\rho(x)}{q(x)}+\frac{1}{\sigma}\right) \\
 \lesssim &~ \left(1+\log\frac{1}{q(x)}\right)
 \frac{\sigma\rho(x)}{q(x)},
\end{align*}
where the third step follows from~\eqref{eq:left-endpoint} and~\eqref{eq:right-endpoint}, and the last step follows from~\eqref{eq:quantile-density} that $\frac{\sigma \rho(x)}{q(x)}\gtrsim 1$. Therefore,
\begin{equation*}
 w(x)\rho(x)
 \gtrsim
 \frac{m_I}{\sigma}
 \frac{q(x)}{1+\log(1/q(x))},
\end{equation*}
completing the proof of the lemma.
\end{proof}

The lemma controls the local price $w(x)\rho(x)$.  The theorem below completes the argument by showing that every Lipschitz approximation to $\one_B$ accumulates a definite amount of variation in a central interval where $q(x)\ge p/4$.

\begin{theorem}[One-dimensional concave-weight isoperimetry]\label{thm:one-dimensional}
Let $\lambda$ be a nondegenerate log-concave probability measure on $\R$. Let $I=(a,b)$ be the interior of its support, $\rho$ be the continuous positive representative of its density on $I$, and  $\sigma^2\in(0,\infty)$ be its variance. 

Let $w:I\to[0,\infty)$ be finite-valued, concave, and not identically zero. Let $m_I=\int_I w\,d\lambda$. For every measurable $B\subseteq I$ with
$p:=\lambda(B)\in (0, 1/2]$, its relaxed weighted perimeter satisfies
\begin{equation}\label{eq:one-dimensional}
 \cP_{w,\lambda}(B)
 \gtrsim
 \frac{m_I}{\sigma}
 \frac{p}{1+\log(1/p)}.
\end{equation}
\end{theorem}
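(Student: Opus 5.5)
The plan is to combine Lemma~\ref{lem:weighted-density-quantile} with a transition argument on a central quantile interval. Let $J:=\{x\in I: q(x)\ge p/4\}=[F^{-1}(p/4),F^{-1}(1-p/4)]$. On $J$ the map $q\mapsto q/(1+\log(1/q))$ is increasing, so Lemma~\ref{lem:weighted-density-quantile} gives the uniform bound
\[
 w(x)\rho(x)\ge c:=c_1\frac{m_I}{\sigma}\frac{p}{1+\log(1/p)}\qquad(x\in J),
\]
with a universal $c_1>0$; we used $\frac{p/4}{1+\log(4/p)}\gtrsim\frac{p}{1+\log(1/p)}$. By the fundamental theorem of calculus computation displayed before the lemma, any admissible $f_k$ satisfies $\int w|f_k'|\,d\lambda\ge c\,|f_k(t)-f_k(s)|$ for all $s<t$ in $J$. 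Hence it suffices to show that along every admissible sequence $f_k\to\one_B$ in $L^1(\lambda)$ we have $\liminf_k\sup_{s,t\in J}|f_k(t)-f_k(s)|\ge 1/2$, or some other universal constant.

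For this, note that $\lambda(J)=1-p/2$ and hence $\lambda(I\setminus J)=p/2$. So $\lambda(B\cap J)\ge p/2>0$. Also $\lambda(J\setminus B)\ge 1-p/2-p\ge 1/4>0$, using $p\le1/2$. Passing to a subsequence realizing the liminf, and then to a further subsequence with $f_k\to\one_B$ $\lambda$-a.e., Egorov's theorem gives a set $G\subseteq J$ with $\lambda(J\setminus G)<\min\{p/4,1/8\}$ on which $f_k\to\one_B$ uniformly. Then $G\cap B$ and $G\setminus B$ both have positive measure. Pick $s\in G\cap B$ and $t\in G\setminus B$. For large $k$, $f_k(s)\ge 3/4$ and $f_k(t)\le 1/4$, so $|f_k(s)-f_k(t)|\ge1/2$. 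The order of $s,t$ is irrelevant because the FTC bound holds on the segment between them, which lies in the interval $J$. This gives $\liminf_k\int w|f_k'|\,d\lambda\ge c/2$, and taking the infimum over admissible sequences yields~\eqref{eq:one-dimensional}.

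The step needing the most care is the bookkeeping of the quantile cutoff. The interval $J$ must carry mass of both $B$ and $B^c$, which forces the threshold to be about $p$ rather than a constant. Its loss is then exactly the $\log(1/p)$ factor coming from Lemma~\ref{lem:weighted-density-quantile}; verifying the monotonicity of $q/(1+\log(1/q))$ on $(0,1/2]$ is routine. A minor technicality is that the $f_k$ are merely Lipschitz, hence absolutely continuous, so the FTC step is valid. Another is that $\rho$ is positive on $J$, so $\lambda$-a.e.\ convergence is meaningful pointwise there. The points $s,t$ may be taken in the interior of $J$ to avoid endpoint issues.
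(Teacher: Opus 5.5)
Your proof is correct and follows the paper's argument. It uses the same central quantile interval $J_p=[F^{-1}(p/4),F^{-1}(1-p/4)]$, the same uniform lower bound on $w\rho$ from Lemma~\ref{lem:weighted-density-quantile}, and the same fundamental-theorem-of-calculus transition estimate; the only difference is how the transition is forced. The paper compares averages of $f_k$ over $B\cap J_p$ and $J_p\setminus B$ to get $\operatorname{osc}_{J_p}(f_k)\ge 1-(2/p+4)\eta_k$ for every $k$ without subsequences, whereas you pass to an a.e.-convergent subsequence (there Egorov is unnecessary, since pointwise convergence at one point of $B\cap J$ and one point of $J\setminus B$ already suffices).
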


\begin{proof}
Let $F$ be the cumulative distribution function of $\lambda$. Define $x_-,x_+\in I$ to be the $p/4$- and $(1-p/4)$-quantiles of $\lambda$; that is, $F(x_-)=p/4$ and $F(x_+)=1-p/4$. Let  $J_p:=[x_-,x_+]$. The quantiles are finite and lie strictly inside $I$.  Since $\lambda$
is atomless, we have
\begin{equation}\label{eq:central-interval-mass}
 \lambda(J_p)=1-\frac p2,
 \qquad
 \lambda(J_p^c)=\frac p2.
\end{equation}
Hence,
\begin{equation}\label{eq:central-two-sided-mass}
\begin{aligned}
 \lambda(B\cap J_p)\geq &~ \lambda(B)-\lambda(J_p^c)\ge\frac p2,\\
 \lambda(J_p\setminus B) \geq &~ \lambda(J_p)-\lambda(B) 
 \ge1-\frac{3p}{2}\ge\frac14.
\end{aligned}
\end{equation}

Let $(f_k)$ be an arbitrary admissible sequence for $\mathcal{P}_{w,\lambda}(B)$ in
\eqref{eq:def_P_w_lambda}, and set
\[
 \eta_k:=\|f_k-\one_B\|_{L^1(\lambda)}\longrightarrow0.
\]
The average of $f_k$ over $B\cap J_p$ satisfies
\[
 \frac1{\lambda(B\cap J_p)}\int_{B\cap J_p}f_k\,d\lambda
 =
 1-
 \frac1{\lambda(B\cap J_p)}\int_{B\cap J_p}(1-f_k)\,d\lambda
 \ge1-\frac{2\eta_k}{p},
\]
where the last step follows from $0\leq f_k\leq 1$ and~\eqref{eq:central-two-sided-mass}. 
Similarly,
\[
 \frac1{\lambda(J_p\setminus B)}
 \int_{J_p\setminus B}f_k\,d\lambda
 \le4\eta_k.
\]
Because $f_k$ is continuous on the compact interval $J_p$, its maximum
and minimum there are attained.  An average lies between the minimum and
maximum, so
\[
 \max_{J_p}f_k\ge1-\frac{2\eta_k}{p},
 \qquad
 \min_{J_p}f_k\le4\eta_k.
\]
Consequently,
\begin{equation}\label{eq:central-oscillation}
 \operatorname{osc}_{J_p}(f_k)
 :=\max_{J_p}f_k-\min_{J_p}f_k
 \ge1-\left(\frac2p+4\right)\eta_k.
\end{equation}
A Lipschitz function is absolutely continuous on the compact interval
$J_p$.  Therefore the fundamental theorem of calculus gives
\begin{equation}\label{eq:one-transition}
 \int_{J_p}|f_k'(x)|\,dx
 \ge
 \operatorname{osc}_{J_p}(f_k)
 \ge1-\left(\frac2p+4\right)\eta_k.
\end{equation}

For every $x\in J_p$, one has $q(x)\ge p/4$. Note that the function
$s\mapsto s/(1+\log(1/s))$ is increasing on $(0,1)$. Thus, \eqref{eq:pointwise-needle-boundary} in Lemma~\ref{lem:weighted-density-quantile} gives the uniform estimate
\begin{equation}\label{eq:uniform-central-weight}
 w(x)\rho(x)\gtrsim
 \frac{m_I}{\sigma}
 \frac{q(x)}{1+\log(1/q(x))}
 \gtrsim
 \frac{m_I}{\sigma}
 \frac{p}{1+\log(1/p)},
 \qquad x\in J_p.
\end{equation}
Using~\eqref{eq:one-transition} and~\eqref{eq:uniform-central-weight}, we conclude that
\begin{align*}
 \int_a^b w(x)|f_k'(x)|\,d\lambda(x)
 &\ge
 \int_{J_p}w(x)\rho(x)|f_k'(x)|\,dx\\
 &\gtrsim
 \frac{m_I}{\sigma}
 \frac{p}{1+\log(1/p)}
 \int_{J_p}|f_k'(x)|\,dx\\
 &\gtrsim
 \frac{m_I}{\sigma}
 \frac{p}{1+\log(1/p)}
 \left(1-\left(\frac2p+4\right)\eta_k\right).
\end{align*}
 Since $p>0$ is fixed and
$\eta_k\to0$,
\[
 \liminf_{k\to\infty}
 \int_a^b w(x)|f_k'(x)|\,d\lambda(x)
 \gtrsim
 \frac{m_I}{\sigma}
 \frac{p}{1+\log(1/p)}.
\]
This lower bound holds for every admissible sequence $(f_k)$.  Taking the infimum over all admissible sequences 
in \eqref{eq:def_P_w_lambda} proves that
\begin{align*}
    \mathcal{P}_{w,\lambda}(B)\gtrsim\frac{m_I}{\sigma}\frac{p}{1+\log(1/p)},
\end{align*}
completing the proof of the theorem.
\end{proof}

\section{Guided localization and local-radius weighted isoperimetry}
\label{sec:balanced-needles}
Section~\ref{sec:one-dimensional} gives the required inequality on a single interval. To lift it to \(K\), we disintegrate \(\mu\) into conditional measures \(\mu_I\) and average the one-dimensional bounds. The obstacle is the cut mass: a generic disintegration may have \(\mu_I(A)\in\{0,1\}\) on almost every interval, making \(A\cap I\) trivial and its conditional perimeter zero. Guided localization removes this obstacle. Applied to \(\one_A-\mu(A)\), it produces a disintegration satisfying $\mu_I(A)=\mu(A)$ for almost every \(I\), together with a common \(1\)-Lipschitz guiding function that is a unit-speed coordinate on each needle. Exact balance makes the one-dimensional estimate uniform across the intervals, while the common guide allows their conditional variances to be controlled after averaging.

The following lemma shows a more general result for log-concave measures, which is a Euclidean specialization of Klartag's theorem.  Appendix~\ref{app:klartag-balanced-needles} derives it from the general Riemannian statement.

\begin{restatable}[Localization with a guiding function~\cite{KlartagNeedles}]{lemma}{lemklartag}\label{lem:balanced-needles}
Let $\lambda$ be a full-dimensional log-concave probability measure with compact convex support $K\subseteq\R^n$, and let $A\subseteq K$ be measurable with $0<\lambda(A)<1$.  There exist a $1$-Lipschitz function $u:{\operatorname{int}(K)}\to\R$, a probability measure $\pi$ on a family of nondegenerate line intervals $I\subset{\operatorname{int}(K)}$, and conditional probability measures $\lambda_I$ such that
\begin{equation}\label{eq:balanced-needle-disintegration}
 \lambda=\int\lambda_I\,d\pi(I).
\end{equation}
For $\pi$-almost every $I$, there are an interval $J_I\subseteq\R$ and a unit-speed parametrization $\gamma_I:J_I\to I$ such that:
\begin{enumerate}[label=\textup{(\roman*)},leftmargin=2.2em]
\item after translating the parameter on $J_I$ if necessary,
\begin{equation}\label{eq:guiding-coordinate-on-needle}
 u(\gamma_I(t))=t\qquad(t\in J_I);
\end{equation}
\item the push-forward $(\gamma_I^{-1})_\#\lambda_I$ has a strictly positive log-concave density on $J_I$;
\item the set $A$ has the same conditional mass on every needle:
\begin{equation}\label{eq:balanced-needle-mass}
 \lambda_I(A)=\lambda(A).
\end{equation}
\end{enumerate}
\end{restatable}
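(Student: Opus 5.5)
The plan is to obtain this lemma as the Euclidean case of Klartag's needle decomposition for a Riemannian manifold carrying a measure that satisfies the curvature–dimension condition $CD(0,N)$. The data I will use are: $\R^n$ with the Euclidean metric; $\lambda$, whose density $e^{-V}$ on $K$ has convex $V$, so $(\R^n,\|\cdot\|_2,\lambda)$ satisfies $CD(0,\infty)$; and the integrable function $f:=\one_A-\lambda(A)$. This $f$ satisfies $\int f\,d\lambda=0$ and $\int|f(x)|\,\|x\|_2\,d\lambda(x)<\infty$, because $K$ is compact. Klartag's theorem, applied to this input, produces four things:
\begin{itemize}
\item a $1$-Lipschitz function $u$ that is a Kantorovich potential for transporting $f_+\lambda$ to $f_-\lambda$ with cost $\|x-y\|_2$;
\item a partition of the transport set $\{x:\ \exists y\neq x,\ |u(x)-u(y)|=\|x-y\|_2\}$ into transport rays, which are maximal segments on which $u$ is affine with unit slope;
\item a measure $\pi$ on these rays together with a disintegration $\lambda\llcorner\mathcal T=\int\lambda_I\,d\pi(I)$ in which each $\lambda_I$ is a $CD(0,\infty)$ needle, that is, it has a log-concave density with respect to arclength;
\item the balancing identity $\int_I f\,d\lambda_I=0$ for $\pi$-almost every $I$.
\end{itemize}

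I then read off the three conclusions in order. For (i), I parametrize each ray $I$ by unit speed, $\gamma_I(t)=x_0+t\theta_I$, in the direction in which $u$ increases; since $|u(\gamma_I(t))-u(\gamma_I(s))|=|t-s|$ along the ray, translating the parameter gives $u(\gamma_I(t))=t$. For (ii), Klartag's needles have log-concave densities, and I discard the null set of degenerate rays; after restricting $J_I$ to the interior of the support of the density, that density is strictly positive there. For (iii), the balancing identity gives $\lambda_I(A)-\lambda(A)=0$.

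Two technical points remain. The first is that the disintegration must cover all of $\lambda$, not only its restriction to the transport set. Off the transport set, Klartag's theorem gives $f=0$ almost everywhere. Here $f\in\{1-\lambda(A),-\lambda(A)\}$ never vanishes, because $0<\lambda(A)<1$. So the complement of the transport set is $\lambda$-null and $\lambda=\int\lambda_I\,d\pi$. I rescale $\pi$ so that every $\lambda_I$ is a probability measure. The second point is that the statement asks for $I\subset\operatorname{int}(K)$. I intersect each ray with $\operatorname{int}(K)$ (each ray is convex), and this changes nothing because $\lambda(\partial K)=0$. I also restrict $u$ to $\operatorname{int}(K)$.

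The main obstacle is matching the hypotheses and the measurability conventions of Klartag's general Riemannian statement in this degenerate setting. Klartag works on a complete manifold with a smooth positive density, whereas here $\lambda$ has compact support and its density may jump at $\partial K$. I will handle this in one of two ways. The first is to invoke the version of his theorem for log-concave measures on convex domains in $\R^n$, which he treats directly as a weighted convex domain. The second is to approximate: apply the theorem to $e^{-V_\epsilon}$ with $V_\epsilon=V+\epsilon^{-1}\operatorname{dist}(\cdot,K)^2$ and pass to the limit. I would prefer the first, direct citation. The required checks are that the rays meeting $K$ stay within $K$ up to null sets, which follows from convexity of $K$, and that the conditional densities are measurable in $I$.
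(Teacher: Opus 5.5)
Your skeleton is right, and it produces the same objects as the paper. The transport problem from $(\one_A-p)_+\,d\lambda$ to $(\one_A-p)_-\,d\lambda$ does not depend on which reference measure you write it against, so the potential $u$ and the rays coincide. The gap is exactly at the step you flagged.

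Klartag's theorem needs a geodesically convex Riemannian manifold carrying a measure with a \emph{smooth} positive density, because the condition $\mathrm{CD}(\kappa,N)$ is defined through the Hessian of the potential. Completeness is not required, so $\operatorname{int}(K)$ is allowed, and your worry about the jump at $\partial K$ disappears. The lemma, however, allows an arbitrary log-concave $\lambda$, whose density on $\operatorname{int}(K)$ need not be differentiable (e.g.\ a density proportional to $e^{-\|x\|_1}$ on $K$).

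Your remedy (a) is therefore not a citation that exists in the required generality. It works verbatim only when $\lambda$ has a smooth density on $\operatorname{int}(K)$; that happens to cover the uniform and log-affine uses later in the paper, but not the lemma as stated.

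Remedy (b) does not close the gap either. $V$ may be $+\infty$ on $\partial K$ (density proportional to $\operatorname{dist}(x,\partial K)$) and non-smooth inside, and $\operatorname{dist}(\cdot,K)^2$ is only $C^{1,1}$. More seriously, needle decompositions are not stable under limits. Uniform convergence of the potentials $u_\epsilon\to u$ says nothing about convergence of the transport rays or of the conditional measures. Exact balancing $\lambda_I(A)=\lambda(A)$ on the rays of the limiting potential would therefore not follow.

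The missing idea is to move the density of $\lambda$ into the function being balanced instead of into the reference measure. Apply Klartag's theorem on $\operatorname{int}(K)$ with the uniform measure $\mu$, which has constant density and hence satisfies $\mathrm{CD}(0,n)$. Take $f:=g(\one_A-p)$ with $g=d\lambda/d\mu$; then $f\in L^1(\mu)$ and $\int f\,d\mu=0$. This gives $\mu=\int\widetilde\mu_I\,d\xi(I)$ with $\int_I g(\one_A-p)\,d\widetilde\mu_I=0$ and needle densities $e^{-\Psi_I}$ in the $u$-coordinate.

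Now set $z_I=\int_I g\,d\widetilde\mu_I$, $d\pi=z_I\,d\xi$ and $d\lambda_I=(g/z_I)\,d\widetilde\mu_I$. Then:
\begin{itemize}
\item $\lambda=\int\lambda_I\,d\pi(I)$ and $\lambda_I(A)=p$;
\item the density of $\lambda_I$ in the $u$-coordinate is $e^{-\Psi_I}\cdot g\circ\gamma_I$, a product of log-concave functions, and it is strictly positive because $g>0$ on $\operatorname{int}(K)$;
\item singleton fibers carry no $\pi$-mass, since $\one_A(x)-p\neq0$ everywhere.
\end{itemize}
The rest of your argument (unit-speed parametrization with $u(\gamma_I(t))=t$, discarding degenerate fibers, normalizing) then goes through as you wrote it.
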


The balanced decomposition lets us apply Theorem~\ref{thm:one-dimensional} with the same value $p=\mu(A)$ on every interval:
\begin{align*}
    \cP_{w,\mu_I}(A\cap I)\gtrsim\frac{m_I}{\sigma_I}\,\frac{p}{1+\log(1/p)},
\end{align*}
where $m_I$ is the mean weight and $\sigma_I^2$ is the variance of any unit-speed arclength
coordinate on $I$. After slicing and averaging over the intervals, the remaining task is to lower-bound \(\int \frac{m_I}{\sigma_I}\,d\pi(I)\). It can be controlled by a second-moment bound on $m_I$, together with a key observation that the averaged variance can be bounded by the global Poincar\'e constant:
\begin{align*}
    \int\sigma_I^2\,d\pi(I) \le \operatorname{Var}_\mu(u) \le C_{\sf PI}(\mu).
\end{align*}
The following theorem makes this argument formal and more general. 


\begin{theorem}[Isoperimetry with a concave weight]
\label{thm:balanced-weighted}
Let $\lambda$ be a full-dimensional log-concave probability measure with compact convex support $K\subseteq\R^n$. Let  $w:K\to[0,\infty)$ be concave and positive on ${\operatorname{int}(K)}$, and assume that its zero extension to $\R^n$ is Lipschitz.
Then, for every measurable $A$ with $p=\lambda(A)\in (0,1/2]$, 
\begin{equation}\label{eq:balanced-weighted}
 \cP_{w,\lambda}(A)
 \gtrsim
 \frac{\int_Kw\,d\lambda}{\sqrt{C_{\sf PI}(\lambda)}}
 \frac{p}{1+\log(1/p)}.
\end{equation}
\end{theorem}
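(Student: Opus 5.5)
We follow the route sketched in Section~\ref{sec:tech_ov_ours}. First we apply Lemma~\ref{lem:balanced-needles} to $\lambda$ and $A$. This yields a $1$-Lipschitz guide $u$, a needle measure $\pi$, and conditional laws $\lambda_I$ with $\lambda=\int\lambda_I\,d\pi(I)$ and $\lambda_I(A)=p$ for $\pi$-a.e.\ $I$. On each needle we use the parametrization $\gamma_I$ and transport everything to $J_I\subseteq\R$. The law $(\gamma_I^{-1})_\#\lambda_I$ is then a nondegenerate log-concave measure with positive density. Its variance is $\sigma_I^2=\operatorname{Var}_{\lambda_I}(u)$, which is finite because $K$ is compact. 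The weight $w\circ\gamma_I$ is concave and finite on $J_I$, and it is positive because $I\subset\operatorname{int}(K)$.

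The relaxed one-dimensional perimeter of $A\cap I$ with respect to $\lambda_I$, as in Definition~\ref{def:relax_weight_per}, coincides with \eqref{eq:def_P_w_lambda} after this change of variables, since $\gamma_I$ is unit speed. Restrictions of Lipschitz functions on $K$ are Lipschitz on $I$, so a lower bound for \eqref{eq:def_P_w_lambda} also bounds the slice perimeter from below. Theorem~\ref{thm:one-dimensional} then gives
\[
\cP_{w,\lambda_I}(A\cap I)\ge c\,\frac{m_I}{\sigma_I}\frac{p}{1+\log(1/p)},\qquad m_I:=\int w\,d\lambda_I .
\]
Measurability of $I\mapsto m_I/\sigma_I$ comes from the measurability of the disintegration. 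The slicing property, Proposition~\ref{prop:relaxed-perimeter-toolkit}(iii), therefore gives
\[
\cP_{w,\lambda}(A)\gtrsim \frac{p}{1+\log(1/p)}\int\frac{m_I}{\sigma_I}\,d\pi(I).
\]

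It remains to bound $\int m_I/\sigma_I\,d\pi$ from below. We use Cauchy--Schwarz twice:
\[
\int\frac{m_I}{\sigma_I}\,d\pi\ \ge\ \frac{(\int m_I\,d\pi)^2}{\int m_I\sigma_I\,d\pi}\ \ge\ \frac{(\int m_I\,d\pi)^2}{(\int m_I^2\,d\pi)^{1/2}(\int\sigma_I^2\,d\pi)^{1/2}}.
\]
The three averages are handled as follows.
\begin{enumerate}[label=\textup{(\alph*)}]
\item By the disintegration, $\int m_I\,d\pi=\int_K w\,d\lambda$.
\item By Jensen on each needle and then Berwald's inequality (Lemma~\ref{lem:berwald}) applied to the log-concave $\lambda$ and the concave $w\ge0$,
\[
\int m_I^2\,d\pi\le\int w^2\,d\lambda\le 2\Big(\int w\,d\lambda\Big)^2 .
\]
\item By the law of total variance, $\int\sigma_I^2\,d\pi=\int\operatorname{Var}_{\lambda_I}(u)\,d\pi\le\operatorname{Var}_\lambda(u)$. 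Since $u$ is $1$-Lipschitz, the Poincar\'e inequality \eqref{eq:Poincare} gives $\operatorname{Var}_\lambda(u)\le C_{\sf PI}(\lambda)$.
\end{enumerate}
To apply \eqref{eq:Poincare} we need $u$ defined on all of $\R^n$. We extend $u$ from $\operatorname{int}(K)$ to a $1$-Lipschitz function on $\R^n$ (McShane), which changes nothing $\lambda$-a.e. Combining (a)--(c) gives $\int m_I/\sigma_I\,d\pi\gtrsim\int w\,d\lambda/\sqrt{C_{\sf PI}(\lambda)}$, which is \eqref{eq:balanced-weighted}.

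We expect the main obstacle to be the measure-theoretic bookkeeping rather than the inequalities themselves. Three points need care. First, the slice perimeter in the slicing proposition must match the hypotheses of Theorem~\ref{thm:one-dimensional}; in particular, $w$ restricted to an open needle must be not identically zero, which holds because $w>0$ on $\operatorname{int}(K)$. Second, any $\pi$-null set of degenerate needles must be discarded. Third, $I\mapsto\sigma_I$ must be measurable and positive almost everywhere so that the Cauchy--Schwarz steps are legitimate. If needed, we first treat the integrals with $\sigma_I$ truncated below and then pass to the limit by monotone convergence.
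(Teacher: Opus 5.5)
Your proposal is correct and matches the paper's proof step for step. The steps are guided localization with exact balance $\lambda_I(A)=p$, Theorem~\ref{thm:one-dimensional} on each needle, the slicing property, and then Cauchy--Schwarz combined with $\int m_I\,d\pi=\int w\,d\lambda$, conditional Jensen plus Berwald for $\int m_I^2\,d\pi$, and the law of total variance plus Poincar\'e for $\int\sigma_I^2\,d\pi$. Your McShane extension of $u$ to $\R^n$ is only a slightly more careful way than the paper's extension to $K$ of meeting the hypothesis of \eqref{eq:Poincare}.
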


\begin{proof}
Apply Lemma~\ref{lem:balanced-needles} to $A$.  For each nontrivial needle, set
\[
 m_I:=\int_Iw\,d\lambda_I,
 \qquad
 \sigma_I^2:=\operatorname{Var}_{\lambda_I}(u).
\]
Both quantities are finite and positive.  The Lipschitz zero extension makes
$w$ bounded on the compact body $K$, while $w>0$ on ${\operatorname{int}(K)}$ and the
conditional density is positive on $I$.  Moreover,
$u(\gamma_I(t))=t$ on a bounded nondegenerate interval with positive
density, so its conditional variance lies in $(0,\infty)$.  The restriction
of $w$ to $I$ is concave, while the balancing
property gives $\lambda_I(A)=p$.  Hence,
Theorem~\ref{thm:one-dimensional} gives
\begin{equation}\label{eq:needle-profile}
 \cP_{w,\lambda_I}(A\cap I)
 \gtrsim
 \frac{m_I}{\sigma_I}
 \frac{p}{1+\log(1/p)}.
\end{equation}
The conditional moments $m_I$ and $\sigma_I^2$ are measurable functions of
$I$.  Taking the right-hand side of \eqref{eq:needle-profile} with a fixed
universal constant as the function $H$ in Proposition~\ref{prop:relaxed-perimeter-toolkit}~\textup{(iii)} yields
\begin{equation}\label{eq:integrated-needle-profile}
 \cP_{w,\lambda}(A)
 \gtrsim
 \frac{p}{1+\log(1/p)}
 \int\frac{m_I}{\sigma_I}\,d\pi(I).
\end{equation}
If \(\int m_I/\sigma_I\,d\pi(I)=\infty\), the desired lower bound is
immediate from \eqref{eq:integrated-needle-profile}.  Otherwise, by Cauchy--Schwarz, we have
\begin{align}\label{eq:mI_sigmaI_ratio}
    \left(\int m_I\,d\pi(I)\right)^2=\left(\int\sqrt{\frac{m_I}{\sigma_I}}\sqrt{m_I\sigma_I}\,d\pi(I)\right)^2\le\left(\int\frac{m_I}{\sigma_I}\,d\pi(I)\right)\left(\int m_I\sigma_I\,d\pi(I)\right).
\end{align}

First, disintegration gives
\begin{equation}\label{eq:mean-needle-weight}
 \int m_I\,d\pi(I)=\int \left(\int_I w(x)\,d\lambda_I(x)\right)\,d\pi(I)=\int w\,d\lambda.
\end{equation}

Conditional Jensen and Berwald's inequality (Lemma~\ref{lem:berwald}) give
\begin{equation}\label{eq:second-needle-weight}
 \int m_I^2\,d\pi(I)
 \le\int w^2\,d\lambda
 \lesssim \left(\int w\,d\lambda\right)^2.
\end{equation}
The guiding function is Euclidean \(1\)-Lipschitz on the dense set
\({\operatorname{int}(K)}\), so it extends uniquely to a \(1\)-Lipschitz function on \(K\);
we use the same symbol for this extension. Thus, we obtain that 
\begin{equation}\label{eq:average-needle-variance}
 \int\sigma_I^2\,d\pi(I)
 \le\operatorname{Var}_\lambda(u)
 \le C_{\sf PI}(\lambda)\int_K\|\nabla u\|_2^2\,d\lambda
 \le C_{\sf PI}(\lambda),
\end{equation}
where the first step follows from the law of total variance, the second step follows from the Poincar\'e inequality, and the third step follows from  Rademacher's theorem and $\|\nabla u\|_2\le1$ almost everywhere. Cauchy--Schwarz applied to \eqref{eq:second-needle-weight} and
\eqref{eq:average-needle-variance} gives
\begin{equation}\label{eq:m-sigma-upper}
 \int m_I\sigma_I\,d\pi(I) \leq \left(\int m_I^2\,d\pi(I)\right)^{1/2}\left(\int \sigma_I^2 \,d\pi(I)\right)^{1/2}
 \lesssim \left(\int w\,d\lambda\right)
       \sqrt{C_{\sf PI}(\lambda)}.
\end{equation}

Therefore, combining \eqref{eq:mI_sigmaI_ratio},~\eqref{eq:mean-needle-weight}, and~\eqref{eq:m-sigma-upper} yields
\begin{equation}\label{eq:ratio-average}
 \int\frac{m_I}{\sigma_I}\,d\pi(I)
 \gtrsim
 \frac1{\sqrt{C_{\sf PI}(\lambda)}}\int w\,d\lambda.
\end{equation}
Thus, by~\eqref{eq:integrated-needle-profile}, we get that
\begin{align*}
    \cP_w(A) \gtrsim\frac1{\sqrt{C_{\sf PI}(\lambda)}}\left(\int w\,d\lambda\right)
 \frac{p}{1+\log(1/p)},
\end{align*}
which completes the proof of the theorem.
\end{proof}

We now apply the theorem to the local-radius weight $w=\rK$ and the uniform measure $\lambda = \mu$.

\begin{corollary}[Local-radius weighted isoperimetric inequality]\label{cor:full-depth}
For every measurable $A\subset K$, let
$p=\min\{\mu(A),1-\mu(A)\}$.  If $0<p\le1/2$, then
\begin{equation}\label{eq:full-depth-profile}
 \cP_{\rK}(A)
 \gtrsim
 \frac{\psi_n}{\sqrt n}
 \frac{p}{1+\log(1/p)}.
\end{equation}
\end{corollary}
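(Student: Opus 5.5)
The plan is to apply Theorem~\ref{thm:balanced-weighted} with $\lambda=\mu=\Unif(K)$ and $w=\rK$, and then insert the two quantitative inputs: the mean local-radius bound and the Poincar\'e constant of an isotropic body.

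First, we check the hypotheses of Theorem~\ref{thm:balanced-weighted}. The measure $\mu$ is a full-dimensional log-concave probability measure with compact convex support $K$. By Lemma~\ref{lem:local-radius}, $\rK$ is concave on $K$ (part (ii)). It is positive on $\operatorname{int}(K)$, since $\rK(x)\ge\operatorname{dist}(x,\partial K)>0$ by (i). Its zero extension to $\R^n$ is $O(\sqrt n)$-Lipschitz by (iii). The value of this Lipschitz constant does not enter the bound of the theorem, only its finiteness.

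Second, we reduce to the case $\mu(A)\le 1/2$. If $\mu(A)>1/2$, we replace $A$ by $K\setminus A$. Complement symmetry, Proposition~\ref{prop:relaxed-perimeter-toolkit}(i), gives $\cP_{\rK}(A)=\cP_{\rK}(K\setminus A)$. The complement has mass $1-\mu(A)=p\in(0,1/2]$. So it suffices to treat a set of mass exactly $p\le1/2$.

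Third, we apply Theorem~\ref{thm:balanced-weighted}. It gives
\[
 \cP_{\rK}(A)\gtrsim \frac{\int_K\rK\,d\mu}{\sqrt{C_{\sf PI}(\mu)}}\,\frac{p}{1+\log(1/p)}.
\]
We bound the numerator from below by \eqref{eq:mean-local-radius}, which uses isotropy: $\int_K\rK\,d\mu\gtrsim 1/\sqrt n$. We bound the denominator by \eqref{eq:isotropic-Poincare}: $C_{\sf PI}(\mu)\lesssim\psi_{\sf ch}(\mu)^{-2}\le\psi_n^{-2}$, so $\sqrt{C_{\sf PI}(\mu)}\lesssim\psi_n^{-1}$. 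Combining the two bounds yields \eqref{eq:full-depth-profile}.

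There is no serious obstacle. The only points needing care are the hypothesis check (concavity and the Lipschitz zero extension of $\rK$, both supplied by Lemma~\ref{lem:local-radius}) and the complement reduction. The reduction is necessary because Theorem~\ref{thm:balanced-weighted} is stated only for sets of mass at most $1/2$.
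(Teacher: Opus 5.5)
Your proposal is correct and matches the paper's proof: apply Theorem~\ref{thm:balanced-weighted} with $\lambda=\mu$ and $w=\rK$, check its hypotheses using Lemma~\ref{lem:local-radius}, and substitute $\int_K\rK\,d\mu\gtrsim 1/\sqrt n$ and $C_{\sf PI}(\mu)^{-1/2}\gtrsim\psi_n$. Your appeal to complement symmetry, Proposition~\ref{prop:relaxed-perimeter-toolkit}(i), simply makes explicit the justification for the paper's ``without loss of generality $p=\mu(A)$'' step.
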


\begin{proof}
Wlog, we may assume $p=\mu(A)$.  Apply Theorem~\ref{thm:balanced-weighted} with $w=\rK$.  By \eqref{eq:radius-positive-finite} and Lemma~\ref{lem:local-radius}, we know that $\rK$ is positive on ${\operatorname{int}(K)}$, concave, Lipschitz, and
\begin{align*}
    \int\rK\,d\mu\gtrsim \frac{1}{\sqrt{n}}.
\end{align*}
Finally,
\eqref{eq:isotropic-Poincare} gives
$C_{\sf PI}(\mu)^{-1/2}\gtrsim\psi_n$.  Substitution in
\eqref{eq:balanced-weighted} proves
\eqref{eq:full-depth-profile}.
\end{proof}

\section{Intrinsic geometry and local overlap for the hit-and-run walk}
\label{sec:intrinsic_metric}
Corollary~\ref{cor:full-depth} says that every cut has large relaxed perimeter when
its boundary near $x$ is weighted by the local scale $\rK(x)$.  To turn this
geometric statement into a conductance bound, we need a notion of proximity
adapted to the same varying scale.  This leads naturally to a path metric in
which an infinitesimal Euclidean displacement at $x$ is measured in units of
$\rK(x)$.

We will show that two points sufficiently close in this metric have
hit-and-run transition distributions with a constant amount of overlap.
This local overlap is the link between the relaxed weighted perimeter bound
and the ergodic flow studied in the next section.  We begin by defining
the metric.

\begin{definition}[Local-radius metric]\label{def:intrinsic-metric}
For $x,y\in\operatorname{int}(K)$, define
\begin{equation}\label{eq:drK}
 d_{\rK}(x,y)
 =
 \inf_{\gamma}
 \int_0^1
 \frac{\|\gamma'(t)\|_2}{\rK(\gamma(t))}\,dt,
\end{equation}
where the infimum is over absolutely continuous curves in
$\operatorname{int}(K)$ joining $x$ to $y$.
\end{definition}

$d_{\rK}(\cdot,\cdot)$ is indeed a metric.  To see that it is finite, fix
$x,y\in\operatorname{int}(K)$.  The segment $[x,y]$ is a compact subset of
the open convex set $\operatorname{int}(K)$, so continuity and positivity of
$\rK$ give $\min_{z\in[x,y]}\rK(z)>0$. Using this segment in \eqref{eq:drK} yields
\[
 d_{\rK}(x,y)\le\frac{\|x-y\|_2}{\min_{z\in[x,y]}\rK(z)}.
\]
Conversely, every admissible curve $\gamma$ satisfies
\[
 \int_0^1\frac{\|\gamma'(t)\|_2}{\rK(\gamma(t))}\,dt
 \ge
 \frac1{\max_{z\in K}\rK(z)}\int_0^1\|\gamma'(t)\|_2\,dt
 \ge
 \frac{\|x-y\|_2}{\max_{z\in K}\rK(z)}.
\]
Consequently,
\begin{equation}\label{eq:rK_bounds}
 \frac{\|x-y\|_2}{\max_{z\in K}\rK(z)}
 \le d_{\rK}(x,y)
 \le\frac{\|x-y\|_2}{\min_{z\in[x,y]}\rK(z)}.
\end{equation}
These bounds show that the metric separates points and is locally
bi-Lipschitz equivalent to Euclidean distance.  Symmetry and the triangle
inequality follow by reversing and concatenating curves.  For
$A,B\subseteq\operatorname{int}(K)$, write
\[
 d_{\rK}(A,B):=\inf_{x\in A,\,y\in B}d_{\rK}(x,y).
\]

We next relate $\rK(x)$ to the distance moved by one hit-and-run
step.  The one-eighth step quantile was introduced by Lov\'asz~\cite[Section~3]{LovaszHR}, while its comparison with the state-dependent radius $\rK$ appears in \cite[Lemma~3.2]{LV}.  We include a short polar-coordinate proof that is uniform for every $n\ge2$.

\begin{lemma}[Local radius and hit-and-run step length]
\label{lem:step-quantile}
Assume $n\ge2$.  For $x\in\operatorname{int}(K)$, define the $1/8$
step quantile
\begin{align}\label{eq:def_F_HR}
 F_{\mathrm{HR}}(x)
 :=
 \inf\left\{
 t\ge0:
 \HRzero\bigl(x,B(x,t)\bigr)\ge\frac18
 \right\}.
\end{align}
Then
\begin{equation}\label{eq:step-quantile}
 F_{\mathrm{HR}}(x)\ge\frac{\rK(x)}{32}.
\end{equation}
\end{lemma}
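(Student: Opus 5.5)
The plan is to show directly that a hit-and-run step from $x$ lands inside $B(x,t)$ with probability strictly less than $1/8$ when $t=\rK(x)/32$. Since $t\mapsto \HRzero(x,B(x,t))$ is nondecreasing, this gives $F_{\mathrm{HR}}(x)\ge t$.

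\emph{Reduction to a slightly smaller radius.} The supremum defining $\rK(x)$ need not be attained, so I first fix $R<\rK(x)$ with $\lambda(x,R)\ge 63/64$. Such $R$ exist arbitrarily close to $\rK(x)$, since the set of admissible radii has supremum $\rK(x)$. I will prove $\HRzero(x,B(x,R/32))<1/8$ for every such $R$. This gives $F_{\mathrm{HR}}(x)\ge R/32$, and letting $R\uparrow\rK(x)$ yields \eqref{eq:step-quantile}.

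\emph{Step 1: polar description of the local conductance.} For a direction $\theta\in\mathbb S^{n-1}$, let $e(\theta)$ be the distance from $x$ to $\partial K$ along the ray $x+\R_{\ge0}\theta$. By convexity, $K\cap B(x,R)$ is star-shaped about $x$, and its radial function in direction $\theta$ is $\min\{R,e(\theta)\}$. Integrating in polar coordinates gives
\[
 \lambda(x,R)=\E_{\theta\sim\sigma_{n-1}}\Bigl[\min\{1,e(\theta)/R\}^n\Bigr].
\]
On the event $\{e(\theta)<R/2\}$ the integrand is at most $2^{-n}$, and elsewhere it is at most $1$. Hence
\[
 \tfrac{1}{64}\ \ge\ 1-\lambda(x,R)\ \ge\ (1-2^{-n})\,\Prb_\theta\{e(\theta)<R/2\}.
\]
For $n\ge2$ this gives $\Prb_\theta\{e(\theta)<R/2\}\le \frac{1}{64}\cdot\frac43=\frac1{48}$. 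A union bound over $\theta$ and $-\theta$ (both uniform on the sphere) then shows that the event $G:=\{e(\theta)\ge R/2,\ e(-\theta)\ge R/2\}$ has probability at least $1-1/24$. On $G$, the chord through $x$ in direction $\theta$ has length $\ell(\theta)=e(\theta)+e(-\theta)\ge R$.

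\emph{Step 2: bounding the short-step probability.} Condition on the sampled direction $\theta$. The new point is $x+\lambda\theta$ with $\lambda$ uniform on a chord of length $\ell(\theta)$, so
\[
 \Prb\{|\lambda|<t\mid\theta\}\le\min\{1,2t/\ell(\theta)\}.
\]
Take $t=R/32$. On $G$ this conditional probability is at most $2t/R=1/16$; off $G$ I bound it by $1$. Therefore
\[
 \HRzero(x,B(x,t))\le \tfrac1{24}+\tfrac1{16}=\tfrac{5}{48}<\tfrac18 .
\]
Since $\HRzero(x,B(x,s))\le\HRzero(x,B(x,t))<1/8$ for every $s\le t$, the infimum in \eqref{eq:def_F_HR} is at least $t=R/32$.

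\emph{Main obstacle.} The only delicate point is Step 1: it requires converting the volume condition $\lambda(x,R)\ge 63/64$ into a statement about most directions. The polar star-shaped formula does exactly this. The hypothesis $n\ge2$ enters only through the factor $(1-2^{-n})^{-1}\le 4/3$, which keeps the constants within the margin between $5/48$ and $6/48$.
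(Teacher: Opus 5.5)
Your proof is correct and follows essentially the same route as the paper's. Both use the polar-coordinate formula for $\lambda(x,r)$ to show that at most $1/48$ of directions have a half-chord shorter than $r/2$, take a union bound over $\pm\theta$, and then use the $1/16$ chord-fraction estimate. The differences are only cosmetic: you bound the total by $1/24+1/16=5/48$ instead of $1/24+\tfrac{23}{24}\cdot\tfrac{1}{16}=13/128$, and you pass to the limit through admissible radii $R\uparrow\rK(x)$ rather than using monotonicity of $r\mapsto\lambda(x,r)$ to make every $r<\rK(x)$ admissible.
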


\begin{proof}
For $\theta\in\mathbb S^{n-1}$, let
\[
 R_x(\theta):=\max\{a\ge0:x+a\theta\in K\}
\]
be the distance from $x$ to the end of the chord in direction $\theta$.
Fix $0<r<\rK(x)$.  Polar integration about $x$ gives
\begin{equation}\label{eq:polar-occupancy}
 \lambda(x,r)=\frac{\vol(K\cap B(x,r))}{\vol(B(0,r))}
 =
 \int_{\mathbb S^{n-1}}
 \min\left\{
 1,\left(\frac{R_x(\theta)}r\right)^n
 \right\}
 \,d\sigma_{n-1}(\theta).
\end{equation}
The right-hand side is nonincreasing in $r$, so the definition of $\rK$
implies $\lambda(x,r)\ge63/64$. Thus,
\begin{align*}
    \frac{1}{64}\geq &~ 1-\lambda(x,r)=1-\int_{\mathbb S^{n-1}}\min\left\{1,\left(\frac{R_x(\theta)}r\right)^n\right\}\,d\sigma_{n-1}(\theta)\\
    \geq &~ \int_{\{\theta:R_x(\theta)<r/2\}} \underbrace{1- \min\left\{1,\left(\frac{R_x(\theta)}r\right)^n\right\}}_{\geq 1-2^{-n}}\,d\sigma_{n-1}(\theta)\\
    \geq &~ \left(1-2^{-n}\right)\sigma_{n-1}\left(\{\theta:R_x(\theta)<r/2\}\right).
\end{align*}
Equivalently,
\[
 \sigma_{n-1}(\{\theta:R_x(\theta)<r/2\})
 \le
 \frac{1}{64(1-2^{-n})}
 \le\frac1{48},
\]
where the last inequality uses $n\ge2$.

It follows by a union bound that, with probability at least $23/24$ over
the direction of hit-and-run, the chord extends at least $r/2$ from $x$ in
both directions.  Such a chord has length at least $r$, whereas its portion
inside $B(x,r/32)$ has length $r/16$.  Conditional on this direction,
uniform resampling from the chord therefore lands in $B(x,r/32)$ with
probability at most $1/16$.  Hence, by union bound,
\[
 \HRzero\bigl(x,B(x,r/32)\bigr)
 \le
 \frac1{24}+\frac{23}{24}\cdot\frac1{16}
 =
 \frac{13}{128}
 <
 \frac18.
\]

To pass to the endpoint, fix $t<\rK(x)/32$ and choose
$r\in(32t,\rK(x))$.  Then
$B(x,t)\subset B(x,r/32)$, so the probability of a step into $B(x,t)$ is
strictly less than $1/8$.  No such $t$ belongs to the set in the definition
of $F_{\mathrm{HR}}(x)$, which proves \eqref{eq:step-quantile}.  
\end{proof}

\begin{remark}
\cite{LovaszHR} defines the step quantile by the equality
\(
 \HRzero(x,B(x,F_{\mathrm{HR}}(x)))=1/8
\), rather than by the infimum in~\eqref{eq:def_F_HR}.  The two definitions agree here.  Indeed, conditional on a direction, the signed displacement is uniform on a nondegenerate interval, so its absolute value has no atoms.  Averaging over the direction preserves this property.  Consequently,
\(t\mapsto\HRzero(x,B(x,t))\) is continuous, and the infimum in our definition attains the value $1/8$.  Thus, these two definitions are equivalent.
\end{remark}

For distinct $x,y\in\operatorname{int}(K)$, let $p,q$ be the endpoints of
their common chord, labeled in the order $p,x,y,q$, and define the \emph{cross-ratio distance}~\cite{LovaszHR}:
\[
 d_K^{\mathrm{cr}}(x,y)
 :=
 \frac{\|x-y\|_2\,\|p-q\|_2}
      {\|x-p\|_2\,\|y-q\|_2}.
\]

\begin{lemma}[Lov\'asz's overlap lemma~{\cite[Lemma~8]{LovaszHR}}; see also {\cite[Lemma 4.1]{LV}}]\label{lem:lovasz_overlap}
Suppose that
\begin{align}\label{eq:lovasz_overlap_conditions}
 d_K^{\mathrm{cr}}(x,y)<\frac18\qquad \text{and}
 \qquad
 \|x-y\|_2
 <
 \frac2{\sqrt n}\max\{F_{\mathrm{HR}}(x),F_{\mathrm{HR}}(y)\}.
\end{align}
Then
\[
 \dtv\bigl(\HRzero(x,\cdot),\HRzero(y,\cdot)\bigr)
 <1-\frac1{500}.
\]
\end{lemma}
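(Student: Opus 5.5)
We follow Lov\'asz's original argument and work with explicit transition densities. For $x\in\operatorname{int}(K)$ and $z\in K\setminus\{x\}$, let $\ell(x,z)$ be the length of the chord of $K$ through $x$ and $z$. Polar coordinates about $x$ give the density of $\HRzero(x,\cdot)$ with respect to Lebesgue measure:
\[
 p_x(z)=\frac{2}{n v_n}\,\frac{1}{\ell(x,z)\,\|z-x\|_2^{\,n-1}} .
\]
Since $\dtv(\HRzero(x,\cdot),\HRzero(y,\cdot))=1-\int\min\{p_x,p_y\}\,dz$, it suffices to find a set $G\subseteq K$ and a universal $c>0$ with $p_y\ge c\,p_x$ on $G$ and $\HRzero(x,G)$ bounded below by a universal constant. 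By symmetry of the hypotheses \eqref{eq:lovasz_overlap_conditions}, we may relabel so that $F_{\mathrm{HR}}(x)\ge F_{\mathrm{HR}}(y)$. Then $\|x-y\|_2<2F_{\mathrm{HR}}(x)/\sqrt n$, and we work entirely with the walk started at $x$.

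\textbf{Step 1 (distance factor).} Let $G_1:=\{z:\|z-x\|_2\ge F_{\mathrm{HR}}(x)\}$; by definition of the step quantile, $\HRzero(x,G_1)\ge 7/8$. Next restrict to directions $\theta=(z-x)/\|z-x\|_2$ that are not nearly parallel to $y-x$. Write $\beta$ for the angle between $\theta$ and $y-x$, and require $|\cos\beta|\le C/\sqrt n$. In high dimension a uniform direction satisfies this except with probability about $e^{-C^2/2}$, which can be made tiny. For such $z\in G_1$,
\[
 \|z-y\|_2^2=\|z-x\|_2^2-2\langle z-x,y-x\rangle+\|y-x\|_2^2\le\|z-x\|_2^2\Bigl(1+\tfrac{4C}{n}+\tfrac{4}{n}\Bigr),
\]
so $\|z-x\|_2^{n-1}/\|z-y\|_2^{n-1}\ge e^{-2C-2}$. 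This handles the $|z-\cdot|^{n-1}$ factor with a universal constant loss. For $n$ small the direction restriction is unnecessary, because $\|x-y\|_2\le\|z-x\|_2$ already gives a constant ratio.

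\textbf{Step 2 (chord-length factor; the main obstacle).} We must show that $\ell(y,z)\le C'\ell(x,z)$ for most $z$. Here the cross-ratio hypothesis $d_K^{\mathrm{cr}}(x,y)<1/8$ enters. Work in the $2$-plane through $x,y,z$, where $K$ becomes a planar convex set. Let $p,q$ be the endpoints of the chord through $x,y$. The cross-ratio bound says that $x$ and $y$ are close relative to their distances to $p$ and $q$. We then use a convexity argument: the chord through $y$ and $z$ is contained in a region bounded by the lines from $z$ through $p$ and $q$, together with the chord through $x$ and $z$. From this, $\ell(y,z)$ can exceed a constant multiple of $\ell(x,z)$ only when $z$ lies very near an endpoint of its own chord through $x$.

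Conditional on the direction $\theta$, the point $z$ is uniform on the chord through $x$. Hence the event ``$z$ is within a small fraction of the chord length of an endpoint'' has small probability, say $1/8$. Discarding it, together with the exceptional directions from Step~1, leaves a set $G$ with $\HRzero(x,G)\ge 7/8-1/8-o(1)\ge 1/2$, on which $p_y\ge c\,p_x$ with $c$ universal. I expect the planar convexity bookkeeping here to be the delicate point. It needs a careful case split according to whether the lines $zp$ and $zq$ meet the chord through $y$ on the same side as $z$. I would follow Lov\'asz's \cite[Lemma~8]{LovaszHR} estimate, which bounds the ratio of the two chord lengths by $(1+d_K^{\mathrm{cr}}(x,y))$ times a factor controlled by the position of $z$ on its chord.

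\textbf{Step 3 (conclusion).} Combining the two steps, $\int\min\{p_x,p_y\}\ge c\,\HRzero(x,G)\ge c/2$. Tracking the constants with the thresholds $1/8$ and $2/\sqrt n$ should give $c/2>1/500$, which proves the lemma.
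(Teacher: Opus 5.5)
The paper does not prove this lemma; it imports it from Lov\'asz \cite[Lemma~8]{LovaszHR}. Your reduction to a pointwise comparison $p_y\ge c\,p_x$ on a set of large $\HRzero(x,\cdot)$-mass is fine, and so is the distance-factor estimate in Step~1.

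The gap is Step~2. You claim that $\ell(y,z)$ can exceed a constant multiple of $\ell(x,z)$ only when $z$ is near an endpoint of its chord through $x$. This is false, even for $z\in G_1$ inside your direction window. Here is a counterexample.
\begin{itemize}
\item Take $K=[-L,L]^{n-1}\times[-1,1]$, $x=-\frac{\epsilon}{2}e_n$ and $y=\frac{\epsilon}{2}e_n$ with $\epsilon=1/20$. The configuration is symmetric, so your relabeling is harmless.
\item The chord through $x,y$ is $[-e_n,e_n]$, so $d_K^{\mathrm{cr}}(x,y)=2\epsilon/(1-\epsilon/2)^2<1/8$.
\item For large $L$, a chord through $x$ in direction $\theta$ has length $2/|\theta_n|$, and $\sqrt n\,\E|\theta_n|\le1$. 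Hence $F_{\mathrm{HR}}(x)=F_{\mathrm{HR}}(y)\ge\sqrt n/10$, and $\|x-y\|_2<2F_{\mathrm{HR}}(x)/\sqrt n$. Both hypotheses hold.
\item Let $z=x+t\theta$ with $t\theta_n=\epsilon$ and $t\ge F_{\mathrm{HR}}(x)$. Then $0<\theta_n\le 1/(2\sqrt n)$, so the direction is admissible and $z\in G_1$.
\item The chord through $x$ and $z$ has length $2/\theta_n$, and $z$ sits at relative position $(1+\epsilon/2)/2$ on it, essentially the midpoint.
\item But $z_n=y_n$, so the chord through $y$ and $z$ lies in $\{v\in\R^n:v_n=\epsilon/2\}$ and has length at least $2L$.
\end{itemize}
Hence $\ell(y,z)/\ell(x,z)\ge L\theta_n$, which is unbounded as $L\to\infty$.

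What the planar convexity argument actually gives is different. Use $\operatorname{conv}\{p,q,a',b'\}\subseteq K$, where $[a',b']$ is the chord through $y$ and $z$. For $z$ in your window this yields $\ell(x,z)\gtrsim\ell(y,z)$ unless $\|z-y\|_2<d_K^{\mathrm{cr}}(x,y)\,\ell(y,z)$. So the exceptional points are those close to $y$ measured along $y$'s own chords, exactly as in the example. They are not the points near the ends of $x$'s chord.

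This set is not controlled by the one-line ``$z$ is uniform on its chord through $x$'' argument. You need a separate estimate of its $\HRzero(x,\cdot)$-mass. One way is to cover the set by the homothets $y\pm\frac18(K-y)$. Then bound their intersection with each chord through $x$ using concavity of parallel chord lengths along $[p,q]$. This gives at most $\frac29+\frac{17}{64}<\frac12$ of each chord. That estimate is the real content of the lemma, and your sketch defers it without supplying it.

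Separately, the explicit constant is unchecked. If you take $C=2$ so that the excluded directions are small, then $e^{-2C-2}/2\approx1.2\cdot10^{-3}$. This is already below $1/500$ before any chord-length loss. Reaching $1/500$ therefore needs sharper bookkeeping than Step~3 suggests.
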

The next lemma verifies both geometric hypotheses in~\eqref{eq:lovasz_overlap_conditions} from a single bound in the
local-radius metric.

\begin{lemma}[Uniform local overlap for hit-and-run]
\label{lem:overlap}
Assume $n\ge2$.  There are universal constants $c_0,\alpha_0>0$ such that,
for all $x,y\in\operatorname{int}(K)$,
\begin{equation}\label{eq:local-overlap}
 d_{\rK}(x,y)<\frac{c_0}{\sqrt n}
 \qquad\Longrightarrow\qquad
 \dtv\bigl(\HR(x,\cdot),\HR(y,\cdot)\bigr)
 \le1-\alpha_0.
\end{equation}
One may take $\alpha_0=1/1000$.
\end{lemma}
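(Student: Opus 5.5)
The plan is to check both hypotheses of Lemma~\ref{lem:lovasz_overlap} from the single assumption $d_{\rK}(x,y)<c_0/\sqrt n$. The conclusion then follows with $\alpha_0=1/500\ge 1/1000$. The key tool is a Harnack-type comparison along the path. Since $\rK$ is concave, nonnegative and vanishes on $\partial K$, its zero extension is Lipschitz, but only with an $O(\sqrt n)$ constant. That is too weak, so we instead compare $\rK$ with the boundary distance. Let $\gamma$ be a curve from $x$ with $\int\|\gamma'\|/\rK(\gamma)\,dt<\delta$, where $\delta:=c_0/\sqrt n$. We want to show $\rK(\gamma(t))\asymp\rK(x)$ along $\gamma$, and in particular $\rK(y)\asymp\rK(x)$ and $\|x-y\|_2\lesssim\delta\,\rK(x)$.

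\textbf{Step 1: a multiplicative Harnack bound.} We use concavity of $\rK$ together with Lemma~\ref{lem:local-radius}(i),(iii). Take a point $z$ with $\|z-x\|_2\le\eta\,\rK(x)$. Concavity along the chord through $x$ and $z$, extended to the boundary point $b$ beyond $z$, gives $\rK(z)\ge\rK(x)\,\|z-b\|/\|x-b\|$. We need $\|x-b\|\gtrsim\rK(x)/\sqrt n$. By (iii), $\operatorname{dist}(x,\partial K)\gtrsim\rK(x)/\sqrt n$, so this holds, but it only gives control when $\eta\lesssim 1/\sqrt n$. That is exactly the available budget. So for $\|z-x\|\le c\,\rK(x)/\sqrt n$ we get $\rK(z)\ge(1-C\sqrt n\,\eta)\rK(x)$, and symmetrically $\rK(x)\ge(\ldots)\rK(z)$. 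We then run a continuity argument along $\gamma$. Let $t^*$ be the first time $\rK(\gamma(t))$ leaves $[\rK(x)/2,2\rK(x)]$. Before $t^*$, the Euclidean length satisfies $\int\|\gamma'\|\le 2\rK(x)\delta=2c_0\rK(x)/\sqrt n$. Choosing $c_0$ small, Step 1 then forbids leaving the window. Hence $\|x-y\|_2\le 2c_0\rK(x)/\sqrt n$ and $\rK(y)\in[\rK(x)/2,2\rK(x)]$.

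\textbf{Step 2: the second hypothesis.} By Lemma~\ref{lem:step-quantile}, $F_{\mathrm{HR}}(x)\ge\rK(x)/32$. So $\|x-y\|_2<\frac{2}{\sqrt n}F_{\mathrm{HR}}(x)$ holds once $2c_0<1/16$.

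\textbf{Step 3: the cross-ratio hypothesis.} Let $p,q$ be the chord endpoints. Then $\|x-p\|\ge\operatorname{dist}(x,\partial K)\gtrsim\rK(x)/\sqrt n$, and likewise $\|y-q\|\gtrsim\rK(y)/\sqrt n\gtrsim\rK(x)/\sqrt n$. This boundary-distance bound is the main obstacle: the factor $\sqrt n$ lost in (iii) seems to kill the estimate. We handle it as follows. Write $\|p-q\|\le\|x-p\|+\|x-y\|+\|y-q\|$. This gives
\[
d_K^{\mathrm{cr}}(x,y)\le\|x-y\|\Bigl(\tfrac1{\|y-q\|}+\tfrac1{\|x-p\|}+\tfrac{\|x-y\|}{\|x-p\|\|y-q\|}\Bigr).
\]
Each ratio $\|x-y\|/\|x-p\|$ is at most $(2c_0\rK(x)/\sqrt n)\cdot(C\sqrt n/\rK(x))=2Cc_0$. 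So the $\sqrt n$ factors cancel exactly, and $d_K^{\mathrm{cr}}<1/8$ for $c_0$ small enough.

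The delicate point is therefore the Harnack estimate in Step 1. It must hold at scale $\rK/\sqrt n$ using only the boundary-distance comparison, and it must hold in both directions. This requires running the continuity argument carefully on $\operatorname{int}(K)$, for example by bounding $\rK(z)$ for $z$ in a small Euclidean ball around each path point via concavity on the segment to the nearest boundary point.
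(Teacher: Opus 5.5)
Your proposal is correct and follows essentially the paper's proof. You keep $\rK$ within a factor $2$ of $\rK(x)$ along a near-optimal curve, deduce $\|x-y\|_2<2c_0\rK(x)/\sqrt n$, and verify both hypotheses of Lemma~\ref{lem:lovasz_overlap}, with the $\sqrt n$ from Lemma~\ref{lem:local-radius}(iii) cancelling in the cross-ratio. Your direct $\alpha_0=1/500$ is valid for the non-lazy kernel; the paper's $1/1000$ comes from an extra lazy-mixture convexity step.

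The one misjudgment is calling the $O(\sqrt n)$-Lipschitz bound too weak. Your Step~1 is exactly that bound, rederived by the same secant/concavity argument used to prove Lemma~\ref{lem:local-radius}(iii). The paper simply uses it to integrate $\bigl|\tfrac{d}{dt}\log\rK(\gamma(t))\bigr|\le C_{\mathrm{loc}}\sqrt n\,\|\gamma'(t)\|_2/\rK(\gamma(t))$ along the curve, in place of your first-exit argument.
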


\begin{proof}
The conclusion is immediate if $x=y$, so assume that $x\ne y$.  By
Lemma~\ref{lem:local-radius}~\textup{(iii)}, there is a universal
$C_{\mathrm{loc}}\ge1$ such that
\begin{equation}\label{eq:local-radius-two-bounds}
 \rK(z)
 \le
 C_{\mathrm{loc}}\sqrt n\,\operatorname{dist}(z,\partial K),
 \qquad
 |\rK(z)-\rK(z')|
 \le
 C_{\mathrm{loc}}\sqrt n\,\|z-z'\|_2.
\end{equation}
Set $c_0=(64C_{\mathrm{loc}})^{-1}$ and suppose that
$d_{\rK}(x,y)<c_0/\sqrt n$.  Choose an absolutely continuous curve
$\gamma$ from $x$ to $y$ whose intrinsic length is less than
$c_0/\sqrt n$.  Since $\rK\circ\gamma$ is positive and absolutely
continuous, the chain rule gives
\[
 \left|\frac{d}{dt}\log\rK(\gamma(t))\right|=\frac{\left|\frac{d}{dt}\rK(\gamma(t))\right|}{\rK(\gamma(t))}
 \le
 C_{\mathrm{loc}}\sqrt n\,
 \frac{\|\gamma'(t)\|_2}{\rK(\gamma(t))}
 \qquad\text{for almost every }t.
\]
Integrating along any initial portion of the curve gives
\[
 \left|\log\frac{\rK(\gamma(t))}{\rK(x)}\right|\leq C_{\mathrm{loc}}\sqrt{n}\int_0^t \frac{\|\gamma'(s)\|_2}{\rK(\gamma(s))}\,ds\leq C_{\mathrm{loc}}\sqrt{n} \cdot \frac{c_0}{\sqrt{n}}
 <
 C_{\mathrm{loc}}c_0
 =
 \frac1{64}.
\]
Thus, $\rK(\gamma(t))$ remains between $\rK(x)/2$ and $2\rK(x)$ along
the entire curve.  By~\eqref{eq:rK_bounds}, we obtain that
\begin{equation}\label{eq:intrinsic-to-euclidean}
 \frac12\rK(x)\le\rK(y)\le2\rK(x),
 \qquad
 \|x-y\|_2\leq \max_{z\in \gamma} \rK(z)\cdot \int_0^1\frac{\|\gamma'(t)\|_2}{\rK(\gamma(t))}dt
 <
 \frac{2c_0\rK(x)}{\sqrt n}.
\end{equation}

Let $p,q$ be the endpoints of the chord cut out by the line through $x$ and
$y$, labeled so that the points occur in the order $p,x,y,q$. By~\eqref{eq:local-radius-two-bounds} and $\rK(p)=0$, we have
\begin{align}\label{eq:bound_x_to_p}
    \|x-p\|_2\ge \frac{|\rK(x)-\rK(p)|}{C_{\rm loc}\sqrt{n}}=\frac{\rK(x)}{C_{\mathrm{loc}}\sqrt n}.
\end{align}
By~\eqref{eq:local-radius-two-bounds}, $\rK(q)=0$, and~\eqref{eq:intrinsic-to-euclidean}, we have
\begin{align}\label{eq:bound_q_to_y}
    \|q-y\|_2 \geq \frac{|\rK(y)-\rK(q)|}{C_{\rm loc}\sqrt{n}}=\frac{\rK(y)}{C_{\mathrm{loc}}\sqrt n}\geq \frac{\rK(x)}{2C_{\mathrm{loc}}\sqrt n}.
\end{align}
Because $\|p-q\|_2=\|x-p\|_2+\|x-y\|_2+\|y-q\|_2$, the cross-ratio distance satisfies
\begin{align*}
 d_K^{\mathrm{cr}}(x,y)=&~
 \frac{\|x-y\|_2\|p-q\|_2}{\|x-p\|_2\|y-q\|_2}\\
 = &~
 \frac{\|x-y\|_2}{\|x-p\|_2}+\frac{\|x-y\|_2}{\|y-q\|_2}+\frac{\|x-y\|_2^2}{\|x-p\|_2\|y-q\|_2}\\
 < &~ 2C_{\rm loc} c_0 + 4C_{\rm loc} c_0 + 8C_{\mathrm{loc}}^2c_0^2 \\
 = &~  \frac{49}{512} < \frac{1}{8},
\end{align*}
where the third step follows from~\eqref{eq:intrinsic-to-euclidean}--\eqref{eq:bound_q_to_y}, and the last step follows from $c_0=1/(64C_{\rm loc})$.
Moreover, $c_0\le1/64$, so Lemma~\ref{lem:step-quantile} and
\eqref{eq:intrinsic-to-euclidean} give
\[
 \|x-y\|_2
 <
 \frac{2c_0\rK(x)}{\sqrt n}
 \le
 \frac{\rK(x)}{32\sqrt n}
 \le
 \frac{F_{\mathrm{HR}}(x)}{\sqrt n}
 <
 \frac2{\sqrt n}
 \max\{F_{\mathrm{HR}}(x),F_{\mathrm{HR}}(y)\}.
\]

Therefore, the two hypotheses in~\eqref{eq:lovasz_overlap_conditions} hold. By Lemma~\ref{lem:lovasz_overlap}, we have
\[
 \dtv\bigl(\HRzero(x,\cdot),\HRzero(y,\cdot)\bigr)
 <
 1-\frac1{500}.
\]
Finally, $\HR(z,\cdot)=\frac12\delta_z+\frac12\HRzero(z,\cdot)$.
Convexity of total variation yields
\begin{align*}
 \dtv\bigl(\HR(x,\cdot),\HR(y,\cdot)\bigr)
 &\le
 \frac12\dtv(\delta_x,\delta_y)
 +
 \frac12
 \dtv\bigl(\HRzero(x,\cdot),\HRzero(y,\cdot)\bigr)\\
 &<
 1-\frac1{1000},
\end{align*}
which completes the proof of the lemma.
\end{proof}

\section{From local-radius isoperimetry to hit-and-run mixing}
\label{sec:conductance}
The preceding sections provide weighted expansion and local overlap in the same intrinsic geometry. The standard low-escape-set argument combines them: small ergodic flow produces large low-escape cores, local overlap forces those cores to be separated in \(d_{\rK}\), and the one-sided weighted coarea inequality converts that separation into a flow lower bound. 

We prove the following general proposition, which will be useful for analyzing both uniform and log-affine sampling.

\begin{proposition}[Intrinsic conductance transfer]
\label{prop:conductance-transfer}
Let $\lambda$ be a probability measure supported on a convex body $K$, with $\lambda\ll dx$, and let $w:K\to[0,\infty)$ be positive on $\operatorname{int}(K)$ with a Lipschitz zero extension.  Let $\mathsf P$ be a reversible Markov kernel with stationary law $\lambda$. For $x,y\in\operatorname{int}(K)$, define a general version of the intrinsic metric
\begin{align}\label{eq:def_intrinsic_metric_general}
 d_w(x,y):=\inf_\gamma\int_0^1\frac{\|\gamma'(t)\|_2}{w(\gamma(t))}\,dt.
\end{align}
Fix $0<s<1/4$. Suppose there are $0<r_0,\alpha_0,\gamma_s\le1$ such that
\begin{enumerate}[label=\textup{(\roman*)}]
\item $d_w(x,y)<r_0$ implies
\[
 \dtv\bigl(\mathsf P(x,\cdot),\mathsf P(y,\cdot)\bigr)
 \le1-\alpha_0;
\]
\item every measurable $E\subseteq K$ with
$\min\{\lambda(E),1-\lambda(E)\}\ge s/2$ satisfies
\[
 \cP_{w,\lambda}(E)
 \ge\gamma_s\min\{\lambda(E),1-\lambda(E)\}.
\]
\end{enumerate}
Then, the $s$-conductance of ${\sf P}$ can be lower bounded by
\begin{equation}\label{eq:conductance-transfer}
 \Phi_s(\mathsf P)
 \gtrsim
 \alpha_0r_0\gamma_s.
\end{equation}
\end{proposition}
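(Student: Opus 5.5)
Fix a measurable $S\subseteq K$ with $p:=\lambda(S)\in(s,1/2]$ and write $Q:=\cQ_{\mathsf P}(S,S^c)$. The goal is $Q\gtrsim\alpha_0r_0\gamma_s(p-s)$, and it suffices to prove $Q\gtrsim\alpha_0r_0\gamma_s\,p$. The argument follows the low-escape-core route of Section~\ref{sec:tech_ov_standard}. First dispose of the easy case: if $Q\ge c\,\alpha_0 p$ for a small universal $c$, we are done, since $r_0,\gamma_s\le1$. So assume $Q<c\alpha_0 p$. Define the cores
\[
S_-:=\{x\in S:\mathsf P(x,S^c)<\alpha_0/4\},\qquad S_+:=\{y\in S^c:\mathsf P(y,S)<\alpha_0/4\}.
\]
Markov's inequality and reversibility give $\lambda(S\setminus S_-)\le4Q/\alpha_0$ and $\lambda(S^c\setminus S_+)\le4Q/\alpha_0$. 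Hence $\lambda(S_-)\ge p/2$ and $\lambda(S_+)\ge 1-p-p/2\ge p/2$ once $c$ is small, and the middle region $G:=K\setminus(S_-\cup S_+)$ has $\lambda(G)\le 8Q/\alpha_0$.

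Next, separation. For $x\in S_-$ and $y\in S_+$, testing on $S$ gives $\mathsf P(x,S)-\mathsf P(y,S)>1-\alpha_0/2$, so $\dtv>1-\alpha_0$. By hypothesis (i) this forces $d_w(x,y)\ge r_0$ for all such pairs, with points of $\partial K$ ignored since $\lambda(\partial K)=0$.

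Then coarea. Define $u(x):=\min\{d_w(x,S_-),r_0\}$ on $\operatorname{int}(K)$. It is locally Lipschitz, and $w\|\nabla u\|_2\le1$ a.e. because $u$ is $1$-Lipschitz for $d_w$, whose infinitesimal form is $\|\cdot\|_2/w$; this local statement needs a short check. Also $u=0$ on $S_-$ and $u=r_0$ on $S_+$. For $0<t<r_0$ the set $E_t:=\{u<t\}$ contains $S_-$ and is disjoint from $S_+$, so
\[
\min\{\lambda(E_t),1-\lambda(E_t)\}\ge p/2\ge s/2.
\]
Hypothesis (ii) then gives $\cP_{w,\lambda}(E_t)\ge\gamma_s p/2$. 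Applying the one-sided coarea inequality of Proposition~\ref{prop:relaxed-perimeter-toolkit}(ii) with $a=0$, $b=r_0$ yields
\[
r_0\gamma_s p/2\le\int_{\{0<u<r_0\}}w\|\nabla u\|_2\,d\lambda\le\lambda(\{0<u<r_0\})\le\lambda(G).
\]
The last inequality holds because $\{0<u<r_0\}$ avoids both $S_-$ and $S_+$. Combining with $\lambda(G)\le8Q/\alpha_0$ gives $Q\ge\alpha_0r_0\gamma_s p/16$, which proves \eqref{eq:conductance-transfer}.

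The main obstacle is regularity of $u$: showing it is locally Lipschitz on $\operatorname{int}(K)$ with $\|\nabla u\|_2\le1/w$ a.e. This needs local comparability of $d_w$ with Euclidean distance divided by $w$, which follows from \eqref{eq:rK_bounds}-type bounds and continuity and positivity of $w$ in the interior, followed by Rademacher's theorem. A secondary point is that the cores need not be closed. This is harmless because $u$ is defined through an infimum and the separation bound holds pointwise.
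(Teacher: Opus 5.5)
Your proposal is correct and follows essentially the same route as the paper's proof. Both first dispose of the case $Q\gtrsim\alpha_0 p$, then build the low-escape cores via Markov's inequality and reversibility, separate them in $d_w$ by hypothesis (i), and finally apply the one-sided weighted coarea inequality to the intrinsic distance from $S_-$ using $w\|\nabla u\|_2\le 1$. Truncating $u$ at $r_0$ and using $p/2$ instead of $3p/4$ are only cosmetic differences. The regularity check you flag is exactly the one the paper carries out: a local segment comparison, continuity of $w$, and Rademacher's theorem.
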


\begin{proof}
Fix a measurable set $S\subset K$ with
$s<p:=\lambda(S)\le1/2$.  By reversibility, its ergodic flow can be written
in either direction:
\[
 Q
 :=
 \cQ_{\sf P}(S,S^c)
 =
 \int_S{\sf P}(x,S^c)\,d\lambda(x)
 =
 \int_{S^c}{\sf P}(y,S)\,d\lambda(y).
\]
If $Q\ge\alpha_0p/16$, then
\[
 \frac{Q}{p-s}
 \ge\frac{\alpha_0}{16}
 \gtrsim\alpha_0r_0\gamma_s,
\]
so assume that $Q<\alpha_0p/16$.  Define the low-escape cores
\[
 S_-:=
 \left\{
 x\in S\cap\operatorname{int}(K):
 {\sf P}(x,S^c)<\frac{\alpha_0}{4}
 \right\},
 \qquad
 S_+:=
 \left\{
 y\in S^c\cap\operatorname{int}(K):
 {\sf P}(y,S)<\frac{\alpha_0}{4}
 \right\}.
\]
Therefore, the definition of $Q$ gives
\begin{equation}\label{eq:bad-mass}
 \lambda(S\setminus S_-)
 \le\frac{4Q}{\alpha_0},
 \qquad
 \lambda(S^c\setminus S_+)
 \le\frac{4Q}{\alpha_0}.
\end{equation}
Indeed, the first inequality follows from
\begin{align*}
    Q=\int_S{\sf P}(x,S^c)\,d\lambda(x)\geq \int_{S\setminus S_{-}}{\sf P}(x,S^c)\,d\lambda(x)\geq \frac{\alpha_0}{4}\lambda(S\setminus S_-),
\end{align*}
and the second inequality uses the reverse-flow expression for $Q$.

For $x\in S_-$ and $y\in S_+$,
\[
 {\sf P}(x,S)>1-\frac{\alpha_0}{4},
 \qquad
 {\sf P}(y,S)<\frac{\alpha_0}{4}.
\]
Taking $S$ as a test set in the definition of total variation yields
\[
 \dtv\left({\sf P}(x,\cdot),{\sf P}(y,\cdot)\right)
 >
 1-\frac{\alpha_0}{4}-\frac{\alpha_0}{4}
 >
 1-\alpha_0.
\]
The contrapositive of \textup{(i)} therefore gives
\begin{equation}\label{eq:intrinsic-separation}
 d_{w}(S_-,S_+)
 \ge
 r_0.
\end{equation}
Moreover, \eqref{eq:bad-mass} and $Q<\alpha_0p/16$ imply
\begin{equation}\label{eq:surviving-masses}
 \lambda(S_-)\ge\frac{3}{4}p,
 \qquad
 \lambda(S_+)\ge\frac{3}{4}p.
\end{equation}

Let
\[
 u(x):=d_{w}(S_-,x),
 \qquad
 E_t:=\{x\in\operatorname{int}(K):u(x)<t\}.
\]
Distance to a set is $1$-Lipschitz in any metric space, so it holds that
\[
 |u(x)-u(y)|\le d_{w}(x,y).
\]
For fixed $x\in\operatorname{int}(K)$ and sufficiently small $h$, the segment $\gamma(t)=x+th$ for $t\in [0,1]$ lies in the interior. Therefore,
\begin{align*}
    |u(x)-u(y)|\le d_{w}(x,x+h)\le\int_0^1\frac{\|\gamma(t)'\|_2}{w(x+th)}\,dt=\|h\|_2 \int_0^1\frac{1}{w(x+th)}\,dt.
\end{align*}
Since $w$ is continuous and positive in the interior, this first shows that $u$ is locally Euclidean-Lipschitz. At every differentiability point $x$, taking $h=sv$ and letting $s\to0$ gives
\[
 |\langle\nabla u(x),v\rangle|
 \le \frac{\|v\|_2}{w(x)}
 \qquad(v\in\R^n).
\]
Taking the supremum over unit vectors $v$ yields
\begin{align}\label{eq:intrinsic-gradient}
    w(x)\|\nabla u(x)\|_2\le1.
\end{align}

For $0<t<r_0$, the separation
\eqref{eq:intrinsic-separation} gives
$S_-\subseteq E_t$ and $S_+\subseteq E_t^c$.  Hence, \eqref{eq:surviving-masses} implies
\[
 \min\{\lambda(E_t),1-\lambda(E_t)\}
 \ge\frac{3}{4}p
 \ge\frac{s}{2},
\]
and assumption~\textup{(ii)} gives
\begin{equation}\label{eq:P_rK_lower_bound}
 \cP_{w,\lambda}(E_t)\ge\frac{3\gamma_s}{4}p
 \qquad(0<t<r_0).
\end{equation}
Every point $x$ with $0<u(x)<r_0$ lies outside both cores.  Using
\eqref{eq:bad-mass}, the one-sided weighted coarea inequality (Proposition~\ref{prop:relaxed-perimeter-toolkit}~\textup{(ii)}), and
\eqref{eq:intrinsic-gradient}, we obtain
\[
 r_0\frac{3\gamma_s}{4}p
 \le
 \int_0^{r_0}\cP_{w,\lambda}(E_t)\,dt
 \le
 \int_{\{0<u<r_0\}}
 w(x)\|\nabla u(x)\|_2\,d\lambda(x)
 \le
 \lambda\bigl(K\setminus(S_-\cup S_+)\bigr)
 \le
 \frac{8Q}{\alpha_0}.
\]
Hence,
\[
 \frac{Q}{p-s}
 \ge
 \frac{3\alpha_0r_0\gamma_s }{32}
 \frac{p}{p-s}
 \gtrsim
 \alpha_0r_0\gamma_s.
\]

Taking the infimum over $S$ gives
\begin{align*}
    \Phi_s({\sf P})=\inf_{s<\lambda(S)\le1/2}
 \frac{
   \cQ_{{\sf P}}(S,S^c)
 }{
   \lambda(S)-s
 }\gtrsim\alpha_0r_0\gamma_s,
\end{align*}
completing the proof of the proposition.
\end{proof}

Combining with the local-radius weighted isoperimetry
of Corollary~\ref{cor:full-depth} gives the conductance lower bound.

\begin{corollary}[Conductance of the hit-and-run walk]
\label{cor:hit-and-run-conductance}
For every $0<s<1/4$,
\begin{equation}\label{eq:hit-and-run-conductance}
 \Phi_s(\HR)
 \gtrsim
 \frac{\psi_n}{n\bigl(1+\log(2/s)\bigr)}.
\end{equation}
\end{corollary}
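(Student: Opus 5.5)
We apply Proposition~\ref{prop:conductance-transfer} with $\lambda=\mu=\Unif(K)$, weight $w=\rK$, and kernel $\mathsf P=\HR$. First we check its standing hypotheses. By Lemma~\ref{lem:local-radius}, $\rK$ is positive on $\operatorname{int}(K)$ and its zero extension is Lipschitz. The kernel $\HR$ is reversible with respect to $\mu$. With $w=\rK$, the metric $d_w$ of \eqref{eq:def_intrinsic_metric_general} coincides with $d_{\rK}$ of Definition~\ref{def:intrinsic-metric}.

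Next we verify hypothesis (i). By Lemma~\ref{lem:overlap} (here $n\ge2$), $d_{\rK}(x,y)<c_0/\sqrt n$ implies $\dtv(\HR(x,\cdot),\HR(y,\cdot))\le 1-\alpha_0$ with $\alpha_0=1/1000$. So we take $r_0=c_0/\sqrt n\le1$ and $\alpha_0=1/1000$.

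Then we verify hypothesis (ii). Let $E$ satisfy $q:=\min\{\mu(E),1-\mu(E)\}\ge s/2$. If $\mu(E)>1/2$, complement symmetry (Proposition~\ref{prop:relaxed-perimeter-toolkit}(i)) reduces to $K\setminus E$, so Corollary~\ref{cor:full-depth} applies in either case:
\[
\cP_{\rK}(E)\gtrsim \frac{\psi_n}{\sqrt n}\,\frac{q}{1+\log(1/q)}.
\]
Since $q\ge s/2$, we have $\log(1/q)\le\log(2/s)$. Hence (ii) holds with
\[
\gamma_s=c\,\frac{\psi_n}{\sqrt n\,(1+\log(2/s))}
\]
for a small universal constant $c$. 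We choose $c$ small enough, using $\psi_n\lesssim1$ from \eqref{eq:KLS-upper}, so that $\gamma_s\le1$ as the proposition requires.

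Plugging into \eqref{eq:conductance-transfer} gives
\[
\Phi_s(\HR)\gtrsim \alpha_0 r_0\gamma_s\asymp \frac{1}{\sqrt n}\cdot\frac{\psi_n}{\sqrt n(1+\log(2/s))}=\frac{\psi_n}{n(1+\log(2/s))},
\]
which is \eqref{eq:hit-and-run-conductance}.

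The argument is purely a plug-in. The only points needing care are the complement reduction in (ii), the monotone replacement of $\log(1/q)$ by $\log(2/s)$, and ensuring the normalization $r_0,\alpha_0,\gamma_s\le1$. None of these is a real obstacle.
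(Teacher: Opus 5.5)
Your proposal is correct and matches the paper's proof: the paper applies Proposition~\ref{prop:conductance-transfer} with $\lambda=\mu$, $w=\rK$, $r_0=c_0/\sqrt n$ and the overlap constant from Lemma~\ref{lem:overlap}, gets hypothesis (ii) from Corollary~\ref{cor:full-depth} using $\log(1/p)\le\log(2/s)$, and shrinks the constant in $\gamma_s$ via $\psi_n\lesssim1$ so that $\gamma_s\le1$. Your explicit use of complement symmetry is the same reduction the paper makes implicitly (``without loss of generality'') inside Corollary~\ref{cor:full-depth}.
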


\begin{proof}
Let $E\subset K$ be any measurable set with 
\begin{align*}
    p=\min\{\mu(E),1-\mu(E)\}\geq \frac{s}{2}.
\end{align*}
By Corollary~\ref{cor:full-depth},
\[
 \cP_{\rK}(E)
 \gtrsim
 \frac{\psi_n}{\sqrt n}
 \frac{p}{1+\log(1/p)}\ge \frac{\psi_n}{\sqrt n(1+\log(2/s))}p.
\]
By~\eqref{eq:KLS-upper}, $\psi_n\lesssim 1$. Choose a universal $c>0$ small enough that
\[
\cP_{\rK}(E)\geq \gamma_s p,\qquad
 \gamma_s
 :=c\frac{\psi_n}{\sqrt n\bigl(1+\log(2/s)\bigr)}\leq 1.
\]
Apply Proposition~\ref{prop:conductance-transfer} with $\lambda=\mu$, $w=\rK$, $\mathsf P=\HR$, $r_0=c_0/\sqrt n$, and the universal overlap constant $\alpha_0$ from Lemma~\ref{lem:overlap}. This gives \eqref{eq:hit-and-run-conductance}.
\end{proof}

\subsection{Proof of the main theorem}

\begin{theorem}[Main theorem]\label{thm:main_formal}
Let $n\ge2$, $K\subset\R^n$ be an isotropic convex body, and $\mu=\Unif(K)$ be the uniform distribution over $K$.
Let $\HR$ be the hit-and-run kernel on $K$.  For
every $M$-warm initial distribution $\mu_{\mathrm{init}}$ and every
$0<\varepsilon<1/2$,
\begin{equation}\label{eq:main}
 \tau_{\mix}(\varepsilon,\mu_{\mathrm{init}},\mu;\HR)
 \lesssim
 \frac{n^2}{\psi_n^2}
 \left(1+\log\frac{4M}{\varepsilon}\right)^2
 \log\frac{2M}{\varepsilon}.
\end{equation}
Consequently,
\begin{equation}\label{eq:main-simple}
 \tau_{\mix}(\varepsilon,\mu_{\mathrm{init}},\mu;\HR)
 \lesssim \frac{n^2}{\psi_n^2}\log^3\frac{8M}{\varepsilon}.
\end{equation}
\end{theorem}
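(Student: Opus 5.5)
The plan is to feed the conductance bound of Corollary~\ref{cor:hit-and-run-conductance} into the Lov\'asz--Simonovits conversion of Lemma~\ref{lem:LS}, with the truncation level $s=\varepsilon/(2M)$.

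\emph{Step 1: hypotheses of Lemma~\ref{lem:LS}.} First I will check that the lemma applies to $\HR$. Section~\ref{sec:hit-and-run} records that $\HR$ is reversible with respect to $\mu$ and positive. The measure $\mu$ is nonatomic since $\mu\ll dx$. If $M<1$, no $M$-warm probability measure exists, so we may assume $M\ge1$. Together with $\varepsilon<1/2$ this gives $s=\varepsilon/(2M)<1/4$, so $s$ is in the admissible range of both Lemma~\ref{lem:LS} and Corollary~\ref{cor:hit-and-run-conductance}.

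\emph{Step 2: substitute the conductance bound.} Corollary~\ref{cor:hit-and-run-conductance} gives
\[
\Phi_s(\HR)\gtrsim \frac{\psi_n}{n(1+\log(2/s))}=\frac{\psi_n}{n(1+\log(4M/\varepsilon))}.
\]
Inserting this into \eqref{eq:LS} yields
\[
\tau_{\mix}\lesssim \Phi_s^{-2}\log(2M/\varepsilon)\lesssim n^2\psi_n^{-2}(1+\log(4M/\varepsilon))^2\log(2M/\varepsilon),
\]
which is \eqref{eq:main}.

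If one is wary of the hidden constant in \eqref{eq:LS}, the bound can be read off \eqref{eq:LS-explicit} directly. The first term is $Ms=\varepsilon/2$. The second term is at most $M e^{-t\Phi_s^2/2}$, and this is at most $\varepsilon/2$ once $t\ge 2\Phi_s^{-2}\log(2M/\varepsilon)$. Note that $\Phi_s\le 1$, so $1-\Phi_s^2/2\in[1/2,1)$ and the inequality $1-x\le e^{-x}$ applies.

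\emph{Step 3: simplified form.} For \eqref{eq:main-simple}, I use $M\ge1$ and $\varepsilon<1/2$, which give $\log(8M/\varepsilon)\ge\log 16>1$. Hence
\[
1+\log(4M/\varepsilon)\le 2\log(8M/\varepsilon),
\qquad
\log(2M/\varepsilon)\le\log(8M/\varepsilon),
\]
and the product in \eqref{eq:main} is at most $4\log^3(8M/\varepsilon)$.

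All the substantive work is already contained in the earlier results. The only step needing any care is confirming that the hypotheses of Lemma~\ref{lem:LS} hold for $\HR$, namely positivity and the requirement $s<1/4$, and I expect no real obstacle there.
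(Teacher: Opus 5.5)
Your proposal is correct and matches the paper's proof: set $s=\varepsilon/(2M)$, apply Lemma~\ref{lem:LS}, substitute Corollary~\ref{cor:hit-and-run-conductance} using $2/s=4M/\varepsilon$, and absorb all logarithmic factors into $\log(8M/\varepsilon)$ via $M\ge1$ and $\varepsilon<1/2$. Your aside that $\Phi_s\le1$ is true but not justified in your text (positivity applied to $f=2\one_A-1$ with $\mu(A)=1/2$ gives $\cQ_{\HR}(A,A^c)\le1/4$); it is not needed for the main route through \eqref{eq:LS}.
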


\begin{proof}
Set
\[
 s=\frac{\varepsilon}{2M}.
\]
By Lemma~\ref{lem:LS},
\[
 \tau_{\mix}\left(\varepsilon,\mu_{\mathrm{init}},\mu;\HR\right)
 \lesssim
 \Phi_s\left(\HR\right)^{-2}\log\frac{2M}{\varepsilon}.
\]
Substituting \eqref{eq:hit-and-run-conductance} gives
\[
 \tau_{\mix}\left(\varepsilon,\mu_{\mathrm{init}},\mu;\HR\right)
 \lesssim
 \frac{n^2}{\psi_n^2}
 \left(1+\log\frac{4M}{\varepsilon}\right)^2
 \log\frac{2M}{\varepsilon},
\]
which is \eqref{eq:main}.  Because $M\ge1$ and
$0<\varepsilon<1/2$, each logarithmic factor is at most a universal
multiple of $\log(8M/\varepsilon)$.  This proves
\eqref{eq:main-simple}.
\end{proof}

\section*{Acknowledgments}
The authors would like to thank Sam Power for helpful discussions about the positivity of the hit-and-run kernel and the comparison argument. After the first version of this paper appeared on arXiv, Kook and Vempala~\cite{kook2026spectral} independently proved the spectral-gap bound for the hit-and-run walk on a convex body, which implies an $O(n^2C_{\sf PI}(\mu)\log(M/\varepsilon))$ mixing time and thus improves the logarithmic dependence obtained here.  Their proof is complementary to ours: it uses a dual characterization of the spectral gap together with the Babu\v{s}ka--Aziz and improved Poincar\'e inequalities.

\section*{AI Disclosure}
The authors prompted GPT-5.6 Sol Pro to explore whether localization methods could improve the mixing-time bound in~\cite{CE}. After iterative conversations, the model identified the current route of using Klartag's guided localization theorem~\cite{KlartagNeedles} to retain and average over a balanced family of needles. An initial approach incurred an additional $\operatorname{polylog}(n)$ factor.  More rounds of guided conversations led to a proof of the current Theorem~\ref{thm:main} using a multiscale Metropolis kernel and a Dirichlet-form comparison argument. The authors subsequently used GPT-5.6 Sol Ultra and Claude Fable 5 to understand the proofs and check the correctness. During this process, GPT-5.6 Sol Ultra also helped simplify the uniform-target proof by replacing the comparison-kernel argument with a direct application of the hit-and-run overlap lemmas in~\cite{LovaszHR,LV}; see Section~\ref{sec:intrinsic_metric}.  The multiscale comparison-kernel argument is retained for the log-affine extension in Appendix~\ref{app:log-affine}, where the chord conditionals are nonuniform. All content produced by these AI systems was verified by the authors before being incorporated into the manuscript.  The authors wrote the final proofs and the paper (with AI for polishing) and take full responsibility for the paper's correctness and presentation.

\newpage
\appendix

\section{Comparison with the ball walk}
\label{sec:ball-speedy}

In this appendix, we discuss a simple approach to proving the mixing time of the hit-and-run walk by applying the Markov chain comparison with the ball walk~\cite{RudolfUllrichComparison}, and explain why it will not give our result (Theorem~\ref{thm:main_formal}). 

We first define the \emph{Dirichlet form} associated with a Markov kernel ${\sf P}$ that is reversible with respect to $\mu$:
\begin{align*}
    \mathcal E_{\mathsf P}(f,f) :=  \langle f,(\Id-\mathsf P)f\rangle_{L^2(\mu)}=\frac{1}{2}\iint(f(x)-f(y))^2{\sf P}(x,dy) \,d\mu(x).
\end{align*}
If we take $f=\one_A$ for any subset $A\subseteq K$, then
\begin{align}\label{eq:Dirichlet_to_flow}
    \mathcal{E}_{\sf P}(\one_A,\one_A)=\int_A\int_{A^c}{\sf P}(x,dy)\,d\mu(x)={\cal Q}_{\sf P}(A,A^c).
\end{align}

Then, we give a more formal definition of the ball walk. Let $\delta>0$ be the step-size of the ball walk, and denote its kernel as ${\sf P}_{{\rm ball},\delta}$. The one-step update is as follows.
\begin{center}
\begin{minipage}{0.86\textwidth}
\hrule
\smallskip
\textbf{One ball-walk update from $X_t=x\in K$.}
\begin{enumerate}[leftmargin=2em,itemsep=2pt,topsep=4pt]
\item Sample $Y$ uniformly from $B(x,\delta)$.
\item If $Y\in K$, set $X_{t+1}=Y$; otherwise, set $X_{t+1}=x$.
\end{enumerate}
\smallskip
\hrule
\end{minipage}
\end{center}
${\sf P}_{{\rm ball},\delta}$ is reversible with respect to the uniform measure $\mu$. 

Rudolf--Ullrich proved the following comparison between the hit-and-run walk and the lazy version of the ball walk.

\begin{theorem}[{\cite[Theorem 5]{RudolfUllrichComparison}}]
For every $\delta>0$, let ${\sf P}_{{\rm ball},\delta}^{\rm L}:=(\Id+{\sf P}_{{\rm ball},\delta})/2$ be the lazy ball walk. Then, for every $f\in L^2(\mu)$,
\begin{equation}\label{eq:RU-comparison}
 \langle f, {\sf P}_{{\rm ball},\delta}^{\rm L} f\rangle_{L^2(\mu)}\geq \langle f, \HR f\rangle_{L^2(\mu)}.
\end{equation}
\end{theorem}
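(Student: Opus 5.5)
The plan is to disintegrate both kernels over directions $\theta\in\mathbb S^{n-1}$ and to compare them one line at a time. I will read $\HR$ in \eqref{eq:RU-comparison} as the non-lazy chord kernel $\HRzero$ of Section~\ref{sec:hit-and-run}, which is the kernel Rudolf--Ullrich treat. The end of this proposal records what the lazy version would additionally require.

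\textbf{Step 1: line operators.} For fixed $\theta$, let $\Pi_\theta$ be the conditional expectation in $L^2(\mu)$ onto functions that are constant on every chord of $K$ parallel to $\theta$; this is averaging over the chord. Then $\HRzero=\int\Pi_\theta\,d\sigma_{n-1}(\theta)$. A proposal uniform in $B(x,\delta)$ has the form $x+t\theta$ with $\theta$ uniform and $|t|$ having density proportional to $t^{n-1}$ on $[0,\delta]$. Therefore $\mathsf P_{{\rm ball},\delta}=\int B_\theta\,d\sigma_{n-1}(\theta)$. Here $B_\theta$ moves only along the chord through $x$ in direction $\theta$: it proposes a symmetric displacement $t$ and rejects if $x+t\theta\notin K$. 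Each $B_\theta$ is a Markov operator that is reversible with respect to the uniform measure on every chord, hence self-adjoint on $L^2(\mu)$ with norm at most $1$.

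\textbf{Step 2: orthogonal splitting.} Write $f=\Pi_\theta f+g_\theta$, where $g_\theta$ has mean zero on every chord in direction $\theta$. Three facts follow from Step 1. First, $B_\theta$ fixes chord-constant functions. Second, by reversibility $B_\theta$ maps chord-mean-zero functions to chord-mean-zero functions. Third, $\Pi_\theta$ is an orthogonal projection. Together these give
\[
\langle f,B_\theta f\rangle=\|\Pi_\theta f\|^2+\langle g_\theta,B_\theta g_\theta\rangle,
\qquad
\langle f,\Pi_\theta f\rangle=\|\Pi_\theta f\|^2 .
\]
For the lazy ball walk this yields
\[
\Bigl\langle f,\tfrac12(\Id+B_\theta)f\Bigr\rangle=\|\Pi_\theta f\|^2+\tfrac12\|g_\theta\|^2+\tfrac12\langle g_\theta,B_\theta g_\theta\rangle\ \ge\ \|\Pi_\theta f\|^2=\langle f,\Pi_\theta f\rangle .
\]
The inequality holds because $|\langle g,B_\theta g\rangle|\le\|g\|^2$ for a contraction. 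Integrating over $\theta$ gives \eqref{eq:RU-comparison} with $\HRzero$ on the right-hand side. This is exactly the role of the laziness on the ball side.

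\textbf{Main obstacle.} The main difficulty is making the direction disintegration rigorous. One must check that $\theta\mapsto\langle f,B_\theta f\rangle$ and $\theta\mapsto\langle f,\Pi_\theta f\rangle$ are measurable. One must also check that the polar decomposition of the ball proposal, including the rejection mass, is exactly $\int B_\theta\,d\sigma_{n-1}$. This is a Fubini computation in polar coordinates about $x$.

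If $\HR$ instead denotes the lazy kernel $\tfrac12(\delta_x+\HRzero)$, the right-hand side becomes $\tfrac12\|f\|^2+\tfrac12\langle f,\HRzero f\rangle$. The argument would then need the stronger per-line bound $\langle g_\theta,B_\theta g_\theta\rangle\ge0$, i.e. positivity of the one-dimensional ball-walk operator on chord-mean-zero functions. I would try to obtain it by writing the displacement law as a mixture of self-convolutions of symmetric laws restricted to the chord. This is not automatic, since the kernel with density proportional to $|t|^{n-1}$ on $[-\delta,\delta]$ is not positive definite. I therefore expect this to be the genuinely delicate point, and would fall back on the non-lazy reading above.
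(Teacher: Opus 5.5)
Your proof is correct and is the same argument the paper gives for the analogous comparison in Lemma~\ref{lem:log-affine-local-kernel}, namely the Rudolf--Ullrich principle: each directional kernel $\tfrac12(\Id+B_\theta)$ is positive and fixes chord-constant functions, so $\tfrac12(\Id+B_\theta)-\Pi_\theta=(\Id-\Pi_\theta)\tfrac12(\Id+B_\theta)(\Id-\Pi_\theta)\succeq0$, and averaging over $\theta$ with $\HRzero=\int\Pi_\theta\,d\sigma_{n-1}(\theta)$ gives the claim. Your non-lazy reading of $\HR$ is the intended one (it is the kernel of Section~\ref{sec:hit-and-run}, and it is what the subsequent Dirichlet-form deduction uses), so your final paragraph can be dropped; in fact the lazy variant is false, since it would force $B\succeq\HRzero\succeq0$, and the un-lazy ball walk is not positive for small $\delta$.
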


Thus, for any $f\in L^2(\mu)$,
\begin{align*}
    {\cal E}_{{\sf P}_{{\rm ball},\delta}}(f,f)=&~\langle f, (\Id-{\sf P}_{{\rm ball},\delta})f\rangle_{L^2(\mu)}  =  2\langle f, (\Id -{\sf P}_{{\rm ball},\delta}^{\rm L})f\rangle_{L^2(\mu)} \\
    \leq &~ 2\langle f, (\Id -\HR)f\rangle_{L^2(\mu)}=2{\cal E}_{\HR}(f,f).
\end{align*}
Using~\eqref{eq:Dirichlet_to_flow}, we get that
\begin{align*}
    {\cal Q}_{{\sf P}_{{\rm ball},\delta}}(\one_A,\one_{A^c})\leq 2{\cal Q}_{\HR}(\one_A,\one_{A^c})\qquad \text{for any}~A\subseteq K.
\end{align*}
Consequently, for any $\delta>0$ and $s\in [0,1/2)$,
\begin{align}
    \Phi_s({\sf P}_{{\rm ball},\delta})\leq 2\Phi_s(\HR).
\end{align}
Therefore, if we know a lower bound on the $s$-conductance of the ball walk, then we immediately have one for the hit-and-run walk.

There are two different analyses for the ball walk's mixing time, as explained in~\cite{KookVempalaZhangInOut}. 
\paragraph{I. Direct analysis:}
\begin{theorem}[Ball walk conductance~{\cite[Section 5]{VempalaSurvey}}]
Let $K$ be an isotropic convex body. For any $s\in (0,1/2)$, take the step-size $\delta\asymp s/\sqrt{n}$. Then 
\begin{align*}
    \Phi_s({\sf P}_{{\rm ball},\delta})\gtrsim \frac{s\psi_n}{n}.
\end{align*}
\end{theorem}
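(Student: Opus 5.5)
The plan is to run the same conductance machinery as in Section~\ref{sec:conductance}, but with the constant weight $w\equiv\delta$ in place of $\rK$, and to restrict the body to a well-interior core. Let $K_\delta:=\{x\in K:\lambda(x,\delta)\ge 63/64\}$; this is essentially the set where $\rK(x)\ge\delta$. The first step is to show that its complement is small:
\begin{align*}
\mu(K\setminus K_\delta)\lesssim \delta\sqrt n .
\end{align*}
This is the classical bound on the mass of points with small local conductance. It follows from averaging $1-\lambda(x,\delta)$ over $x$, which is at most $\delta\,\vol(\partial K)/(\vol(K)\cdot\text{const}/\sqrt n)$. It uses the isotropic surface-area bound $\vol_{n-1}(\partial K)/\vol(K)\lesssim n$, in the form that a $\delta$-neighbourhood of the boundary has mass $\lesssim \delta n$. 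The ball-walk "$1/\sqrt n$" gain comes from a random ball point lying at typical distance $\delta/\sqrt n$ from a hyperplane. With $\delta=c\,s/\sqrt n$, this gives $\mu(K\setminus K_\delta)\le s/4$.

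The second step is a local overlap statement for the ball walk. For $x,y\in K_\delta$ with $\|x-y\|_2\le \delta/\sqrt n$, I would show
\begin{align*}
\dtv\bigl({\sf P}_{{\rm ball},\delta}(x,\cdot),{\sf P}_{{\rm ball},\delta}(y,\cdot)\bigr)\le 1-\alpha_0 .
\end{align*}
The two balls overlap in a $\gtrsim1$ fraction of their volume when the centers are $\delta/\sqrt n$ apart. Each ball has at most a $1/64$ fraction outside $K$, so a constant fraction of the common region is accepted by both chains.

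The third step reruns the proof of Proposition~\ref{prop:conductance-transfer} on a set $S$ with $\mu(S)=p>s$, working only inside $K_\delta$. The low-escape cores are intersected with $K_\delta$. By the first step they keep mass at least $\tfrac34 p-s/4\gtrsim p-s$, and they are Euclidean-separated by distance at least $\delta/\sqrt n$. Then I would apply the unweighted isoperimetry for $\mu$: the Cheeger constant is at least $\psi_n$, in its expansion form $\mu(\text{separator})\gtrsim \psi_n\cdot\text{dist}\cdot\min(\text{cores})$ (available from localization / the classical three-set inequality). This yields
\begin{align*}
Q\gtrsim \alpha_0\,\frac{\delta}{\sqrt n}\,\psi_n\,(p-s)\asymp \frac{s\psi_n}{n}(p-s).
\end{align*}
This is exactly $\Phi_s\gtrsim s\psi_n/n$.

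I expect the first step to be the main obstacle. It needs the precise estimate that the average of $1-\lambda(x,\delta)$ is $\lesssim\delta\sqrt n$ times a quantity of order $1$ in isotropic position, not the naive surface-area bound of order $n\delta$. I would take this from the standard local-conductance lemma in \cite{KLS97,VempalaSurvey}, which says $\E_\mu[1-\lambda(x,\delta)]\lesssim \delta\sqrt n$ for isotropic $K$. If only the weaker bound $\delta n$ were available, the step size would have to shrink accordingly and the bound would degrade. The rest of the argument is routine given the paper's framework.
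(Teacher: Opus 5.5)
Your Steps 1 and 2 are standard and correct. In Step 1 the intermediate expression has the $\sqrt n$ on the wrong side; the correct chain is $\E_\mu[1-\lambda(x,\delta)]\lesssim\frac{\delta}{\sqrt n}\cdot\frac{\vol_{n-1}(\partial K)}{\vol(K)}\lesssim\delta\sqrt n$, and your stated conclusion is right.

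The genuine gap is in Step 3, in how you account for $K\setminus K_\delta$. The three-set consequence of the Cheeger inequality for $\mu$ applies to a partition of all of $K$. Your cores $S_\pm$ lie in $K_\delta$, so the third set must be $K\setminus(S_-\cup S_+)$, and this contains $K\setminus K_\delta$. Markov's inequality and reversibility only bound $\mu\bigl(K_\delta\setminus(S_-\cup S_+)\bigr)\le 8Q/\alpha_0$. What you actually obtain is therefore
\[
 \frac{8Q}{\alpha_0}+\frac s4
 \gtrsim
 \psi_n\,\frac{\delta}{\sqrt n}\,p
 \asymp
 \frac{s\psi_n}{n}\,p ,
\]
which is vacuous: the bad mass $s/4$ exceeds the right-hand side by a factor of order $n/(\psi_n p)$. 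No choice of $\delta$ fixes this. In general $\mu(K\setminus K_\delta)$ is of order $\delta\sqrt n$ (e.g.\ for the isotropic cube), while the isoperimetric gain is only $\psi_n\delta p/\sqrt n$. The low-conductance points also cannot be put into either core, since their rejection atoms destroy overlap with arbitrarily close points. They must therefore be removed from the isoperimetric problem altogether. The bound $\psi_{\sf ch}(\mu)\ge\psi_n$ does not by itself give isoperimetry inside $K_\delta$, and for a nonconvex good set such an inequality could fail. So Step 3, not Step 1, is the real obstacle.

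The missing idea is that $K_\delta$ is convex. Since $r\mapsto\lambda(x,r)$ is nonincreasing and continuous, $K_\delta=\{x\in K:\rK(x)\ge\delta\}$. This is a superlevel set of the concave function $\rK$ (Lemma~\ref{lem:local-radius}\,(ii); equivalently, Brunn--Minkowski concavity of $x\mapsto\vol(K\cap B(x,\delta))^{1/n}$).

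Let $\mu_\delta:=\mu(\cdot\mid K_\delta)$ be the uniform measure on the convex body $K_\delta$. Since $\mu(K_\delta)\ge1-s/4>7/8$, you get $\operatorname{Cov}(\mu_\delta)\preceq\E_{\mu_\delta}[X\otimes X]\preceq\mu(K_\delta)^{-1}I_n\preceq\frac87I_n$. Hence $\psi_{\sf ch}(\mu_\delta)\gtrsim\psi_n$.

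Now apply the three-set inequality to $\mu_\delta$ and the partition $S_-,\ S_+,\ K_\delta\setminus(S_-\cup S_+)$ of $K_\delta$. The third set is exactly what the Markov bound controls. Using $\mu(S_\pm)\ge p/2$ when $Q<\alpha_0p/16$, you get
\[
 \frac{8Q}{\alpha_0}
 \ge
 \mu(K_\delta)\,\mu_\delta\bigl(K_\delta\setminus(S_-\cup S_+)\bigr)
 \gtrsim
 \psi_n\,\frac{\delta}{\sqrt n}\,\min\{\mu(S_-),\mu(S_+)\}
 \gtrsim
 \frac{s\psi_n}{n}\,p .
\]
With this change, Step 1 is needed only to protect the core masses, which your argument already does. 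The paper quotes this theorem from \cite{VempalaSurvey} without proof, so there is no in-paper argument to compare against.
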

Hence, for an $M$-warm start $\mu_{\rm init}$,
\begin{align*}
    \tau_{\mix}\left(\varepsilon,\mu_{\mathrm{init}},\mu;\HR\right) \asymp \tau_{\mix}\left(\varepsilon,\mu_{\mathrm{init}},\mu;{\sf P}_{{\rm ball},\delta}\right)\lesssim \frac{n^2}{\psi_n^2} \left(\frac{M}{\varepsilon}\right)^2\log\frac{2M}{\varepsilon}.
\end{align*}
This mixing time bound is already better than~\eqref{eq:mixing_CE}, although it still has an undesirable $(M/\varepsilon)^2$ factor.

\paragraph{II: Indirect analysis through the \emph{Speedy walk}:}
The Speedy walk can be recognized as the ball walk with only \emph{proper steps}. It makes the following one-step update.
\begin{center}
\begin{minipage}{0.86\textwidth}
\hrule
\smallskip
\textbf{One Speedy-walk update from $X_t=x\in K$.}
\begin{enumerate}[leftmargin=2em,itemsep=2pt,topsep=4pt]
\item Sample $Y$ uniformly from $B(x,\delta)\cap K$
\item Set $X_{t+1}=Y$.
\end{enumerate}
\smallskip
\hrule
\end{minipage}
\end{center}
Let ${\sf P}_{\rm Spd}$ denote the Markov kernel of the Speedy walk. The argument in~\cite{KLS97} implies the following lower bound on the conductance.
\begin{theorem}[Speedy walk conductance~\cite{KLS97}]\label{thm:Speedy}
Let $K$ be an isotropic convex body. Let the step-size $\delta\asymp 1/\sqrt{n}$. Then
\begin{align*}
    \Phi({\sf P}_{\rm Spd})\gtrsim \frac{\psi_n}{n}.
\end{align*}
\end{theorem}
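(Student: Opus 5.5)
The plan is the same three-step scheme as Proposition~\ref{prop:conductance-transfer}, with Euclidean distance in place of $d_{\rK}$ and the stationary law of the Speedy walk in place of $\mu$. Let $\ell(x):=\vol(K\cap B(x,\delta))/\vol(B(0,\delta))$ be the local conductance at scale $\delta$. The Speedy walk is reversible with respect to $d\nu\propto\ell\,\one_K\,dx$, because its kernel density $\one_{\|x-y\|\le\delta}/(\ell(x)v_n\delta^n)$ times $\ell(x)$ is symmetric in $x,y$.

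\textbf{Step 1: $\nu$ is log-concave and not too spread out.} By Brunn--Minkowski, $x\mapsto\vol(K\cap B(x,\delta))^{1/n}$ is concave on $K$, so $\ell$ is log-concave and hence $\nu$ is log-concave. Take $\delta=c/\sqrt n$ with $c$ a small universal constant. The classical KLS average local conductance estimate (as in~\cite{KLS97}) gives $\int\ell\,d\mu\ge 1/2$. Since $\ell\le1$, this yields $d\nu/d\mu\le2$, and therefore
\[
\operatorname{Cov}_\nu\preceq\E_\nu[XX^\top]\preceq 2\,\E_\mu[XX^\top]=2I_n .
\]
For log-concave measures the Cheeger constant is controlled by the operator norm of the covariance, $\psi_{\sf ch}(\nu)\gtrsim\psi_n\|\operatorname{Cov}_\nu\|_{\rm op}^{-1/2}$. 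I would cite this standard reduction (an affine normalization combined with the known fact that KLS bounds only need an upper bound on the covariance). It gives $\psi_{\sf ch}(\nu)\gtrsim\psi_n$.

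\textbf{Step 2: local overlap.} I expect this to be the main obstacle. The goal is to show that $\|x-y\|_2\le c'\delta/\sqrt n$ implies
\[
\dtv\bigl(\mathsf P_{\rm Spd}(x,\cdot),\mathsf P_{\rm Spd}(y,\cdot)\bigr)\le1-\alpha_0 .
\]
The two laws are uniform on the convex sets $A_x=B(x,\delta)\cap K$ and $A_y=B(y,\delta)\cap K$. The total variation distance equals $1-\vol(A_x\cap A_y)/\max\{\vol A_x,\vol A_y\}$.

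If both sets were full balls, the standard estimate $\vol(B(x,\delta)\setminus B(y,\delta))\le\tfrac12 v_n\delta^n$ for $\|x-y\|\le\delta/\sqrt n$ would finish the argument. Near a corner, however, $A_x$ may be a tiny fraction of the ball. I would therefore first try a convexity argument inside $K$. Writing $A_x\setminus B(y,\delta)\subseteq A_x\cap\bigl(B(x,\delta)\setminus B(y,\delta)\bigr)$, I would show that this set is a small fraction of $A_x$. The idea is to use the homothety $z\mapsto x+(1-\eta)(z-x)$ with $\eta\asymp\|x-y\|\sqrt n/\delta$. It maps $A_x$ into $A_x\cap B(y,\delta)$ up to a shift, and it preserves a $(1-\eta)^n\ge$ const fraction of the volume.

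This is essentially the Speedy-walk overlap lemma of~\cite{KLS97}. If the homothety argument fails, I would import that lemma directly.

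\textbf{Step 3: conductance transfer.} Fix $S$ with $p=\nu(S)\le1/2$ and flow $Q$. If $Q\ge\alpha_0 p/16$ we are done. Otherwise, define the low-escape cores $S_\pm$ exactly as in the proof of Proposition~\ref{prop:conductance-transfer}. Markov's inequality and reversibility give $\nu(K\setminus(S_-\cup S_+))\le8Q/\alpha_0$ and $\nu(S_\pm)\ge\tfrac34p$. Step 2 gives Euclidean separation $\operatorname{dist}(S_-,S_+)\ge c'\delta/\sqrt n\asymp1/n$.

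Now apply the Cheeger inequality for $\nu$ to the sets $E_t=\{\operatorname{dist}(\cdot,S_-)<t\}$ and integrate via the coarea formula, as in that proof but with weight $w\equiv1$ and Euclidean distance. This gives
\[
\nu\bigl(K\setminus(S_-\cup S_+)\bigr)\gtrsim\frac1n\,\psi_{\sf ch}(\nu)\,p\gtrsim\frac{\psi_n}{n}\,p .
\]
Hence $Q/p\gtrsim\psi_n/n$. No $s$-truncation is needed here, because the Cheeger bound for $\nu$ holds for sets of every mass.
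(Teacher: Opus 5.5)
The paper gives no proof of this statement; it defers to \cite{KLS97}. Your Step~2 has a genuine gap: the uniform Euclidean overlap $\|x-y\|_2\le c'\delta/\sqrt n\Rightarrow \dtv({\sf P}_{\rm Spd}(x,\cdot),{\sf P}_{\rm Spd}(y,\cdot))\le 1-\alpha_0$ is false for the Speedy walk, so there is no such lemma to import.

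Your homothety argument shows where it breaks. Contracting $A_x$ toward $x$ by $1-\eta$ with $\eta=\|x-y\|_2/\delta$ does put it inside $B(y,\delta)$. But it keeps only a $(1-\eta)^n$ fraction of the volume, which is bounded below only when $\|x-y\|_2\lesssim\delta/n$. At your scale it is $e^{-\Theta(\sqrt n)}$, and $e^{-\Theta(n)}$ with your choice $\eta\asymp\|x-y\|_2\sqrt n/\delta$.

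The loss is real. Let $x$ be a vertex of $K$ whose tangent cone $C$ satisfies $\langle z-x,e\rangle\ge c\|z-x\|_2$ on $C$ for some unit vector $e$. A vertex of the isotropic simplex qualifies with $c=1/\sqrt2$, and there $K=C$ within distance $\asymp n\gg\delta$. Let $y=x+\epsilon e$ with $\epsilon=c'\delta/\sqrt n$. Expanding $\|z-y\|_2^2$ gives $A_y\supseteq C\cap B(x,\delta+c\epsilon/2)$. Hence $\vol(A_y)\ge(1+c\epsilon/(2\delta))^n\vol(A_x)\ge e^{cc'\sqrt n/4}\vol(A_x)$, so the total variation distance is at least $1-e^{-cc'\sqrt n/4}$.

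This breaks the separation claim in Step~3 as well. Take any low-flow cut and modify it inside $B(x,2\delta)$ so that $S$ contains most of $A_x$ and $S^c$ contains a small ball around $y$ together with most of $A_y$. This changes the mass and the flow negligibly. Yet it leaves points next to $x$ in $S_-$ and points next to $y$ in $S_+$ at distance about $\epsilon$, so $\operatorname{dist}(S_-,S_+)\gtrsim\delta/\sqrt n$ fails.

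What does hold is the ball-walk estimate $\vol(A_x\cap A_y)\ge(\max\{\ell(x),\ell(y)\}-O(t))\,v_n\delta^n$ for $\|x-y\|_2\le t\delta/\sqrt n$. Inserted into your core argument, it shows only that two $t\delta/\sqrt n$-close core points must both have $\ell\lesssim t$. In other words, $S_-$ and $S_+$ are separated by $\gtrsim(\delta/\sqrt n)\max\{\ell(x),\ell(y)\}$, and this separation degenerates in the low-local-conductance layer $\{\ell\lesssim t\}$.

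That layer has $\nu$-mass $O(t)$. Simply discarding it therefore gives only an $s$-conductance bound with a polynomial loss in $s$, of order $s\psi_n/n$, not the full conductance bound $\Phi({\sf P}_{\rm Spd})\gtrsim\psi_n/n$.

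A proof of the statement needs an additional isoperimetric argument for points of small local conductance. It would have to exploit that the available separation and the stationary density $d\nu/dx\propto\ell$ vanish together. Your proposal does not contain this ingredient.

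The rest of your argument is sound once a correct separation statement is available:
\begin{itemize}
\item log-concavity of $\nu$ and $d\nu/d\mu\le 2$;
\item the covariance bound combined with the Lipschitz push-forward, giving $\psi_{\sf ch}(\nu)\gtrsim\psi_n$;
\item the Euclidean coarea transfer in Step~3.
\end{itemize}
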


Theorem~\ref{thm:Speedy} gives the well-known $\widetilde{O}_{M/\varepsilon}(n^2\psi_n^{-2})$ warm-start mixing time for the ball walk. However, the Speedy walk converges to the \emph{Speedy distribution}, which is biased from $\mu$. Indeed, \cite{KLS97} uses another rejection sampling to obtain an approximately uniform sample; see~\cite[Appendix C]{KookVempalaZhangInOut} for a more detailed explanation. Thus, ${\sf P}_{\rm Spd}$ is not self-adjoint with respect to $\langle \cdot,\cdot\rangle_{L^2(\mu)}$, and the comparison argument cannot transfer it to the hit-and-run walk mixing time.

\section{One-dimensional log-concave estimates}
\label{app:one-dimensional-facts}
\onedimest*
\begin{proof}[Proof of Lemma~\ref{lem:one-dimensional-facts}]
Let $X$ have a nondegenerate log-concave density $\rho_X$ on $\R$, mean $m$, variance $\sigma^2$, cumulative distribution function $F$, and $q(x)=\min\{F(x),1-F(x)\}$. Then, the law of $Y=(X-m)/\sigma$ is an \emph{isotropic} log-concave distribution with density $\rho_Y(y)=\sigma\rho_X(m+\sigma y)$.

The standardized density estimate \cite[Lemma~5.5(a)]{LVlogconcave} gives $\rho_Y(y)\leq 1$ for any $y\in \R$, which implies that
\begin{equation}\label{eq:standard-density-facts}
 \|\rho_X\|_\infty
 =\frac1\sigma\|\rho_Y\|_\infty
 \lesssim\frac1\sigma.
\end{equation}
This proves~\eqref{eq:max-density}.

For the quantile estimate, let $F_Y$ be the cumulative distribution function of $Y$ and $q_Y(y)=\min\{F_Y(y),1-F_Y(y)\}$. \cite[Lemma~28]{CE} implies that, for $0<\delta\le1/e$,
\begin{equation}\label{eq:CE-quantile-density}
 q_Y(y)\ge\delta \qquad\Longrightarrow\qquad
 \rho_Y(y)\ge\frac{\delta}{8e}.
\end{equation}
Note that $q(x)=q_Y((x-m)/\sigma)$.  If $q(x)\le 1/e$, we can take $\delta=q(x)$ in~\eqref{eq:CE-quantile-density} and obtain $\rho(x)\geq \frac{q(x)}{8e\sigma}$.  If $q(x)>1/e$, take $\delta=1/e$ in~\eqref{eq:CE-quantile-density} and get that
\[
 \rho(x)\ge\frac{1}{8e^2\sigma}
 \gtrsim\frac{q(x)}{\sigma},
\]
because $q(x)\le1/2$.  This proves \eqref{eq:quantile-density} over
the full range $0<q(x)\le1/2$.

Finally, \cite[Lemma~5.17]{LVlogconcave} gives, for $R>1$,
\[
 \Prb\{|Y|>R\}<e^{-R+1}.
\]
Suppose first that $x\ge m+2\sigma$ and set $R=(x-m)/\sigma$.  Then
\[
 1-F(x)
 =\Prb\{Y>R\}
 \le\Prb\{|Y|>R\}
 <e^{-R+1}<\frac12,
\]
so $q(x)=1-F(x)$ and 
\begin{align*}
    x-m=\sigma R\leq \sigma\left(1+\log\frac1{q(x)}\right).
\end{align*}
If $x\le m-2\sigma$, the same argument with the left tail gives $q(x)=F(x)$ and 
\begin{align*}
    m-x\leq \sigma\left(1+\log\frac1{q(x)}\right).
\end{align*}
In the remaining case $|x-m|<2\sigma$, the additive constant on the right is already enough.
Thus, in all cases,
\[
 \abs{x-m}
 \lesssim\sigma\left(1+\log\frac1{q(x)}\right),
\]
which proves \eqref{eq:quantile-location}.
\end{proof}

\section{Relaxed weighted perimeter}
\label{app:relaxed-perimeter}
In this appendix, we prove Proposition~\ref{prop:relaxed-perimeter-toolkit}. Throughout this appendix, we assume that $\lambda$ is a probability measure supported on a convex body $K\subseteq\R^n$ and $\lambda\ll dx$. We begin with a technical lemma that handles the boundary of $K$.

\begin{lemma}[Boundary cutoff]\label{lem:boundary-cutoff}
Suppose that $w:K\to[0,\infty)$ is positive on $\operatorname{int}(K)$ and that its zero extension to $\R^n$ is Lipschitz.
Let $v:{\operatorname{int}(K)}\to[0,1]$ be locally Lipschitz, and suppose that
\[
 w\|\nabla v\|_2\le C
 \qquad\text{Lebesgue-almost everywhere}
\]
for some finite $C$.  Then there are functions $v_j\in\Lip(K)$, taking
values in $[0,1]$, such that $v_j\to v$ in $L^1(\lambda)$ and
\[
 \limsup_{j\to\infty}\int_Kw\|\nabla v_j\|_2\,d\lambda
 \le\int_{\operatorname{int}(K)} w\|\nabla v\|_2\,d\lambda.
\]
\end{lemma}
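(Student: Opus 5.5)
The plan is to multiply $v$ by a cutoff that vanishes near $\partial K$, with the transition layer measured in units of $w$, so that the extra gradient term has small weighted cost. Fix $\delta\in(0,1)$ and a decreasing sequence $\eta_j\downarrow0$. Define the Lipschitz cutoff
\[
 \chi_j(x):=\min\Bigl\{1,\ \bigl(\log(w(x)/\eta_j^2)\big/\log(1/\eta_j)\bigr)_+\Bigr\},
\]
which equals $0$ where $w\le\eta_j^2$ and $1$ where $w\ge\eta_j$. This is the standard logarithmic cutoff, which has small capacity. Set $v_j:=\chi_j v$ on $\operatorname{int}(K)$ and $v_j:=0$ on $\partial K$. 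Then $0\le v_j\le1$. On $\{w\ge\eta_j^2\}$, which is a compact subset of $\operatorname{int}(K)$ by continuity of $w$ and $w=0$ on $\partial K$, the function $v$ is Lipschitz: local Lipschitzness plus compactness give this, or directly $\|\nabla v\|\le C/\eta_j^2$ there. Hence $v_j\in\Lip(K)$. Moreover $v_j\to v$ pointwise on $\operatorname{int}(K)$ and is bounded by $1$. Since $\lambda(\partial K)=0$, dominated convergence gives $L^1(\lambda)$ convergence.

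For the gradient, almost everywhere $\nabla v_j=\chi_j\nabla v+v\nabla\chi_j$. This gives
\[
 \int w\|\nabla v_j\|\,d\lambda\le\int_{\operatorname{int}(K)} w\|\nabla v\|\,d\lambda+\int w\|\nabla\chi_j\|\,d\lambda .
\]
On the transition layer $L_j=\{\eta_j^2<w<\eta_j\}$ we have $\|\nabla\chi_j\|=\|\nabla w\|/(w\log(1/\eta_j))$. Therefore
\[
 \int w\|\nabla\chi_j\|\,d\lambda\le\frac{\Lip(w)\,\lambda(L_j)}{\log(1/\eta_j)},
\]
which tends to $0$, since $\lambda(L_j)\le1$ and in fact $\lambda(L_j)\to0$. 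This is where the weight helps: the factor $w$ exactly cancels the $1/w$ blowup of the cutoff gradient. In fact even a linear cutoff $\min\{1,w/\eta_j\}$ works, because $w\|\nabla\chi_j\|\le \Lip(w)\,\one_{\{w<\eta_j\}}$, whose $\lambda$-integral tends to $0$ as $\{w<\eta_j\}$ shrinks to $\partial K$, a $\lambda$-null set. I will use this simpler linear cutoff. Taking the $\limsup$ then yields the claim.

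The main obstacle is verifying that $v_j$ is globally Lipschitz on $K$ up to the boundary. The argument is as follows. On $\{w\ge\eta_j/2\}$ we have $\|\nabla v\|\le 2C/\eta_j$. On the convex hull issues one uses that $v_j$ is locally Lipschitz with an a.e. gradient bound $\|\nabla v_j\|\le 2C/\eta_j+\Lip(w)/\eta_j$ on $\operatorname{int}(K)$. Since $\operatorname{int}(K)$ is convex, integrating along segments gives a global Lipschitz bound on $\operatorname{int}(K)$. Continuity then extends this to $K$ with the value $0$ on $\partial K$, which is consistent because $\chi_j\to0$ there.
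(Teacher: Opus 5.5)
Your final argument, with the linear cutoff $\min\{1,w/\eta_j\}$, is correct and is essentially the paper's proof: the paper uses the same product $H_\rho v$ with $H_\rho=\min\{1,w/\rho\}$, the same a.e.\ gradient bound $(C+L_w)/\rho$, the same extension by zero to $\partial K$ via $|v_\rho|\le w/\rho$, and the same energy estimate from $w\|\nabla H_\rho\|_2\le L_w\one_{\{0<w<\rho\}}$. The step you pass over quickly (turning the a.e.\ gradient bound into a Lipschitz bound by integrating along a segment) needs care, because a single segment is Lebesgue-null; the paper handles it by integrating along nearby parallel translates chosen via Fubini and passing to the limit by continuity, whereas your discarded logarithmic cutoff, which vanishes on $\{w\le\eta_j^2\}$, would have avoided this step entirely through the compactness argument you already gave.
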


\begin{proof}
Let $L_w$ be the Lipschitz constant of the zero extension of $w$.  For
$\rho>0$, define
\[
 H_\rho(x)=\min\left\{1,\frac{w(x)}{\rho}\right\},
 \qquad
 v_\rho(x)=\begin{cases}
     H_\rho(x) v(x) & x\in \operatorname{int}(K),\\
     0 & x\in \partial K.
 \end{cases}
\]

We first have that $v_\rho$ is locally Lipschitz on ${\operatorname{int}(K)}$.  Indeed, 
\begin{align*}
    \|\nabla v_\rho\|_2 \leq \|H_\rho\nabla v\|_2 + \|v\nabla H_\rho\|_2= \begin{cases}
        \|\nabla v\|_2\leq C/\rho & \qquad w\geq \rho\\
        w\|\nabla v\|_2/\rho + v\|\nabla w\|_2/\rho\leq (C+L_w)/\rho  & \qquad 0<w<\rho
    \end{cases}. 
\end{align*}
Thus the gradient of $v_\rho$ is bounded almost everywhere on ${\operatorname{int}(K)}$ by \(L_\rho:=(C+L_w)/\rho\).  
 
We now turn this almost-everywhere bound into a pointwise Lipschitz bound.  Fix distinct $x,y\in{\operatorname{int}(K)}$ and write $h=y-x$.  Since the segment $[x,y]$ is compactly contained in $\operatorname{int}(K)$, all sufficiently small parallel translates $[x+z,y+z]$, with $z\in h^\perp$, remain in ${\operatorname{int}(K)}$.  By Fubini's theorem, there are offsets $z_m\in h^\perp$ tending to zero such that, for each $m$, the function
\[
  s\longmapsto v_\rho(x+z_m+sh),\qquad 0\le s\le1,
\]
is absolutely continuous and has derivative of absolute value at most
\begin{align*}
\left|\frac{d}{ds}v_\rho(x+z_m+sh)\right|= \left|\langle\nabla v_\rho(x+z_m+sh),h\rangle\right|\leq L_\rho\|h\|_2 \qquad \text{for almost every}~s\in [0,1].
\end{align*}
The fundamental theorem of calculus now yields
\[
  |v_\rho(y+z_m)-v_\rho(x+z_m)|
  \le L_\rho\|x-y\|_2.
\]
Letting $m\to\infty$ and using the continuity of $v_\rho$ proves the same inequality for $x$ and $y$. In dimension one, the same argument applies directly to the unshifted interval.  Hence $v_\rho$ is Lipschitz on ${\operatorname{int}(K)}$.  Furthermore, since $v\in [0,1]$ and the zero extension of $w$ is $L_w$-Lipschitz, we have
\[
 |v_\rho(x)|\le\frac{w(x)}\rho
 \le\frac{L_w}{\rho}\operatorname{dist}(x,\partial K).
\]
 As $x$ approaches any point of $\partial K$, the right-hand side tends to
 zero.  Defining $v_\rho=0$ on the boundary therefore gives a continuous
 extension.  Letting an interior point tend to a boundary endpoint in the
 preceding segment inequality shows that the same Lipschitz bound holds on
 all of $K$.  Thus $v_\rho\in\Lip(K)$.

As $\rho\downarrow0$, dominated convergence gives
$v_\rho\to v$ in $L^1(\lambda)$.  We also have
\begin{equation}\label{eq:boundary-cutoff-energy}
\begin{aligned}
 \int_K w\|\nabla v_\rho\|_2\,d\lambda
 \le &~ 
 \int_{\operatorname{int}(K)} H_\rho w\|\nabla v\|_2\,d\lambda
 +\frac1\rho
   \int_{\{0<w<\rho\}}w\|\nabla w\|_2\,d\lambda\\
 \le &~
 \int_{\operatorname{int}(K)} H_\rho w\|\nabla v\|_2\,d\lambda
 +L_w\,\lambda(\{0<w<\rho\}).
\end{aligned}
\end{equation}
The sets $\{0<w<\rho\}$ decrease to the empty set as $\rho\downarrow0$, because $w>0$ on ${\operatorname{int}(K)}$.  Since $\lambda$ is finite, the last term tends to zero.  Moreover, $H_\rho\to1$ almost everywhere and $H_\rho w\|\nabla v\|_2\le w\|\nabla v\|_2$, so dominated convergence handles the first term.  Consequently,
\begin{equation}\label{eq:boundary-cutoff-limit}
 \limsup_{\rho\downarrow0}
 \int_K w\|\nabla v_\rho\|_2\,d\lambda
 \le
 \int_{\operatorname{int}(K)} w\|\nabla v\|_2\,d\lambda.
\end{equation}

Choose, for example, $\rho_j=1/j$ and define $v_j=v_{\rho_j}$.  The construction above shows that $v_j\in\Lip(K)$ and $0\le v_j\le1$.  The $L^1$ convergence established above gives $v_j\to v$, and \eqref{eq:boundary-cutoff-limit} gives the desired inequality in the lemma.
\end{proof}

Next, we record two general properties of the relaxed weighted perimeter that the main proof uses repeatedly.
\begin{lemma}[Lower semicontinuity and Lipschitz coarea]
\label{lem:relaxed-lsc-coarea}
Let \(w:K\to[0,\infty)\) be as in
Proposition~\ref{prop:relaxed-perimeter-toolkit}.
\begin{enumerate}[label=\textup{(\alph*)}]
\item If $\mathbf1_{E_j}\rightarrow\mathbf1_E$ in $L^1(\lambda)$, then
\begin{equation}\label{eq:relaxed-lsc}
 \mathcal P_{w,\lambda}(E) \le \liminf_{j\to\infty}\mathcal P_{w,\lambda}(E_j).
\end{equation}

\item If \(f\in\operatorname{Lip}(K)\) and \(0\le f\le1\), then
\begin{equation}\label{eq:global-lipschitz-coarea}
 \int_0^1\mathcal P_{w,\lambda}(\{f>t\})\,dt
 \le
 \int_K w(x)\|\nabla f(x)\|_2\,d\lambda(x).
\end{equation}

\end{enumerate}
\end{lemma}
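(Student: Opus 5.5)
For part (a), I would use a diagonal argument directly from Definition~\ref{def:relax_weight_per}. Pass to a subsequence of $(E_j)$ along which $\cP_{w,\lambda}(E_j)$ converges to the $\liminf$; if that limit is infinite there is nothing to prove. For each $j$, choose an admissible sequence $f_{j,k}\in\Lip(K)$ with $0\le f_{j,k}\le1$, $f_{j,k}\to\one_{E_j}$ in $L^1(\lambda)$, and $\liminf_k\int w\|\nabla f_{j,k}\|_2\,d\lambda\le\cP_{w,\lambda}(E_j)+1/j$. Then pick a single index $k_j$ such that $\|f_{j,k_j}-\one_{E_j}\|_{L^1(\lambda)}<1/j$ and $\int w\|\nabla f_{j,k_j}\|_2\,d\lambda\le\cP_{w,\lambda}(E_j)+2/j$. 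This is possible because the liminf gives infinitely many $k$ satisfying the energy bound, and all sufficiently large $k$ satisfy the $L^1$ bound. Set $g_j:=f_{j,k_j}$. The triangle inequality gives $g_j\to\one_E$ in $L^1(\lambda)$, so $(g_j)$ is admissible for $E$. Its energies have $\liminf$ at most $\liminf_j\cP_{w,\lambda}(E_j)$, which proves \eqref{eq:relaxed-lsc}.

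For part (b), I would first prove a smooth-level-set version and then transfer it to the relaxed perimeter. Fix $f\in\Lip(K)$ with $0\le f\le1$. For $t\in(0,1)$ and $h>0$, define the truncation $\phi_{t,h}(f):=\min\{1,\max\{0,(f-t)/h\}\}$. This function is Lipschitz, takes values in $[0,1]$, and has gradient $\one_{\{t<f<t+h\}}\nabla f/h$ almost everywhere; the set $\{f=t\}$ contributes nothing, since $\nabla f=0$ a.e.\ there. As $h\downarrow0$, $\phi_{t,h}(f)\to\one_{\{f>t\}}$ pointwise, hence in $L^1(\lambda)$ by dominated convergence. Taking $h=h_k\to0$ makes this an admissible sequence for $\{f>t\}$, so
\[
\cP_{w,\lambda}(\{f>t\})\le\liminf_{k\to\infty}\frac1{h_k}\int_{\{t<f<t+h_k\}}w\|\nabla f\|_2\,d\lambda .
\]
Next I integrate in $t$ over $(0,1)$ and apply Fatou's lemma to move the $\liminf$ outside the $t$-integral. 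Tonelli then gives
\[
\int_0^1\frac1h\int_{\{t<f<t+h\}}w\|\nabla f\|_2\,d\lambda\,dt=\int_K w\|\nabla f\|_2\,\frac{\abs{(f-h,f)\cap(0,1)}}{h}\,d\lambda\le\int_K w\|\nabla f\|_2\,d\lambda ,
\]
which yields \eqref{eq:global-lipschitz-coarea}. To use Fatou I need measurability of $t\mapsto\cP_{w,\lambda}(\{f>t\})$, and I expect that step to be the main obstacle. I would get it from part (a): the map $t\mapsto\one_{\{f>t\}}$ is continuous in $L^1(\lambda)$ from the right everywhere, and from the left except at the countably many $t$ with $\lambda(\{f=t\})>0$. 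Combined with lower semicontinuity, this shows the function agrees a.e.\ with a lower semicontinuous function, which is enough. Alternatively, I could work with the lower integral throughout, which sidesteps measurability entirely.

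I expect everything else to be routine: the chain rule for truncations of Lipschitz functions, and the fact that Lipschitz functions composed with piecewise-linear maps remain in $\Lip(K)$.
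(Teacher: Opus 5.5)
Your proposal is correct and follows the paper's proof essentially step for step. Part (a) uses the same diagonal extraction of one near-optimal Lipschitz function per set, and part (b) uses the same ramp truncation $R_{t,h}\circ f$, followed by Fatou and the Tonelli computation with $\lvert (f-h,f)\cap(0,1)\rvert/h\le 1$. The only difference is cosmetic and lies in the measurability step: the paper composes the Borel map $t\mapsto\one_{\{f>t\}}$ with the lower semicontinuous functional from (a), while you use continuity of that map off the countable set of atoms of $f_\#\lambda$, and both arguments work.
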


\begin{proof}
For part \textup{(a)}, there is nothing to prove if the right-hand side
of \eqref{eq:relaxed-lsc} is $\infty$.  Otherwise, choose a subsequence \(j_m\) such that
\[
 \mathcal P_{w,\lambda}(E_{j_m})\le \liminf_{j\to\infty}\mathcal P_{w,\lambda}(E_j)+\frac1m.
\]
For each \(m\), the definition of $\mathcal P_{w,\lambda}$ in~\eqref{eq:relaxed-perimeter} provides \(g_m\in\operatorname{Lip}(K)\) with \(0\le g_m\le1\) such that
\[
 \|g_m-\mathbf1_{E_{j_m}}\|_{L^1(\lambda)}\le\frac1m,\qquad \text{and}\qquad 
 \int_Kw\|\nabla g_m\|_2\,d\lambda
 \le
 \mathcal P_{w,\lambda}(E_{j_m})+\frac1m.
\]
Since \(\mathbf1_{E_{j_m}}\to\mathbf1_E\) in \(L^1(\lambda)\), we have
\(g_m\to\mathbf1_E\) in \(L^1(\lambda)\).  Hence, $(g_m)$ is admissible for $\mathcal P_{w,\lambda}(E)$ and
\[
 \mathcal P_{w,\lambda}(E)
 \le
 \liminf_{m\to\infty}
 \int_Kw\|\nabla g_m\|_2\,d\lambda
 \le \liminf_{j\to\infty}\mathcal P_{w,\lambda}(E_j).
\]

For part \textup{(b)}, define for \(t\in(0,1)\) and \(h>0\), 
\[
 R_{t,h}(s)
 :=
 \begin{cases}
 0, & s\le t,\\[1mm]
 (s-t)/h, & t<s<t+h,\\[1mm]
 1, & s\ge t+h.
 \end{cases}
\]
Then \(R_{t,h}\circ f\in\operatorname{Lip}(K)\), takes values in
\([0,1]\), and converges pointwise to \(\mathbf1_{\{f>t\}}\) as
\(h\downarrow0\).  Dominated convergence therefore gives convergence in
\(L^1(\lambda)\).  The Lipschitz chain rule gives
\[
 \int_K
 w\|\nabla(R_{t,h}\circ f)\|_2\,d\lambda
 \le
 \frac1h
 \int_{\{t<f<t+h\}}
 w\|\nabla f\|_2\,d\lambda.
\]
Consequently,
\begin{equation}\label{eq:ramp-upper-bound}
 \mathcal P_{w,\lambda}(\{f>t\})
 \le
 \liminf_{h\downarrow0}
 \frac1h
 \int_{\{t<f<t+h\}}
 w\|\nabla f\|_2\,d\lambda.
\end{equation}
Note that the map
\[
 t\longmapsto\mathbf1_{\{f>t\}}
\]
is Borel as an \(L^1(\lambda)\)-valued map, and part \textup{(a)} implies
that \(\mathcal P_{w,\lambda}\) is lower semicontinuous on indicator functions.
Thus
\[
 t\longmapsto\mathcal P_{w,\lambda}(\{f>t\})
\]
is measurable.  Fix any sequence \(h_j\downarrow0\).  Fatou's lemma followed by Tonelli's theorem gives
\begin{align*}
 \int_0^1\mathcal P_{w,\lambda}(\{f>t\})\,dt \le &~ 
 \liminf_{j\to\infty}
 \int_0^1
 \frac1{h_j}
 \int_{\{t<f<t+h_j\}}
 w\|\nabla f\|_2\,d\lambda\,dt\\
 = &~ 
 \liminf_{j\to\infty}
 \int_K
 w(x)\|\nabla f(x)\|_2
 \frac{
   |(f(x)-h_j,f(x))\cap(0,1)|
 }{h_j}
 \,d\lambda(x)\\
 &\le
 \int_Kw(x)\|\nabla f(x)\|_2\,d\lambda(x),
\end{align*}
where the first step uses Fatou's lemma, the second step uses Tonelli's theorem, and the third step follows from $|(f(x)-h_j,f(x))\cap (0,1)|\leq h_j$.  This proves \eqref{eq:global-lipschitz-coarea}.
\end{proof}

With the boundary cutoff and these two properties in hand, we can prove
the proposition.

\propperimeter*

\begin{proof}

\textbf{Proof of \textup{(i)}.} Let $(f_k)$ be any sequence admissible for $\cP_{w,\lambda}(E)$.  Then $1-f_k\in\Lip(K)$, takes values in $[0,1]$, and
\[
 \|(1-f_k)-\one_{K\setminus E}\|_{L^1(\lambda)}
 =\|f_k-\one_E\|_{L^1(\lambda)}
 \longrightarrow0.
\]
Thus, $(1-f_k)$ is admissible for $\cP_{w,\lambda}(K\setminus E)$, and
\begin{align*}
    \cP_{w,\lambda}(K\setminus E)\leq \liminf_{k\to\infty} \int_K w\|\nabla (1-f_k)\|_2\,d\lambda = \liminf_{k\to\infty} \int_K w\|\nabla f_k\|_2\,d\lambda.
\end{align*}
Taking the infimum over all admissible sequences $(f_k)$ for $E$ gives $\cP_{w,\lambda}(K\setminus E)\le\cP_{w,\lambda}(E)$.  Replacing $E$ by $K\setminus E$ gives the reverse inequality, and hence, $\cP_{w,\lambda}(E)=\cP_{w,\lambda}(K\setminus E)$.

\textbf{Proof of \textup{(ii)}.}
We first assume that \(a,b\in\mathbb R\).  Define the decreasing
truncation
\[
 \vartheta_{a,b}(s)
 :=
 \begin{cases}
 1, & s\le a,\\[1mm]
 (b-s)/(b-a), & a<s<b,\\[1mm]
 0, & s\ge b,
 \end{cases}
\]
and set $v:=\vartheta_{a,b}\circ u$ on $\operatorname{int}(K)$. Then \(v:\operatorname{int}(K)\to[0,1]\) is locally Lipschitz.  By the Lipschitz chain
rule,
\begin{equation}\label{eq:truncated-u-gradient}
 w(x)\|\nabla v(x)\|_2
 \le
 \frac{1}{b-a}
 \mathbf1_{\{a<u<b\}}(x)\,
 w(x)\|\nabla u(x)\|_2
\end{equation}
for almost every \(x\in\operatorname{int}(K)\).  At points in the level sets
\(\{u=a\}\) or \(\{u=b\}\), the same inequality holds because a locally
Lipschitz function has zero gradient almost everywhere on each of its
level sets. Then, we apply the boundary-cutoff lemma (Lemma~\ref{lem:boundary-cutoff}) to \(v\).  It gives functions
\(v_j\in\operatorname{Lip}(K)\) with \(0\le v_j\le1\) such that
\begin{equation}\label{eq:truncation-recovery-energy}
 v_j\longrightarrow v
 \qquad\text{in }L^1(\lambda),\qquad \text{and}\qquad 
 \limsup_{j\to\infty}
 \int_Kw\|\nabla v_j\|_2\,d\lambda
 \le
 \int_{\operatorname{int}(K)} w\|\nabla v\|_2\,d\lambda.
\end{equation}
After passing to a subsequence, we may assume that \(v_j\to v\)
\(\lambda\)-almost everywhere.  For almost every \(t\in(0,1)\), the level
set \(\{v=t\}\) has \(\lambda\)-measure zero.  For every such \(t\),
\[
 \mathbf1_{\{v_j>t\}}
 \longrightarrow
 \mathbf1_{\{v>t\}}
 \qquad\text{in }L^1(\lambda).
\]
The lower semicontinuity
\eqref{eq:relaxed-lsc} therefore gives
\[
 \mathcal P_{w,\lambda}(\{v>t\})
 \le
 \liminf_{j\to\infty}
 \mathcal P_{w,\lambda}(\{v_j>t\})
\]
for almost every \(t\).  Fatou's lemma and
\eqref{eq:global-lipschitz-coarea} imply
\begin{align}
 \int_0^1\mathcal P_{w,\lambda}(\{v>t\})\,dt
 \le &~ 
 \liminf_{j\to\infty}
 \int_0^1\mathcal P_{w,\lambda}(\{v_j>t\})\,dt \notag\\
 \le &~ 
 \liminf_{j\to\infty}
 \int_Kw\|\nabla v_j\|_2\,d\lambda \notag\\
 \le &~ 
 \int_{\operatorname{int}(K)} w\|\nabla v\|_2\,d\lambda.
 \label{eq:coarea-after-cutoff}
\end{align}
For \(0<t<1\), the definition of \(\vartheta_{a,b}\) gives
\[
 \{v>t\}
 =
 \{u<b-(b-a)t\}.
\]
The change of variables $s=b-(b-a)t$ then yields
\[
 \int_0^1\mathcal P_{w,\lambda}(\{v>t\})\,dt
 =
 \frac1{b-a}
 \int_a^b\mathcal P_{w,\lambda}(\{u<s\})\,ds.
\]
On the other hand, \eqref{eq:truncated-u-gradient} gives
\[
 \int_{\operatorname{int}(K)} w\|\nabla v\|_2\,d\lambda
 \le
 \frac1{b-a}
 \int_{\{a<u<b\}}
 w\|\nabla u\|_2\,d\lambda.
\]
Substituting these two identities into
\eqref{eq:coarea-after-cutoff} and multiplying by \(b-a\) proves
\[
 \int_a^b\mathcal P_{w,\lambda}(\{u<s\})\,ds
 \le
 \int_{\{a<u<b\}}
 w\|\nabla u\|_2\,d\lambda.
\]
If \(a=-\infty\) or \(b=+\infty\), apply the finite-endpoint result on an
increasing sequence of finite intervals and pass to the limit by
monotone convergence.

\textbf{Proof of~\textup{(iii)}.}
There is nothing to prove if \(\mathcal P_{w,\lambda}(E)=\infty\).  Assume
\(\mathcal P_{w,\lambda}(E)<\infty\).  By the definition of $\mathcal P_{w,\lambda}$ in~\eqref{eq:relaxed-perimeter}, there exist functions $f_k\in\operatorname{Lip}(K)$ with $0\leq f_k\leq 1$ such that
\begin{equation}\label{eq:slicing-summable-error}
 \|f_k-\mathbf1_E\|_{L^1(\lambda)}\le2^{-k}
\end{equation}
and
\begin{equation}\label{eq:slicing-recovery}
 \int_Kw(x)\|\nabla f_k(x)\|_2\,d\lambda(x)
 \longrightarrow
 \mathcal P_{w,\lambda}(E).
\end{equation}
Disintegration of $\lambda$ gives
\[
 \int
 \left(
  \int_I|f_k-\mathbf1_E|\,d\lambda_I
 \right)d\pi(I)
 =
 \|f_k-\mathbf1_E\|_{L^1(\lambda)}.
\]
Summing over \(k\) and using
\eqref{eq:slicing-summable-error}, Tonelli's theorem yields
\[
 \sum_{k=1}^\infty
 \int_I|f_k-\mathbf1_E|\,d\lambda_I
 <\infty
\]
for \(\pi\)-almost every \(I\).  Hence, on almost every needle,
\begin{equation}\label{eq:fiberwise-convergence}
 f_k|_I
 \longrightarrow
 \mathbf1_{E\cap I}
 \qquad\text{in }L^1(\lambda_I).
\end{equation}

For each \(k\), Rademacher's theorem gives a Borel set
\(D_k\subseteq\operatorname{int}(K)\) of full Lebesgue measure on which \(f_k\) is
differentiable.  Since $\lambda\ll dx$ and the sequence \((f_k)\) is countable,
the intersection $D=\bigcap_{k=1}^\infty D_k$ satisfies \(\lambda(D)=1\).  By disintegration,
\[
 0=\lambda(D^c)
   =\int\lambda_I(D^c)\,d\pi(I).
\]
The integrand is nonnegative, so $\lambda_I(D)=1$
for \(\pi\)-almost every \(I\).  Thus, on one common
\(\pi\)-full family of needles, every \(f_k\) is ambiently
differentiable at \(\lambda_I\)-almost every point.

Fix such a needle, and choose a unit-speed parametrization
\(\gamma_I:J_I\to I\).  At every
\(t\) for which \(f_k\) is differentiable at \(\gamma_I(t)\), the chain
rule gives
\[
 |(f_k\circ\gamma_I)'(t)|=\left|\left\langle \nabla f_k(\gamma_I(t)),\gamma_I'(t) \right\rangle\right|\leq \|\nabla f_k(\gamma_I(t))\|_2,
\]
where the inequality follows from $\|\gamma_I'(t)\|_2=1$.  Because
\(\lambda_I\) is absolutely continuous with respect to arclength, this
inequality holds \(\lambda_I\)-almost everywhere.  Consequently,
\begin{equation}\label{eq:fiber-energy-domination}
 \int_I w\,\left|(f_k|_I)'\right|\,d\lambda_I
 \le
 \int_Iw\|\nabla f_k\|_2\,d\lambda_I.
\end{equation}

Choose a Borel representative of \(\nabla f_k\), setting it equal to
zero where \(f_k\) is not differentiable. Then, $w(x)\|\nabla f_k(x)\|_2$ is a nonnegative Borel function.  By the measurable disintegration property,
\[
 I\longmapsto \int_Iw\|\nabla f_k\|_2\,d\lambda_I
\]
is a measurable function of \(I\), and
\[
 \int \left( \int_Iw\|\nabla f_k\|_2\,d\lambda_I \right)\,d\pi(I)
 =
 \int_Kw\|\nabla f_k\|_2\,d\lambda.
\]

By \eqref{eq:fiberwise-convergence}, the restrictions \(f_k|_I\) form
an admissible sequence for $\mathcal{P}_{w,\lambda_I}$.  Thus, by~\eqref{eq:fiber-energy-domination},
\[
 \mathcal P_{w,\lambda_I}(E\cap I)
 \le
 \liminf_{k\to\infty}
 \int_I w\,\left|(f_k|_I)'\right|\,d\lambda_I
 \le
 \liminf_{k\to\infty}
 \int_Iw\|\nabla f_k\|_2\,d\lambda_I.
\]
Using the assumed lower bound $\cP_{w,\lambda_I}(E\cap I)\ge H(I)$ for almost every $I$, we obtain that
\[
 H(I)
 \le
 \liminf_{k\to\infty}
 \int_Iw\|\nabla f_k\|_2\,d\lambda_I
\]
for almost every \(I\). Fatou's lemma now gives
\begin{align*}
 \int H(I)\,d\pi(I)
 &\le
 \int
 \liminf_{k\to\infty}
 \left(
  \int_Iw\|\nabla f_k\|_2\,d\lambda_I
 \right)d\pi(I)\\
 &\le
 \liminf_{k\to\infty}
 \int
 \left(
  \int_Iw\|\nabla f_k\|_2\,d\lambda_I
 \right)d\pi(I)\\
 &=
 \liminf_{k\to\infty}
 \int_Kw\|\nabla f_k\|_2\,d\lambda\\
 &=
 \mathcal P_{w,\lambda}(E),
\end{align*}
where the last equality follows from
\eqref{eq:slicing-recovery}.  This proves
\eqref{eq:slicing}.
\end{proof}

\section{Properties of the local conductance radius}
\label{app:proof_local_cond}

In this appendix, we prove Lemma~\ref{lem:local-radius} following the idea of~\cite{LV}.

\lemlocalconductance*

\begin{proof}
We prove the four properties in order.

\textbf{Proof of~\textup{(i)}.} We first claim that the admissible radii for $\rK(x)$ in~\eqref{eq:local-radius} form an interval.  Indeed, if $U\sim\Unif(B(0,1))$ and $0<r_1\le r_2$, convexity gives
\[
 x+r_2U\in K
 \qquad\Longrightarrow\qquad
 x+r_1U \in K.
\]
Thus, the map $r\mapsto\lambda(x,r)$ is nonincreasing, and every
$0<r<\rK(x)$ satisfies $\lambda(x,r)\ge63/64$.
For \(x\in{\operatorname{int}(K)}\), the ball of radius
\(\operatorname{dist}(x,\partial K)>0\) centered at \(x\) lies in \(K\), whereas the $\lambda(x,r)$ tends to zero as \(r\to\infty\). Hence, \eqref{eq:radius-positive-finite} is proved.

\textbf{Proof of \textup{(ii)}.} We follow the Brunn--Minkowski argument of~\cite[Lemma~3.1]{LV}.  Fix $x_0,x_1\in K$ and choose
admissible radii $r_i<\rK(x_i)$; when $\rK(x_i)=0$, set $r_i=0$ and
interpret $K\cap B(x_i,r_i)$ as the singleton $\{x_i\}$.  For
$0\le\theta\le1$, write
\[
 x_\theta=(1-\theta)x_0+\theta x_1,
 \qquad
 r_\theta=(1-\theta)r_0+\theta r_1.
\]
Convexity of both $K$ and Euclidean balls gives
\[
 (1-\theta)\bigl(K\cap B(x_0,r_0)\bigr)
 +\theta\bigl(K\cap B(x_1,r_1)\bigr)
 \subseteq K\cap B(x_\theta,r_\theta).
\]
The Brunn--Minkowski inequality and admissibility of $r_0,r_1$ therefore
imply
\[
 \begin{aligned}
 \vol\bigl(K\cap B(x_\theta,r_\theta)\bigr)^{1/n}
 &\ge (1-\theta)\vol\bigl(K\cap B(x_0,r_0)\bigr)^{1/n}
      +\theta\vol\bigl(K\cap B(x_1,r_1)\bigr)^{1/n}\\
 &\ge \left(\frac{63}{64}v_n\right)^{1/n}r_\theta.
 \end{aligned}
\]
Thus $r_\theta$ is admissible at $x_\theta$.  Letting the two chosen radii
increase to $\rK(x_0)$ and $\rK(x_1)$ proves
\[
 \rK(x_\theta)\ge(1-\theta)\rK(x_0)+\theta\rK(x_1),
\]
which is part~\textup{(ii)}.

\textbf{Proof of \textup{(iii)}.} Fix
\(x\in\operatorname{int}(K)\), and let
\[
 d=\operatorname{dist}(x,\partial K).
\]
Choose a closest boundary point and a supporting half-space
\(G\supseteq K\) at that point.  The closed ball $B(x,d)$ lies in $K$ and
touches the boundary at the chosen point.  A supporting hyperplane of $K$
there also supports this ball, so it is tangent to the ball and lies at
distance exactly $d$ from $x$.  Hence, for a suitable unit vector \(e\) and
\(U\sim\Unif(B(0,1))\),
\[
 \frac{\vol(B(x,r)\cap G)}{\vol(B(x,r))}
 =
 \Prb\left\{\langle U,e\rangle\ge-\frac dr\right\}.
\]
Taking $(n-1)$-dimensional cross-sections of the unit ball perpendicular to
$e$ shows that \(\langle U,e\rangle\) has the symmetric density
\[
 f_n(s)
 =
 \frac{v_{n-1}}{v_n}
 (1-s^2)^{(n-1)/2}\one_{\{|s|\le1\}},
\]
and
\[
 \|f_n\|_\infty
 =
 \frac{v_{n-1}}{v_n}
 =\frac1{\sqrt\pi}
   \frac{\Gamma(n/2+1)}{\Gamma((n+1)/2)}
 \lesssim\sqrt n.
\]
Therefore
\[
 \frac{\vol(B(x,r)\cap G)}{\vol(B(x,r))}
 =
 \frac12+\int_0^{d/r}f_n(s)\,ds
 \le
 \frac12+O\left(\frac{\sqrt n\,d}{r}\right).
\]
For a sufficiently large universal constant, this quantity is smaller than
\(63/64\) whenever \(r\gtrsim\sqrt n\,d\).  Since \(K\subseteq G\), such
an \(r\) is not admissible in \eqref{eq:local-radius}, which proves
\eqref{eq:radius-boundary-distance} for interior points.

If \(x\in\partial K\), a supporting hyperplane through \(x\) bisects every
ball centered at \(x\).  Convexity gives \(K\subseteq G\) for one of the
corresponding half-spaces, and hence
\[
 \lambda(x,r)\le\frac12
 \qquad(r>0).
\]
Thus \(\rK(x)=0\).

Finally, consider a chord \(K\cap\ell=[z_-,z_+]\), parametrized by
\[
 \gamma(s)=z_-+se,
 \qquad
 0\le s\le L:=\|z_+-z_-\|_2.
\]
Set \(f(s)=\rK(\gamma(s))\).  Part \textup{(ii)} makes \(f\) concave,
while the boundary statement gives \(f(0)=f(L)=0\).  Moreover,
\eqref{eq:radius-boundary-distance} implies
\[
 0\le f(s)\lesssim\sqrt n\min\{s,L-s\}.
\]
For \(0<s<t<L\), monotonicity of the secant slopes of a concave function
gives
\[
 -\frac{f(t)}{L-t}
 \le
 \frac{f(t)-f(s)}{t-s}
 \le
 \frac{f(s)}s.
\]
The preceding envelope bounds both extreme slopes in absolute value by
\(O(\sqrt n)\).  Hence
\[
 |f(t)-f(s)|\lesssim\sqrt n\,|t-s|.
\]
Every pair of points in \(K\) lies on a common chord, so
\[
 |\rK(x)-\rK(y)|
 \lesssim
 \sqrt n\,\|x-y\|_2,
 \qquad x,y\in K.
\]
If \(x\in K\) and \(y\notin K\), let
\(z\in[x,y]\cap\partial K\) be the first boundary point encountered from
\(x\).  Since \(\rK(z)=0\),
\[
 |\rK(x)|
 =|\rK(x)-\rK(z)|
 \lesssim\sqrt n\,\|x-z\|_2
 \lesssim\sqrt n\,\|x-y\|_2.
\]
If both points lie outside $K$, the zero extension vanishes at both; the
case with the first point outside follows by symmetry.  This proves the
asserted Lipschitz bound in all cases.

\textbf{Proof of \textup{(iv)}.} We first apply \cite[Theorem~4.1]{KLS95} to obtain that $c_0B(0,1)\subseteq K$ for a universal \(c_0>0\).  Then, we apply
\cite[Lemma~3.4]{LV} to the rescaled body
\(K'=c_0^{-1}K\), which contains the unit ball.  Their lemma gives
\[
 \frac1{\vol(K')}\int_{K'} r_{K'}(z)\,dz
 \gtrsim \frac1{\sqrt n}.
\]
The occupancy ratio is invariant under simultaneous scaling of the body,
center, and radius, so
\(r_{K'}(x/c_0)=r_K(x)/c_0\).  Changing variables and scaling back proves
\eqref{eq:mean-local-radius}.
\end{proof}

\section{Balanced needles from Klartag's localization theorem}
\label{app:klartag-balanced-needles}

In this appendix, we prove Lemma~\ref{lem:balanced-needles}.  We need the following Euclidean consequence of Klartag's guided localization theorem. It simultaneously disintegrates the ambient measure, balances the localized function on each fiber, and uses the guiding function as arclength along every nontrivial fiber.

\begin{theorem}[Klartag's localization theorem, Euclidean specialization
\cite{KlartagNeedles}]
\label{thm:klartag-euclidean-localization}
Let $n\ge2$, let $K\subset\R^n$ be a convex body, and let
$\mu=\Unif(K)$.  If $f\in L^1(\mu)$ and $\int f\,d\mu=0$, then there exist
a $1$-Lipschitz function $u:{\operatorname{int}(K)}\to\R$, a measurable partition
$\mathcal Q$ of ${\operatorname{int}(K)}$, a measure $\nu$ on $\mathcal Q$, and measures
$\{\widetilde\mu_I\}_{I\in\mathcal Q}$ with the following properties.
For every measurable $B\subseteq{\operatorname{int}(K)}$,
\begin{equation}\label{eq:klartag-specialized-disintegration}
 \mu(B)=\int_{\mathcal Q}\widetilde\mu_I(B)\,d\nu(I).
\end{equation}
The map $I\mapsto\widetilde\mu_I(B)$ may be chosen measurable.  Moreover, for
$\nu$-almost every $I$, the measure $\widetilde\mu_I$ is supported on $I$
and
\begin{equation}\label{eq:klartag-specialized-balance}
 \int_I f\,d\widetilde\mu_I=0.
\end{equation}
Almost every piece $I$ is either a singleton or a nondegenerate line
interval.  In the latter case, there are an open interval
$J_I\subset\R$, a unit-speed parametrization $\gamma_I:J_I\to I$, and a
smooth function $\Psi_I:J_I\to\R$ such that
\begin{equation}\label{eq:klartag-specialized-coordinate-density}
 u(\gamma_I(t))=t,
 \qquad
 (\gamma_I^{-1})_\#\widetilde\mu_I
   =e^{-\Psi_I(t)}\,dt,
 \qquad
 \Psi_I''(t)\ge \frac{(\Psi_I'(t))^2}{n-1}.
\end{equation}
In particular, the density in \eqref{eq:klartag-specialized-coordinate-density}
is log-concave.
\end{theorem}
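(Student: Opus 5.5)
The plan is to obtain this statement by specializing Klartag's guided localization theorem for weighted Riemannian manifolds satisfying a curvature-dimension condition~\cite{KlartagNeedles}. We take the open convex set $\operatorname{int}(K)$ with the Euclidean metric and the measure $\mu=\Unif(K)$, whose density is the constant $\vol(K)^{-1}$. This weighted manifold has vanishing Ricci curvature and a constant weight, so it satisfies $\mathrm{CD}(0,n)$. Because $\operatorname{int}(K)$ is convex, it is geodesically convex: every minimizing geodesic between two of its points is the Euclidean segment and stays inside. This is the form in which Klartag's theorem is stated for domains, namely a geodesically convex open subset carrying a smooth density that satisfies $\mathrm{CD}(\kappa,N)$. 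The integrability hypothesis $\int |f(x)|\,d(x_0,x)\,d\mu(x)<\infty$ holds automatically here, since $f\in L^1(\mu)$ and $K$ is bounded. With $\int f\,d\mu=0$, Klartag's theorem then supplies a $1$-Lipschitz $u$, a partition $\mathcal Q$ into transport rays, a measure $\nu$ on $\mathcal Q$, and conditional measures $\widetilde\mu_I$. Together these give the disintegration~\eqref{eq:klartag-specialized-disintegration}, measurability of $I\mapsto\widetilde\mu_I(B)$, and the balance condition~\eqref{eq:klartag-specialized-balance} on almost every piece.

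Next I would translate the geometric output into Euclidean language. In Klartag's theorem each nondegenerate piece is a transport ray, meaning a maximal minimizing geodesic on which $u$ has unit slope: $|u(x)-u(y)|=d(x,y)$. In $\R^n$ a minimizing geodesic is a line segment. Parametrizing the ray by arclength in the direction of increasing $u$, and translating the parameter, gives $u(\gamma_I(t))=t$ on an open interval $J_I$. The remaining pieces are singletons; on these Klartag's theorem gives $\widetilde\mu_I=\delta_x$ with $f(x)=0$, which is consistent with the balance condition.

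The last step is to identify the density along each needle. Klartag shows that almost every nondegenerate conditional measure is a $\mathrm{CD}(\kappa,N)$-needle with $\kappa=0$ and $N=n$. In his definition this means the needle has a smooth density $e^{-\Psi_I}$ on $J_I$ for which $e^{-\Psi_I/(N-1)}$ is concave. For smooth $\Psi$, differentiating $e^{-\Psi/(n-1)}$ twice gives $\bigl(\Psi'^2/(n-1)-\Psi''\bigr)e^{-\Psi/(n-1)}/(n-1)$. Concavity is therefore exactly $\Psi_I''\ge(\Psi_I')^2/(n-1)$, which is~\eqref{eq:klartag-specialized-coordinate-density}. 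In particular $\Psi_I''\ge0$, so the density is log-concave.

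I expect the main obstacle to be the boundary. Klartag's main theorem is stated for complete manifolds, and $\operatorname{int}(K)$ is not complete. I would first check whether the version he states for geodesically convex domains, or for convex bodies in $\R^n$ with a log-concave weight, applies verbatim; I believe it does. If it does not, the fallback is to apply the theorem on $\R^n$ to the weights $e^{-k\,\operatorname{dist}(x,K)}$, which are log-concave but give only $\mathrm{CD}(0,\infty)$ and not $\mathrm{CD}(0,n)$. The $\mathrm{CD}(0,n)$ needle property would then have to be recovered separately, for example from the fact that $(n-1)$-dimensional cross-sections of a convex body have $\frac1{n-1}$-concave measure along transport rays. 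Passing to the limit as $k\to\infty$ would also require care with the measurable selection of the partition. For this reason I would rely on the domain version of Klartag's theorem and keep this approximation argument only as a backup.
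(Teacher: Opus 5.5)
Your proposal is correct and follows essentially the same route as the paper. Both arguments treat $\operatorname{int}(K)$ with Lebesgue measure as a geodesically convex $\mathrm{CD}(0,n)$ weighted manifold; geodesic convexity, not completeness, is the hypothesis Klartag's memoir needs, so your fallback approximation is unnecessary. Both then check the integrability condition from boundedness of $K$, and read off from Klartag's guided theorem the following:
\begin{itemize}
\item the disintegration and balance;
\item the coordinate $u(\gamma_I(t))=t$ along the transport rays;
\item the singleton fibers, on which $f=0$;
\item the inequality $\Psi_I''\ge(\Psi_I')^2/(n-1)$ for the needles.
\end{itemize}
Klartag states his needle definition directly as that differential inequality, so your concavity computation for $e^{-\Psi_I/(n-1)}$ simply recovers it.
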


\begin{proof}
Regard ${\operatorname{int}(K)}$ as a Riemannian manifold with the Euclidean metric and with
measure $d\mu=\vol(K)^{-1}dx$.  Any two points of ${\operatorname{int}(K)}$ are joined by
the straight segment between them, so this manifold is geodesically convex.
Its Ricci curvature is zero and its density is constant.  It therefore
satisfies Klartag's curvature--dimension condition $\mathrm{CD}(0,n)$.
The remaining integrability hypothesis in \cite[Theorem~1.5]{KlartagNeedles} asks for an
$x_0\in{\operatorname{int}(K)}$ such that
\[
 \int_{\operatorname{int}(K)} |f(x)|\,\|x-x_0\|_2\,d\mu(x)<\infty;
\]
this follows from $f\in L^1(\mu)$ because $K$ is bounded.

Part~\textup{(A)} of \cite[Theorem~1.5]{KlartagNeedles} provides a
$1$-Lipschitz dual maximizer $u$.  Part~\textup{(C)} says that this same
$u$ admits a partition satisfying all the conclusions of \cite[Theorem~1.2]{KlartagNeedles}, with $u$ as the coordinate along every nontrivial piece.  Consequently,
\cite[Theorem~1.2~\textup{(i)}]{KlartagNeedles} gives
\eqref{eq:klartag-specialized-disintegration}, including measurability;
part~\textup{(ii)} gives the needle structure; and part~\textup{(iii)} is
exactly \eqref{eq:klartag-specialized-balance}.

Here is the mechanism behind the last conclusion.  A transport ray $I$ of $u$
is a maximal set on which
\[
 |u(x)-u(y)|=\|x-y\|_2\qquad(x,y\in I).
\]
The relative interiors of the nondegenerate rays form a measurable
partition $\mathcal{Q}$ of the set $\operatorname{Strain}[u]$.  Let $q(x)$ denote the
piece containing $x$, and disintegrate $\mu$ over these pieces as in
\cite[Theorem~1.4]{KlartagNeedles}:
\begin{align*}
    \int_{\operatorname{Strain}[u]} h(x)\,d\mu(x)=\int_{\cal Q} \left(\int_{I} h(x)\,d\widetilde{\mu}_I(x)\right)\,d\nu(I)
\end{align*}
for every nonnegative or integrable measurable function $h$. Set
\[
 g(I)=\int_I f\,d\widetilde\mu_I.
\]
This is an integrable function of $I$, because
\[
 \int_{\cal Q} |g(I)|\,d\nu(I)
 \le \int_{\cal Q} \int_I |f|\,d\widetilde\mu_I\,d\nu(I)
 =\int_{\operatorname{Strain}[u]}|f|\,d\mu<\infty.
\]
If $\mathcal S$ is any measurable family of pieces, note that $q^{-1}({\cal S})=\bigcup_{I\in {\cal S}}I$. Then, disintegration gives
\begin{equation}\label{eq:klartag-subfamily-balance}
\begin{aligned}
 \int_{\mathcal S}g(I)\,d\nu(I) =&~  \int_{\cal Q}\int_{I}\one_{\cal S}(I)f(x)\,d\widetilde{\mu}_I(x)\,d\nu(I)\\
 =&~ \int_{\cal Q}\int_{I}\one_{q^{-1}(\cal S)}(x)f(x)\,d\widetilde{\mu}_I(x)\,d\nu(I)\\
 = &~ \int_{\operatorname{Strain}[u]}\one_{q^{-1}(\cal S)}f\,d\mu
 = \int_{q^{-1}(\mathcal S)}f\,d\mu.
\end{aligned}
\end{equation}
The substantive transport fact in Klartag's proof is that every measurable
saturated set $q^{-1}(\mathcal S)\subseteq\operatorname{Strain}[u]$ has
zero $f$-integral
\cite[Lemma~4.6 and the proof of Theorem~1.5~\textup{(B)},
pp.~58--59]{KlartagNeedles}.  Thus the right-hand side of
\eqref{eq:klartag-subfamily-balance} is zero for every $\mathcal S$.
Taking $\mathcal{S}_+=\{I:g(I)>0\}$ gives $\nu({\cal S}_+)=0$, and taking
$\mathcal{S}_-=\{I:g(I)<0\}$ and nonnegative function $-g$ gives $\nu({\cal S}_-)=0$. Consequently, $g=0$ almost everywhere.  This is the
conditional balance \eqref{eq:klartag-specialized-balance}.  

\cite[Theorem~1.5~\textup{(B)}]{KlartagNeedles} also shows that $f=0$ almost everywhere outside $\operatorname{Strain}[u]$.  On this complement, push forward the restricted
measure by $x\mapsto\{x\}$ and give the singleton fiber $\{x\}$ the measure
$\delta_x$.  Adjoining these fibers into $\mathcal{Q}$ preserves conditional balance and gives the full partition of $\operatorname{int}(K)$.

It remains only to translate the geometric terminology.  \cite[Theorem~1.4~\textup{(iii)}]{KlartagNeedles} allows the geodesic in the definition of each
needle to be chosen so that $u(\gamma_I(t))=t$.  In Euclidean space this
geodesic is a unit-speed parametrization of a line interval.  With this same
parametrization, the definition of a $\mathrm{CD}(0,n)$ needle says that the
pullback of $\widetilde\mu_I$ has density $e^{-\Psi_I}$ and
\[
 \Psi_I''\ge\frac{(\Psi_I')^2}{n-1}\ge0.
\]
Hence, $\Psi_I$ is convex and $e^{-\Psi_I}$ is log-concave.  This proves
\eqref{eq:klartag-specialized-coordinate-density} and completes the proof of the theorem.
\end{proof}

We can now use Theorem~\ref{thm:klartag-euclidean-localization} to prove the lemma. 

\lemklartag*
\begin{proof}
The case $n=1$ is immediate.  Assume $n\ge2$, let $\mu=\Unif(K)$, and write
\[
 g:=\frac{d\lambda}{d\mu},
 \qquad
 p:=\lambda(A).
\]
The density $g$ has a positive log-concave representative on $\operatorname{int}(K)$.  Since $\lambda(\partial K)=0$, we may replace $A$ by $A\cap\operatorname{int}(K)$.  Apply Theorem~\ref{thm:klartag-euclidean-localization} to
\[
 f:=g(\one_A-p),
 \qquad
 \int f\,d\mu=0.
\]
Let $u$, $\mathcal Q$, $\xi$, and $\widetilde\mu_I$ be the resulting objects.  Conditional balance gives
\begin{equation}\label{eq:balanced-unnormalized-mass}
 0=\int_Ig(\one_A-p)\,d\widetilde\mu_I
 \qquad\text{for $\xi$-almost every }I.
\end{equation}
Set
\[
 z_I:=\int_Ig\,d\widetilde\mu_I.
\]
Disintegration gives $\int z_I\,d\xi(I)=1$, so $z_I<\infty$ almost everywhere.  On the measurable family $\mathcal I=\{I:0<z_I<\infty\}$, define
\[
 d\pi(I):=z_I\,d\xi(I),
 \qquad
 d\lambda_I:=\frac{g}{z_I}\,d\widetilde\mu_I.
\]
Then $\pi$ is a probability measure and, for every measurable $B\subseteq K$,
\[
 \int_{\mathcal I}\lambda_I(B)\,d\pi(I)
 =\int_Bg\,d\mu
 =\lambda(B),
\]
which proves \eqref{eq:balanced-needle-disintegration}.  Equation~\eqref{eq:balanced-unnormalized-mass} gives
\[
 \lambda_I(A)=p=\lambda(A)
\]
for $\pi$-almost every $I$.  A singleton fiber with $z_I>0$ cannot satisfy this identity because $0<p<1$, so all positive-mass fibers are nondegenerate.

Along the unit-speed parametrization supplied by Theorem~\ref{thm:klartag-euclidean-localization}, the measure $\widetilde\mu_I$ has a log-concave density and $g|_I$ is log-concave.  Their product is log-concave.  Reweighting does not alter the supporting interval or the guiding coordinate.  This proves the lemma.
\end{proof}

\section{Extension to log-affine target distributions}
\label{app:log-affine}
In this appendix, we prove the warm-start mixing time of sampling from a log-affine distribution over a convex body using the hit-and-run walk.  

Let $K\subset\R^n$ be a convex body,  $b\in\R^n$, and define the log-affine target density:
\begin{equation}\label{eq:log-affine-target}
 f_b(x):=e^{-\langle b,x\rangle}\one_K(x),
 \qquad
 Z_b:=\int_K e^{-\langle b,x\rangle}\,dx,
 \qquad
 d\nu_b(x):=Z_b^{-1}f_b(x)\,dx.
\end{equation}
Define
\begin{align*}
    \sigma_{\min}:=\sqrt{\lambda_{\min}(\operatorname{Cov}_{\nu_b}(X))},\qquad \sigma_{\max}:=\sqrt{\lambda_{\max}(\operatorname{Cov}_{\nu_b}(X))}.
\end{align*}
We say that $\nu_b$ is \emph{$C_0$-near-isotropic} if $\E_{\nu_b}[X]=0$ and
\begin{equation}\label{eq:log-affine-near-isotropic}
 C_0^{-1}I_n
 \preceq
 \operatorname{Cov}_{\nu_b}(X)
 \preceq
 C_0 I_n.
\end{equation}
That is, $C_0^{-1}\leq \sigma_{\min}^2$ and $C_0\geq \sigma_{\max}^2$.
The isotropic case corresponds to $\sigma_{\min}=\sigma_{\max}=1$.

The hit-and-run kernel with target $\nu_b$, denoted by ${\sf P}_b$, chooses $\theta\sim\sigma_{n-1}$ and resamples from the conditional law of $\nu_b$ on the chord $K\cap(x+\R\theta)$. By~\cite{RudolfUllrichPositivity}, ${\sf P}_b$ is reversible and positive.    

The main result of this appendix is the following theorem.
\begin{theorem}[Warm-start mixing for log-affine targets]
\label{thm:log-affine-main}
Let $n\ge2$, and suppose that $\nu_b$ is defined as in~\eqref{eq:log-affine-target}.  Then, for every $0<s<1/4$, the $s$-conductance of ${\sf P}_b$ with respect to $\nu_b$ is lower bounded by:
\begin{equation}\label{eq:log-affine-conductance}
 \Phi_s({\sf P}_b)
 \gtrsim
 \frac{\psi_n}{n\bigl(1+\log(2/s)\bigr)}.
\end{equation}
Consequently, if $\nu_{\rm init}$ is $M$-warm with respect to $\nu_b$, then
for every $0<\varepsilon<1/2$,
\begin{equation}\label{eq:log-affine-mixing-simple}
 \tau_{\mix}(\varepsilon,\nu_{\rm init},\nu_b;{\sf P}_b)
 \lesssim \frac{\sigma_{\max}^2}{\sigma_{\min}^2}
 \frac{n^2}{\psi_n^2}
 \log^3\!\left(\frac{8M}{\varepsilon}\right)\leq C_0^2 \frac{n^2}{\psi_n^2}
 \log^3\!\left(\frac{8M}{\varepsilon}\right).
\end{equation}
\end{theorem}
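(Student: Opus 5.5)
The plan is to run the same three-ingredient argument as for the uniform target: weighted isoperimetry, local overlap, and conductance transfer. Proposition~\ref{prop:conductance-transfer} is stated for a general $\lambda$, $w$, $\mathsf P$, so it can be reused with $\lambda=\nu_b$ and $w=\rho_b$. Here $\rho_b$ is the density-sensitive radius of~\cite{LV}: the largest $r$ such that most of $B(x,r)$ lies in $K$ and has density within a constant factor of $f_b(x)$. The difference from the uniform case is that the overlap step cannot go through Lemma~\ref{lem:lovasz_overlap} directly. It will instead go through an auxiliary kernel and a Dirichlet-form comparison. I expect all constants to be universal except for the ratio $\sigma_{\min}/\sigma_{\max}$. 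Then \eqref{eq:log-affine-conductance} is literally the normalized case $\sigma_{\min}=\sigma_{\max}$, and in general the conductance carries an extra factor $\sigma_{\min}/\sigma_{\max}$, whose square gives the prefactor in \eqref{eq:log-affine-mixing-simple}.

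\emph{Step 1 (isoperimetry).} I first establish the analogue of Lemma~\ref{lem:local-radius} for $\rho_b$, following~\cite{LV}. There are four properties to check.
\begin{itemize}
\item $\rho_b$ is concave on $K$. This comes from a Brunn--Minkowski argument applied to $K\cap B(x,r)\cap\{\langle b,\cdot\rangle\le\langle b,x\rangle+c\}$, a convex set, so the argument of part~(ii) carries over.
\item $\rho_b$ vanishes on $\partial K$ and its zero extension is $O(\sqrt n)$-Lipschitz, by the same chord argument as before.
\item $\rho_b$ is positive in the interior.
\item Its mean satisfies $\int\rho_b\,d\nu_b\gtrsim\sigma_{\min}/\sqrt n$.
\end{itemize}
The last bound is the log-concave mean-radius estimate of~\cite{LV}, after placing $\nu_b$ in near-isotropic position. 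Then Theorem~\ref{thm:balanced-weighted} applies with $\lambda=\nu_b$, which is log-concave with compact convex support, and $w=\rho_b$. Combined with $C_{\sf PI}(\nu_b)\lesssim\sigma_{\max}^2\psi_n^{-2}$, obtained by scaling \eqref{eq:isotropic-Poincare} through the covariance, this gives hypothesis~(ii) of Proposition~\ref{prop:conductance-transfer} with
\[
\gamma_s\asymp\frac{\sigma_{\min}}{\sigma_{\max}}\cdot\frac{\psi_n}{\sqrt n\,(1+\log(2/s))}.
\]

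\emph{Step 2 (overlap via an auxiliary kernel).} I define a lazy multiscale Metropolis kernel $\mathsf M$. From $x$ it picks a dyadic scale $\delta=2^{-j}$ with probability proportional to a weight that is large only when $\delta$ is comparable to $\rho_b(x)/\sqrt n$. It proposes $y\sim\Unif(B(x,\delta))$ and accepts with the Metropolis ratio. That ratio includes $f_b$ and the scale-selection probabilities of the reverse move, which makes $\mathsf M$ reversible for $\nu_b$. Since $\rho_b$ is $O(\sqrt n)$-Lipschitz, the argument in the proof of Lemma~\ref{lem:overlap} shows that $d_{\rho_b}(x,y)<c_0/\sqrt n$ forces $\rho_b(x)\asymp\rho_b(y)$ and $\|x-y\|\ll\rho_b(x)/\sqrt n$. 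The two points then share a common dyadic scale with constant probability, and their proposal balls overlap in a constant fraction. By the definition of $\rho_b$, a constant fraction of that common region is both inside $K$ and of comparable density, so it is accepted from both points. This yields hypothesis~(i) for $\mathsf M$ with $r_0=c_0/\sqrt n$ and a universal $\alpha_0$. Proposition~\ref{prop:conductance-transfer} then gives $\Phi_s(\mathsf M)\gtrsim\gamma_s/\sqrt n$.

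\emph{Step 3 (comparison and mixing).} I show $\mathcal E_{\mathsf M}(f,f)\lesssim\mathcal E_{\mathsf P_b}(f,f)$, in the spirit of~\cite{RudolfUllrichComparison}. Write the multiscale move in polar form: a uniform direction, then a radial step whose law, given the direction, is a $\nu_b$-reversible kernel on the chord. On each chord, $\mathsf P_b$ is the complete-refresh projection onto the conditional law, and any reversible chordwise kernel has Dirichlet form at most that of the projection. Integrating over directions gives the comparison, so flows satisfy $\cQ_{\mathsf M}(S,S^c)\lesssim\cQ_{\mathsf P_b}(S,S^c)$, and hence $\Phi_s(\mathsf P_b)\gtrsim\Phi_s(\mathsf M)$. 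Lemma~\ref{lem:LS} with $s=\varepsilon/(2M)$, together with positivity of $\mathsf P_b$, then yields \eqref{eq:log-affine-mixing-simple}.

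The main obstacle is Step~2 together with the polar decomposition in Step~3. Two things must be arranged simultaneously. First, the scale-selection rule must make $\mathsf M$ exactly reversible while keeping constant acceptance probability at the common scale. Second, $\mathsf M$ must genuinely factor as directions times chordwise reversible kernels, since the ball proposal's radial density $r^{n-1}$ must be absorbed into the chord reversibility with respect to arclength-weighted $f_b$. I would first try making the scale depend only on $\rho_b(x)$ rounded dyadically, and put all state dependence into the acceptance ratio.
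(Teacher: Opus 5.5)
Your overall architecture is the paper's. Step~1 is Lemma~\ref{lem:log-affine-radius} together with Corollary~\ref{cor:log-affine-radius-profile}. Step~3 is the directional Dirichlet-form comparison of Lemma~\ref{lem:log-affine-local-kernel}. Your reading that the conductance carries a factor $\sigma_{\min}/\sigma_{\max}$ in general matches Proposition~\ref{prop:log-affine-conductance}.

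The gap is the proposal scale in Step~2. You take $\delta\asymp\rho_b(x)/\sqrt n$, but $d_{\rho_b}(x,y)<c_0/\sqrt n$ only yields $\|x-y\|_2\lesssim c_0\rho_b(x)/\sqrt n\asymp c_0\delta$. In $\R^n$, two balls of radius $\delta$ whose centers are a constant fraction of $\delta$ apart have exponentially small overlap. By the parallelogram law, $B(x,\delta)\cap B(y,\delta)$ lies in the ball of radius $(\delta^2-\|x-y\|_2^2/4)^{1/2}$ about the midpoint, so
\[
 \frac{\vol\bigl(B(x,\delta)\cap B(y,\delta)\bigr)}{v_n\delta^n}
 \le\exp\Bigl(-\frac{n\|x-y\|_2^2}{8\delta^2}\Bigr).
\]
This is $e^{-\Omega(c_0^2n)}$ when $\|x-y\|_2\asymp c_0\delta$, a separation your hypothesis permits. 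Mixing neighbouring dyadic scales does not help, because uniform densities at radii $\delta$ and $2\delta$ differ by a factor $2^n$.

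Constant overlap needs $\|x-y\|_2\lesssim\delta/\sqrt n$. That is why the estimate $1-O(\sqrt n)\|x-y\|_2/\delta_j$ in Lemma~\ref{lem:log-affine-local-overlap} is run with $\delta_j\asymp\rho_b$. With your scale, hypothesis~(i) of Proposition~\ref{prop:conductance-transfer} can only be verified for $r_0\asymp1/n$. You then get $\Phi_s\gtrsim\gamma_s/n$ and an $n^3$ rather than $n^2$ mixing bound. Rescaling the weight cannot repair this, since $r_0\gamma_s$ is invariant under $w\mapsto cw$.

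The fix is to make the active scale comparable to $\rho_b(x)$ itself. This is legitimate because the definition of $\rho_b$ already guarantees that a $63/64$ fraction of $B(x,\delta)$ lies in $K\cap\{f_b\ge\beta f_b(x)\}$ for every $\delta<\rho_b(x)$. Acceptance therefore needs no $1/\sqrt n$ shrinkage.

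The price is that the $O(\sqrt n)$-Lipschitz bound becomes vacuous across the proposal ball: it only gives $|\rho_b(y)-\rho_b(x)|\lesssim\sqrt n\,\delta$. So the reverse-scale factor in your acceptance ratio is no longer controlled by the argument you sketched. The paper handles this in two ways:
\begin{itemize}
\item It uses symmetric band indicators $\one_{D_j}(x)\one_{D_j}(y)$ with $D_j=\{4\delta_j\le\rho_b\le64\delta_j\}$. These make reversibility immediate with no reverse-scale ratio at all.
\item It keeps most of each proposal ball inside the band by concavity rather than Lipschitz continuity. If $V\sim\Unif(B(0,\delta_j))$ and $2\delta_j<\rho_b(z)$, then with probability at least $31/32$ both $z-V$ and $z+2V$ lie in $K$. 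Concavity of $\rho_b$ then forces $\rho_b(z+V)\in[\rho_b(z)/2,\,2\rho_b(z)]$.
\end{itemize}

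Finally, the radial factor you flagged in Step~3 is harmless. For fixed $\theta$, the radial step density $\propto|t|^{n-1}\one_{\{|t|\le\delta_j\}}$ is even in $t$, so each directional Metropolis kernel is $\nu_b$-reversible.
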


The rest of this appendix proves the theorem.  Unlike the argument for the uniform sampling, the proof uses a positive multiscale Metropolis kernel whose active scale is comparable with a density-sensitive local radius (see Appendix~\ref{app:multiscale_kernel}). 

\subsection{The density-sensitive local radius}

Following~\cite[Section~6.2]{LV}, for $x\in K$ and $r>0$, define
\begin{equation}\label{eq:log-affine-local-occupancy}
 \lambda_b(x,r)
 := 
 \frac{\vol\bigl(
 B(x,r)\cap \{y\in\R^n:f_b(y)\ge\beta f_b(x)\}
 \bigr)}{v_n r^n},\qquad \beta :=\frac{3}{4}
\end{equation}
and
\begin{equation}\label{eq:log-affine-local-radius}
 \rho_b(x)
 :=
 \sup\left(
 \{0\}\cup\left\{r>0:\lambda_b(x,r)\ge\frac{63}{64}\right\}
 \right).
\end{equation}
Thus, $\rho_b$ records both the distance to the support boundary and the scale on which the density can decrease by a constant factor.  When $b=0$, one has $\rho_0=\rK$.

\begin{lemma}[Geometry of the density-sensitive radius]
\label{lem:log-affine-radius}
Let $K\subseteq\R^n$ be a convex body and let $b\in\R^n$. 
\begin{enumerate}[label=\textup{(\roman*)}]
\item $\rho_b$ is positive on $\operatorname{int}(K)$, finite and concave on
$K$, and vanishes on $\partial K$;
\item the zero extension of $\rho_b$ to $\R^n$ is $O(\sqrt n)$-Lipschitz;
\item for every $x\in K$,
\begin{equation}\label{eq:log-affine-score-radius}
 \rho_b(x)\|b\|_2\lesssim \sqrt n;
\end{equation}
\item 
\begin{equation}\label{eq:log-affine-mean-radius}
 \int_K\rho_b\,d\nu_b
 \gtrsim \frac{\sigma_{\min}}{\sqrt n}.
\end{equation}
\end{enumerate}
\end{lemma}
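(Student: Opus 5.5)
The plan is to treat $\rho_b$ as the local conductance radius of a modified convex set, and then run the arguments of Lemma~\ref{lem:local-radius} with that set in place of $K$. Fix $x\in K$. The superlevel set $L_x:=\{y:f_b(y)\ge\beta f_b(x)\}=K\cap\{y:\langle b,y-x\rangle\le\log(4/3)\}$ is convex, because it is $K$ cut by a half-space. By definition, $\lambda_b(x,r)$ is the occupancy ratio of $L_x$ at $x$.

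\textbf{Parts (i)--(ii).} Positivity on $\operatorname{int}(K)$ holds because $x$ lies in the interior of $L_x$: the half-space constraint has slack $\log(4/3)>0$ at $x$. Finiteness and the boundary value $\rho_b=0$ on $\partial K$ follow as in Lemma~\ref{lem:local-radius}(i),(iii), using $L_x\subseteq K$ and the supporting-hyperplane bisection. Concavity needs slightly more care, since the set $L_x$ moves with $x$. I would repeat the Brunn--Minkowski argument from part (ii). Take $y_i\in B(x_i,r_i)\cap L_{x_i}$ and set $y_\theta=(1-\theta)y_0+\theta y_1$. Linearity gives
\[
\langle b,y_\theta-x_\theta\rangle=(1-\theta)\langle b,y_0-x_0\rangle+\theta\langle b,y_1-x_1\rangle\le\log(4/3),
\]
so the Minkowski combination lies in $B(x_\theta,r_\theta)\cap L_{x_\theta}$. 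The Brunn--Minkowski inequality then gives admissibility of $r_\theta$ at $x_\theta$, exactly as before. For the Lipschitz bound, $\rho_b\le r_K\lesssim\sqrt n\operatorname{dist}(x,\partial K)$ because $L_x\subseteq K$. The chord argument of Lemma~\ref{lem:local-radius}(iii) uses only concavity, vanishing at the chord endpoints, and this envelope, so it applies verbatim and gives the $O(\sqrt n)$ Lipschitz constant.

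\textbf{Part (iii).} $L_x$ lies in a half-space whose boundary is at distance $d=\log(4/3)/\|b\|_2$ from $x$. The half-space computation in Lemma~\ref{lem:local-radius}(iii) shows that the occupancy ratio drops below $63/64$ once $r\gtrsim\sqrt n\,d$. Hence $\rho_b(x)\lesssim\sqrt n/\|b\|_2$.

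\textbf{Part (iv)} is the main obstacle. I would adapt \cite[Lemma~3.4 and Section~6.2]{LV} and argue by contradiction through a union bound. Fix $r=c/\sqrt n\cdot\sigma_{\min}$ with $c$ small. By the definition of $\rho_b$, $\rho_b(x)<r$ forces $B(x,r)$ to have at least a $1/64$ fraction outside $K$ or in the region where the density drops by more than a factor $\beta$. Let $U$ be uniform on $B(0,r)$ and $X\sim\nu_b$. It suffices to show
\[
\Prb\{X+U\notin K\}+\Prb\{\langle b,U\rangle>\log(4/3)\}\le\frac{1}{128}.
\]
Then Markov's inequality gives $\nu_b(\rho_b<r)\le1/2$, and so $\int\rho_b\,d\nu_b\ge r/2$.

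For the density term, note that $\langle b,U\rangle$ has scale $r\|b\|_2/\sqrt n$. I expect $\|b\|_2\lesssim\sqrt n/\sigma_{\min}$, since a log-affine density with large tilt concentrates within width $1/\|b\|_2$ along $b$, which bounds the variance in that direction. With $r\sim\sigma_{\min}/\sqrt n$ the tilt term is then negligible, up to constants.

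For the boundary term, $\Prb\{X+U\notin K\}$ should be bounded by $r$ times a surface-area ratio. For log-concave $\nu_b$ on $K$, the boundary measure of $K$ should be $\lesssim n/\sigma_{\min}$, in the spirit of \cite[Lemma~3.4]{LV} after noting that $K$ contains a ball of radius $\gtrsim\sigma_{\min}$ around the mean. Averaging over the direction of $U$ gains the $1/\sqrt n$ factor: the normal component of $U$ is $\sim r/\sqrt n$. Making this last estimate rigorous with the correct $\sigma_{\min}$ scaling is where I expect the real work to lie. I would try first to cite \cite[Section~6.2]{LV} directly, after an affine normalization that makes the covariance have minimum eigenvalue $1$.
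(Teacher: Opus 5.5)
Your parts (i)--(iii) match the paper. Your explicit check that the Minkowski combination of the sets $B(x_i,r_i)\cap L_{x_i}$ lands in $B(x_\theta,r_\theta)\cap L_{x_\theta}$ (by linearity of $y\mapsto\langle b,y-x\rangle$) supplies the detail the paper leaves to ``a similar Brunn--Minkowski argument.'' The gap is in (iv). Your Markov reduction is sound, but both estimates that feed it are only asserted.

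For the tilt term, the reason you give is false. A log-affine law need not concentrate within width $1/\|b\|_2$ along $b$: on a long cone whose apex minimizes $\langle b,\cdot\rangle$, the marginal along $b$ is a truncated Gamma law of shape $n$ with standard deviation $\approx\sqrt n/\|b\|_2$. The bound $\|b\|_2\lesssim\sqrt n/\sigma_{\min}$ is nevertheless true and sharp, but it needs an actual argument. One option is log-concavity (by Pr\'ekopa) of $a\mapsto a^n\int_K e^{-a\langle e,x\rangle}\,dx=\int_{aK}e^{-\langle e,y\rangle}\,dy$ with $e=b/\|b\|_2$. Nonpositivity of the second derivative of its logarithm at $a=\|b\|_2$ gives $\sigma_{\min}^2\le\operatorname{Var}_{\nu_b}\langle e,X\rangle\le n/\|b\|_2^2$. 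Then $\|b\|_2 r\le c<\log(4/3)$, and the tilt term vanishes identically.

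For the boundary term, the cone-volume argument behind \cite[Lemma~3.4]{LV} is about Lebesgue measure and does not transfer verbatim. Under $\nu_b$, the density at $y\in\partial K$ can exceed the density along the cone from the center by a factor up to $e^{\langle b,m-y\rangle}$. One clean fix is the divergence theorem for the field $(x-m)e^{-\langle b,x\rangle}$ with $m=\E_{\nu_b}X$. The extra term $\langle b,x-m\rangle$ integrates to zero, so $\int_{\partial K}\langle y-m,n(y)\rangle e^{-\langle b,y\rangle}\,d\mathcal H^{n-1}(y)=nZ_b$. Since $B(m,c'\sigma_{\min})\subseteq K$ (from one-dimensional log-concave marginals), the normalized weighted surface measure is at most $n/(c'\sigma_{\min})$. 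Converting this into $\nu_b\{x:x+u\notin K\}$ also picks up a factor $e^{\|b\|_2\|u\|_2}$, which is again controlled by the tilt bound. Only with these ingredients does the averaged escape probability come out as $O(c)$.

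The paper avoids this computation altogether. It passes to the isotropic image of $\nu_b$ only to find a ball of radius $1/(8e)$ inside the superlevel set of mass $1/8$ \cite[Lemma~5.13]{LVlogconcave}. It pulls this back to a ball of radius $\Theta(\sigma_{\min})$ inside a superlevel set of $f_b$, and then invokes \cite[Lemma~6.4]{LV}. Your fallback of citing \cite[Section~6.2]{LV} is thus the paper's route. However, the hypothesis to verify there is this ball-in-a-level-set condition, which your proposal does not identify. If you complete your union-bound argument as indicated, it becomes a self-contained alternative to that citation.
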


\begin{proof}
For \textup{(i)}, let
\begin{equation}\label{eq:log-affine-moving-level-set}
 L_x:=\{y\in\R^n:f_b(y)\ge\beta f_b(x)\}
 =
 K\cap
 \{y:\langle b,y\rangle\le\langle b,x\rangle+\log(1/\beta)\}.
\end{equation}
Thus $L_x$ is convex, contains $x$, and is contained in $K$.  In particular, $\lambda_b(x,r)\le\lambda_K(x,r)$ and hence
\begin{equation}\label{eq:rho-dominated-by-uniform-radius}
 \rho_b(x)\le r_K(x).
\end{equation}
Positivity in the interior follows because, for sufficiently small $\delta>0$, $B(x,\delta)\subseteq L_x$; finiteness follows from boundedness of $K$. The concavity follows from a similar Brunn--Minkowski argument as in the proof for $r_K$.

For \textup{(ii)}, by~\eqref{eq:rho-dominated-by-uniform-radius} and~\eqref{eq:radius-boundary-distance}, we have
\[
 \rho_b(x)
 \le C\sqrt n\,\operatorname{dist}(x,\partial K).
\]
Consequently, $\rho_b=0$ on $\partial K$, and the chordwise secant-slope argument from the proof of Lemma~\ref{lem:local-radius} applies to the concave function $\rho_b$. This proves the Lipschitzness of $\rho_b$.

For~\textup{(iii)}, let $e=b/\|b\|_2$ and
$U\sim\operatorname{Unif}(B(0,1))$.  If $r<\rho_b(x)$, then
\[
 \frac{63}{64}
 \le\lambda_b(x,r) = \Prb\left\{
  x+rU\in L_x\right\}
 \le
 \Prb\left\{
  \langle e,U\rangle
  \le\frac{\log(1/\beta)}{r\|b\|_2}
 \right\}.
\]
The random variable $\langle e,U\rangle$ has density
\[
 g_n(t)
 =
 \frac{v_{n-1}}{v_n}(1-t^2)^{(n-1)/2}
 \one_{\{|t|\le1\}},
\]
which is symmetric and satisfies
\[
 \|g_n\|_\infty
 =
 \frac{v_{n-1}}{v_n}
 =
 \frac{\Gamma(n/2+1)}
 {\sqrt\pi\,\Gamma((n+1)/2)}
 =O(\sqrt{n}).
\]
Hence, for $s\ge0$,
\[
 \Prb\{\langle e,U\rangle\le s\}
 \le\frac12+O(\sqrt{n}\,s), 
\]
which implies that
\begin{align*}
    \frac{63}{64} \leq \frac{1}{2} +O\left(\sqrt{n}\frac{\log(1/\beta)}{r\|b\|_2}\right).
\end{align*}
We obtain $r\|b\|_2\le O(\sqrt n)$. Letting $r\uparrow\rho_b(x)$ proves
\eqref{eq:log-affine-score-radius}.

For \textup{(iv)}, let
\[
 m_b=\E_{\nu_b}X,
 \qquad
 \Sigma_b=\operatorname{Cov}_{\nu_b}(X)
\]
and let
\[
 T(x)=\Sigma_b^{-1/2}(x-m_b).
\]
Then the pushforward measure $\widetilde\nu_b=T_\#\nu_b$ is isotropic. Moreover, $\widetilde\nu_b$ is log-affine on
$\widetilde K=T(K)$, with tilt vector $\widetilde b=\Sigma_b^{1/2}b$. By~\cite[Lemma~5.13]{LVlogconcave}, a superlevel set of mass $1/8$ of $\widetilde \nu_b$ contains a Euclidean ball of radius $1/(8e)$.  Pulling this ball back shows that the corresponding superlevel set of $f_b$ contains a Euclidean ball of radius $\Theta(\sigma_{\min})$.  Therefore, using~\cite[Lemma~6.4]{LV} gives
\[
 \int_K\rho_b\,d\nu_b
 \gtrsim \,\frac{\sigma_{\min}}{\sqrt n}.
\]
\end{proof}

\begin{lemma}[Small-radius mass]
\label{lem:log-affine-small-radius}
For every $u>0$,
\begin{equation}\label{eq:log-affine-small-radius}
 \nu_b\{x:\rho_b(x)<u\}
 \le
 \min\left\{
  1,
  O\left(\frac{\sqrt n}{\sigma_{\min}}u\right)
 \right\}.
\end{equation}
\end{lemma}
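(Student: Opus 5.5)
Since $\nu_b$ is a probability measure, only the nontrivial bound $\nu_b\{\rho_b<u\}\lesssim \sqrt n\,u/\sigma_{\min}$ needs proof. I would deduce it from concavity of $\rho_b$ together with a one-dimensional small-ball estimate along chords. The natural route avoids any direct geometric description of the set $\{\rho_b<u\}$. Instead it compares $\rho_b$ with the scaled depth of a point inside a large ball contained in a superlevel set of the density.

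\textbf{Step 1 (a deep point).} Following the proof of Lemma~\ref{lem:log-affine-radius}(iv), the superlevel set $\{f_b\ge c f_b(\text{mode})\}$ of mass $1/8$ contains a Euclidean ball $B(z_0,R)$ with $R\asymp\sigma_{\min}$. I will assume that $\rho_b(y)\gtrsim R/\sqrt n$ for $y\in B(z_0,R/2)$, or at least at the centre $z_0$. This is the content of \cite[Lemma~6.4]{LV} and is presumably how (iv) was obtained. If the density on $B(z_0,R)$ varies by more than a factor $4/3$, I would instead pass to a smaller ball, of radius $R'\asymp\min\{R,1/\|b\|_2\}$, on which the density is nearly constant. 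Then $\rho_b(z_0)\gtrsim R'/\sqrt n$. I expect this step to be the main obstacle: one must certify a point whose local radius is $\gtrsim\sigma_{\min}/\sqrt n$ even when $\|b\|_2$ is large. For that case I would use the bound $\|b\|_2\lesssim 1/\sigma_{\min}$, which is expected from log-concavity of the tilt in the direction $b$. The point is that $\operatorname{Var}\langle b/\|b\|_2,X\rangle\ge\sigma_{\min}^2$ forces the marginal's scale $1/\|b\|$ to dominate $\sigma_{\min}$, up to the truncation by $K$.

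\textbf{Step 2 (concavity along rays).} Fix $x\in K$ and consider the segment from $x$ through $z_0$ to the boundary point $x'$ beyond it. Since $\rho_b$ is concave and nonnegative, $\rho_b(x+t(z_0-x))\ge t\,\rho_b(z_0)$ for $t\in[0,1]$. More usefully, write each point of $K$ in "polar coordinates" around $z_0$: a point on the ray $\theta$ at parameter $s\in[0,1]$ (with $s=1$ at the boundary) satisfies
\[
\rho_b \ge (1-s)\rho_b(z_0)\gtrsim (1-s)\sigma_{\min}/\sqrt n .
\]
Hence $\{\rho_b<u\}$ is contained in the shell $\{s>1-\varepsilon\}$ with $\varepsilon\asymp \sqrt n\,u/\sigma_{\min}$.

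\textbf{Step 3 (mass of the shell).} It remains to show $\nu_b(\{s>1-\varepsilon\})\lesssim\varepsilon$, uniformly in $n$. For uniform measure on a star body, the radial density $s^{n-1}$ gives a bound of order $n\varepsilon$, which loses a factor $n$; so this needs care. I would instead use the Lipschitz/concavity structure directly, via a layer-cake argument. Apply the one-sided coarea inequality Proposition~\ref{prop:relaxed-perimeter-toolkit}(ii) to the function $\rho_b$, or apply a localization to the needle decomposition of $\nu_b$. On each log-concave needle, the concave function $\rho_b$ restricted to the needle is at least its mean times the distance-to-endpoint ratio. By Lemma~\ref{lem:one-dimensional-facts}, a one-dimensional log-concave law puts mass $\lesssim \delta/\sigma_I$ within $\delta$ of an endpoint. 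Averaging over needles with the mean bound (iv), and comparing against the Berwald-type reverse Hölder estimate (Lemma~\ref{lem:berwald}), should give
\[
\nu_b\{\rho_b<u\}\lesssim u/\textstyle\int\rho_b\,d\nu_b\lesssim \sqrt n\,u/\sigma_{\min}.
\]
Concretely, I would aim to prove the general inequality $\lambda\{w<u\}\lesssim u/\int w\,d\lambda$ for concave $w\ge0$ and log-concave $\lambda$. This is a reverse-Markov small-ball estimate for concave functions; it follows from log-concavity of the distribution of $w$ under $\lambda$, which holds by Brunn--Minkowski. Combined with Lemma~\ref{lem:log-affine-radius}(iv), this estimate bypasses Steps 1--2 entirely.
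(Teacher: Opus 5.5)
Your final route is exactly the paper's proof: deduce $\nu_b\{\rho_b<u\}\lesssim u/m_b$, with $m_b:=\int_K\rho_b\,d\nu_b$, from log-concavity of $H(t)=\nu_b\{\rho_b\ge t\}$ (Pr\'ekopa--Leindler on the convex superlevel sets of $\rho_b$), then invoke Lemma~\ref{lem:log-affine-radius}(iv); the anchor you leave implicit is supplied in the paper by Berwald plus Paley--Zygmund, giving $H(m_b/2)\ge 1/8$ and hence $H(u)\ge 8^{-2u/m_b}$ for $u<m_b/2$ by interpolating with $H(0)=1$ (the layer-cake identity $\int_0^\infty H\,dt=m_b$ together with convexity of $-\log H$ would serve equally well). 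Steps 1--3 are unnecessary and should be dropped; in particular, the heuristic $\|b\|_2\lesssim 1/\sigma_{\min}$ in Step 1 is off by a factor $\sqrt n$, as a cone tilted along its axis shows.
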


\begin{proof}
Set
\[
 m_b:=\int_K\rho_b\,d\nu_b.
\]
Lemma~\ref{lem:berwald} gives
\[
 \int_K\rho_b^2\,d\nu_b\le2m_b^2.
\]
Hence, Paley--Zygmund gives
\[
 \nu_b\{\rho_b\ge m_b/2\}\ge\frac18.
\]
The function $H(t):=\nu_b\{\rho_b\ge t\}$ is log-concave because $\rho_b$ is concave and $\nu_b$ is log-concave.   Moreover, $H(0)=1$. Therefore, for
$0<u<m_b/2$,
\[
 H(u)
 \ge H(m_b/2)^{2u/m_b}
 \ge 8^{-2u/m_b}.
\]
It follows that
\[
 \nu_b\{\rho_b<u\}
 \le
 1-8^{-2u/m_b}
 \lesssim \frac{u}{m_b}.
\]
The same bound is trivial for $u\ge m_b/2$ after increasing the constant. Finally, Lemma~\ref{lem:log-affine-radius} gives $m_b\gtrsim \sigma_{\min}/\sqrt{n}$, which completes the proof of the lemma.
\end{proof}

\begin{corollary}
\label{cor:log-affine-radius-profile}
For every measurable $A\subseteq K$ with $p:=\min\{\nu_b(A),1-\nu_b(A)\}\in (0,1/2]$,
\begin{equation}\label{eq:log-affine-radius-profile-general}
 \cP_{\rho_b,\nu_b}(A)
 \gtrsim \frac{\psi_n}{\sqrt n}
  \frac{\sigma_{\min}}
       {\sigma_{\max}}
 \frac{p}{1+\log(1/p)}.
\end{equation}
\end{corollary}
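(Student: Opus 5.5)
The plan is to apply Theorem~\ref{thm:balanced-weighted} directly with $\lambda=\nu_b$ and $w=\rho_b$, and then insert the two quantitative inputs: the mean radius and the Poincar\'e constant of $\nu_b$. By complement symmetry (Proposition~\ref{prop:relaxed-perimeter-toolkit}(i)) we may assume $p=\nu_b(A)\le 1/2$. The hypotheses of Theorem~\ref{thm:balanced-weighted} must then be checked. First, $\nu_b$ is a full-dimensional log-concave probability measure with compact convex support $K$, since $e^{-\langle b,x\rangle}\one_K$ is log-concave. Second, Lemma~\ref{lem:log-affine-radius}(i)--(ii) gives that $\rho_b$ is concave, positive on $\operatorname{int}(K)$, vanishes on $\partial K$, and has an $O(\sqrt n)$-Lipschitz zero extension. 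The theorem then yields
\[
 \cP_{\rho_b,\nu_b}(A)\gtrsim \frac{\int_K\rho_b\,d\nu_b}{\sqrt{C_{\sf PI}(\nu_b)}}\,\frac{p}{1+\log(1/p)}.
\]

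For the numerator, Lemma~\ref{lem:log-affine-radius}(iv) gives $\int_K\rho_b\,d\nu_b\gtrsim\sigma_{\min}/\sqrt n$.

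The remaining step is the Poincar\'e bound $C_{\sf PI}(\nu_b)\lesssim \sigma_{\max}^2\psi_n^{-2}$. Let $T(x)=\Sigma_b^{-1/2}(x-m_b)$, so that $\widetilde\nu_b=T_\#\nu_b$ is isotropic and log-concave. By \eqref{eq:isotropic-Poincare}, $C_{\sf PI}(\widetilde\nu_b)\lesssim\psi_n^{-2}$. For a locally Lipschitz $f$, set $g=f\circ T^{-1}$. Then
\[
 \operatorname{Var}_{\nu_b}(f)=\operatorname{Var}_{\widetilde\nu_b}(g)\le C_{\sf PI}(\widetilde\nu_b)\int\|\nabla g\|_2^2\,d\widetilde\nu_b,
\]
and since $\nabla g(T x)=\Sigma_b^{1/2}\nabla f(x)$, we have $\|\nabla g\|_2^2\le\sigma_{\max}^2\|\nabla f\|_2^2$. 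This gives $C_{\sf PI}(\nu_b)\le\sigma_{\max}^2 C_{\sf PI}(\widetilde\nu_b)\lesssim\sigma_{\max}^2\psi_n^{-2}$, so $C_{\sf PI}(\nu_b)^{-1/2}\gtrsim\psi_n/\sigma_{\max}$.

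Substituting both bounds gives \eqref{eq:log-affine-radius-profile-general}. I expect no serious obstacle. The only points needing care are the affine change of variables in the Poincar\'e inequality and confirming that Theorem~\ref{thm:balanced-weighted}, through Lemma~\ref{lem:balanced-needles} and Berwald's inequality (Lemma~\ref{lem:berwald}), was proved for general log-concave $\lambda$ rather than only for the uniform measure. Both lemmas are stated in that generality, so the argument goes through.
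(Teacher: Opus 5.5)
Your proposal is correct and follows the paper's proof: apply Theorem~\ref{thm:balanced-weighted} with $\lambda=\nu_b$ and $w=\rho_b$, then insert the mean-radius bound $\int_K\rho_b\,d\nu_b\gtrsim\sigma_{\min}/\sqrt n$ and the bound $C_{\sf PI}(\nu_b)^{-1/2}\gtrsim\psi_n/\sigma_{\max}$. The only difference is cosmetic: you get the Poincar\'e bound by an affine change of variables directly in the Poincar\'e inequality, whereas the paper goes through the Cheeger scaling $\psi_{\sf ch}(\nu_b)\gtrsim\psi_n/\sigma_{\max}$ and the Cheeger--Poincar\'e equivalence.
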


\begin{proof}
By Lemma~\ref{lem:log-affine-radius},
\[
 \int_K\rho_b\,d\nu_b
 \gtrsim \frac{\sigma_{\min}}{\sqrt n}.
\]
Since
\[
 C_{\sf PI}(\nu_b)^{-1/2}
 \gtrsim
 \psi_{\sf ch}(\nu_b)
 \gtrsim
 \frac{\psi_n}
 {\sigma_{\max}},
\]
applying Theorem~\ref{thm:balanced-weighted} with
$\lambda=\nu_b$ and $w=\rho_b$ completes the proof.
\end{proof}

\subsection{Intrinsic separation}
Let the intrinsic metric associated with $\rho_b$ be defined as in~\eqref{eq:def_intrinsic_metric_general}:
\begin{equation}\label{eq:log-affine-intrinsic-metric}
 d_{\rho_b}(x,y)
 :=
 \inf_{\gamma}
 \int_0^1
 \frac{\|\gamma'(t)\|_2}{\rho_b(\gamma(t))}\,dt,
\end{equation}
where the infimum is over absolutely continuous curves
$\gamma:[0,1]\to\operatorname{int}(K)$ joining $x$ and $y$.
For measurable $A,B\subseteq K$, define
\[
 d_{\rho_b}(A,B)
 :=
 \inf\left\{
 d_{\rho_b}(x,y):
 x\in A\cap\operatorname{int}(K),\
 y\in B\cap\operatorname{int}(K)
 \right\}.
\]

\begin{lemma}[Local geometry of the intrinsic metric]
\label{lem:log-affine-intrinsic-local-geometry}
Let $L=O(\sqrt{n})$ be the Lipschitz parameter for $\rho_b$.  For every $x,y\in\operatorname{int}(K)$,
\begin{equation}\label{eq:log-affine-intrinsic-log-distortion}
 \left|\log\frac{\rho_b(y)}{\rho_b(x)}\right|
 \le L d_{\rho_b}(x,y)
\end{equation}
and
\begin{equation}\label{eq:log-affine-intrinsic-euclidean-distortion}
 \|x-y\|_2
 \le
 \rho_b(x)e^{L d_{\rho_b}(x,y)}d_{\rho_b}(x,y).
\end{equation}
Consequently, for a universal $c_*>0$, if
\begin{equation}\label{eq:log-affine-intrinsic-nearby}
 d_{\rho_b}(x,y)<\frac{c_*}{\sqrt n}
\end{equation}
then
\begin{equation}\label{eq:log-affine-intrinsic-local-geometry}
 \frac12\rho_b(x)\le\rho_b(y)\le2\rho_b(x),
 \qquad
 \|x-y\|_2\lesssim \frac{\rho_b(x)}{\sqrt n}.
\end{equation}
\end{lemma}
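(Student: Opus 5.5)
The plan follows the argument already used inside the proof of Lemma~\ref{lem:overlap}, with $\rK$ replaced by $\rho_b$ and with the constants tracked in terms of the Lipschitz parameter $L$ from Lemma~\ref{lem:log-affine-radius}~(ii).

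\textbf{Log-distortion bound.} Fix $x,y\in\operatorname{int}(K)$ and $\eta>0$. Choose an absolutely continuous curve $\gamma:[0,1]\to\operatorname{int}(K)$ from $x$ to $y$ with
\[
\int_0^1\frac{\|\gamma'(t)\|_2}{\rho_b(\gamma(t))}\,dt< d_{\rho_b}(x,y)+\eta.
\]
Since $\rho_b$ is $L$-Lipschitz and positive on the interior, $\rho_b\circ\gamma$ is absolutely continuous and positive. The chain rule then gives, for almost every $t$,
\[
\Bigl|\tfrac{d}{dt}\log\rho_b(\gamma(t))\Bigr|\le L\,\frac{\|\gamma'(t)\|_2}{\rho_b(\gamma(t))}.
\]
Integrating over any initial segment $[0,t]$ gives
\[
\bigl|\log\bigl(\rho_b(\gamma(t))/\rho_b(x)\bigr)\bigr|\le L\bigl(d_{\rho_b}(x,y)+\eta\bigr).
\]
Taking $t=1$ and letting $\eta\downarrow0$ proves \eqref{eq:log-affine-intrinsic-log-distortion}.

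\textbf{Euclidean-distortion bound.} The same pointwise bound shows that
\[
\rho_b(\gamma(t))\le\rho_b(x)e^{L(d_{\rho_b}(x,y)+\eta)}\qquad\text{for all }t.
\]
Therefore
\begin{align*}
\|x-y\|_2\le\int_0^1\|\gamma'(t)\|_2\,dt
&\le\max_t\rho_b(\gamma(t))\int_0^1\frac{\|\gamma'(t)\|_2}{\rho_b(\gamma(t))}\,dt\\
&\le\rho_b(x)e^{L(d_{\rho_b}(x,y)+\eta)}\bigl(d_{\rho_b}(x,y)+\eta\bigr).
\end{align*}
Letting $\eta\downarrow0$ yields \eqref{eq:log-affine-intrinsic-euclidean-distortion}.

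\textbf{Consequence.} Write $L\le C\sqrt n$ for a universal $C$, and set $c_*:=\log 2/C$. If $d_{\rho_b}(x,y)<c_*/\sqrt n$, then $L\,d_{\rho_b}(x,y)<\log 2$. By \eqref{eq:log-affine-intrinsic-log-distortion}, this gives $\tfrac12\rho_b(x)\le\rho_b(y)\le2\rho_b(x)$. By \eqref{eq:log-affine-intrinsic-euclidean-distortion}, it gives
\[
\|x-y\|_2\le2\rho_b(x)\,\frac{c_*}{\sqrt n}\lesssim\frac{\rho_b(x)}{\sqrt n}.
\]

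\textbf{Main obstacle.} The only delicate point is the regularity needed for the chain rule. This is supplied by Lemma~\ref{lem:log-affine-radius}: the zero extension of $\rho_b$ is Lipschitz, so $\rho_b\circ\gamma$ is absolutely continuous, and $\rho_b>0$ on $\operatorname{int}(K)$ keeps the logarithm finite along $\gamma$. The rest of the argument is routine.
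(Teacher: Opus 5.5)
Your proposal is correct and follows essentially the same route as the paper. The paper also integrates the a.e.\ bound $\bigl|\tfrac{d}{dt}\log\rho_b(\gamma(t))\bigr|\le L\|\gamma'(t)\|_2/\rho_b(\gamma(t))$ along initial segments of a near-optimal curve, bounds $\|x-y\|_2$ by $\max_t\rho_b(\gamma(t))$ times the intrinsic length, and then passes to an infimizing sequence; your $\eta$-argument and explicit choice $c_*=\log 2/C$ simply make those steps concrete.
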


\begin{proof}
Let $\gamma$ be an absolutely continuous path of intrinsic length $\ell$.  Since $\rho_b$ is positive in the interior and $L$-Lipschitz,
\[
 \left|\frac{d}{dt}\log\rho_b(\gamma(t))\right|
 \le L\frac{\|\gamma'(t)\|_2}{\rho_b(\gamma(t))}
\]
for almost every $t$.  Integration along every initial segment gives
\[
 e^{-L\ell}\rho_b(x)
 \le\rho_b(\gamma(t))
 \le e^{L\ell}\rho_b(x).
\]
This proves \eqref{eq:log-affine-intrinsic-log-distortion} after taking an infimizing sequence.  It also gives
\[
 \|x-y\|_2
 \le\int_0^1\|\gamma'(t)\|_2\,dt
 \le\rho_b(x)e^{L\ell}\ell,
\]
which proves \eqref{eq:log-affine-intrinsic-euclidean-distortion}.  Finally, since $L=O(\sqrt{n})$ by  Lemma~\ref{lem:log-affine-radius}, we immediately get that~\eqref{eq:log-affine-intrinsic-nearby} implies~\eqref{eq:log-affine-intrinsic-local-geometry}.
\end{proof}

\begin{theorem}[Intrinsic separator estimate]
\label{thm:log-affine-separator}
Let $A,B\subseteq K$ be measurable subsets with $m:=\min\{\nu_b(A),\nu_b(B)\}\in (0,1/2)$. If
\begin{equation}\label{eq:log-affine-set-separation}
 d_{\rho_b}(A,B)
 \ge
 \frac{c_0}{\sqrt n}
\end{equation}
for a sufficiently small universal constant $c_0>0$, then
\begin{equation}\label{eq:log-affine-separator}
 \nu_b\bigl(K\setminus(A\cup B)\bigr)
 \gtrsim
 \frac{\psi_n}{n}
 \frac{\sigma_{\min}}{\sigma_{\max}}
 \frac{m}{1+\log(1/m)}.
\end{equation}
\end{theorem}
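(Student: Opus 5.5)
The plan is to repeat the coarea half of the proof of Proposition~\ref{prop:conductance-transfer}, now with weight $w=\rho_b$ and measure $\lambda=\nu_b$, and to feed it the isoperimetric bound of Corollary~\ref{cor:log-affine-radius-profile}. Set
\[
 u(x):=d_{\rho_b}(A,x),
 \qquad
 E_t:=\{x\in\operatorname{int}(K):u(x)<t\}.
\]
The argument in that proof shows three things. First, $u$ is locally Euclidean-Lipschitz on $\operatorname{int}(K)$. Second, $\rho_b\|\nabla u\|_2\le1$ almost everywhere; this step uses only that $\rho_b$ is continuous and positive in the interior (Lemma~\ref{lem:log-affine-radius}(i)). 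Third, $\rho_b$ has a Lipschitz zero extension (Lemma~\ref{lem:log-affine-radius}(ii)), so Proposition~\ref{prop:relaxed-perimeter-toolkit}(ii) applies with $w=\rho_b$ and $\lambda=\nu_b$.

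Take $0<t<c_0/\sqrt n$. Then $A\cap\operatorname{int}(K)\subseteq E_t$ because $u=0$ there. The separation hypothesis \eqref{eq:log-affine-set-separation} gives $B\cap\operatorname{int}(K)\subseteq E_t^c$. Since $\nu_b(\partial K)=0$, both sides of the cut have mass at least $m$:
\[
 \min\{\nu_b(E_t),1-\nu_b(E_t)\}\ge m .
\]
We now apply Corollary~\ref{cor:log-affine-radius-profile} with $p_t:=\min\{\nu_b(E_t),1-\nu_b(E_t)\}\in[m,1/2]$. Because $s\mapsto s/(1+\log(1/s))$ is increasing, this gives
\[
 \cP_{\rho_b,\nu_b}(E_t)\gtrsim\frac{\psi_n}{\sqrt n}\frac{\sigma_{\min}}{\sigma_{\max}}\frac{m}{1+\log(1/m)}
 \qquad (0<t<c_0/\sqrt n).
\]

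Next, integrate this bound over $t\in(0,c_0/\sqrt n)$. The one-sided coarea inequality, together with $\rho_b\|\nabla u\|_2\le1$, bounds the integral by
\[
 \nu_b\bigl(\{0<u<c_0/\sqrt n\}\bigr).
\]
Points with $0<u<c_0/\sqrt n$ lie neither in $A$ (where $u=0$) nor in $B$ (where $u\ge c_0/\sqrt n$), so this set sits inside $K\setminus(A\cup B)$. Combining,
\[
 \frac{c_0}{\sqrt n}\cdot\frac{\psi_n}{\sqrt n}\frac{\sigma_{\min}}{\sigma_{\max}}\frac{m}{1+\log(1/m)}
 \lesssim \nu_b\bigl(K\setminus(A\cup B)\bigr),
\]
which is \eqref{eq:log-affine-separator} with $c_0$ absorbed into the constant.

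The main obstacle is technical rather than conceptual: the hypotheses of the coarea and perimeter toolkit must hold for $\rho_b$ and $\nu_b$. Proposition~\ref{prop:relaxed-perimeter-toolkit} only needs $\lambda\ll dx$ together with positivity and the Lipschitz zero extension of $w$, and Lemma~\ref{lem:log-affine-radius} supplies all of these. Measurability of $E_t$ follows from continuity of $u$. The hypothesis $m<1/2$ is what lets the corollary be applied with $p_t\le1/2$.
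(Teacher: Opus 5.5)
Your proposal is correct and follows essentially the route the paper indicates: rerun the coarea half of Proposition~\ref{prop:conductance-transfer} with $w=\rho_b$, $\lambda=\nu_b$, $u=d_{\rho_b}(A,\cdot)$, and use Corollary~\ref{cor:log-affine-radius-profile} in place of hypothesis (ii). One small correction to your closing remark: $p_t\le 1/2$ holds automatically, since $p_t$ is the minimum of $\nu_b(E_t)$ and $1-\nu_b(E_t)$, so the hypothesis $m<1/2$ is not needed there; what the argument actually uses is $p_t\ge m>0$.
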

The theorem can be proved using Corollary~\ref{cor:log-affine-radius-profile} and a very similar argument as in Proposition~\ref{prop:conductance-transfer}. 

\subsection{An auxiliary multiscale Metropolis kernel}\label{app:multiscale_kernel}

For the uniform target, Lov\'asz's overlap lemma gives local overlap directly
for hit-and-run.  For a log-affine target, the chord law is a truncated
exponential with a chord-dependent normalizing constant, so this argument no
longer applies directly.

We therefore use the comparison strategy of
Appendix~\ref{sec:ball-speedy}, but with an auxiliary kernel specifically
designed to have stationary law $\nu_b$.  The kernel makes local Metropolis
moves at scales comparable with $\rho_b$: nearby points share a common
accepted proposal region, and its Dirichlet form is dominated chordwise by
that of hit-and-run.  Conductance of the auxiliary kernel thus transfers to
${\sf P}_b$. To accommodate the variation of $\rho_b$ without losing the number of scales, we use dyadic radii and activate each radius only on the corresponding $\rho_b$-band.

For $j\in\mathbb Z$, let
\begin{equation}\label{eq:log-affine-bands}
 \delta_j:=2^j,
 \qquad
 D_j:=\{x:4\delta_j\le\rho_b(x)\le64\delta_j\},
 \qquad
 q_j(v):=\frac{\one_{\{\|v\|_2\le\delta_j\}}}{v_n\delta_j^n}.
\end{equation}
Choose a sufficiently small universal $\alpha>0$ and define the off-diagonal density
\begin{equation}\label{eq:log-affine-local-kernel-density}
 k(x,y)
 :=
 \alpha\sum_{j\in\mathbb Z}
 \one_{D_j}(x)\one_{D_j}(y)q_j(y-x)
 \min\left\{1,\frac{f_b(y)}{f_b(x)}\right\}.
\end{equation}
For $x\in K$, let
\[
 a(x):=\int_K k(x,y)\,dy.
\]
Since at most five bands contain $x$ and each $q_j$ is a probability
density,
\[
 a(x)\le5\alpha.
\]
Choose $\alpha\le1/10$, and define
\begin{equation}\label{eq:log-affine-local-kernel}
 {\sf P}_{\rm loc}(x,dy)
 :=
 k(x,y)\,dy
 +
 \bigl(1-a(x)\bigr)\delta_x(dy).
\end{equation}
Thus ${\sf P}_{\rm loc}$ is a Markov kernel and
\[
 {\sf P}_{\rm loc}(x,\{x\})
 =
 1-a(x)\ge\frac12.
\]

\begin{lemma}[Reversibility, positivity, and comparison]
\label{lem:log-affine-local-kernel}
The kernel ${\sf P}_{\rm loc}$ is reversible with stationary law $\nu_b$, has holding probability at least $1/2$, and is positive.  Moreover,
\begin{equation}\label{eq:log-affine-dirichlet-domination}
 \mathcal E_{{\sf P}_b}(g,g)
 \ge
 \mathcal E_{{\sf P}_{\rm loc}}(g,g)
 \qquad(g\in L^2(\nu_b)).
\end{equation}
\end{lemma}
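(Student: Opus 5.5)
Reversibility comes first. The off-diagonal density satisfies detailed balance with respect to $f_b$: for $x,y\in K$,
\[
 f_b(x)k(x,y)=\alpha\sum_j\one_{D_j}(x)\one_{D_j}(y)q_j(y-x)\min\{f_b(x),f_b(y)\}.
\]
This is symmetric in $(x,y)$, because $q_j(v)=q_j(-v)$ and the band indicators appear symmetrically. The holding part $(1-a(x))\delta_x$ is trivially reversible. Together these give $\nu_b(dx){\sf P}_{\rm loc}(x,dy)=\nu_b(dy){\sf P}_{\rm loc}(y,dx)$. The holding bound $1-a(x)\ge1/2$ was already shown, from $a\le5\alpha\le1/2$.

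Positivity then follows from the standard laziness argument. Write ${\sf P}_{\rm loc}=\frac12\Id+\frac12 {\sf R}$ with ${\sf R}=2{\sf P}_{\rm loc}-\Id$. Since ${\sf P}_{\rm loc}(x,\{x\})\ge1/2$, ${\sf R}$ is a genuine Markov kernel, and it is $\nu_b$-reversible. Hence ${\sf R}$ is self-adjoint with operator norm at most $1$ on $L^2(\nu_b)$. Therefore
\[
 \langle g,{\sf P}_{\rm loc}g\rangle\ge\tfrac12\|g\|^2-\tfrac12\|g\|^2=0.
\]

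The Dirichlet-form domination is the main step. I would express both forms as integrals over pairs $(x,y)$ and compare densities. Let $\Delta_b(x,y)$ denote the kernel density of ${\sf P}_b$. For fixed $x$ and $y$ on the same line, the chord through them is $\ell=K\cap(x+\R\theta)$. In polar coordinates about $x$, the density of ${\sf P}_b(x,\cdot)$ at $y\ne x$ is
\[
 \frac{2}{n v_n\|y-x\|^{n-1}}\cdot\frac{f_b(y)}{\int_\ell f_b}.
\]
Up to the constant, this is the standard hit-and-run density. Symmetrizing with respect to $\nu_b$, the off-diagonal flow measure of ${\sf P}_b$ has density proportional to
\[
 \frac{f_b(x)f_b(y)}{\|x-y\|^{n-1}\int_{\ell(x,y)}f_b},
\]
while that of ${\sf P}_{\rm loc}$ is $f_b(x)k(x,y)$. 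Since $(g(x)-g(y))^2\ge0$, it suffices to show the pointwise bound
\[
 f_b(x)k(x,y)\le c\,\frac{f_b(x)f_b(y)}{\|x-y\|^{n-1}\int_{\ell}f_b}
\]
for an appropriate normalizing constant $c$.

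For a pair with $x,y\in D_j$ and $\|x-y\|\le\delta_j$, the left side is at most $5\alpha\min\{f_b(x),f_b(y)\}/(v_n\delta_j^n)$. On the right, I need the upper bound
\[
 \int_\ell f_b\lesssim \delta_j\max\{f_b(x),f_b(y)\}\cdot(\text{controlled factor}).
\]
This is the obstacle, because chords can be long. I would try first to bound the right side from below using $\|x-y\|^{n-1}\le\delta_j^{n-1}$, together with
\[
 \int_\ell f_b\le \min\{f_b(x),f_b(y)\}\cdot O(\text{chord length}).
\]
This fails for long chords. So instead I would prove the comparison in averaged (chordwise) form, as the appendix hints. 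Conditional on a line $\ell$, hit-and-run fully resamples from $\nu_b|_\ell$. Its one-dimensional Dirichlet form on $\ell$ is $\operatorname{Var}_{\nu_b|_\ell}(g)$, which dominates the Dirichlet form of any $\nu_b|_\ell$-reversible kernel on $\ell$ with total jump rate at most $1$. This holds because for such a kernel ${\sf T}$, $\mathcal E_{\sf T}(g,g)\le\operatorname{Var}(g)$ follows from $\mathcal E_{\sf T}(g,g)=\mathcal E_{\sf T}(g-c,g-c)\le 2\|g-c\|^2\cdot a_{\max}$, and $a\le5\alpha\le1/2$ then absorbs the factor $2$.

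Then I would decompose ${\sf P}_{\rm loc}$ in polar coordinates as a mixture over directions $\theta\sim\sigma_{n-1}$ of one-dimensional kernels along lines, matching the direction mixture of ${\sf P}_b$. Each line kernel must be $\nu_b|_\ell$-reversible with rate at most $1/2$. I expect verifying this to be the delicate bookkeeping point: after the polar change of variables, the radial weight $\|y-x\|^{n-1}$ and the density $f_b$ must produce a kernel reversible with respect to $f_b$ times arclength on $\ell$, and the symmetry of $k$ must survive this Blaschke–Petkantschin-type decomposition. Once that is checked, summing over lines gives $\mathcal E_{{\sf P}_b}\ge\mathcal E_{{\sf P}_{\rm loc}}$.
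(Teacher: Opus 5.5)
Your final route is the paper's: decompose ${\sf P}_{\rm loc}$ via signed polar coordinates into directional lazy Metropolis kernels ${\sf P}_\theta$, compare each with the complete directional refresh ${\sf U}_\theta$, and average over $\theta$ to obtain ${\sf P}_b=\int{\sf U}_\theta\,d\sigma_{n-1}(\theta)$. Your per-line bound $\mathcal E_{\sf T}(g,g)\le 2a_{\max}\operatorname{Var}(g)$ uses laziness in the same way as the paper's identity ${\sf P}_\theta-{\sf U}_\theta=(\Id-{\sf U}_\theta){\sf P}_\theta(\Id-{\sf U}_\theta)\succeq0$; it requires off-diagonal rate at most $1/2$, not $1$, as you later correct yourself. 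The step you flag as delicate is routine. The signed radial density $\eta_j(t)=n|t|^{n-1}/(2\delta_j^n)$ on $|t|\le\delta_j$ is even in $t$, so each directional proposal is symmetric in arclength along its line. Metropolis acceptance together with the symmetric band indicators then gives $\nu_b$-reversibility of each ${\sf P}_\theta$ directly, and no Blaschke--Petkantschin argument is needed.
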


\begin{proof}
The off-diagonal stationary measure is symmetric, since $q_j(y-x)=q_j(x-y)$ and
\[
 f_b(x)\min\left\{1,\frac{f_b(y)}{f_b(x)}\right\}
 =
 \min\{f_b(x),f_b(y)\}.
\]
The diagonal part of
\eqref{eq:log-affine-local-kernel} is supported on
$\{(x,x):x\in K\}$ and is therefore symmetric as well.  Hence
${\sf P}_{\rm loc}$ is reversible with respect to $\nu_b$.

For the positivity, define
\begin{align*}
    {\sf Q}(x,dy):=2k(x,y)dy+(1-2a(x))\delta_x(dy).
\end{align*}
Since $a(x)\leq 1/2$, ${\sf Q}$ is a Markov kernel. And it is reversible by the same symmetry argument. Thus, its spectrum lies in $[-1,1]$; therefore ${\sf P}_{\rm loc}=(\Id+{\sf Q})/2$ is positive.

For the comparison, let $\sigma_{n-1}$ denote normalized surface measure on $\mathbb S^{n-1}$ and
\[
 \eta_j(t)
 :=
 \frac{n|t|^{n-1}}{2\delta_j^n}
 \one_{\{|t|\le\delta_j\}}.
\]
The signed polar-coordinate identity
\begin{equation}\label{eq:signed-polar-integral}
 \int_{\R^n}\varphi(v)q_j(v)\,dv
 =
 \int_{\mathbb S^{n-1}}\int_{\R}
 \varphi(t\theta)\eta_j(t)\,dt\,
 d\sigma_{n-1}(\theta)
\end{equation}
holds for every nonnegative measurable $\varphi$.

For a fixed direction $\theta\in\mathbb S^{n-1}$, let
${\sf P}_\theta$ be the following lazy Metropolis kernel.  From
$x\in K$, put
\[
 N(x):=\#\{j:x\in D_j\}\le5.
\]
Choose an index $J$ according to
\[
 \Prb\{J=j\}
 =
 \alpha\one_{D_j}(x),
 \qquad
 \Prb\{J=\bot\}
 =
 1-\alpha N(x).
\]
If $J=\bot$, stay at $x$.  Otherwise, draw $T$ with density
$\eta_J$, set $y=x+T\theta$, and move to $y$ with probability
\[
 \one_{D_J}(y)
 \min\left\{1,\frac{f_b(y)}{f_b(x)}\right\};
\]
if this move is rejected, stay at $x$.

The signed polar identity implies
\begin{equation}\label{eq:local-kernel-direction-average}
 {\sf P}_{\rm loc}
 =
 \int_{\mathbb S^{n-1}}
 {\sf P}_\theta\,d\sigma_{n-1}(\theta).
\end{equation}
For fixed $\theta$, the kernel ${\sf P}_\theta$ is reversible with
respect to $\nu_b$.  Indeed, a proposed move $x\mapsto y=x+t\theta$
has the same radial density as the reverse move
$y\mapsto x=y-t\theta$, and
\[
 f_b(x)\min\left\{1,\frac{f_b(y)}{f_b(x)}\right\}
 =
 \min\{f_b(x),f_b(y)\}.
\]
Thus the stationary flow of every off-diagonal move equals that of its reverse. Moreover, ${\sf P}_\theta$ has holding probability at least $1-5\alpha \geq 1/2$.
Hence, ${\sf P}_\theta$ is positive.

The remaining proof follows from the positive-kernel comparison principle underlying~\cite[Lemma~8]{RudolfUllrichComparison}. Let $\Pi_{\theta^\perp}$ be orthogonal projection onto $\theta^\perp$, and define
\[
 {\sf U}_\theta g
 :=
 \E_{\nu_b}[g(X)\mid\Pi_{\theta^\perp} X].
\]
Thus ${\sf U}_\theta$ is the orthogonal projection onto functions that
are constant on each line parallel to $\theta$.  Since
${\sf P}_\theta$ moves only along such lines, it fixes every function
in the range of ${\sf U}_\theta$.  Hence,
\[
 {\sf P}_\theta{\sf U}_\theta={\sf U}_\theta.
\]
By self-adjointness, ${\sf U}_\theta{\sf P}_\theta={\sf U}_\theta$,  and consequently,
\begin{equation}\label{eq:directional-operator-order}
 {\sf P}_\theta-{\sf U}_\theta
 =
 (\Id-{\sf U}_\theta)
 {\sf P}_\theta
 (\Id-{\sf U}_\theta)
 \succeq0.
\end{equation}
It follows that
\[
 \mathcal E_{{\sf P}_\theta}(g,g)
 \le
 \mathcal E_{{\sf U}_\theta}(g,g).
\]

Finally, hit-and-run is the average of complete directional refreshes:
\[
 {\sf P}_b
 =
 \int_{\mathbb S^{n-1}}
 {\sf U}_\theta\,d\sigma_{n-1}(\theta).
\]
Averaging the preceding Dirichlet-form inequality over $\theta$ and
using \eqref{eq:local-kernel-direction-average} gives
\[
 \mathcal E_{{\sf P}_{\rm loc}}(g,g)
 \le
 \mathcal E_{{\sf P}_b}(g,g).
\]
\end{proof}

\begin{lemma}[Local overlap of the comparison kernel]
\label{lem:log-affine-local-overlap}
There are universal constants $c,\alpha_0>0$ such that
\begin{equation}\label{eq:log-affine-local-overlap}
 d_{\rho_b}(x,y)<\frac{c}{\sqrt n}
 \quad\Longrightarrow\quad
 \dtv\bigl({\sf P}_{\rm loc}(x,\cdot),{\sf P}_{\rm loc}(y,\cdot)\bigr)
 \le1-\alpha_0.
\end{equation}
\end{lemma}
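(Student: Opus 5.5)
We prove Lemma~\ref{lem:log-affine-local-overlap} by exhibiting a common absolutely continuous part of the two off-diagonal kernels. Since $\dtv(\mathsf P_{\rm loc}(x,\cdot),\mathsf P_{\rm loc}(y,\cdot))\le 1-\int\min\{k(x,z),k(y,z)\}\,dz$, it suffices to show that this overlap integral is at least a universal constant $\alpha_0$. Note that the lazy atoms at $x\neq y$ contribute nothing to the overlap. We take $c\le c_*$, so that Lemma~\ref{lem:log-affine-intrinsic-local-geometry} gives
\[
\tfrac12\rho_b(x)\le\rho_b(y)\le2\rho_b(x),
\qquad
\|x-y\|_2\le C'\rho_b(x)/\sqrt n .
\]
Shrinking $c$ makes $C'$ as small as we like.

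\textbf{Choosing a common band.} Pick $j$ with $8\delta_j\le\rho_b(x)<16\delta_j$. Then $\rho_b(y)\in[4\delta_j,32\delta_j]$, so $x,y\in D_j$. For $z$ in the relevant balls we have $\|z-x\|\le 2\delta_j\le\rho_b(x)/4$. By the $O(\sqrt n)$-Lipschitz bound alone, this does not force $z\in D_j$. Instead we use the log-distortion estimate for Euclidean moves of size $\rho_b(x)/\sqrt n$ only on a set of large measure, as follows. We restrict attention to $z\in B(x,\delta_j)\cap B(y,\delta_j)$. Since $\|x-y\|\le C'\rho_b(x)/\sqrt n\le 16C'\delta_j/\sqrt n$, the standard ball-overlap estimate (the density $g_n$ of $\langle e,U\rangle$ is $O(\sqrt n)$) shows that this lens has volume at least $(1-O(C'))v_n\delta_j^n\ge \tfrac{9}{10}v_n\delta_j^n$.

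\textbf{Validity of the lens points.} We need most lens points $z$ to satisfy three conditions.
\begin{itemize}
\item[(a)] $z\in K$ with $f_b(z)\ge\beta f_b(x)$. By definition of $\rho_b(x)\ge 8\delta_j$ and monotonicity of $r\mapsto\lambda_b(x,r)$, the set $L_x$ misses at most a $1/64$ fraction of $B(x,\delta_j)$. For the acceptance ratio at $y$, we also need $f_b(z)\ge\beta' f_b(y)$. Here we use $\rho_b(y)\ge4\delta_j$ and $L_y$ in the same way, and the Metropolis factors are then at least $3/4$.
\item[(b)] $z\in D_j$, i.e.\ $4\delta_j\le\rho_b(z)\le 64\delta_j$. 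This is the main obstacle, since Lipschitzness of $\rho_b$ only gives $|\rho_b(z)-\rho_b(x)|\lesssim\sqrt n\delta_j$. To handle it we use concavity: along the segment from $x$ through $z$ to the boundary of $K$, concavity and $\rho_b\ge0$ give $\rho_b(z)\ge \rho_b(x)\,(1-\|z-x\|/R)$, where $R$ is the chord half-length beyond $z$. By (a) and the polar argument of Lemma~\ref{lem:step-quantile}, all but a small fraction of directions $\theta$ have $R_x(\theta)\ge \rho_b(x)/2$. Hence $\rho_b(z)\ge \rho_b(x)(1-\tfrac{2\delta_j}{\rho_b(x)/2})\ge \rho_b(x)/2\ge4\delta_j$ for most $z$. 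The upper bound $\rho_b(z)\le64\delta_j$ is harder. Here we would use the reversed concavity bound along the chord through $z$ and $x$ extended past $x$: for most directions the chord extends a distance at least $\rho_b(x)/2$ beyond $x$, which yields $\rho_b(z)\le \rho_b(x)(1+\|z-x\|/(\rho_b(x)/2))\le 2\rho_b(x)<32\delta_j$. The same argument works from $y$.
\item[(c)] By (b), $z\in D_j$, and $x,y\in D_j$ as already shown.
\end{itemize}
If the band bookkeeping is tight, we would adjust the thresholds, e.g.\ choose $j$ so that $\rho_b(x)\approx 16\delta_j$, which is the center of $[4,64]$ on a log scale.

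\textbf{Conclusion.} On the good subset of the lens, which has measure at least $\tfrac34 v_n\delta_j^n$ after subtracting the exceptional fractions from (a) and (b), both $k(x,z)$ and $k(y,z)$ are at least $\alpha\cdot\tfrac34\,(v_n\delta_j^n)^{-1}$. Integrating gives overlap at least $\alpha_0:=\tfrac12\alpha$, up to constants, which proves the lemma.
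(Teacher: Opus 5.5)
Your proof is correct and follows essentially the same route as the paper. You bound the lens $B(x,\delta_j)\cap B(y,\delta_j)$ from below via the $O(\sqrt n)$ slab estimate, discard the at most $1/64$ fractions outside $L_x$ and $L_y$, and certify $z\in D_j$ on most of the ball by concavity of $\rho_b$ along the line through $x$ and $z$ when $K$ extends far enough on both sides; this gives overlap $\gtrsim\alpha\beta$. The remaining differences are cosmetic: you fix the dyadic scale by $8\delta_j\le\rho_b(x)<16\delta_j$ instead of $\rho_+/32\le\delta_j\le\rho_-/8$, and you control the two sides of that line with the directional bound $\sigma_{n-1}\{\theta: R_x(\pm\theta)<\rho_b(x)/2\}\le 1/48$, where the paper uses the events $z-V\in K$ and $z+2V\in K$.
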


\begin{proof}
Lemma~\ref{lem:log-affine-intrinsic-local-geometry} gives
$\rho_b(y)\in[\rho_b(x)/2,2\rho_b(x)]$ and
$\|x-y\|_2\le c\rho_b(x)/\sqrt n$.  Let
\[
 \rho_-:=\min\{\rho_b(x),\rho_b(y)\},
 \qquad
 \rho_+:=\max\{\rho_b(x),\rho_b(y)\}.
\]
Since $\rho_+/\rho_-\le2$, the interval
$[\rho_+/32,\rho_-/8]$ contains at least one dyadic number.  Choose $j$ so that
\begin{equation}\label{eq:log-affine-overlap-scale}
 \frac{\rho_+}{32}\le\delta_j\le\frac{\rho_-}{8}.
\end{equation}

Take any $x,y\in D_j$.  For $z\in \{x,y\}$, let 
\begin{align*}
    L_z:=\{w:f_b(w)\ge\beta f_b(z)\}.
\end{align*}
The commonly accepted region of the two kernels is
\[
 G:=B(x,\delta_j)\cap B(y,\delta_j)\cap D_j\cap L_x\cap L_y.
\]
Our goal is to show that both ${\sf P}_{\rm loc}(x,\cdot)$ and ${\sf P}_{\rm loc}(y,\cdot)$ have large mass on $G$.

We first bound the volume of the overlap $B(x,\delta_j)\cap B(y,\delta_j)$. Let $e:=\frac{y-x}{\|y-x\|_2}$.  We integrate along lines parallel to $e$, indexed by $z\in e^\perp$.  For $\|z\|_2\le\delta_j$, the sections of $B(x,\delta_j)$ and $B(y,\delta_j)$ on the line $z+\R e$ are intervals of the same length, translated from one another by $d$.  Hence, the length lost from their intersection is at most $d$.  Fubini's theorem
therefore gives
\[
 v_n\delta_j^n
 -\vol\bigl(B(x,\delta_j)\cap B(y,\delta_j)\bigr)\le
 d\,v_{n-1}\delta_j^{n-1}.
\]
Since $v_{n-1}/v_n\leq O(\sqrt n)$,
\[
 \frac{\vol(B(x,\delta_j)\cap B(y,\delta_j))}
 {v_n\delta_j^n}
 \ge
 1-O(\sqrt n)\frac{\|x-y\|_2}{\delta_j}\geq 1-O\left(\frac{c\rho_b(x)}{\delta_j}\right)\geq 1-O(c).
\]
For sufficiently small $c$, the normalized overlap is at least $9/10$.

We next control the losses caused by the band and density conditions. Fix $z\in\{x,y\}$ and let $V\sim\Unif(B(0,\delta_j))$.  Since
\[
 2\delta_j
 \le\frac{\rho_-}{4}
 <\rho_b(z),
\]
monotonicity of
$r\mapsto\lambda_b(z,r)$ and the definition of $\rho_b(z)$ give
\[
 \lambda_b(z,\delta_j)\ge \frac{63}{64},
 \qquad
 \lambda_b(z,2\delta_j)\ge\frac{63}{64}.
\]
The random variables $z-V$ and $z+2V$ are uniform in
$B(z,\delta_j)$ and $B(z,2\delta_j)$, respectively.  Dropping the
density condition from the definition of $\lambda_b$, we obtain
\[
 \Prb\{z-V\notin K\}\le \frac{1}{64},
 \qquad
 \Prb\{z+2V\notin K\}\le \frac{1}{64}.
\]
By the union bound, with probability at least $31/32$, both points belong to $K$.

On this event, by the concavity and non-negativity of $\rho_b$, we have
\begin{align*}
    \rho_b(z+V)\geq \frac{\rho_b(z)+\rho_b(z+2V)}{2}\geq \frac{1}{2}\rho_b(z)\,\qquad \rho_b(z)\geq \frac{\rho_b(z-V)+\rho_b(z+V)}{2}\geq \frac{1}{2}\rho_b(z+V).
\end{align*}
That is,
\[
 \frac12\rho_b(z)
 \le
 \rho_b(z+V)
 \le
 2\rho_b(z).
\]
The choice of $\delta_j$ gives, for either $z=x$ or $z=y$,
\[
 4\delta_j
 \le\frac{1}{2}\rho_b(z)\leq \rho_b(z+V)\leq 2\rho_b(z)
 \le64\delta_j.
\]
Consequently, on the preceding event,
\[
 z+V\in D_j.
\]
Since $z+V$ is uniform in $B(z,\delta_j)$, this proves
\begin{equation}\label{eq:band-retention}
 \vol(B(z,\delta_j)\setminus D_j)
 \le
 \frac{1}{32} v_n\delta_j^n,\qquad z\in \{x,y\}.
\end{equation}

Because $\delta_j<\rho_b(z)$,
\[
 \lambda_b(z,\delta_j)
 =
 \frac{\vol(B(z,\delta_j)\cap L_z)}{v_n \delta_j^n}
 \ge\frac{63}{64}.
\]
Hence
\begin{equation}\label{eq:density-retention}
 \vol(B(z,\delta_j)\setminus L_z)
 \le \frac{1}{64}v_n \delta_j^n.
\end{equation}

Combining them together, we have 
\begin{align*}
 \vol(G)
 &\ge
 \vol(B(x,\delta_j)\cap B(y,\delta_j))
 -\vol(B(x,\delta_j)\setminus D_j)
 -\vol(B(x,\delta_j)\setminus L_x)
 -\vol(B(y,\delta_j)\setminus L_y)\\
 &\ge
 \left(
  \frac9{10}-\frac{1}{32} - \frac{2}{64}
 \right)v_n\delta_j^n = \frac{67}{80}v_n\delta_j^n.
\end{align*}

For every $w\in G$, we have
\[
 x,y,w\in D_j,
 \qquad
 \|w-x\|_2,\|w-y\|_2\le\delta_j,\qquad
 f_b(w)\ge\beta f_b(x),
 \qquad
 f_b(w)\ge\beta f_b(y).
\]
Therefore, the $j$th term of the transition density satisfies
\[
 k(x,w)\ge\frac{\alpha\beta}{v_n\delta_j^n},
 \qquad
 k(y,w)\ge\frac{\alpha\beta}{v_n\delta_j^n}.
\]
Thus, the measure
\[
 d\xi(w)
 :=
 \frac{\alpha\beta}{v_n\delta_j^n}\one_G(w)\,dw
\]
is dominated by both
${\sf P}_{\rm loc}(x,\cdot)$ and
${\sf P}_{\rm loc}(y,\cdot)$, and
\[
 \xi(K)
 =
 \frac{\alpha\beta}{v_n\delta_j^n}\vol(G)
 \ge
  \frac{67}{80}\alpha\beta.
\]
Using the fact that two probability measures that dominate a common subprobability
measure of mass $a$ have total-variation distance at most $1-a$, we conclude that
\[
 \dtv\bigl(
 {\sf P}_{\rm loc}(x,\cdot),
 {\sf P}_{\rm loc}(y,\cdot)
 \bigr)
 \le
 1-\frac{67}{80}\alpha\beta.
\]
The result follows with
$\alpha_0:=\frac{67}{80}\alpha\beta$.
\end{proof}

\subsection{Conductance and warm-start mixing}
\begin{proposition}[Conductance of log-affine hit-and-run]
\label{prop:log-affine-conductance}
For every $0<s<1/4$,
\begin{equation}\label{eq:log-affine-conductance-proposition}
 \Phi_s({\sf P}_b)
 \gtrsim \frac{\sigma_{\min}}{\sigma_{\max}}
 \frac{\psi_n}{ n\bigl(1+\log(2/s)\bigr)}.
\end{equation}
\end{proposition}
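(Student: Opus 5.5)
The plan is to prove the bound first for the auxiliary kernel ${\sf P}_{\rm loc}$ via Proposition~\ref{prop:conductance-transfer}, and then transfer it to ${\sf P}_b$ by Dirichlet-form domination.

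\textbf{Step 1: verify the hypotheses of Proposition~\ref{prop:conductance-transfer}.} We apply the proposition with
\[
 \lambda=\nu_b,\qquad w=\rho_b,\qquad \mathsf P={\sf P}_{\rm loc}.
\]
The standing assumptions are supplied by earlier results.
\begin{itemize}
\item[] $\nu_b\ll dx$ is supported on $K$.
\item[] $\rho_b$ is positive on $\operatorname{int}(K)$ and has a Lipschitz zero extension, by Lemma~\ref{lem:log-affine-radius}(i)--(ii).
\item[] ${\sf P}_{\rm loc}$ is reversible with respect to $\nu_b$, by Lemma~\ref{lem:log-affine-local-kernel}.
\end{itemize}
Hypothesis (i) is exactly Lemma~\ref{lem:log-affine-local-overlap}, with $r_0=c/\sqrt n$ and universal $\alpha_0$.

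For hypothesis (ii), take $E$ with $p=\min\{\nu_b(E),1-\nu_b(E)\}\ge s/2$. Complement symmetry (Proposition~\ref{prop:relaxed-perimeter-toolkit}(i)) and Corollary~\ref{cor:log-affine-radius-profile} give
\[
 \cP_{\rho_b,\nu_b}(E)\gtrsim \frac{\psi_n}{\sqrt n}\frac{\sigma_{\min}}{\sigma_{\max}}\frac{p}{1+\log(1/p)}\ge \frac{\psi_n}{\sqrt n}\frac{\sigma_{\min}}{\sigma_{\max}}\frac{p}{1+\log(2/s)},
\]
where the last step uses that $t\mapsto t/(1+\log(1/t))$ is increasing. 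Set $\gamma_s$ equal to a small universal multiple of
\[
 \frac{\psi_n\sigma_{\min}}{\sqrt n\,\sigma_{\max}(1+\log(2/s))}.
\]
Then $\gamma_s\le1$, because $\psi_n\lesssim1$ by \eqref{eq:KLS-upper} and $\sigma_{\min}\le\sigma_{\max}$. The proposition then yields
\[
 \Phi_s({\sf P}_{\rm loc})\gtrsim \alpha_0 r_0\gamma_s\asymp \frac{\sigma_{\min}}{\sigma_{\max}}\frac{\psi_n}{n(1+\log(2/s))}.
\]

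\textbf{Step 2: transfer to hit-and-run.} By \eqref{eq:Dirichlet_to_flow}, the ergodic flow of any set $A$ equals $\mathcal E(\one_A,\one_A)$ for each kernel, since both kernels are reversible with respect to $\nu_b$. The domination \eqref{eq:log-affine-dirichlet-domination} applied to $g=\one_A$ gives
\[
 \cQ_{{\sf P}_b}(A,A^c)\ge \cQ_{{\sf P}_{\rm loc}}(A,A^c)\qquad\text{for every measurable }A.
\]
Dividing by $\nu_b(A)-s$ and taking the infimum over $s<\nu_b(A)\le 1/2$ gives
\[
 \Phi_s({\sf P}_b)\ge\Phi_s({\sf P}_{\rm loc}),
\]
which proves \eqref{eq:log-affine-conductance-proposition}.

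\textbf{Expected obstacle.} The main delicacy is in Step 1. Proposition~\ref{prop:conductance-transfer} restricts the low-escape cores to $\operatorname{int}(K)$, and this is consistent with the definition of $d_{\rho_b}(A,B)$, which is also taken over interior points. The boundary is harmless because $\nu_b(\partial K)=0$.

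It also has to be checked that Corollary~\ref{cor:log-affine-radius-profile} genuinely only needs Theorem~\ref{thm:balanced-weighted}. That theorem requires $\nu_b$ to be full-dimensional log-concave with compact convex support, and $\rho_b$ to be concave, positive in the interior, with a Lipschitz zero extension. All of these hold by Lemma~\ref{lem:log-affine-radius}.

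No smallness condition on $s$ beyond $s<1/4$ is needed. The only other thing to track is that the constants in the overlap lemma and in the intrinsic-geometry lemma are universal and independent of $b$. This holds because the Lipschitz constant $L=O(\sqrt n)$ in Lemma~\ref{lem:log-affine-radius}(ii) does not depend on $b$.
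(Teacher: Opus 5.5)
Your proposal is correct and follows the paper's proof essentially verbatim. You apply Proposition~\ref{prop:conductance-transfer} to ${\sf P}_{\rm loc}$ with $\lambda=\nu_b$, $w=\rho_b$, $r_0=c/\sqrt n$ from Lemma~\ref{lem:log-affine-local-overlap}, and $\gamma_s$ from Corollary~\ref{cor:log-affine-radius-profile}, then transfer the bound to ${\sf P}_b$ through the Dirichlet-form domination of Lemma~\ref{lem:log-affine-local-kernel} applied to indicators. One small correction: the step $p/(1+\log(1/p))\ge p/(1+\log(2/s))$ follows from $p\ge s/2$ alone, not from monotonicity of $t\mapsto t/(1+\log(1/t))$; the inequality itself is fine.
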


\begin{proof}
For sets whose smaller side has mass at least $s/2$, Corollary~\ref{cor:log-affine-radius-profile} gives weighted expansion with
\begin{align*}
    \cP_{\rho_b,\nu_b}(A)
 \gtrsim \frac{\psi_n}{\sqrt n}
  \frac{\sigma_{\min}}
       {\sigma_{\max}}
 \frac{p}{1+\log(1/p)}\geq \gamma_s p,\qquad \gamma_s:=\Theta\left(\frac{\psi_n}{\sqrt n} \frac{\sigma_{\min}}
       {\sigma_{\max}} \frac{1}{1+\log(2/s)}\right)
 \le1.
\end{align*}
Apply Proposition~\ref{prop:conductance-transfer} to $\lambda=\nu_b$, $w=\rho_b$, and $\mathsf P={\sf P}_{\rm loc}$, using Lemma~\ref{lem:log-affine-local-overlap} with $r_0=c/\sqrt n$ and universal $\alpha_0$.  This gives that 
\begin{align*}
    \Phi_s({\sf P}_{\rm loc})\gtrsim \frac{\gamma_s}{\sqrt{n}}.
\end{align*}
For every measurable $S$, by Lemma~\ref{lem:log-affine-local-kernel},
\[
 \cQ_{{\sf P}_b,\nu_b}(S,S^c)
 =\mathcal E_{{\sf P}_b}(\one_S,\one_S)
 \ge\mathcal E_{{\sf P}_{\rm loc}}(\one_S,\one_S)
 =\cQ_{{\sf P}_{\rm loc},\nu_b}(S,S^c)
\]
which transfers the conductance bound to ${\sf P}_b$.
\end{proof}

\begin{proof}[Proof of Theorem~\ref{thm:log-affine-main}]
The conductance statement is Proposition~\ref{prop:log-affine-conductance}.  Positivity of ${\sf P}_b$ was proved at the beginning of this appendix. Then, applying Lemma~\ref{lem:LS} with the stationary measure  $\nu_b$ and $s=\varepsilon/(2M)$ completes the proof of the theorem.
\end{proof}

\bibliographystyle{alpha}
\bibliography{refs}

@article{kook2026spectral,
  title={Spectral Gaps of Hit-and-Run and Coordinate Hit-and-Run},
  author={Kook, Yunbum and Vempala, Santosh S},
  journal={arXiv preprint arXiv:2608.16878},
  year={2026}
}

@book{AFP,
  author={Ambrosio, Luigi and Fusco, Nicola and Pallara, Diego},
  title={Functions of Bounded Variation and Free Discontinuity Problems},
  series={Oxford Mathematical Monographs},
  publisher={The Clarendon Press, Oxford University Press},
  address={New York},
  year={2000}
}

@article{Berwald,
  title={Verallgemeinerung eines Mittelwertsatzes von J. Favard f{\"u}r positive konkave Funktionen},
  author={Berwald, Ludwig},
  journal={Acta Mathematica},
  volume={79},
  number={1},
  pages={17--37},
  year={1947},
  publisher={Springer}
}

@article{CE,
  title={Hit-and-run mixing via localization schemes},
  author={Chen, Yuansi and Eldan, Ronen},
  journal={Discrete \& Computational Geometry},
  volume={75},
  number={3},
  pages={747--794},
  year={2026},
  publisher={Springer},
  doi={10.1007/s00454-025-00808-4}
}

@article{KLS95,
  title={Isoperimetric problems for convex bodies and a localization lemma},
  author={Kannan, Ravi and Lov{\'a}sz, L{\'a}szl{\'o} and Simonovits, Mikl{\'o}s},
  journal={Discrete \& Computational Geometry},
  volume={13},
  number={3},
  pages={541--559},
  year={1995},
  publisher={Springer}
}

@article{KlartagNeedles, 
 title={Needle Decompositions in Riemannian Geometry}, 
 volume={249}, 
 ISSN={0065-9266}, 
 url={http://dx.doi.org/10.1090/memo/1180}, 
 DOI={10.1090/memo/1180}, 
 number={1180}, 
 journal={Memoirs of the American Mathematical Society}, 
 publisher={American Mathematical Society (AMS)}, 
 author={Klartag, Bo'az}, 
 year={2017}
}

@article{MaiRigidity,
  author={Mai, Cong Hung},
  title={Rigidity for the Isoperimetric Inequality of Negative Effective Dimension on Weighted Riemannian Manifolds},
  journal={Geometriae Dedicata},
  volume={202},
  pages={213--232},
  year={2019},
  doi={10.1007/s10711-018-0410-x}
}

@article{MaiOhta,
  author={Mai, Cong Hung and Ohta, Shin-ichi},
  title={Quantitative Estimates for the {Bakry--Ledoux} Isoperimetric Inequality},
  journal={Commentarii Mathematici Helvetici},
  volume={96},
  number={4},
  pages={693--739},
  year={2021},
  doi={10.4171/CMH/523}
}

@article{BertrandFathi,
  author={Bertrand, J{\'e}r{\^o}me and Fathi, Max},
  title={Stability of Eigenvalues and Observable Diameter in {RCD}(1,$\infty$) Spaces},
  journal={The Journal of Geometric Analysis},
  volume={32},
  number={11},
  pages={Paper No.~270},
  year={2022},
  doi={10.1007/s12220-022-00999-9}
}

@article{LangharstPutterman,
    author = "Dylan Langharst and Eli Putterman",
     title = "Weighted Berwald's inequality",
   journal = "Indiana Univ. Math. J.",
  fjournal = "Indiana University Mathematics Journal",
    volume = 74,
      year = 2025,
     issue = 1,
     pages = "47--90",
      issn = "0022-2518",
     coden = "IUMJAB",
   mrclass = "52A39, 52A41, 28A75",
}

@article{LovaszHR,
  title={Hit-and-run mixes fast},
  author={Lov{\'a}sz, L{\'a}szl{\'o}},
  journal={Mathematical programming},
  volume={86},
  number={3},
  pages={443--461},
  year={1999},
  publisher={Springer}
}

@article{LS,
  title={Random walks in a convex body and an improved volume algorithm},
  author={Lov{\'a}sz, L{\'a}szl{\'o} and Simonovits, Mikl{\'o}s},
  journal={Random structures \& algorithms},
  volume={4},
  number={4},
  pages={359--412},
  year={1993},
  publisher={Wiley Online Library}
}

@inproceedings{LS90,
  author       = {L{\'{a}}szl{\'{o}} Lov{\'{a}}sz and
                  Mikl{\'{o}}s Simonovits},
  title        = {The Mixing Rate of Markov Chains, an Isoperimetric Inequality, and
                  Computing the Volume},
  booktitle    = {31st Annual Symposium on Foundations of Computer Science, St. Louis,
                  Missouri, USA, October 22-24, 1990, Volume {I}},
  pages        = {346--354},
  publisher    = {{IEEE} Computer Society},
  year         = {1990},
  url          = {https://doi.org/10.1109/FSCS.1990.89553},
  doi          = {10.1109/FSCS.1990.89553}
}

@article{LV,
author = {Lov{\'a}sz, L{\'a}szl{\'o} and Vempala, Santosh},
title = {Hit-and-Run from a Corner},
journal = {SIAM Journal on Computing},
volume = {35},
number = {4},
pages = {985-1005},
year = {2006},
doi = {10.1137/S009753970544727X}
}

@article{LVlogconcave,
  title={The geometry of logconcave functions and sampling algorithms},
  author={Lov{\'a}sz, L{\'a}szl{\'o} and Vempala, Santosh},
  journal={Random Structures \& Algorithms},
  volume={30},
  number={3},
  pages={307--358},
  year={2007},
  publisher={Wiley Online Library}
}

@article{Smith,
  title={Efficient Monte Carlo procedures for generating points uniformly distributed over bounded regions},
  author={Smith, Robert L},
  journal={Operations Research},
  volume={32},
  number={6},
  pages={1296--1308},
  year={1984},
  publisher={INFORMS}
}

@incollection{VempalaSurvey,
  author={Vempala, Santosh S.},
  title={Geometric random walks: a survey},
  booktitle={Combinatorial and Computational Geometry},
  editor={Goodman, Jacob E. and Pach, J{\'a}nos and Welzl, Emo},
  series={Mathematical Sciences Research Institute Publications},
  volume={52},
  pages={573--612},
  publisher={Cambridge University Press},
  address={Cambridge},
  year={2005}
}

@article{KlartagKLS,
  author={Klartag, Bo'az},
  title={Logarithmic bounds for isoperimetry and slices of convex sets},
  journal={Ars Inveniendi Analytica},
  year={2023},
  pages={Paper No.~4, 17 pp.},
  doi={10.15781/jsjy-0b06}
}

@article{LetwinKLS,
  title={The {KLS} constant is {$O(\log^{1/4} n)$}},
  author={Letwin, Brayden},
  journal={arXiv preprint arXiv:2607.24164},
  year={2026}
}

@article{DFK91,
  author={Dyer, Martin E. and Frieze, Alan M. and Kannan, Ravi},
  title={A Random Polynomial-Time Algorithm for Approximating the Volume of Convex Bodies},
  journal={Journal of the ACM},
  volume={38},
  number={1},
  pages={1--17},
  year={1991},
  doi={10.1145/102782.102783}
}

@article{KLS97,
  author={Kannan, Ravi and Lov{\'a}sz, L{\'a}szl{\'o} and Simonovits, Mikl{\'o}s},
  title={Random Walks and an {$O^*(n^5)$} Volume Algorithm for Convex Bodies},
  journal={Random Structures \& Algorithms},
  volume={11},
  number={1},
  pages={1--50},
  year={1997}
}

@article{LVVolume,
  author={Lov{\'a}sz, L{\'a}szl{\'o} and Vempala, Santosh},
  title={Simulated Annealing in Convex Bodies and an {$O^*(n^4)$} Volume Algorithm},
  journal={Journal of Computer and System Sciences},
  volume={72},
  number={2},
  pages={392--417},
  year={2006},
  doi={10.1016/j.jcss.2005.08.004}
}

@article{CV18,
  author={Cousins, Ben and Vempala, Santosh S.},
  title={Gaussian Cooling and {$O^*(n^3)$} Algorithms for Volume and Gaussian Volume},
  journal={SIAM Journal on Computing},
  volume={47},
  number={3},
  pages={1237--1273},
  year={2018},
  doi={10.1137/15M1054250}
}

@article{EldanSL,
  author={Eldan, Ronen},
  title={Thin Shell Implies Spectral Gap Up to Polylog via a Stochastic Localization Scheme},
  journal={Geometric and Functional Analysis},
  volume={23},
  number={2},
  pages={532--569},
  year={2013},
  doi={10.1007/s00039-013-0214-y}
}

@article{LeeVempalaKLS,
  author={Lee, Yin Tat and Vempala, Santosh S.},
  title={Eldan's Stochastic Localization and the {KLS} Conjecture: Isoperimetry, Concentration and Mixing},
  journal={Annals of Mathematics},
  volume={199},
  number={3},
  pages={1043--1092},
  year={2024},
  doi={10.4007/annals.2024.199.3.2}
}

@article{ChenKLS,
  author={Chen, Yuansi},
  title={An Almost Constant Lower Bound of the Isoperimetric Coefficient in the {KLS} Conjecture},
  journal={Geometric and Functional Analysis},
  volume={31},
  number={1},
  pages={34--61},
  year={2021},
  doi={10.1007/s00039-021-00558-4}
}

@article{KlartagLehecKLS,
  author={Klartag, Bo'az and Lehec, Joseph},
  title={Bourgain's Slicing Problem and {KLS} Isoperimetry Up to Polylog},
  journal={Geometric and Functional Analysis},
  volume={32},
  number={5},
  pages={1134--1159},
  year={2022},
  doi={10.1007/s00039-022-00612-9}
}

@misc{GuanSlicing,
  author={Guan, Qingyang},
  title={A Note on {Bourgain}'s Slicing Problem},
  year={2024},
  eprint={2412.09075},
  archivePrefix={arXiv},
  primaryClass={math.MG},
  note={arXiv:2412.09075}
}

@article{KlartagLehecSlicing,
  author={Klartag, Bo'az and Lehec, Joseph},
  title={Affirmative Resolution of {Bourgain}'s Slicing Problem Using {Guan}'s Bound},
  journal={Geometric and Functional Analysis},
  volume={35},
  number={4},
  pages={1147--1168},
  year={2025},
  doi={10.1007/s00039-025-00718-w}
}

@incollection{EldanKlartagThinSlice,
  author={Eldan, Ronen and Klartag, Bo'az},
  title={Approximately Gaussian Marginals and the Hyperplane Conjecture},
  booktitle={Concentration, Functional Inequalities and Isoperimetry},
  series={Contemporary Mathematics},
  volume={545},
  pages={55--68},
  publisher={American Mathematical Society},
  year={2011},
  doi={10.1090/conm/545/10764}
}

@misc{BizeulSlicing,
  author={Bizeul, Pierre},
  title={The Slicing Conjecture via Small Ball Estimates},
  year={2025},
  eprint={2501.06854},
  archivePrefix={arXiv},
  primaryClass={math.FA},
  note={arXiv:2501.06854}
}

@misc{KlartagLehecThinShell,
  author={Klartag, Bo'az and Lehec, Joseph},
  title={Thin-Shell Bounds via Parallel Coupling},
  year={2025},
  eprint={2507.15495},
  archivePrefix={arXiv},
  primaryClass={math.PR},
  note={arXiv:2507.15495}
}

@misc{ChenKlartagThinShell,
  author={Chen, Yuansi and Klartag, Bo'az},
  title={Digesting the Proof of the Sharp Thin-Shell Inequality},
  year={2026},
  eprint={2607.23307},
  archivePrefix={arXiv},
  primaryClass={math.MG},
  note={arXiv:2607.23307}
}

@article{JLLV26,
  author={Jia, He and Laddha, Aditi and Lee, Yin Tat and Vempala, Santosh S.},
  title={Reducing Isotropy and Volume to {KLS}: Faster Rounding and Volume Algorithms},
  journal={Journal of the ACM},
  volume={73},
  number={2},
  pages={8:1--8:21},
  year={2026},
  doi={10.1145/3795687}
}

@article{RudolfUllrichPositivity,
  author={Rudolf, Daniel and Ullrich, Mario},
  title={Positivity of {Hit-and-Run} and Related Algorithms},
  journal={Electronic Communications in Probability},
  volume={18},
  number={49},
  pages={1--8},
  year={2013},
  doi={10.1214/ECP.v18-2507}
}

@article{RudolfUllrichComparison,
  author={Rudolf, Daniel and Ullrich, Mario},
  title={Comparison of {Hit-and-Run}, Slice Sampler and Random Walk Metropolis},
  journal={Journal of Applied Probability},
  volume={55},
  number={4},
  pages={1186--1202},
  year={2018},
  doi={10.1017/jpr.2018.78}
}

@article{ShiTianZhangSL,
  title={Perspectives on Stochastic Localization},
  author={Shi, Bobby and Tian, Kevin and Zhang, Matthew S},
  journal={arXiv preprint arXiv:2510.04460},
  year={2025}
}

@incollection{Cheeger,
  author={Cheeger, Jeff},
  title={A Lower Bound for the Smallest Eigenvalue of the Laplacian},
  booktitle={Problems in Analysis},
  editor={Gunning, Robert C.},
  pages={195--199},
  publisher={Princeton University Press},
  address={Princeton, NJ},
  year={1970}
}

@article{anttila2003central,
  title={The central limit problem for convex bodies},
  author={Anttila, Milla and Ball, Keith and Perissinaki, Irini},
  journal={Transactions of the American Mathematical Society},
  volume={355},
  number={12},
  pages={4723--4735},
  year={2003}
}

@inproceedings{bobkov2004central,
  title={On the central limit property of convex bodies},
  author={Bobkov, Sergey G and Koldobsky, Alexander},
  booktitle={Geometric Aspects of Functional Analysis: Israel Seminar 2001-2002},
  pages={44--52},
  year={2004},
  organization={Springer}
}

@inproceedings{KookZhangRenyi,
  author={Kook, Yunbum and Zhang, Matthew S.},
  title={R\'enyi-Infinity Constrained Sampling with {$d^3$} Membership Queries},
  booktitle={Proceedings of the 2025 Annual ACM-SIAM Symposium on Discrete Algorithms},
  pages={5278--5306},
  year={2025},
  publisher={SIAM},
  doi={10.1137/1.9781611978322.181}
}

@inproceedings{LeeShenTian21,
  author={Lee, Yin Tat and Shen, Ruoqi and Tian, Kevin},
  title={Structured Logconcave Sampling with a {Restricted Gaussian Oracle}},
  booktitle={Proceedings of the Thirty-Fourth Conference on Learning Theory},
  series={Proceedings of Machine Learning Research},
  volume={134},
  pages={2993--3050},
  year={2021},
  publisher={PMLR},
  url={https://proceedings.mlr.press/v134/lee21a.html}
}

@article{KannanNarayananDikin,
  author={Kannan, Ravindran and Narayanan, Hariharan},
  title={Random Walks on Polytopes and an Affine Interior Point Method for Linear Programming},
  journal={Mathematics of Operations Research},
  volume={37},
  number={1},
  pages={1--20},
  year={2012},
  doi={10.1287/moor.1110.0519}
}

@article{ChenDwivediWainwrightYu,
  author={Chen, Yuansi and Dwivedi, Raaz and Wainwright, Martin J. and Yu, Bin},
  title={Fast {MCMC} Sampling Algorithms on Polytopes},
  journal={Journal of Machine Learning Research},
  volume={19},
  number={55},
  pages={1--86},
  year={2018},
  url={https://jmlr.org/papers/v19/18-158.html}
}

@InProceedings{RHMCLewis,
  title = 	 {Sampling Polytopes with Riemannian HMC: Faster Mixing via the Lewis Weights Barrier},
  author =       {Gatmiry, Khashayar and Kelner, Jonathan and Vempala, Santosh S.},
  booktitle = 	 {Proceedings of Thirty Seventh Conference on Learning Theory},
  pages = 	 {1796--1881},
  year = 	 {2024},
  editor = 	 {Agrawal, Shipra and Roth, Aaron},
  volume = 	 {247},
  series = 	 {Proceedings of Machine Learning Research},
  month = 	 {30 Jun--03 Jul},
  publisher =    {PMLR}
}

@inproceedings{kook2024gaussian,
  title={Gaussian cooling and Dikin walks: The interior-point method for logconcave sampling},
  author={Kook, Yunbum and Vempala, Santosh S},
  booktitle={The Thirty Seventh Annual Conference on Learning Theory},
  pages={3137--3240},
  year={2024},
  organization={PMLR}
}

@misc{kook2026d25mixingbounddikinwalks,
      title={Beyond the $d^{2.5}$-mixing bound for Dikin walks on polytopes}, 
      author={Yunbum Kook},
      year={2026},
      eprint={2607.13943},
      archivePrefix={arXiv},
      primaryClass={cs.DS},
      url={https://arxiv.org/abs/2607.13943}, 
}

@inproceedings{LeeVempalaRHMC,
  author={Lee, Yin Tat and Vempala, Santosh S.},
  title={Convergence Rate of {Riemannian Hamiltonian Monte Carlo} and Faster Polytope Volume Computation},
  booktitle={Proceedings of the 50th Annual ACM SIGACT Symposium on Theory of Computing},
  pages={1115--1121},
  year={2018},
  publisher={ACM},
  doi={10.1145/3188745.3188774}
}

@inproceedings{LaddhaLeeVempala,
  author={Laddha, Aditi and Lee, Yin Tat and Vempala, Santosh S.},
  title={Strong Self-Concordance and Sampling},
  booktitle={Proceedings of the 52nd Annual ACM SIGACT Symposium on Theory of Computing},
  pages={1212--1222},
  year={2020},
  publisher={ACM},
  doi={10.1145/3357713.3384272}
}

@inproceedings{ChenChewiSalimWibisono22,
  author={Chen, Yongxin and Chewi, Sinho and Salim, Adil and Wibisono, Andre},
  title={Improved Analysis for a Proximal Algorithm for Sampling},
  booktitle={Proceedings of the Thirty-Fifth Conference on Learning Theory},
  series={Proceedings of Machine Learning Research},
  volume={178},
  pages={2984--3014},
  year={2022},
  publisher={PMLR},
  url={https://proceedings.mlr.press/v178/chen22c.html}
}

@article{bertsimas2004solving,
  title={Solving convex programs by random walks},
  author={Bertsimas, Dimitris and Vempala, Santosh},
  journal={Journal of the ACM (JACM)},
  volume={51},
  number={4},
  pages={540--556},
  year={2004},
  publisher={ACM New York, NY, USA}
}

@article{kalai2006simulated,
  title={Simulated annealing for convex optimization},
  author={Kalai, Adam Tauman and Vempala, Santosh},
  journal={Mathematics of Operations Research},
  volume={31},
  number={2},
  pages={253--266},
  year={2006},
  publisher={INFORMS}
}

@article{Holmes2006BayesianAV,
  title={Bayesian auxiliary variable models for binary and multinomial regression},
  author={Chris C. Holmes and Leonhard Held},
  journal={Bayesian Analysis},
  year={2006},
  volume={1},
  pages={145-168},
  url={https://api.semanticscholar.org/CorpusID:8209006}
}

@article{albert1993bayesian,
  title={Bayesian analysis of binary and polychotomous response data},
  author={Albert, James H and Chib, Siddhartha},
  journal={Journal of the American statistical Association},
  volume={88},
  number={422},
  pages={669--679},
  year={1993},
  publisher={Taylor \& Francis}
}

@inproceedings{KookVempalaCold,
  author={Kook, Yunbum and Vempala, Santosh S.},
  title={Faster Logconcave Sampling from a Cold Start in High Dimension},
  booktitle={2025 IEEE 66th Annual Symposium on Foundations of Computer Science},
  pages={997--1006},
  year={2025},
  publisher={IEEE},
  doi={10.1109/FOCS63196.2025.00052}
}

@article{KookVempalaZerothOrder,
  title={Zeroth-order logconcave sampling},
  author={Kook, Yunbum and Vempala, Santosh S},
  journal={arXiv preprint arXiv:2507.18021},
  year={2025}
}

@article{KookVempalaUnified,
  title={A unified complexity bound for logconcave sampling},
  author={Kook, Yunbum and Vempala, Santosh S},
  journal={arXiv preprint arXiv:2606.12694},
  year={2026}
}

@inproceedings{KookVempalaZhangInOut,
  author={Kook, Yunbum and Vempala, Santosh S. and Zhang, Matthew S.},
  title={In-and-Out: Algorithmic Diffusion for Sampling Convex Bodies},
  booktitle={Advances in Neural Information Processing Systems},
  volume={37},
  pages={108354--108388},
  year={2024},
  doi={10.52202/079017-3440}
}

@inproceedings{KookVempalaDiffusion,
  author={Kook, Yunbum and Vempala, Santosh S.},
  title={Sampling and Integration of Logconcave Functions by Algorithmic Diffusion},
  booktitle={Proceedings of the 57th Annual ACM Symposium on Theory of Computing},
  pages={924--932},
  year={2025},
  publisher={ACM},
  doi={10.1145/3717823.3718202}
}

@article{bourgain1986high,
  title={On high dimensional maximal functions associated to convex bodies},
  author={Bourgain, Jean},
  journal={American Journal of Mathematics},
  volume={108},
  number={6},
  pages={1467--1476},
  year={1986},
  publisher={JSTOR}
}

@article{buser1982note,
  author={Buser, Peter},
  title={A Note on the Isoperimetric Constant},
  journal={Annales Scientifiques de l'\'Ecole Normale Sup\'erieure},
  volume={15},
  number={2},
  pages={213--230},
  year={1982},
  doi={10.24033/asens.1426}
}

@article{ledoux2004spectral,
  author={Ledoux, Michel},
  title={Spectral Gap, Logarithmic {Sobolev} Constant, and Geometric Bounds},
  journal={Surveys in Differential Geometry},
  volume={9},
  pages={219--240},
  year={2004},
  doi={10.4310/SDG.2004.v9.n1.a6}
}
\end{document}